\newcommand{\widebar}[1]{\overline{#1}}
\DeclareFontFamily{U}{mathx}{\hyphenchar\font45}
\DeclareFontShape{U}{mathx}{m}{n}{
      <5> <6> <7> <8> <9> <10>
      <10.95> <12> <14.4> <17.28> <20.74> <24.88>
      mathx10
      }{}
\DeclareSymbolFont{mathx}{U}{mathx}{m}{n}
\DeclareMathAccent{\widecheck}{0}{mathx}{"71}
\theoremstyle{plain} 
\newtheorem{theorem}{Theorem}[section]
\newtheorem{proposition}[theorem]{Proposition}
\newtheorem{corollary}[theorem]{Corollary} 
\newtheorem{lemma}[theorem]{Lemma} 
\theoremstyle{definition}
\newtheorem{definition}[theorem]{Definition} 
\newtheorem{example}[theorem]{Example} 
\theoremstyle{remark}
\newtheorem{remark}[theorem]{Remark}
\theoremstyle{definition}
\newtheorem{assumption}{Assumption}
\theoremstyle{definition}
\newtheorem{condition}{Condition}
\DeclareMathOperator*{\argmin}{arg\,min}
\def\T{{ \mathrm{\scriptscriptstyle T} }}
\def\v{{\varepsilon}}
\title{\bfseries Assumption-lean Inference for\\ Network-linked Data}
\renewcommand{\thefootnote}{\fnsymbol{footnote}}
\author{%
  Wei Li\footnotemark[1] \and
  Nilanjan Chakraborty\footnotemark[2] \and
  Robert Lunde\footnotemark[1]
}
\date{} 
\begin{document}
\maketitle

\footnotetext[1]{Department of Statistics and Data Science, Washington University in St.~Louis, USA.}
\footnotetext[2]{Department of Mathematics and Statistics, Missouri University of Science and Technology, USA.}

\makeatletter
\renewcommand{\thefootnote}{\arabic{footnote}}
\setcounter{footnote}{0}
\makeatother

\begin{abstract}
We consider statistical inference for network-linked regression problems, where covariates may include network summary statistics computed for each node. In settings involving network data, it is often natural to posit that latent variables govern connection probabilities in the graph. Since the presence of these latent features makes classical regression assumptions even less tenable, we propose an assumption-lean framework for linear regression with jointly exchangeable regression arrays. We establish an analog of the Aldous-Hoover representation for such arrays, which may be of independent interest.

Moreover, we consider two different projection parameters as potential targets and establish conditions under which asymptotic normality and bootstrap consistency hold when commonly used network statistics, including local subgraph frequencies and spectral embeddings, are used as covariates. In the case of linear regression with local count statistics, we show that a bias-corrected estimator allows one to target a more natural inferential target under weaker sparsity conditions compared to the OLS estimator. Our inferential tools are illustrated using both simulated data and real data related to the academic climate of elementary schools.
\end{abstract}

\noindent
\textbf{Keywords:} assumption-lean inference, network-linked data, exchangeability, bootstrap

\section{Introduction}
In many modern applications, data linked through an underlying network, commonly referred to as network-linked data (e.g., \citealp{li2019prediction}, \citealp{le2022linear}, \citealp{lunde2023conformal}), is becoming increasingly common. These datasets consist of nodal units such as individuals, products, or institutions. Typically, each nodal unit is associated with attributes and an outcome which are structurally connected through an observed network that encodes relational, transactional, or social ties. Across disciplines such as social science (\citealp{goldsmith2013social}, \citealp{bramoulle2009identification}), economics (\citealp{manski1993identification}, \citealp{graham2017econometric}), finance (\citealp{acemoglu2015systemic}, \citealp{hochberg2007whom}), biology (\citealp{segal2003module}) and chemistry (\citealp{hansen2015machine}), network structures have been leveraged in regression problems to study relationships between nodal outcomes and individual-level features. In these settings, one may be interested in predictive or inferential tasks where the goal is to relate nodal responses to both conventional covariates and network-derived features, a framework broadly known as network-assisted regression. For instance, in portfolio management, studies such as \cite{hochberg2007whom} show that network-derived metrics, particularly centrality, can serve as informative predictors of financial outcomes to aid in identifying and mitigating systemic vulnerabilities. In e-commerce applications, a customer’s purchase behavior may depend not only on personal demographics but also on their position in a social or product co-view network. In both examples, the network is expected to provide meaningful structural information. For such data, network-assisted regression models offer a natural and flexible approach: each node is treated as a data point in a regression model that incorporates both nodal attributes and features derived from its position or role within the network.

In recent years, there has been tremendous progress in the modeling of network-linked data.
Methods now span from classical regression to modern machine learning techniques such as graph neural networks (\citealp{scarselli2008graph}), which have achieved substantial performance gains by leveraging structural information encoded in the network. \citet{li2019prediction} propose a predictive model for nodal outcomes with a cohesion penalty that adaptively groups node-specific intercepts, boosting predictive accuracy by capturing local similarity. More recently, conformal prediction has been extended to network-linked regression and graph neural networks (\citealp{lunde2023conformal}, \citealp{huang2023uncertainty}), where notions of symmetry are used to construct valid prediction sets. While these approaches offer uncertainty quantification for predictions, they do not provide formal inference guarantees on model parameters, which remain essential for interpreting the relationship between covariates and outcomes in the presence of network dependence.

The development of statistical models with valid inference guarantees has only recently begun to receive attention, largely due to the non-standard and complex dependence structures in network-linked data. A notable contribution by \citet{le2022linear} extend the cohesion-based approach into a semi-parametric model that incorporates network effects as interaction terms projected onto a low-rank spectral subspace, allowing for inference under certain structural assumptions. Similarly, \citet{zhu2017network, zhu2020grouped} study network autoregressive models with time-lagged responses and neighborhood effects, using least squares estimation under stationarity. \citet{cai2021network} develop a unified covariate-supervised estimation framework for centrality measures with inferential guarantees when the model is correct. Earlier foundational work in social sciences and economics focused on peer effects and the distinction between endogenous and exogenous influence (e.g., \citealp{manski1993identification}, \citealp{bramoulle2009identification}, \citealp{goldsmith2013social}), typically under the assumption of a fixed but unobserved network influencing the response surface.

More recent efforts have incorporated randomness into the network itself, particularly in causal inference. For example, \citet{li2022random} introduce a graphon-based model for estimating causal effects under network interventions, and \citet{hayes2022estimating} use spectral embeddings of random networks in a mediation analysis framework. Despite leveraging random graph models, these approaches still rely on linearity or structured assumptions to define and estimate network effects. In contrast, we consider a fully random-design setting, where the network, covariates, and outcomes are jointly sampled, and the network is generated from a graphon model with latent positions. Our approach adopts an assumption-lean inference framework: instead of requiring model correctness, we define network effects through population-level linear projections, allowing for robust estimation even under model misspecification.  With network models, it is common to assume that latent variables govern connection probabilities in the network. The presence of these latent variables makes the assumption of correct model specification even less tenable than standard regression problems, necessitating inferential methods that are robust to misspecification. 

Our framework not only captures meaningful network effects in complex or sparse regimes but also enables valid inference via a resampling scheme. In this framework, we are also able to incorporate cohesion effects, which we define as covariates that involve averages over a node's $k$-hop neighborhood for some prespecified $k$. To the best of our knowledge, our work provides one of the most general settings with inference guarantees for network-linked regression accounting for model specification. A comparison with existing approaches is provided in the following table.

\begin{table}[ht]
\centering
\small
\caption{Comparison with Existing Network Regression Approaches}
\begin{tabular}{|l|c|c|c|}
\hline
Method & Random Design & Assumption-lean & Cohesion Effect \\
\hline
Subspace Projection (\cite{le2022linear})      & $\times$ & $\times$ & \checkmark \\
\hline
Causal Network Mediation (\cite{hayes2022estimating})     & \checkmark & $\times$ & $\times$ \\
\hline
Our Method                          & \checkmark & \checkmark & \checkmark \\
\hline
\end{tabular}
\label{tab:comparison}
\end{table}

\subsection{Related Work}
Our work draws on several lines of literature in network analysis, addressing inference and estimation in the presence of network structure as well as the assumption-lean inference framework for regression. 

A foundational line of work in network modeling has focused on latent variable models known as graphons, which emerge from the theory of graph limits and array exchangeability \citep{Lovacz-Graph-Limits}.  Sparse graphon models, which have been proposed in various contexts by \citet{Bollobás_Riordan_2009}, \citet{borgs-lp-part-one} and \citet{Bickel-Chen-on-modularity}, provide a flexible class of network models that also allows sequences of graphs to be sparse, which is not possible with standard graphon models. Estimation and inference under sparse graphon models have been a central topic of research in the statistical network analysis literature. Among the early and influential contributions, \citet{bickel2011method} develop asymptotic theory for subgraph frequencies under the sparse graphon model.

The random dot product graph (RDPG) model, introduced by \citet{young-schneiderman-rdpg}, is another widely used network model that may be viewed as a low-rank graphon model \citep[see][]{lei2021network}. Regarding statistical inference for this model, \citet{athreya2018statistical} provide a comprehensive theoretical treatment of latent position recovery under RDPG via spectral embedding, establishing both consistency and asymptotic normality. Many of these results are generalized to generalized random dot product graph models (GRDPGs) in \citet{generalized-rdpg}.   These results support our later use of latent position–based covariates in regression settings. Despite their flexibility, GRDPGs suffer from certain identifiability issues, complicating statistical inference. Recent works have explored these challenges in hypothesis testing problems; see, for example, \citet{agterberg2020two} and \citet{du2023hypothesis}.

Our approach aligns with a growing interest in robust inference under model misspecification, particularly in settings where covariates are random and the regression function may be nonlinear or otherwise unknown. The assumption-lean framework formalized by \citet{berk2021assumption} promote the idea of treating linear models as approximations rather than true data-generating mechanisms, shifting the inferential target to projection-based summaries.  Statistical inference for such projection parameters was first studied by \citet{e7a4e83b-9158-3e76-82d2-3186d14b4bbd}.  \citet{vansteelandt2022assumption} further extend this framework to 
generalized linear models, emphasizing the use of influence functions and robust variance estimation. \citet{chang2023inference} develop Berry–Esseen bounds for the projection parameter in the i.i.d. setting, where the dimension is allowed to grow with the sample size. These works motivate our use of linear projections as meaningful inferential targets, even when the true data generating process is complex or partially unobservable due to latent network structure.

Recent works have proposed adaptations of resampling methods tailored to the dependence structure of graphs. \citet{levin2019bootstrapping} consider a bootstrap procedure for statistics under the graphon model that exhibits U-statistic structure. \citet{lin2020trading} develop both a fast computational linear bootstrap procedure and a higher-order correct bootstrap for count functionals under the graphon model, \citet{zhang2022edgeworth} also explore the higher-order correct property of the bootstrap procedure for the count functional. Moreover, \citet{lin2020theoretical} explore the network jackknife estimator, \citet{lunde2023subsampling} examine different subsampling procedures for the graphon under the minimal requirement of the weak convergence of the functional, and \citet{bhattacharyya2015subsampling} develop sub-sampling approaches that enable valid inference for the count functionals under the graphon model.  Collectively,  these contributions inform the development of our own resampling-based inference procedures.

\subsection{Our Contribution and Organization}
Our main contributions are summarized as follows. We propose a general regression framework for network-linked data leveraging both random graph theory and an assumption-lean inference approach. We introduce two possible projection parameters and develop inferential procedures that yield consistent inference for these targets for a wide range of network covariates.  To the best of our knowledge, our work is the first to provide valid statistical inference for regression effects in network-linked data under potential model misspecification. Compared to existing methods, our approach captures meaningful network information while remaining more robust and broadly applicable.

To this end, we develop inferential tools tailored to subgraph counts and spectral embeddings, which represent important subclasses of network covariates. We also develop more broadly applicable inferential tools that provide guarantees for network statistics whose properties are not as well understood. In the case of subgraph-based network covariates, we propose a novel bias-corrected estimator that yields valid inference for a natural target parameter under weaker sparsity conditions than the OLS estimator, and establish the consistency of a multiplier bootstrap procedure for inference. For problems involving spectral embeddings, we study inferential problems for which certain identifiability issues can be circumvented, and develop efficient bootstrap-based tools for inference.

\section{Problem Setup}
\label{sec:prob-setup}
\subsection{Jointly Exchangeable Regression Arrays}
We consider a model inspired by a notion of exchangeability for regression arrays.  Let $Y_1, \ldots, Y_n \in \mathbb{R}$ be the response of interest, $X_1, \ldots, X_n \in \mathbb{R}^p $ be covariates, and $\mathrm{A} \in \{0,1\}^{n \times n}$ be an adjacency matrix.  Furthermore, define the $n \times n$ regression array $V$, where $V_{ij} = (Y_i,X_i, Y_j, X_j, A_{ij})$ and let $V^\sigma$ be an array in which $V_{ij}^\sigma = V_{\sigma(i) \sigma(j)}$. In \citet{lunde2023conformal}, the notion of a jointly exchangeable regression array was introduced; such an array satisfies:
\begin{align*}
(V_{ij}^\sigma)_{1 \leq i,j \leq n} = (V_{ij})_{1 \leq i,j \leq n} \quad\text{ in distribution},
\end{align*}
for any permutation $\sigma: [n] \mapsto [n]$.  In the above work, it was shown that many natural data generating mechanisms fall under this framework.  When the array $V$ is infinite-dimensional with indices in $\mathbb{N}^2$, a representation theorem for such arrays in the spirit of de Finetti and \citet{aldous1981representations} and \citet{hoover1979relations} suggests a class of models that we consider in the present paper.  We state this result below.  We consider an adjacency matrix $\mathrm{A}$ with no self-loops, for which it makes sense to restrict our attention to indices of the form $\{i \neq j \ | \ i,j \in \mathbb{N}^2\}$.

\begin{theorem}\label{main:theorem1} (Aldous-Hoover Representation for Jointly Exchangeable Regression Arrays) Suppose that $(V_{ij})_{i \neq j}$ is a jointly exchangeable regression array.  Then, there exist mutually independent collections of i.i.d. $\mathrm{Uniform}[0,1]$ random variables $(\xi_i)_{i \in \mathbb{N}}, (\eta_{ij})_{i < j}, \alpha$, and Borel measurable $h,g$ such that:
\begin{align*}
(V_{ij})_{i \neq j} = \left(h(\alpha,\xi_i), h(\alpha, \xi_j), g(\alpha, \xi_i, \xi_j, \eta_{ij}) \right)_{i \neq j} \ \quad\text{in distribution},
\end{align*}
where  $h(\alpha,\xi_i) = (Y_i,X_i)$ in distribution and for $j > i$, $\eta_{ji} = \eta_{ij}$.  
\end{theorem}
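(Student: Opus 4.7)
The plan is to apply the classical Aldous--Hoover representation theorem to the jointly exchangeable array $(V_{ij})_{i \neq j}$ and then use the \emph{product} structure $V_{ij} = (Y_i, X_i, Y_j, X_j, A_{ij})$ to extract the specific form in the statement. First, invoking the classical theorem (e.g., \citet{aldous1981representations}, \citet{hoover1979relations}) yields mutually independent i.i.d.\ Uniform$[0,1]$ collections $\alpha, (\xi_i)_{i \in \mathbb{N}}, (\eta_{ij})_{i < j}$ and a Borel measurable function $f$ with $\eta_{ji} := \eta_{ij}$, such that
\begin{align*}
(V_{ij})_{i \neq j} = \bigl( f(\alpha, \xi_i, \xi_j, \eta_{ij}) \bigr)_{i \neq j} \quad\text{in distribution}.
\end{align*}
Writing $f = (f_1, f_2, f_3)$ according to the three natural blocks of $V_{ij}$ (the $(Y_i, X_i)$ block, the $(Y_j, X_j)$ block, and the adjacency entry), the target is to show that the $f_\ell$'s have the claimed restricted dependence.

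Second, I would exploit the defining redundancy of $V$: the first block of $V_{ij}$ is identical, almost surely, to the first block of $V_{ik}$ for every $k \neq i, j$. Since the Aldous--Hoover identity is at the level of the entire array, this pathwise coincidence transfers to the representing process, giving $f_1(\alpha, \xi_i, \xi_j, \eta_{ij}) = f_1(\alpha, \xi_i, \xi_k, \eta_{ik})$ almost surely. Conditional on $(\alpha, \xi_i)$, the two pairs $(\xi_j, \eta_{ij})$ and $(\xi_k, \eta_{ik})$ are independent and identically distributed nondegenerate random variables, so equality of $f_1$ evaluated on them forces $f_1$ to be $\sigma(\alpha, \xi_i)$-measurable. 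By the Doob--Dynkin lemma, there is a Borel measurable $h$ such that $f_1(\alpha, \xi_i, \xi_j, \eta_{ij}) = h(\alpha, \xi_i)$ almost surely. Applying the same argument to $f_2$ and using the exchange of labels $i \leftrightarrow j$ (which the joint exchangeability guarantees) shows that $f_2(\alpha, \xi_i, \xi_j, \eta_{ij}) = h(\alpha, \xi_j)$ a.s.\ with the \emph{same} function $h$. Setting $g := f_3$ and recalling that $A_{ij} = A_{ji}$ together with $\eta_{ij} = \eta_{ji}$ yields the required symmetry in the edge component, and assembling the three blocks gives the stated representation.

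The main obstacle is the measurability step in the second paragraph: turning the distributional equality supplied by Aldous--Hoover into a genuine almost-sure statement that permits the collapse of $f_1$ onto $\sigma(\alpha, \xi_i)$. The standard remedy is to pass to a coupling in which the original array and its representation coincide almost surely on a common probability space (which is allowed since we only ever claim distributional identity), after which the redundancy of $V_{ij}$ in the index $j$ becomes a pathwise property of the representation. A secondary, more cosmetic, point is ensuring that the same $h$ serves both node blocks: this follows because, by joint exchangeability, the conditional law of $(Y_i, X_i)$ given $(\alpha, \xi_i)$ agrees with that of $(Y_j, X_j)$ given $(\alpha, \xi_j)$, so the functional form extracted from the first block may be reused for the second.
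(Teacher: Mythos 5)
Your proposal follows essentially the same route as the paper's proof: invoke the classical Aldous--Hoover representation, use the redundancy of the node block across different partners $j$ to force almost-sure equality of $f_1$ evaluated at two conditionally i.i.d.\ inputs, collapse $f_1$ to a measurable function of $(\alpha,\xi_i)$ alone, and identify the second block with the first under the label swap $i \leftrightarrow j$. The measurability step you flag as the main obstacle is precisely where the paper concentrates its technical effort (an auxiliary reduction lemma proved via regular conditional probabilities and a zero--one/atom argument on standard Borel spaces), and your sketch of how to resolve it is consistent with that argument.
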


To the best of our knowledge, this representation theorem for regression arrays was not previously known.  The Aldous-Hoover theorem also applies to regression arrays of this type, but the main question is whether nodal random variables can be expressed as simpler functions involving fewer random variables. A proof of the above result is provided in the Supplement.  For modeling purposes, it is natural to condition on the mixing variable $\alpha$, which then implies that $(Y_i, X_i, \xi_i)_{i \in \mathbb{N}}$ are (conditionally) i.i.d. Furthermore, when $A$ is binary and symmetric with no self-loops, we can consider a measurable function $w(x,y)$ where $w(x,x) = 0$, $w(x,y) = w(y,x)$ such that, conditional on $\alpha$, 
\begin{align*}
(A_{ij})_{i< j} = (\mathbbm{1}(\eta_{ij} \leq w(\xi_i, \xi_j))_{i < j} \quad  \text{in distribution}. 
\end{align*}

See, for example, Corollary III.6 of \citet{6847223}.  The function $w$ has come to be known as a graphon and also arises in graph limit theory; see, for instance, \citet{lovasz2006limits} and \citet{diaconis-janson-exchangeable-graph}. 
In what follows, we consider a sparse graphon model, where a sparsity parameter $\rho_n \rightarrow 0$ governs the sparsity level of the graph.  Based on the above considerations, we consider the following model:

\begin{assumption}[Independent Triples and Sparse Graphon]\label{assumption1:graphon}
Suppose that \((Y_i,X_i,\xi_i)_{i=1}^n \sim P\), and let 
\(A^{(n)}\in\{0,1\}^{n\times n}\) be generated by the sparse graphon model, with
\(A_{ij}^{(n)}=A_{ji}^{(n)}\) for \(i,j\in[n]\), \(A_{ii}^{(n)}=0\) for all \(i\in[n]\), 
and for \(i<j\),
\begin{equation}\label{eq-sparse-graphon}
A_{ij}^{(n)}=\mathbbm{1}\!\left\{\eta_{ij}\le \rho_n\,w(\xi_i,\xi_j)\wedge 1\right\}.
\end{equation}
Here \(\{\eta_{ij}\}_{1\le i<j\le n}\) are independent \(\mathrm{Unif}(0,1)\) random variables.
Assume \(\rho_n\to0\), and also that
\(\int_{[0,1]^2} w(x,y)\,dx\,dy=1\), \(w(x,y)\le C<\infty\).
\end{assumption}

Several comments are in order. First, it should be noted that the graphon model subsumes many other commonly studied models, including stochastic block models and (generalized) random dot product graphs, among others. Second, as is typical in the literature, we consider bounded graphons; however, our results can be extended to the unbounded case \citep{borgs-lp-part-one}, albeit at the expense of longer proofs and additional technical conditions.   Moreover, while the independent triple and graphon setup is natural for jointly exchangeable regression arrays, it should be noted that, strictly speaking, this representation theorem only holds for infinite arrays. Finite jointly exchangeable arrays may not admit such a representation but may often be approximated by such arrays (see \citet{VOLFOVSKY201654}).

Furthermore, while $\xi_1, \ldots, \xi_n$ are of natural interest in regression problems, they are unobserved and not identifiable.  Instead, we consider regression problems for which network information is captured by certain identifiable graph features of the form $Z_i =  f(\xi_i)$. We discuss this point in more detail in the next subsection.

\subsection{Assumption-Lean Inference for Network-Linked Regression}\label{assumption-lean}
 In what follows, let $\widehat{Z}_i$ be a nodal summary statistic computed from $\mathrm{A}$ that estimates the quantity $Z_i$. For instance, in the case of degrees, we may view $\widehat{Z}_i = \sum_{j \neq i} A_{ij}/(n-1) \widehat{\rho}_n$ as a noisy realization of $Z_i = E(w(\xi_i, \xi_j) \ | \ \xi_i)$, where $\widehat{\rho}_n$ is an estimator for $\rho_n$.  Degrees are one of many network statistics that capture the notion of a node's centrality in the network.  While it is often natural to posit that a node's centrality affects the response variable, it is nearly impossible to argue that a linear model involving degrees is correctly specified given the multitude of other choices for centrality measures. From another perspective, the latent positions themselves are abstract entities derived from an Aldous-Hoover representation, making it unlikely that one could choose a functional $f(\xi_i)$ that ensures classical linear regression assumptions are satisfied.  

For i.i.d. data, \citet{berk2021assumption} propose an assumption-lean inference framework for linear regression, where inference targets are well defined even under model misspecification.  This viewpoint is particularly well-suited for network-linked regression problems, for which parametric assumptions are often untenable for reasons described above.  However, defining these projection parameters is more subtle in the network setting.  In what follows, for $i \in [n]$, let  \(\widehat{L}_i = (X_i, \widehat{Z}_i)\)  and \(L_i = (X_i, Z_i)\), where $ X_{i,1} =1$. For notational convenience, we define the augmented design matrix as \(\widehat{\mathrm{L}} = [\mathrm{X}\ \;\widehat{\mathrm{Z}}]\), where $\mathrm{X} \in \mathbb{R}^{n\times p}$ is the covariate matrix and \(\widehat{\mathrm{Z}} \in \mathbb{R}^{n \times d}\) denotes the additional graph-derived features (e.g. local subgraph count statistics). Our first proposal for a target parameter is the following:
\begin{align}\label{noisy target}
\widetilde{\beta} = \arg \min_{\beta} E(Y-\beta^\T\widehat{L})^2 = \widetilde{\Lambda}^{-1} \widetilde{\gamma},
\end{align}
where $\widetilde{\Lambda} = E(\widehat{L}\widehat{L}^\T)$ and $\widetilde{\gamma}=E(Y\widehat{L})$. This parameter will typically be easier to conduct inference for compared to other possible targets.  However, this target is slightly unnatural given the independent triplet assumption since $\widehat{Z}_1, \ldots, \widehat{Z}_n$ are typically not i.i.d. and are often noisy versions of the network statistics of interest.  Instead, let  $\Lambda = E(LL^\T)$, $\gamma = E(YL)$  and define the following alternative parameter:
\begin{align}\label{clean beta}
\beta^* = \arg \min_{\beta} E(Y-\beta^\T L)^2 = \Lambda^{-1} \gamma.
\end{align}
In the case where $Y$ is a linear function that involves $Z$ and $X$, the above target would coincide with the true regression coefficients. However, even when network noise is asymptotically negligible, it will often be the case that inference for $\beta^*$ is more difficult and requires additional conditions compared to inference for $\widetilde{\beta}$. 

When the linear model is misspecified, these target parameters remain a meaningful object rather than arbitrary targets. As shown in \cite{berk2021assumption}, projection parameters correspond to weighted averages of adjusted case-wise slopes through the origin or pairwise slopes between observations, where the weights depend on the distribution of the covariates. In fact, this interpretation has a long history, appearing in the literature on ancillary statistics \cite{stigler2001ancillary}, the jackknife and bootstrap \cite{wu1986jackknife}, and model diagnostics \cite{gelman2009splitting}. Crucially, this perspective highlights that the projection parameter serves as a robust metric of the overall association between the outcome and covariates, without requiring correct specification of the functional form. Our simulations will revisit this point, demonstrating that the projection parameter can often capture the intuitive direction and strength of association, even under complex, nonlinear data-generating processes.

For both target parameters, a natural estimator is given by the standard OLS estimate:   
\begin{align*}
\widehat{\beta} = \arg \min_{\beta} \sum_{i=1}^n(Y_i-\beta^\T\widehat{L}_i)^2 = \widehat{\Lambda}^{-1} \widehat{\gamma},
\end{align*}
where $\widehat{\Lambda} = n^{-1} \sum_{i=1}^n \widehat{L}_i\widehat{L}_i^\T$ and $\widehat{\gamma} = n^{-1}\sum_{i=1}^n Y_i \widehat{L}_i$.   In certain instances, we will be able to construct modifications of the OLS estimate that correct for bias arising from noise in network covariates, leading to better inferential properties when targeting $\beta^*$.  Bias corrections are discussed in Section \ref{subsec-local-subgraph}. In the following sections, we establish asymptotic normality and bootstrap consistency results when the network covariates involve local subgraph counts or spectral embeddings, which are widely used in practice. We then present a resampling scheme with down-sampling that allows consistent inference even when the behavior of the network statistics is not as well understood.         

\section{Main Results}
\subsection{Inference with Local Subgraph Counts}
\label{subsec-local-subgraph}
Subgraph counts are widely recognized as critical descriptors of network structure, capturing unique meso-scale patterns. In biological networks, such as protein–protein interaction (PPI) graphs, the frequency of specific small subgraphs—often called network motifs—is linked to biological function. For instance, triangles and cliques often indicate protein complexes, where multiple proteins jointly interact to carry out a cellular function \citep{Milo2002, ShenOrr2002}. In social networks, local subgraph frequencies also encode interpretable social mechanisms. The number of triangles around a person is strongly associated with triadic closure, a tendency for friends of friends to become friends themselves, and is used to infer social cohesion \citep{Granovetter1973, Watts1998}. Star-like motifs (e.g., 1-to-many connections) can signal influencers or hub individuals in information diffusion, while the presence of squares may reflect indirect but tight-knit group structures \citep{kitsak2010identification, benson2016higher}. These subgraph patterns are often leveraged in predictive models for behavioral outcomes, such as political alignment \citep{conover2011political}, information sharing \citep{bakshy2011information}, or health behavior in online social networks \citep{centola2010spread}. Given these diverse roles across domains, incorporating subgraph counts as covariates in regression models enables associating local structural patterns with node-level responses, thereby facilitating, for example, more interpretable and biologically or socially grounded inference.

We now define the relevant subgraph frequencies. We associate a subgraph of interest $\mathcal{R}$ with  its edge set $\mathcal{E}(\mathcal{R}) \subseteq \{\{i,j\}  \ | \  i,j \in [n]\}$  and vertex set $\mathcal{V}(\mathcal{R}) = \{ i \ | \  \{i,j\} \in \mathcal{E}(\mathcal{R}) \text{ for some j} \}$.  Let $r = |\mathcal{V}(\mathcal{R})|$; for notational simplicity, we consider subgraphs for which $\mathcal{V}(\mathcal{R}) = \{1, \ldots, r\}$.  A natural parameter related to $\mathcal{R}$ is given by:
\begin{align*}
Q(\mathcal{R}) 
=E\left( \prod_{\{i,j\} \in \mathcal{E}(\mathcal{R})} w(\xi_i,\xi_j)\right).
\end{align*}

Since the parameter is an expectation involving  i.i.d. random variables, it is natural to consider estimators related to U-statistics. In what follows, we say that two subgraphs $\mathcal{R}_1$ and $\mathcal{R}_2$ are isomorphic, denoted $\mathcal{R}_1 \cong \mathcal{R}_2$, if there exists a bijection $ f: \mathcal{V}(\mathcal{R}_1) \mapsto \mathcal{V}(\mathcal{R}_2)$ such that $\{i,j\} \in \mathcal{E}(\mathcal{R}_1)$ if and only if $\{f(i), f(j)\} \in \mathcal{E}(\mathcal{R}_2)$.  Let $|\mathrm{Iso}(\mathcal{R})|$ denote the number of distinct graph isomorphisms that map into $\{1, \ldots, r\}$.  An estimator of $Q(\mathcal{R})$ is given by:
\begin{align*}
\widehat{Q}(\mathcal{R}) =  \frac{1}{{n \choose r} \hat{\rho}_n^s} \sum_{1 \leq i_1 <  \cdots < i_r \leq n} \frac{1}{|\mathrm{Iso}(\mathcal{R})|}
\sum_{\substack{\mathcal{S}: \mathcal{S} \cong \mathcal{R},\\
\mathcal{V}(\mathcal{S}) = \{i_1, \ldots, i_r\}}} \prod_{\{i,j\} \in \mathcal{E}(\mathcal{S}) } A_{ij},
\end{align*}
where $\hat{\rho}_n = \frac{1}{{n \choose 2}} \sum_{1 \leq i < j \leq n} A_{ij}$ is a consistent estimator for $\rho_n$ under mild conditions (see \citet{bickel2011method}).  
Given this definition of a global subgraph count, one may define a local subgraph count for node $i$ as the number of subgraphs node $i$ is involved in:
\begin{align}
\label{eq-local-subgraph}
\widehat{Q}_i(\mathcal{R}) =  \frac{1}{{n-1 \choose r-1} \hat{\rho}_n^s} \sum_{\substack{1 \leq i_1 <  \cdots < i_{r-1} \leq n, \\ i_1, \ldots, i_{r-1} \neq i}} \frac{1}{|\mathrm{Iso}(\mathcal{R})|}
\sum_{\substack{\mathcal{S}: \mathcal{S} \cong \mathcal{R},\\
\mathcal{V}(\mathcal{S}) = \{i,i_1, \ldots, i_{r-1}\}}} \prod_{\{j,k\} \in \mathcal{E}(\mathcal{S}) } A_{jk}.
\end{align}

In this case, it is natural to define $\beta^*$ with respect to the following noiseless network covariate:
\begin{align*}
Z_i = E[\widetilde Q_i(\mathcal{R}) \ | \ \xi_i], 
\end{align*}
where
\begin{align}\label{latent_node_stat}
\widetilde Q_i(\mathcal{R}) =  \frac{1}{{n-1 \choose r-1}} \sum_{\substack{1 \leq i_1 <  \cdots < i_{r-1} \leq n, \\ i_1, \ldots, i_{r-1} \neq i}} \frac{1}{|\mathrm{Iso}(\mathcal{R})|}
\sum_{\substack{\mathcal{S}: \mathcal{S} \cong \mathcal{R},\\
\mathcal{V}(\mathcal{S}) = \{i,i_1, \ldots, i_{r-1}\}}} \prod_{\{j,k\} \in \mathcal{E}(\mathcal{S}) } w(\xi_j,\xi_k).    
\end{align}
It can readily be verified that $\widehat{Q}(\mathcal{R}) = n^{-1}\sum_{i=1}^n \widehat{Q}_i(\mathcal{R})$. Asymptotic theory for $\widehat{Q}(\mathcal{R})$ was first established by \citet{bickel2011method}. It is often also of interest to consider notions of local subgraph counts in which the node $i$ occupies a fixed position in the subgraph.  Suppose that $\mathcal{R}$ corresponds to a $k$-star, which involves a root node $v$ and $k$ edges of the form $\{u, v\}$.  We also consider the following quantity associated with these rooted subgraphs:    
\begin{align}
\label{eq-local-subgraph-rooted}
\widehat{Q}_i^{\mathrm{rooted}}(\mathcal{R}) = \frac{1}{{n-1 \choose r-1} \hat{\rho}_n^s} \sum_{\substack{1 \leq i_1 <  \cdots < i_{r-1} \leq n, \\ i_1, \ldots, i_{r-1} \neq i}} \ \prod_{j = i_1}^{i_{r-1}} A_{ij}.  
\end{align}

We consider three classes of subgraphs.  Acyclic subgraphs are those that contain no cycles. Simple cycles refer to trails with the same starting and ending vertex.  General cycles refer to cycles that are not simple, such as cliques.  The following result establishes conditions under which inference based on the normal approximation is possible for our target parameters $\beta^*$ and $\widetilde{\beta}$. In what follows, let $\lambda_n = n \rho_n$, which captures the average degree. 

For the following theorem statements, let $R = \max_{k \in [d]} r_k$.
\begin{theorem}[Asymptotic Normality of OLS Estimator for $\beta^*$]\label{theorem 1}
Suppose that $\Lambda = E(LL^\T)$ is invertible, $E(Y^4) < \infty$, and $E(\|X\|^4) < \infty$. Further, suppose that one of the following conditions holds:
\begin{enumerate}
\item[(a)] All subgraphs are acyclic and $\lambda_n =  \omega(n^{1/2})$,
\item[(b)] All subgraphs are acyclic or simple cycles, and $\lambda_n^{r_j} = \omega(n^{3/2})$ for all $j$ corresponding to simple cycles.
\item[(c)] The sparsity condition in (b) holds for simple cycles, and 
$\lambda_n = \omega\!\left(
   n^{\,1-\tfrac{2R-3}{R(R-1)}} \vee 
   n^{\,1-\tfrac{2}{2R-1}}
\right)$ holds for subgraphs that are neither acyclic nor simple cycles. Then,
\begin{align*}
n^{1/2}(\widehat{\beta}-\beta^*) \rightarrow N(0,\Sigma_\beta) \quad\text{ in distribution,}
\end{align*}
where $\Sigma_\beta$ is defined in Supplement~\ref{sec-B.3}, equation~\eqref{CLT_covaraince_beta}.
\end{enumerate}
\end{theorem}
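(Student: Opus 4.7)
The plan is to proceed via the standard M-estimator decomposition
\begin{align*}
n^{1/2}(\widehat{\beta} - \beta^*) = \widehat{\Lambda}^{-1}\cdot n^{-1/2}\sum_{i=1}^n \widehat{L}_i\bigl(Y_i - \widehat{L}_i^\T \beta^*\bigr).
\end{align*}
A preliminary step is to establish consistency $\widehat{\Lambda}\to\Lambda$ in probability; writing $\widehat{L}_i = L_i + \Delta_i$ with $\Delta_i = (\mathbf{0},\widehat{Z}_i - Z_i)^\T$, the law of large numbers handles $n^{-1}\sum_i L_iL_i^\T\to\Lambda$, and the three cross pieces of $\widehat{\Lambda}-n^{-1}\sum L_iL_i^\T$ reduce to showing $n^{-1}\sum_i\|\Delta_i\|^2 = o_p(1)$ and $n^{-1}\sum_i\|L_i\|\|\Delta_i\| = o_p(1)$, both of which follow from a bound on $E[(\widehat{Z}_i - Z_i)^2]$ under each of the three sparsity regimes. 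By Slutsky, it then suffices to prove that the centered score $S_n = n^{-1/2}\sum_i \widehat{L}_i(Y_i - \widehat{L}_i^\T\beta^*)$ converges in distribution to a Gaussian.

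Setting $\epsilon_i^* = Y_i - L_i^\T\beta^*$ and expanding, I split
\begin{align*}
S_n = n^{-1/2}\sum_i L_i\epsilon_i^* \;+\; n^{-1/2}\sum_i \Delta_i\epsilon_i^* \;-\; n^{-1/2}\sum_i (L_i+\Delta_i)\Delta_i^\T\beta^*.
\end{align*}
The leading term is a sum of i.i.d.\ mean-zero random vectors, since the orthogonality $E[L\epsilon^*]=0$ is exactly the first-order condition defining $\beta^*$; the $L^4$ assumptions on $Y$ and $X$, together with boundedness of $Z$ inherited from $w\le C$, yield finite second moments and Lindeberg's condition. The classical multivariate CLT then gives $n^{-1/2}\sum_i L_i\epsilon_i^*\to N(0,\Sigma_0)$ with $\Sigma_0 = E[(\epsilon^*)^2 LL^\T]$, which, sandwiched by $\Lambda^{-1}$, produces the advertised $\Sigma_\beta$.

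The technical core is showing that the remaining two sums involving $\Delta_i$ vanish in probability under each of (a), (b), (c). I decompose $\widehat{Z}_i-Z_i$ into: (i) an \emph{edge-noise} piece $\widehat{Q}_i(\mathcal{R})-\widetilde{Q}_i(\mathcal{R})$ using $\rho_n^s$ normalization; (ii) a \emph{latent-position} piece $\widetilde{Q}_i(\mathcal{R})-Z_i$; and (iii) a multiplicative correction from $\hat\rho_n^s/\rho_n^s - 1$. Piece (ii) is the Hoeffding decomposition of a U-statistic with one coordinate fixed, whose projection onto $\xi_i$ is identically zero, giving variance of order $n^{-1}$ uniformly in the sparsity regime. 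Piece (iii) is handled via $\hat\rho_n/\rho_n = 1 + O_p(\lambda_n^{-1/2})$ and a delta-method expansion, contributing negligibly under all three conditions. Piece (i) is where the sparsity conditions matter: expanding $\prod_{\{j,k\}\in\mathcal{E}(\mathcal{S})} A_{jk}$ around $\prod \rho_n w(\xi_j,\xi_k)$ and bounding second moments by enumerating matched edge patterns, acyclic subgraphs produce only a single dominant shared-edge contribution of order $\lambda_n^{-1}$, sufficient under (a); simple cycles force sharing of the entire cycle and produce the $\lambda_n^{r}=\omega(n^{3/2})$ threshold in (b); and general cyclic subgraphs admit multiple competing edge-sharing and vertex-sharing configurations that give the more delicate rate in (c). Quadratic contributions $n^{-1/2}\sum_i\|\Delta_i\|^2$ are handled by the same edge-noise variance bounds together with $E[U_i^2]\to 0$.

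The principal obstacle is the combinatorial second-moment calculation underlying case (c): the cross-covariance of edge-noise pieces across two copies of a general cyclic subgraph decomposes into many patterns indexed by shared vertex and edge subsets of $\mathcal{R}$, each with a distinct scaling in $\lambda_n$ and $n$. The two rates appearing in the sparsity condition, $n^{1-(2R-3)/(R(R-1))}$ and $n^{1-2/(2R-1)}$, arise respectively from the most restrictive edge-sharing configuration and the dominant single-shared-vertex configuration in these cross moments; isolating the tight exponents requires careful enumeration of homomorphism counts between pairs of graph copies. Once these variance bounds are in hand, Chebyshev gives $S_n^{(1)}, S_n^{(2)}, S_n^{(3)} = o_p(1)$, and combining with the Gaussian limit of the leading score and the consistency of $\widehat{\Lambda}^{-1}$ completes the argument via Slutsky's theorem.
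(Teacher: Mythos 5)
Your overall architecture (consistency of $\widehat{\Lambda}$, CLT for a centered score, Slutsky) is reasonable, and your accounting of how the sparsity regimes (a)--(c) control the edge-noise and the quadratic bias term is broadly in line with the paper. However, there is a genuine error at the center of the argument: you claim that after extracting $n^{-1/2}\sum_i L_i\epsilon_i^*$, the remaining sums involving $\Delta_i=\widehat L_i-L_i$ are all $o_p(1)$, so that the limit is the i.i.d.\ sandwich covariance $\Lambda^{-1}E[(\epsilon^*)^2LL^{\T}]\Lambda^{-1}$. This is false. The latent-position piece $\widetilde Q_i(\mathcal R)-Z_i$ does have variance $O(n^{-1})$ for each fixed $i$, but the summands are correlated across $i$ through the shared latent variables $\xi_k$: for $i\neq j$, $\operatorname{cov}\bigl((\widetilde Q_i-Z_i)\epsilon_i^*,(\widetilde Q_j-Z_j)\epsilon_j^*\bigr)=O(n^{-1})$, and there are $n^2$ such pairs, so $\operatorname{var}\bigl(n^{-1/2}\sum_i(\widetilde Q_i-Z_i)\epsilon_i^*\bigr)=O(1)$, not $o(1)$. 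The fact that the projection of $\widetilde Q_i-Z_i$ onto $\xi_i$ vanishes does not kill this term; it only means the contribution enters through the projections onto the \emph{other} coordinates of the underlying U-statistic. Consequently your remainder $S_n^{(2)}$ (and the linear-in-$\Delta_i$ part of $S_n^{(3)}$) converges to a nondegenerate Gaussian, and the advertised $\Sigma_\beta$ is not $\Lambda^{-1}E[(\epsilon^*)^2LL^{\T}]\Lambda^{-1}$.

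The paper avoids this by never isolating an i.i.d.\ score: it proves a joint CLT for $\bigl(\operatorname{vec}(\widehat\Lambda),\widehat\gamma,\hat\rho_n/\rho_n\bigr)$ by showing (Lemma~\ref{appB:lemma1} and Proposition~\ref{main:prop1}) that each coordinate is, up to negligible edge noise, a U-statistic of order $r_j$ (linear terms) or $r_j+r_k-1$ (quadratic terms) in the i.i.d.\ triples $\tau_i=(X_i,Y_i,\xi_i)$; the H\'ajek projection then yields a covariance $\Sigma_{\Psi,ij}=\operatorname{cov}\bigl(k_iE[h_i\mid\tau_1],\,k_jE[h_j\mid\tau_1]\bigr)$ in which the influence function $r_jE[h_{Y,\mathcal R_j}\mid\tau_1]$ is \emph{not} $Y_1Z_{1j}$ --- it carries extra terms from $\tau_1$ occupying non-root positions of the subgraph, which are exactly the cross-correlations your decomposition discards. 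The delta method applied to $(\mathrm{M},b,r)\mapsto \mathrm{M}^{-1}b$ then produces \eqref{CLT_covaraince_beta}. To repair your proof you would need to replace the claim $S_n^{(2)},S_n^{(3)}=o_p(1)$ by a joint CLT for $n^{-1/2}\sum_i\bigl(L_i\epsilon_i^*+\text{(projected latent fluctuations)}\bigr)$, which effectively reconstructs the paper's U-statistic argument.
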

We now state a theorem for $\widetilde{\beta}$, which holds under weaker sparsity conditions than the corresponding theorem for $\beta^*$.  
\begin{theorem}[Asymptotic Normality of OLS Estimator for $\widetilde{\beta}$]\label{theorem 2}
Suppose that $\Lambda = E(LL^\T)$ is invertible, $E(Y^4) < \infty$, and $E(\|X\|^4) < \infty$. Further, suppose that one of the following conditions holds:
\begin{enumerate}
\item[(a)] All subgraphs are acyclic and $\lambda_n =  \omega(1)$.
\item[(b)] All subgraphs are acyclic or simple cycles, and $\lambda_n^{r_j} = \omega(n)$ for all $j$ corresponding to simple cycles.
\item[(c)] The sparsity condition in (b) holds for simple cycles and $\lambda_n = \omega(n^{\,1 - 2/\{\,2R-1\}})$ holds for subgraphs that are neither acyclic nor simple cycles. 
Then,
\begin{align*}
n^{1/2}(\widehat{\beta}-\widetilde{\beta}) \rightarrow N(0,\Sigma_\beta) \quad\text{ in distribution.}
\end{align*}
\end{enumerate}
\end{theorem}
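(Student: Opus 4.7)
The plan is to follow a standard projection-parameter analysis, but with care for the dependencies induced by the shared adjacency matrix. Starting from the algebraic identity
\begin{align*}
n^{1/2}(\widehat{\beta} - \widetilde{\beta}) \;=\; \widehat{\Lambda}^{-1}\, n^{-1/2}\sum_{i=1}^n \widehat{L}_i \widetilde{\v}_i, \qquad \widetilde{\v}_i := Y_i - \widehat{L}_i^\T \widetilde{\beta},
\end{align*}
the crucial advantage over Theorem~\ref{theorem 1} is that each summand is automatically centered: the normal equations for $\widetilde{\beta}$ give $E(\widehat{L}_i \widetilde{\v}_i) = \widetilde{\gamma} - \widetilde{\Lambda}\widetilde{\beta} = 0$, so only variance-like behavior needs to be tracked, with no deterministic bias. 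This is precisely why the sparsity requirements weaken relative to Theorem~\ref{theorem 1}, where the additional bias $\widetilde{\beta}-\beta^*$ must also be shown to be $o(n^{-1/2})$.

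The first step is to establish $\widehat{\Lambda} \to \widetilde{\Lambda}$ in probability. The $X$-block is immediate from the ordinary law of large numbers together with $E\|X\|^4 < \infty$. For blocks that involve a coordinate of $\widehat{Z}$, I would use a Hoeffding-type expansion of each local count $\widehat{Q}_i(\mathcal{R})$ in the latent node variables $(\xi_k)_{k=1}^n$ and the edge noise $(\eta_{k\ell})_{k<\ell}$; the hypotheses of (a)--(c) are exactly what is needed to make the non-leading components of this expansion $L^2$-negligible, so that the law of large numbers, combined with the consistency of $\widehat{\rho}_n/\rho_n$, yields the claim entrywise.

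The main work is the CLT for $S_n := n^{-1/2}\sum_i \widehat{L}_i \widetilde{\v}_i$. Expanding $\widehat{L}_i \widetilde{\v}_i = \widehat{L}_i Y_i - \widehat{L}_i \widehat{L}_i^\T \widetilde{\beta}$, each coordinate of $S_n$ is a linear combination of (i) an ordinary i.i.d.\ sum in the pure-$X$ coordinates, (ii) sums of the form $n^{-1/2}\sum_i Y_i \widehat{Q}_i(\mathcal{R})$ and $n^{-1/2}\sum_i X_i \widehat{Q}_i(\mathcal{R})$, and (iii) quadratic sums $n^{-1/2}\sum_i \widehat{Q}_i(\mathcal{R}_1)\widehat{Q}_i(\mathcal{R}_2)$. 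For each graph-dependent piece I would split the local count into its node-level projection onto $\xi_i$ plus residual fluctuations from the remaining latent variables and the edge indicators. After centering, the projection terms collapse to an i.i.d.\ sum of functions of $(Y_i, X_i, \xi_i)$, and a classical Lindeberg CLT delivers the Gaussian limit under the moment assumptions. A final application of Slutsky's theorem, together with the consistency of $\widehat{\Lambda}$, gives the $N(0,\Sigma_\beta)$ conclusion.

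The main obstacle will be bounding the variances of the residual cross terms, particularly in case (c), where two rooted subgraphs anchored at distinct nodes $i \neq j$ can share vertices and edges in combinatorially intricate ways. Enumerating these overlap patterns and matching them to powers of $\rho_n$ and $n$ is what produces the threshold $\lambda_n = \omega(n^{1-2/(2R-1)})$; the arguments generalize the moment calculations underlying the global subgraph count CLT of \citet{bickel2011method}, but must be adapted to the rooted setting and to the additional weightings by $Y_i, Y_j, X_i, X_j$ that arise in the projection residuals.
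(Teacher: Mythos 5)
Your overall strategy is sound and, at its core, coincides with the paper's: both reduce the graph-dependent sums to (approximate) U-statistics in the latent triples $(Y_i,X_i,\xi_i)$, show that the edge noise conditional on $\xi$ is negligible at scale $n^{-1/2}$ under the stated overlap-enumeration sparsity conditions, and finish with a H\'ajek projection plus smoothness of the OLS map. Your key observation --- that centering at $\widetilde{\beta}$ makes $E(\widehat{L}_i\widetilde{\v}_i)=0$ exactly, so the deterministic bias that forces the stronger conditions of Theorem~\ref{theorem 1} disappears --- is precisely the mechanism in the paper (its term $III$ vanishes when the centering is $E(\widecheck{J}_n)$ rather than $\eta^*$; cf.\ Lemma~\ref{lemma-bias-count}). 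The organizational difference (you linearize the estimating equation and apply Slutsky with $\widehat{\Lambda}^{-1}$; the paper proves a joint CLT for $(\operatorname{vec}(\widehat{\Lambda}),\widehat{\gamma},\hat{\rho}_n/\rho_n)$ and applies the delta method to $f\circ g$) is immaterial.

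The one substantive omission is your treatment of $\hat{\rho}_n$ in the CLT step. Each covariate $\widehat{Q}_i(\mathcal{R}_j)$ is normalized by $\hat{\rho}_n^{s_j}$, and $\hat{\rho}_n/\rho_n-1=O_P(n^{-1/2})$ is \emph{not} negligible at the $\sqrt{n}$ scale: writing $\widehat{Q}_i=(\rho_n/\hat{\rho}_n)^{s_j}\widecheck{Q}_i(\mathcal{R}_j)$ with $\widecheck{Q}_i$ normalized by the true sparsity, one has $n^{-1/2}\sum_i Y_i\bigl(\widehat{Q}_i-\widecheck{Q}_i\bigr)\approx -s_j\bigl(\hat{\rho}_n/\rho_n-1\bigr)\,n^{-1/2}\sum_i Y_i\widecheck{Q}_i=O_P(1)$, which contributes non-trivially to the limiting covariance $\Sigma_\beta$. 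You invoke the consistency of $\hat{\rho}_n/\rho_n$ only for the law-of-large-numbers step; in the CLT step you must carry $\hat{\rho}_n/\rho_n$ as an additional asymptotically linear coordinate --- itself an order-two U-statistic in the $\xi$'s after projecting out edge noise --- and include its H\'ajek projection $2n^{-1}\sum_i\bigl(E[w(\xi_i,\xi_j)\mid\xi_i]-1\bigr)$ in the i.i.d.\ sum, exactly as the paper does by appending it to $\widecheck{J}_n$ and absorbing the powers of $\hat{\rho}_n$ through the map $g$. As written, your projection would deliver a Gaussian limit with the wrong covariance matrix; the fix slots directly into your framework but needs to be made explicit.
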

An important step in our proofs of the above theorems is deriving limiting distributions for quantities of the form:
\begin{align*}
n^{1/2}\begin{pmatrix}
\widehat{\Lambda} - \widetilde{\Lambda} \\ 
\widehat{\gamma} - \widetilde{\gamma}
\end{pmatrix}, 
\quad \textrm{and}\quad n^{1/2}\begin{pmatrix}
\widehat{\Lambda} - \Lambda \\ 
\widehat{\gamma} - \gamma
\end{pmatrix}.
\end{align*}

While the asymptotic behavior of $\widehat{Q}(\mathcal{R})$ is well understood, central limit theorems for terms related to $n^{-1}\sum_{i=1}^n Y_i \widehat{Z}_i$ and $n^{-1}\sum_{i=1}^n \widehat{Z}_i\widehat{Z}_i^\T$ have not been studied previously in the literature. \citet{10.1093/biomet/asad080} considers Lipschitz functions of local subgraph frequencies; since the functions of interest are not Lipschitz, these results do not apply to our setting.  Moreover, our results hold under weaker sparsity conditions and for a broader class of subgraph frequencies for which a root node need not be fixed. Our study of these local subgraph frequencies and corresponding regression estimators reveals an interesting bias phenomenon that does not arise in the study of global subgraph frequencies. The above theorem has important implications for practitioners since it suggests that inference for $\beta^*$, which coincides with the true regression coefficients when the model is linear, requires much stronger sparsity conditions. We believe that these conditions for $\beta^*$ are sharp; in fact, our simulation results suggest that graphs need to be substantially denser than the conditions above when the sample size is in the thousands.  Intuitively, $\widehat{\beta}$ is sensitive to network noise since small perturbations in the covariates can lead to substantial changes in an appropriate matrix inverse.  In fact, it turns out that terms of the form $n^{-1} \sum_{i=1}^n \widehat{Z}_{ij}^2$, $n^{-1} \sum_{i=1}^n \widehat{Z}_{ij}\widehat{Z}_{ik}$ contribute disproportionately to this problem.  The next proposition sheds light on the structure of this term and suggests a potential remedy.    

\begin{proposition}\label{main:prop1}
Let $\widehat{Z}_{i,j}$ and $\widehat{Z}_{i,k}$ be local subgraph frequencies associated with subgraphs $\mathcal{R}_j$ and $\mathcal{R}_k$, with node-edge parameters $(r_j, s_j)$ and $(r_k, s_k)$, respectively; that is, $\widehat{Z}_{i,j} = \widehat{Q}_i(\mathcal{R}_j)$ and $\widehat{Z}_{i,k} = \widehat{Q}_i(\mathcal{R}_k)$. 
Then, we have the following decomposition:
\begingroup
\setlength{\abovedisplayskip}{3pt}
\setlength{\belowdisplayskip}{3pt}
\begin{align*}
n^{-1} \sum_{i=1}^{n} \widehat{Z}_{i,j} \widehat{Z}_{i,k} = \underbrace{\sum_{\mathcal{M} \in \mathfrak{M}^{\mathcal{R}_j, \mathcal{R}_k}_{1,0}} C^{\mathcal{M}}_{\mathcal{R}_j,\mathcal{R}_k,1}\cdot \widehat{Q}(\mathcal{M})}_{\text{Leading term, } S_{n,jk}} + \underbrace{\sum_{\substack{2 \leq c \leq \min(r_j, r_k) \\ 0 \leq d \leq \min(s_j, s_k)}} O(\hat{\rho}_n^{-d} n^{1 - c}) \sum_{\mathcal{M} \in \mathfrak{M}^{\mathcal{R}_j, \mathcal{R}_k}_{c,d}} C^{\mathcal{M}}_{\mathcal{R}_j,\mathcal{R}_k,c} \cdot \widehat{Q}(\mathcal{M})}_{\text{Remainder term, } R_{n,jk}}.
\end{align*}
\endgroup
where $\mathfrak{M}^{\mathcal{R}_j, \mathcal{R}_k}_{c,d}$ is a set of subgraphs containing one representative from each isomorphism class of subgraphs satisfying $\mathcal{S} = \mathcal{S}_j \cup \mathcal{S}_k$, where $\mathcal{S}_j \cong \mathcal{R}_j$, $\mathcal{S}_k \cong \mathcal{R}_k$, $|\mathcal{V}(\mathcal{S}_j) \cap \mathcal{V}(\mathcal{S}_k)| = c$, and $|\mathcal{E}(\mathcal{S}_j) \cap \mathcal{E}(\mathcal{S}_k)|=d$. Note that the isomorphism class         \;$\mathfrak{M}^{\mathcal{R}_j, \mathcal{R}_k}_{c,d}$ can be empty for certain values of $d$ between 0 and $\min(s_j,s_k)$ when no such subgraph exists. Above, $C^{\mathcal{M}}_{\mathcal{R}_j,\mathcal{R}_k,c}$ is a non-negative constant associated with each isomorphism class for each $c$.
\end{proposition}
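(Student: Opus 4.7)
The plan is to prove the decomposition by direct algebraic expansion of $\widehat{Z}_{i,j}\widehat{Z}_{i,k}$ from the definition in \eqref{eq-local-subgraph}, followed by a regrouping of terms according to the isomorphism class of the union subgraph $\mathcal{S}_j \cup \mathcal{S}_k$. The identity is deterministic given the adjacency matrix, so no probabilistic arguments are required, only careful combinatorial bookkeeping.

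First I expand the product to obtain
\begin{align*}
n^{-1}\sum_{i=1}^n \widehat{Z}_{i,j}\widehat{Z}_{i,k}
=\frac{\displaystyle\sum_{i=1}^n\sum_{\substack{\mathcal{S}_j\cong\mathcal{R}_j,\,\mathcal{S}_k\cong\mathcal{R}_k\\ i\in\mathcal{V}(\mathcal{S}_j)\cap\mathcal{V}(\mathcal{S}_k)}}\prod_{\{u,v\}\in\mathcal{E}(\mathcal{S}_j)}A_{uv}\prod_{\{u,v\}\in\mathcal{E}(\mathcal{S}_k)}A_{uv}}{n\binom{n-1}{r_j-1}\binom{n-1}{r_k-1}\hat{\rho}_n^{s_j+s_k}|\mathrm{Iso}(\mathcal{R}_j)||\mathrm{Iso}(\mathcal{R}_k)|}.
\end{align*}
Using $A_{uv}^2 = A_{uv}$, the two inner edge-products collapse into $\prod_{\{u,v\}\in\mathcal{E}(\mathcal{S}_j\cup\mathcal{S}_k)}A_{uv}$. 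Exchanging the order of summation, the sum over $i$ contributes a factor of $c := |\mathcal{V}(\mathcal{S}_j)\cap\mathcal{V}(\mathcal{S}_k)|$ to each pair $(\mathcal{S}_j,\mathcal{S}_k)$.

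Next, I partition the pairs by $(c,d) := (|\mathcal{V}(\mathcal{S}_j)\cap\mathcal{V}(\mathcal{S}_k)|,\,|\mathcal{E}(\mathcal{S}_j)\cap\mathcal{E}(\mathcal{S}_k)|)$ and, within each stratum, by the isomorphism class $\mathcal{M}\in\mathfrak{M}^{\mathcal{R}_j,\mathcal{R}_k}_{c,d}$ of the union, which has $r_\mathcal{M}:=r_j+r_k-c$ vertices and $s_\mathcal{M}:=s_j+s_k-d$ edges. Inverting the definition of $\widehat{Q}(\cdot)$ yields
\begin{align*}
\sum_{\mathcal{S}\cong\mathcal{M}}\prod_{\{u,v\}\in\mathcal{E}(\mathcal{S})}A_{uv}=\binom{n}{r_\mathcal{M}}\hat{\rho}_n^{s_\mathcal{M}}|\mathrm{Iso}(\mathcal{M})|\,\widehat{Q}(\mathcal{M}).
\end{align*}
Letting $N^{\mathcal{M}}_{\mathcal{R}_j,\mathcal{R}_k,c,d}$ count the ordered decompositions of a fixed labeled copy of $\mathcal{M}$ as $\mathcal{S}_j\cup\mathcal{S}_k$ with the prescribed overlap, the coefficient of $\widehat{Q}(\mathcal{M})$ becomes
\begin{align*}
\frac{c\cdot N^{\mathcal{M}}_{\mathcal{R}_j,\mathcal{R}_k,c,d}\binom{n}{r_\mathcal{M}}|\mathrm{Iso}(\mathcal{M})|}{n\binom{n-1}{r_j-1}\binom{n-1}{r_k-1}|\mathrm{Iso}(\mathcal{R}_j)||\mathrm{Iso}(\mathcal{R}_k)|}\cdot\hat{\rho}_n^{-d}.
\end{align*}
Using $\binom{n}{r}=n^r/r!+O(n^{r-1})$, the binomial ratio combined with the $1/n$ factor scales as $n^{1-c}$ times a strictly positive constant, so the overall coefficient is of order $\hat{\rho}_n^{-d}n^{1-c}$. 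For $c=1$ the forced value $d=0$ (a single shared vertex cannot support a shared edge) produces the $O(1)$ leading term, whose limit is the constant $C^{\mathcal{M}}_{\mathcal{R}_j,\mathcal{R}_k,1}$; for $c\geq 2$, absorbing the remaining combinatorial factors into $C^{\mathcal{M}}_{\mathcal{R}_j,\mathcal{R}_k,c}$ yields the stated remainder.

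The main obstacle is the combinatorial bookkeeping for $N^{\mathcal{M}}_{\mathcal{R}_j,\mathcal{R}_k,c,d}$: one must verify that it is a well-defined constant independent of the particular labeled representative of $\mathcal{M}$, which requires careful accounting of the automorphism groups of $\mathcal{M}$, $\mathcal{R}_j$, and $\mathcal{R}_k$. Non-negativity of $C^{\mathcal{M}}_{\mathcal{R}_j,\mathcal{R}_k,c}$ is then immediate from its interpretation as a count of labeled substructures, and indexing the constant by $c$ alone (rather than $(c,d)$) is justified because, for each $\mathcal{M}$, the value $d = s_j+s_k-s_\mathcal{M}$ is pinned down by $\mathcal{M}$ together with $(\mathcal{R}_j,\mathcal{R}_k)$.
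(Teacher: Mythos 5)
Your proposal is correct and follows essentially the same route as the paper's proof: expand the product, collapse shared edges via $A_{uv}^2=A_{uv}$, stratify pairs $(\mathcal{S}_j,\mathcal{S}_k)$ by the overlap parameters $(c,d)$ and the isomorphism class of the union, and invert the definition of $\widehat{Q}(\mathcal{M})$ to read off coefficients of order $\hat{\rho}_n^{-d}n^{1-c}$. The only cosmetic difference is that you extract the multiplicity $c$ from the sum over the root $i$ explicitly, whereas the paper absorbs it into the count of ordered triples $(i_u, I, J)$; both correctly identify the $c=1$, $d=0$ stratum as the $O(1)$ leading term.
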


This representation arises from multiplying two locally normalized subgraph statistics and reorganizing the resulting product into global statistics. Even after this expansion, the analysis is non-trivial since these U-statistics are not centered naturally when targeting $\beta^*$. In fact, the stronger sparsity conditions needed for $\beta^*$ may be viewed as a consequence of a mismatch in the centering associated with these quadratic terms.  In particular, additional conditions are needed for $R_n$ to be asymptotically negligible with improper centering.  To rectify this issue, we propose a modified regression estimator in which these quadratic terms are replaced with the leading term in the above representation. This leads to the following bias-corrected estimator: 
\begin{align}\label{modifiedOLS}
\widehat{\beta}_{\mathrm{mod}} = 
\widehat{\Lambda}_{\mathrm{mod}}^{-1} \widehat{\gamma}, \qquad\text{ with }\quad \widehat{\Lambda}_{\mathrm{mod}} =
\begin{pmatrix}
\mathrm{X}^{\T}\mathrm{X}/n \;& \mathrm{X}^{\T} \widehat{\mathrm{Z}}/n \\
\widehat{\mathrm{Z}}^\T \mathrm{X}/n & \mathrm{S}_n
\end{pmatrix},
\end{align}
where the modified block matrix $S_n$ contains the associated leading terms from the representation in Proposition \ref{main:prop1}. The following result establishes that our bias-corrected estimator substantially enhances the range of sparsity levels for which inference is possible for $\beta^*$.      

\begin{theorem}[Asymptotic Normality of the Bias-corrected OLS Estimator for $\beta^*$]\label{theorem 3} Suppose that $\Lambda = E(LL^\T)$ is invertible, $E(Y^4) < \infty$, and $E(\|X\|^4) < \infty$. Further, suppose that one of the following conditions holds:
\begin{enumerate}
\item[(a)] All subgraphs are acyclic and $\lambda_n =  \omega(1)$.
\item[(b)] All subgraphs are acyclic or simple cycles, and $\lambda_n^{r_j} = \omega(n)$ for all $j$ corresponding to simple cycles.
\item[(c)] The sparsity condition in (b) holds for simple cycles and $\lambda_n = \omega(n^{\,1 - 2/\{\,2R-1\}})$ holds for subgraphs that are neither acyclic nor simple cycles. 
Then,
\begin{align*}
n^{1/2}(\widehat{\beta}_{\mathrm{mod}}-\beta^*) \rightarrow N(0,\Sigma_\beta) \quad \text{ in distribution.}
\end{align*}
\end{enumerate}
\end{theorem}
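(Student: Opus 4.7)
The plan is to follow the standard linearization for OLS-type estimators and leverage Proposition \ref{main:prop1} to show that the bias correction excises precisely the terms that forced the stronger sparsity in Theorem \ref{theorem 1}. Using $\gamma = \Lambda\beta^*$, I decompose
$$n^{1/2}(\widehat{\beta}_{\mathrm{mod}} - \beta^*) = \widehat{\Lambda}_{\mathrm{mod}}^{-1}\bigl[n^{1/2}(\widehat{\gamma}-\gamma) - n^{1/2}(\widehat{\Lambda}_{\mathrm{mod}} - \Lambda)\beta^*\bigr],$$
so the task reduces to: (i) showing $\widehat{\Lambda}_{\mathrm{mod}} \to \Lambda$ in probability; (ii) proving a joint CLT for the stacked vector built from $n^{1/2}(\widehat{\gamma}-\gamma)$ and $n^{1/2}(\widehat{\Lambda}_{\mathrm{mod}} - \Lambda)$; and (iii) identifying the resulting sandwiched covariance with the $\Sigma_\beta$ of Theorem \ref{theorem 1}. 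Slutsky's theorem and the continuous-mapping theorem applied to matrix inversion at $\Lambda$ then complete the argument.

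For (i), the $\mathrm{X}^{\T}\mathrm{X}/n$ and $\mathrm{X}^{\T}\widehat{\mathrm{Z}}/n$ blocks are controlled by the i.i.d.\ law of large numbers on the independent-triple components, combined with the consistency of $\widehat{\rho}_n$ and of normalized local subgraph counts under Assumption \ref{assumption1:graphon}. The modified block $\mathrm{S}_n$ equals, entrywise, $\sum_{\mathcal{M}\in\mathfrak{M}_{1,0}^{\mathcal{R}_j,\mathcal{R}_k}} C^{\mathcal{M}}_{\mathcal{R}_j,\mathcal{R}_k,1}\,\widehat{Q}(\mathcal{M})$ by Proposition \ref{main:prop1}. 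Because each $\mathcal{M}$ is obtained by gluing $\mathcal{R}_j$ and $\mathcal{R}_k$ at exactly one common vertex and no common edges, conditioning on $\xi_i$ and using conditional vertex-independence show that these counts converge to $E(Z_{i,j}Z_{i,k})$, the correct entry of $E(ZZ^\T)$. Crucially, the noisy remainder $R_{n,jk}$ has been discarded by construction, so the sparsity budget required here is only that of the global counts indexed by $\mathfrak{M}_{1,0}^{\mathcal{R}_j,\mathcal{R}_k}$.

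For (ii), the $Y$–$X$ and $X$–$X$ components satisfy a standard CLT under the independent-triple assumption and the stated fourth-moment conditions. The U-statistic components — $n^{-1}\sum_i Y_i \widehat{Z}_{i,j}$ and the entries of $\mathrm{S}_n$ — are handled via a Hoeffding decomposition in $(\xi_1,\dots,\xi_n,\eta_{ij})$, showing that the linear Hájek projection dominates. Under (a) acyclicity gives negligible variance of the higher-order kernels provided only $\lambda_n = \omega(1)$; under (b) the condition $\lambda_n^{r_j} = \omega(n)$ controls simple-cycle components through the Bickel–Chen–Levina-type variance bounds; under (c) the threshold $\lambda_n = \omega(n^{1-2/(2R-1)})$ handles denser motifs appearing in $\mathfrak{M}_{1,0}^{\mathcal{R}_j,\mathcal{R}_k}$. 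These are exactly the thresholds of Theorem \ref{theorem 2}, because after the bias correction the centering of $\mathrm{S}_n$ about $E(ZZ^\T)$ is clean and no mismatched-expectation terms remain.

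The main obstacle I expect is step (iii), verifying that the limiting covariance equals $\Sigma_\beta$ as defined for Theorem \ref{theorem 1} rather than merely some asymptotic normal limit. This requires tracking the combinatorial constants $C^{\mathcal{M}}_{\mathcal{R}_j,\mathcal{R}_k,1}$ and showing that, when one assembles the Hájek projections of the $\widehat{Q}(\mathcal{M})$ for $\mathcal{M}\in\mathfrak{M}_{1,0}^{\mathcal{R}_j,\mathcal{R}_k}$, the resulting node-level scores equal the natural influence function $\widetilde{Q}_i(\mathcal{R}_j)\widetilde{Q}_i(\mathcal{R}_k) - E(Z_{i,j}Z_{i,k})$ that appears when one differentiates the population objective $E(Y - \beta^{\T}L)^2$ at $\beta^*$. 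Matching scores in this way, and verifying that the cross terms with $n^{-1}\sum_i Y_i \widehat{Z}_{i,j}$ line up with the corresponding score $Y_i Z_{i,j} - E(YZ_j)$, yields the identical sandwich covariance $\Sigma_\beta = \Lambda^{-1}\,\mathrm{Var}(\psi_i)\,\Lambda^{-1}$ as in Theorem \ref{theorem 1}. This bookkeeping is the principal delicate step; everything else reduces to applying results already established in the proofs of Theorems \ref{theorem 1} and \ref{theorem 2}.
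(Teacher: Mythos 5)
Your overall architecture coincides with the paper's: the exact identity $\widehat{\beta}_{\mathrm{mod}}-\beta^* = \widehat{\Lambda}_{\mathrm{mod}}^{-1}\bigl[(\widehat{\gamma}-\gamma)-(\widehat{\Lambda}_{\mathrm{mod}}-\Lambda)\beta^*\bigr]$ is the delta method for $f(\mathrm{M},b)=\mathrm{M}^{-1}b$ written out explicitly, and you correctly identify the central point, namely that replacing the quadratic block by the leading term $\mathrm{S}_n$ deletes the remainder $R_{n,jk}$ whose expectation is exactly the bias $\delta_{jk,n}$ of Lemma~\ref{lemma-bias-count}, so the deterministic centering term vanishes identically and only the weaker sparsity of Condition~\ref{appB:sparsity-cond-weak} (as in Theorem~\ref{theorem 2}) is required. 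Two points in your steps (ii)--(iii), however, would fail as written.

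First, you handle $\widehat{\rho}_n$ only through consistency. Since $\widehat{\rho}_n/\rho_n-1=O_P(n^{-1/2})$ and every network covariate is normalized by a power of $\widehat{\rho}_n$, this fluctuation contributes to the limiting law at first order; the paper therefore augments the joint vector with the coordinate $\widehat{\rho}_n/\rho_n$ (the vector $\widecheck{J}_n^{\mathrm{mod}}$) and differentiates through the rescaling map $g$. Omitting this coordinate changes $\Sigma_\beta$. Second, the score-matching you flag as the ``principal delicate step'' aims at the wrong target: the first-order H\'ajek projection of $n^{-1}\sum_i Y_i\widehat{Z}_{ij}$ is $r_j\,E[h_{Y,\mathcal{R}_j}(\tau_1,\ldots,\tau_{r_j})\mid\tau_1]$, which sums over all $r_j$ positions that node $1$ can occupy in the subgraph and is not $Y_1Z_{1j}-E(YZ_j)$; likewise the projection of $\widehat{Q}(\mathcal{M})$ for $\mathcal{M}\in\mathfrak{M}^{\mathcal{R}_j,\mathcal{R}_k}_{1,0}$ is not $Z_{1j}Z_{1k}-E(Z_{1j}Z_{1k})$. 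The identity you propose to verify is false, and no such bookkeeping is needed: $\Sigma_\beta$ in Theorems~\ref{theorem 1}--\ref{theorem 3} is \emph{defined} as the sandwich of the covariance of these U-statistic projections (equation~\eqref{CLT_covaraince_beta}), and it agrees across the theorems simply because the leading U-statistic component (the term $II$ in the paper's decomposition) is literally identical for $\widehat{\beta}$ and $\widehat{\beta}_{\mathrm{mod}}$ --- the modification removes only the deterministic bias and an asymptotically negligible noise term.
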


Therefore, in sparse regimes, one may achieve $n^{1/2}$-consistent inference either by targeting a less interpretable estimand $\widetilde{\beta}$ or by implementing a bias correction. See Figure \ref{fig:beta_estimates} in Section \ref{simulation: sec1} for an empirical comparison of $\widehat{\beta}$ and $\widehat{\beta}_{mod}$. Note that $\Sigma_\beta$ is unknown and needs to be estimated for inferential purposes.  In the next section, we consider statistical inference for regression parameters with local subgraph frequencies using the bootstrap.  The bootstrap offers a user-friendly method that allows practitioners to implicitly estimate the covariance matrix via simulation. 

\subsection{Bootstrap Inference with Local Count Statistics}\label{sec-3.2:bootstrap_consistency}

For bootstrap inference, we jointly resample the OLS statistics and the estimated sparsity parameter. To make things precise, we define
\(\widehat{\Psi}_n=(\operatorname{vec}(\widehat{\Lambda}),\widehat{\gamma})\) and \(\widehat{\Psi}^{\mathrm{mod}}_n=(\operatorname{vec}(\widehat{\Lambda}_{\mathrm{mod}}),\widehat{\gamma})\) and adapt the linear multiplier bootstrap of \citet{lin2020trading} with multipliers \(\{W_1, W_2, \cdots, W_n\}\) that are i.i.d. standard normal random variables. First, consider the following bootstrap analogs of $\widehat{\Psi}_n$ and $\widehat{\Psi}^{\mathrm{mod}}_n$:
\begin{align}\label{subgraph:bootstrap_OLS-stage1}
\widehat{\Psi}_{n}^\sharp:=(\operatorname{vec}(\widehat{\Lambda}^\sharp),\widehat{\gamma}^\sharp)=\widehat{\Psi}_n+\frac{1}{n}\sum_{i=1}^{n}(W_i-1)\,D_k\widehat{g}_{1,\Psi}(i), \\
\widehat{\Psi}_{n}^{\mathrm{mod},\sharp}:=(\operatorname{vec}(\widehat{\Lambda}^\sharp_{\mathrm{mod}}),\widehat{\gamma}^\sharp)=\widehat{\Psi}^{\mathrm{mod}}_n+\frac{1}{n}\sum_{i=1}^{n}(W_i-1)\,D_k\widehat{g}_{1,\Psi}(i),    
\end{align}
where \(D_k=\mathrm{diag}(k_1,\ldots,k_q)\) collects the order of the corresponding underlying U-statistic and \(\widehat{g}_{1,\Psi}(i)\) approximates the first-order projection term (see Supplement \ref{appB-sec:bootstrap}), and $q$ is the dimension 
of $\widehat{\Psi}_n$.
These bootstrapped quantities do not account for fluctuations due to the estimated sparsity parameter $\hat{\rho}_n/\rho_n$. Consider the following bootstrap analog of the sparsity level:
\begin{align*}
\frac{\hat{\rho}_n^\flat}{\hat{\rho}_n}
=1+\frac{2}{n}\sum_{i=1}^n(W_i-1)\Bigl(\tfrac{1}{(n-1)\hat{\rho}_n}\sum_{j\neq i}A_{ij}-1\Bigr). 
\end{align*}

To account for fluctuations from estimating $\rho_n$, we can define new bootstrap quantities $\widehat{\Psi}_{n}^\flat$ and $\widehat{\Psi}_{n}^{\mathrm{mod},\flat}$ that incorporate this bootstrapped sparsity. Formally, we can express these quantities using the following functions: 
\begin{align*}
\widehat{\Psi}_{n}^\flat 
   &:= (\widehat{\Lambda}^\flat,\,\widehat{\gamma}^\flat)
     = g\!\left(\operatorname{vec}(\widehat{\Lambda}^\sharp),\,\widehat{\gamma}^\sharp,\,\hat{\rho}_n^\flat/\hat{\rho}_n\right), 
&
\widehat{\Psi}_{n}^{\mathrm{mod},\flat} 
   &:= (\widehat{\Lambda}^{\mathrm{mod},\flat},\,\widehat{\gamma}^\flat)
     = g\!\left(\operatorname{vec}(\widehat{\Lambda}^\sharp_{\mathrm{mod}}),\,\widehat{\gamma}^\sharp,\,\hat{\rho}_n^\flat/\hat{\rho}_n\right),
\end{align*}
where $g:(m,b,r)\mapsto (\mathrm{M}^{(r)},b^{(r)})$ with 
$m^{(r)} = (\,m_\ell r^{\alpha_{m,\ell}}\,)_{\ell=1}^{q^2},\quad
b^{(r)} = (\,b_t r^{\alpha_{b,t}}\,)_{t=1}^{q},\quad
\mathrm{M}^{(r)}=\operatorname{mat}_q(m^{(r)}),$
for fixed $\alpha=(\alpha_m,\alpha_b)$ with $\alpha_m\in\mathbb{R}^{q^2}$ and $\alpha_b\in\mathbb{R}^q$. 
Here $\operatorname{mat}_q:\mathbb{R}^{q^2}\to\mathbb{R}^{q\times q}$ is the inverse of the $\operatorname{vec}$ operator. 
For a linear subgraph statistic $n^{-1}\sum_{i=1}^n Y_i \widehat{Z}_{ij}$, we have $|\mathcal{E}(\mathcal{R}_j)|=s_j$ 
and hence $\alpha_{b,t} = s_j$.  
For a quadratic subgraph statistic $n^{-1}\sum_{i=1}^n \widehat{Z}_{ij}\widehat{Z}_{ik}$ with 
$|\mathcal{E}(\mathcal{R}_j)|=s_j$ and $|\mathcal{E}(\mathcal{R}_k)|=s_k$, 
it suffices to set $\alpha_{m,\ell} = s_j+s_k$ since we only bootstrap the signal component of the quadratic term. Finally, the bootstrapped regression coefficients are given by:
\begin{align}\label{subgraph:bootstrap_beta}
\widehat{\beta}^\flat=f(\widehat{\Psi}_{n}^\flat), 
\qquad 
\widehat{\beta}^\flat_{\mathrm{mod}}=f(\widehat{\Psi}_{n}^{\mathrm{mod},\flat}),    
\end{align}
where $f(\mathrm{M},b)= \mathrm{M}^{-1}b$. We have the following result:

\begin{theorem}[Validity of Linear Multiplier Bootstrap]\label{graphon:bootstrap}
Under the same conditions as in Theorem \ref{theorem 2}, 
\begin{enumerate}
\item[(a)] The bootstrap consistency holds for the bias-corrected OLS estimator \(\widehat{\beta}_{\mathrm{mod}}^\flat\), i.e., 
\begin{align*}
    \sup_{u \in \mathbb{R}^{p+d}} \Big| 
    \operatorname{pr}^{\flat}\left( n^{1/2}(\widehat{\beta}^{\flat}_{\mathrm{mod}} - \widehat{\beta}_{\mathrm{mod}}) \leq u \, \right) 
    - \operatorname{pr}\left( n^{1/2}(\widehat{\beta}_{\mathrm{mod}} - \beta^{*}) \leq u \right) 
    \Big| \rightarrow 0 \quad \text{in probability}.
\end{align*}
\item[(b)] The bootstrap consistency also holds for the OLS estimator \(\widehat{\beta}^\flat\), i.e.,
\begin{align*}
    \sup_{u \in \mathbb{R}^{p+d}} \left| 
    \operatorname{pr}^{\flat}\left(  n^{1/2}(\widehat{\beta}^{\flat} - \widehat{\beta}) \leq u \, \right) 
    - \operatorname{pr}\left( n^{1/2}(\widehat{\beta} - \widetilde{\beta}) \leq u \right) 
    \right| \rightarrow  0 \quad \text{in probability}.
\end{align*}
\end{enumerate}
\end{theorem}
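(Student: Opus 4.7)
The plan is to combine a functional delta method argument with bootstrap consistency for the underlying U-statistic vectors. Since $\widehat{\beta}=f(\widehat{\Psi}_n)$ and $\widehat{\beta}_{\mathrm{mod}}=f(\widehat{\Psi}_n^{\mathrm{mod}})$ with $f(\mathrm{M},b)=\mathrm{M}^{-1}b$ smooth in a neighborhood of $\widetilde{\Lambda}$ (respectively $\Lambda$), it suffices to show that $\widehat{\Psi}_n^\flat$ and $\widehat{\Psi}_n^{\mathrm{mod},\flat}$ reproduce, conditionally on the data, the unconditional limiting distributions of $n^{1/2}(\widehat{\Psi}_n-\widetilde{\Psi})$ and $n^{1/2}(\widehat{\Psi}_n^{\mathrm{mod}}-\Psi^{*})$ that were established as intermediate steps in the proofs of Theorems~\ref{theorem 2} and \ref{theorem 3}.

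First, I would use Proposition~\ref{main:prop1} together with the Hoeffding decomposition to write each coordinate of $\widehat{\Psi}_n$ (respectively $\widehat{\Psi}_n^{\mathrm{mod}}$) as a global U-statistic plus a remainder of order $o_p(n^{-1/2})$ under the sparsity conditions of Theorem~\ref{theorem 2}. For the quadratic covariate blocks $n^{-1}\sum_i\widehat{Z}_{ij}\widehat{Z}_{ik}$ the expansion splits into the leading global count $S_{n,jk}$ and the remainder $R_{n,jk}$; the modified estimator keeps only $S_{n,jk}$, while in the OLS case $R_{n,jk}$ is absorbed into the centering that defines $\widetilde{\beta}$. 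Standard multiplier-bootstrap theory for U-statistics in the style of \citet{lin2020trading} then shows that, conditional on the data, $n^{1/2}(\widehat{\Psi}_n^\sharp-\widehat{\Psi}_n)$ converges to the same centered Gaussian as the unconditional CLT, provided the estimated H\'ajek projection $\widehat{g}_{1,\Psi}(i)$ approximates the population projection $g_{1,\Psi}(i)$ in $L^2$, which is a routine U-statistic moment calculation.

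Second, the density bootstrap $\hat{\rho}_n^{\flat}/\hat{\rho}_n$ is built from the same multipliers $\{W_i\}$ acting on the first-order projection of $\hat{\rho}_n$, so the joint conditional distribution of $(\widehat{\Psi}_n^\sharp,\hat{\rho}_n^{\flat}/\hat{\rho}_n)$ converges in probability to the unconditional joint Gaussian limit of $(\widehat{\Psi}_n,\hat{\rho}_n/\rho_n)$ after appropriate centering; a Cram\'er--Wold reduction turns this into a univariate multiplier CLT that uses the same $W_i$'s throughout. Because the map $g$ defining $\widehat{\Psi}_n^{\flat}$ and $\widehat{\Psi}_n^{\mathrm{mod},\flat}$ is polynomial in its sparsity argument and smooth jointly, the functional delta method transfers the joint bootstrap CLT to the sparsity-corrected quantities. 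Composing with $f$, whose derivative is bounded since $\Lambda$ and $\widetilde{\Lambda}$ are invertible, yields centered Gaussian limits for $n^{1/2}(\widehat{\beta}_{\mathrm{mod}}^\flat-\widehat{\beta}_{\mathrm{mod}})$ and $n^{1/2}(\widehat{\beta}^\flat-\widehat{\beta})$ that match the unconditional limits in Theorems~\ref{theorem 3} and \ref{theorem 2}. Since $\Sigma_\beta$ is positive definite, the limiting CDF is continuous, so P\'olya's theorem upgrades conditional pointwise convergence into the claimed uniform Kolmogorov-distance convergence.

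The main technical obstacle lies in controlling the higher-order Hoeffding remainders on the bootstrap side uniformly over coordinates. On the observed side, negligibility of $R_{n,jk}$ and of the non-linear pieces arising from expanding products of local counts into global counts follows from the arguments behind Theorems~\ref{theorem 2} and \ref{theorem 3}, but the conditional bootstrap analog requires bounding bootstrap moments of products of overlapping subgraph configurations. This becomes delicate for subgraphs that are neither acyclic nor simple cycles when $\lambda_n$ is near the boundary $\omega(n^{1-2/(2R-1)})$, and it is there that the sparsity hypotheses of Theorem~\ref{theorem 2} are actually consumed. A secondary subtlety is the shared-multiplier coupling between $\widehat{\Psi}_n^\sharp$ and $\hat{\rho}_n^{\flat}/\hat{\rho}_n$: the joint bootstrap covariance must reproduce the joint Hoeffding covariance of $\widehat{\Psi}_n$ and $\hat{\rho}_n$, which I would verify by directly computing both limiting covariance matrices in terms of the first-order projection influence functions rather than by invoking independence.
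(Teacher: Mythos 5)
Your proposal follows essentially the same route as the paper: reduce via the (bootstrap) delta method for the smooth functional $f\circ g$ to consistency of the joint bootstrap covariance of $(\widehat{\Psi}_n^\sharp,\hat{\rho}_n^\flat/\hat{\rho}_n)$, verify that covariance by showing the estimated H\'ajek projections $\widehat{g}_{1,\Psi}(i)$ converge in $L^2$ to their population counterparts using the global U-statistic representation from Proposition~\ref{main:prop1}, and note that the observed-side centering difference between $\widetilde{\beta}$ and $\beta^*$ is exactly what distinguishes parts (a) and (b). The one simplification you could exploit, which the paper does, is that with Gaussian multipliers the conditional law of the bootstrap perturbation is exactly Gaussian, so covariance consistency alone suffices and no conditional multiplier CLT or Cram\'er--Wold step is needed.
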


For bootstrap consistency with Gaussian multipliers, the main step is establishing the consistency of an appropriate bootstrap covariance matrix. 
This includes handling complex higher-order (fourth-moment) noisy terms. We tackle this using the representation in Proposition \ref{main:prop1}, which effectively reduces these higher-order terms to an equivalent global subgraph problem. This enables us to apply techniques similar to those in \citet{lin2020trading} and \citet{zhang2022edgeworth} to establish variance consistency. The proof to establish bootstrap consistency is algebraically involved and has been deferred to Supplement \ref{appB-sec:bootstrap}.  Although it is not the focus of this paper, it is also possible to consider fast approximate bootstrap procedures as in \citet{lin2020trading}.

\subsection{Inference with Adjacency Spectral Embeddings Under the Generalized Random Dot Product Graph Model}\label{subsec:rdpg-reg}
We now consider statistical inference for projection parameters when the network covariates are spectral embeddings related to GRDPG models.  Spectral embeddings are estimates of a node's latent positions for a subclass of graphons that are low rank.  We provide definitions of GRDPGs and spectral embeddings below. While there are still identifiability issues associated with the network covariates under this sub-model, certain inferential tasks for projection parameters are possible when these embeddings are used as covariates.  Incorporating these embeddings into a regression model can potentially offer a more expressive summary of each node's global structural role compared to local motifs and serve as more informative control variables. We now define the sparse generalized random dot product graphs below.  The following representation may be viewed as a special case of the graph root representation of \citet{lei2021network}.

\begin{definition}(Sparse Generalized Random Dot Product Graph Model)\label{def:rdpg}
Suppose that the graphon $w(x,y)$ admits a finite-dimensional spectral decomposition of the form:
\begin{align*}
w(x,y) &= \sum_{r=1}^d \lambda_r \phi_r(x) \phi_r(y) = \sum_{i=1}^{r^{+}} \lambda_i^+ \phi_i^+(x) \phi_i^+(y) - \sum_{j=1}^{r^{-}} |\lambda_j^-| \phi_j^-(x) \phi_j^-(y),
\end{align*}
where $r^{+}+r^{-} =d$, $|\lambda_1| \geq \cdots \geq |\lambda_d| > 0$, $\lambda_i^+$ corresponds to the $ith$ largest positive eigenvalue and $\phi_i^+$ denotes its corresponding eigenfunction, and  $\lambda_j^-$ corresponds to the $j$th largest negative eigenvalue in magnitude and $\phi_j^-$ denotes its corresponding eigenfunction.  Under this assumption, if we let $\chi_{ij} = (\lambda_{j}^+)^{1/2} \phi_j^+(\xi_i)$ for $j \in \{1,\ldots r^{+}\}$ and $\zeta_{ik} = |\lambda_k^-|^{1/2} \phi_k^-(\xi_i)$ for $k \in \{1,\ldots r^{-}\}$, the sparse graphon in (\ref{eq-sparse-graphon}) can be parameterized as:
\begin{align}
\label{eq-grdpg-eq}
A_{ij}^{(n)} \sim \mathrm{Bernoulli}( \rho_n (\langle \chi_i, \chi_j  \rangle - \langle \zeta_i, \zeta_j  \rangle) \wedge 1). 
\end{align}

We refer to a data generating process for which the graphon admits a finite-dimensional spectral decomposition as a generalized random dot product graph (GRDPG) model. GRDPGs with $r^{-}=0$ will be referred to as RDPGs.
\end{definition}

With the above parametrization, it is of interest to estimate embeddings that, in a certain sense, approximate the latent positions.  Let $Z_i = ( \sigma_1^{1/2} \phi_1(\xi_i), \ldots, \sigma_d^{1/2} \phi_d(\xi_i))$, where $\sigma_j$ is the $j$th largest singular value associated with the graphon $w$.  This vector contains elements of $(\chi_i, \zeta_i)$ rearranged by the magnitude of their corresponding eigenvalues.  While $(Z_1, \ldots, Z_n)$ are not identifiable, they can be estimated up to an unknown orthogonal rotation under mild sparsity conditions by an adjacency spectral embedding, which we define below.        

\begin{definition}(Adjacency Spectral Embedding (ASE))\label{def:ase}
Let $|\mathrm{A}| = (\mathrm{A}^\T \mathrm{A})^{1/2}$ and $|\mathrm{A}| = \widehat{\mathrm{U}} \widehat{\mathrm{S}} \widehat{\mathrm{U}}^\T$ denote the spectral decomposition of $|\mathrm{A}|$.  Let $\widehat{\mathrm{S}}^{(d)} \in \mathbb{R}^{d \times d}$ be a diagonal matrix containing the $d$ largest eigenvalues of $|\mathrm{A}|$ and $\widehat{\mathrm{U}}^{(d)} \in \mathbb{R}^{n \times d}$ be a matrix in which the columns contain eigenvectors corresponding to the top $d$ eigenvalues. The adjacency spectral embedding is given by:   
\[
\widehat{\mathrm{Z}} = \hat{\rho}_n^{-1/2}\widehat{\mathrm{U}}^{(d)} |\widehat{\mathrm{S}}^{(d)}|^{1/2} = [\widehat{Z}_1^{\T} \ \widehat{Z}_2^{\T}  \ \cdots \ \widehat{Z}_n^{\T} ]^\T \in \mathbb{R}^{n \times d},
\]
where  $\widehat{Z}_i \in \mathbb{R}^d$ is the $d$-dimensional embedding for node $i$ and $\hat{\rho}_n$ is the same estimator considered in the definition of $\widehat{Q}(\mathcal{R})$.
\end{definition}
\begin{remark}
In the GRDPG literature, it is more common to target  $(\chi_1, \lambda_1), \ldots, (\chi_n, \lambda_n)$.  However, arguments to establish node-wise central limit theorems for related embeddings often hold only up to an indefinite orthogonal rotation (see, for example \citet{rubin2022statistical}), which complicates subsequent inference.  
\end{remark}

In the following subsections, we will consider a central limit theorem for projection parameters associated with linear regression models that take as input ASEs that hold up to an unknown linear transformation, and subsequently consider inference tasks for which the identifiability issue can be circumvented.    

\subsection{Asymptotics of the OLS Estimator}
Before stating a central limit theorem for projection parameters, we prepare some notations.  Suppose that we have $p-1$ conventional covariates and a $d$-dimensional GRDPG embedding.  Define the matrix:  
\[
\mathrm{M}_n =
\begin{bmatrix}
\mathrm{I}_p & 0 \\
0 & \mathrm{Q}_n
\end{bmatrix}
\in \mathbb{R}^{(p+d) \times (p+d)},
\]
where \( \mathrm{I}_p \) is the \( p \times p \) identity matrix and $\mathrm{Q}_n\in \mathbb{O}(d)$ is an appropriate rotation matrix.  We have the following result: 

\begin{theorem}\label{theorem 5}
 Suppose that $\Lambda = E(LL^\T)$ is invertible, $E(Y^4) < \infty$, and $E(\|X\|^4) < \infty$. Further suppose that \( \lambda_n = \omega(n^{1/2}\log^{2c} n) \), for some universal constant $c >1$.   Then, there exists a sequence of orthogonal  rotation matrices $(\mathrm{Q}_n)_{n \geq 1}$ and a  corresponding sequence of linear transformations $(\mathrm{M}_n)_{n \geq 1}$ such that:  
\[
n^{1/2}  (\mathrm{M}_n\widehat{\beta} - \beta^*) \rightarrow N ( 0, \Sigma_{\beta} ) \quad \text{in distribution}.
\]
\end{theorem}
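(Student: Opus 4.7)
The plan is to prove Theorem~\ref{theorem 5} by comparing the feasible OLS estimator $\widehat{\beta}$ to an oracle OLS estimator built from a rotated version of the unobservable latent positions. Choose $\mathrm{Q}_n$ as the orthogonal alignment matrix that nearly minimizes the two-to-infinity distance between $\widehat{\mathrm{Z}}$ and the matrix of true latent positions, whose existence and near-optimality follow from the Davis--Kahan/Procrustes analysis for the sparse GRDPG carried out in \citet{athreya2018statistical} and \citet{generalized-rdpg}. Under the sparsity hypothesis $\lambda_n = \omega(n^{1/2}\log^{2c}n)$, these results yield the uniform rate $\max_{i\in[n]}\|\widehat{Z}_i - \mathrm{Q}_n^{\T} Z_i\| = O_p(\log^{c}n/\sqrt{\lambda_n}) = o_p(n^{-1/4})$, which will be the main tool for controlling error propagation.

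Next, introduce the oracle covariates $\widetilde{L}_i = (X_i,\mathrm{Q}_n^{\T} Z_i) = \mathrm{M}_n^{\T} L_i$ and oracle OLS estimator $\widetilde{\beta} = \widetilde{\Lambda}^{-1}\widetilde{\gamma}$, with $\widetilde{\Lambda} = n^{-1}\sum_i\widetilde{L}_i\widetilde{L}_i^{\T}$ and $\widetilde{\gamma} = n^{-1}\sum_i Y_i\widetilde{L}_i$. Letting $\widetilde{\Lambda}_0 = n^{-1}\sum_i L_i L_i^{\T}$ and $\widetilde{\gamma}_0 = n^{-1}\sum_i Y_i L_i$, the identities $\widetilde{\Lambda} = \mathrm{M}_n^{\T}\widetilde{\Lambda}_0 \mathrm{M}_n$ and $\widetilde{\gamma} = \mathrm{M}_n^{\T}\widetilde{\gamma}_0$ imply the key algebraic reduction $\mathrm{M}_n\widetilde{\beta} = \widetilde{\beta}_0 := \widetilde{\Lambda}_0^{-1}\widetilde{\gamma}_0$, so that the rotation formally disappears. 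Because $(Y_i,X_i,Z_i)$ are iid under Assumption~\ref{assumption1:graphon}, the classical iid CLT together with the delta method applied to $(\mathrm{M},b)\mapsto \mathrm{M}^{-1}b$ yields $n^{1/2}(\mathrm{M}_n\widetilde{\beta} - \beta^*) = n^{1/2}(\widetilde{\beta}_0 - \beta^*) \to N(0,\Sigma_\beta)$ in distribution.

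The main substantive step is to show $n^{1/2}\|\widehat{\beta}-\widetilde{\beta}\| = o_p(1)$. Writing
\[
\widehat{\beta}-\widetilde{\beta} = \widehat{\Lambda}^{-1}\bigl\{(\widehat{\gamma}-\widetilde{\gamma}) - (\widehat{\Lambda}-\widetilde{\Lambda})\widetilde{\beta}\bigr\},
\]
and noting that $\widehat{\Lambda}^{-1}$ is stable because $\widetilde{\Lambda}_0 \to \Lambda$ and $\Lambda$ is invertible, this reduces to showing $\|\widehat{\Lambda}-\widetilde{\Lambda}\|_{\mathrm{op}} = o_p(n^{-1/2})$ and $\|\widehat{\gamma}-\widetilde{\gamma}\| = o_p(n^{-1/2})$. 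These increments split into a purely quadratic piece $n^{-1}\sum_i(\widehat{Z}_i - \mathrm{Q}_n^{\T}Z_i)(\widehat{Z}_i-\mathrm{Q}_n^{\T}Z_i)^{\T} = O_p(\log^{2c}n/\lambda_n) = o_p(n^{-1/2})$, which is handled directly by the uniform ASE rate, plus cross terms such as $n^{-1}\sum_i (\widehat{Z}_i-\mathrm{Q}_n^{\T}Z_i)(\mathrm{Q}_n^{\T}Z_i)^{\T}$ and $n^{-1}\sum_i Y_i(\widehat{Z}_i-\mathrm{Q}_n^{\T}Z_i)$ that a naive Cauchy--Schwarz bound would leave only at rate $o_p(n^{-1/4})$.

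The hard part will be establishing the $o_p(n^{-1/2})$ rate for these linear-in-error cross terms, which is what ultimately dictates the sparsity exponent $1/2$ in the hypothesis. My approach is to invoke the first-order perturbation expansion for GRDPG spectral embeddings from \citet{generalized-rdpg} and \citet{rubin2022statistical}, writing $\widehat{Z}_i - \mathrm{Q}_n^{\T}Z_i = \sum_{j\neq i}(A_{ij}-\rho_n w(\xi_i,\xi_j))\Phi_{ij} + \mathrm{rem}_i$ for an appropriate eigen-weighted kernel $\Phi_{ij}$ and a higher-order remainder $\mathrm{rem}_i$. Substituting this expansion into the cross sums and exchanging the order of summation produces centered U-statistic-like objects whose conditional variances scale like $\rho_n^{-1}n^{-1}\log^{2c}n = o(n^{-1})$ precisely under $\lambda_n \gg n^{1/2}\log^{2c}n$, while the remainder contribution is absorbed by the uniform two-to-infinity bound. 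A final Slutsky argument combines the oracle CLT with this negligible remainder to yield the claim.
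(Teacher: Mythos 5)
Your proposal follows essentially the same route as the paper's proof: choose $\mathrm{Q}_n$ from the Procrustes/two-to-infinity analysis of the ASE, exploit the invariance of OLS under invertible linear maps to pull $\mathrm{M}_n$ outside the functional, reduce the problem to showing the feasible and oracle sufficient statistics differ by $o_p(n^{-1/2})$, handle the quadratic error via the uniform embedding rate and the linear cross terms via the first-order expansion $\rho_n^{-1}(\mathrm{A}-\mathrm{P})Z(Z^{\T}Z)^{-1}$ plus remainder, and finish with the i.i.d.\ CLT and delta method. One small correction: in the paper's accounting the first-order cross term has variance of order $(n\lambda_n)^{-1}$ and so only needs $\lambda_n\to\infty$; it is the higher-order remainder, whose $2\to\infty$ bound $O_p(\log^{2c}n/(n\rho_n))$ must beat $n^{-1/2}$ after summation, that forces $\lambda_n=\omega(n^{1/2}\log^{2c}n)$ — your stated variance $\rho_n^{-1}n^{-1}\log^{2c}n$ is not the right scaling and the sparsity exponent is dictated by the remainder, not the cross terms. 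You should also account for the $\hat{\rho}_n^{-1/2}$ normalization in the ASE (the paper carries $\hat{\rho}_n/\rho_n$ through the joint delta method), though this is routine bookkeeping.
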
  
\begin{remark}
In the above theorem, we consider the target parameter $\beta^*$, but not $\widetilde{\beta}$.  The estimated spectral embeddings are not uniquely defined, making $\widetilde{\beta}$ an ambiguous target.       
\end{remark}

\begin{remark}
Our theorem above implies that the identifiability issues associated with the ASE lead to difficulties for estimation and inference of regression coefficients corresponding to spectral embeddings. However, coefficients corresponding to conventional covariates are unaffected by this rotation, and both estimation and inference are possible.   
\end{remark}

\begin{remark}
In the assumption-lean framework, a natural question is whether inferences are valid under misspecification of the network model.  If the underlying GRDPG is of higher rank than the embedding dimension, then the above theorem still holds.  If the underlying model is infinite rank, recent results by \citet{tang2025eigenvectorfluctuationslimitresults} may be invoked in our proofs to derive an analogous result under additional technical conditions.   
\end{remark}

As an intermediate step of our proof, we establish asymptotic normality in particular for 
$n^{-1/2}\sum_{i=1}^n\{\mathrm{Q}_n\widehat{Z}_i - E[Z_i]\}$ 
for some sequence of orthogonal matrices $(\mathrm{Q}_n)_{n \geq 1}$. To achieve this, we adapt arguments for node-wise central limit theorems for $\mathrm{Q}_n\widehat{Z}_i - Z_i$ developed by \citet{athreya2018statistical} and \citet{rubin2022statistical}, but exploit the independence of appropriate signal terms when averaging over nodes. Central limit theorems for means of functions of spectral embeddings have previously been studied by \citet{90ab183a-a1e8-320b-b810-5ecbf33e96cb}. Although these authors are able to prove a functional central limit theorem, these results are restricted to the dense case and also require stronger conditions on the underlying GRDPG model.  For terms of the form $n^{-1/2}\sum_{i=1}^n\{\mathrm{Q}_nY_i\widehat{Z}_i- E(Y_iZ_i)\}$ and $n^{-1/2}\sum_{i=1}^n\{\mathrm{Q}_n\widehat{Z}_i\widehat{Z}_i^\T\mathrm{Q}_n^\T  - E(Z_iZ_i^\T)\}$, we are able to derive central limit theorems under weaker regularity conditions.    

While a delta method argument is invoked to derive asymptotic normality of the OLS estimator, it should be noted that we make use of properties of the OLS functional beyond differentiability. OLS estimators are invariant under invertible linear transformations in the sense that if $\widehat{\beta} = (\mathrm{X}^\T \mathrm{X})^{-1} \mathrm{X}^\T\mathrm{Y}$, and one defines $\widetilde{\mathrm{X}} = \mathrm{X}\mathrm{M}$ for some invertible $\mathrm{M}$, then $(\mathrm{\widetilde{X}}^\T \mathrm{\widetilde{X}})^{-1} \mathrm{\widetilde{X}}^\T\mathrm{Y} = \mathrm{M}\widehat{\beta}$.  This property, which does not hold for general differentiable functionals, allows us to move $\mathrm{M}$ outside the OLS functional and construct test statistics for the presence of network effects.

\subsection{Bootstrap Inference with Spectral Embeddings}\label{bootstrap:rdpg}
Our theory suggests that spectral embeddings have a different dependence structure compared to local subgraph statistics.  In particular, these embeddings are asymptotically independent, motivating the development of multiplier bootstraps for independent data in this setting. Let \(W_1, W_2, ..., W_n\) be i.i.d. Gaussian multiplier random variables satisfying \(E(W_i)=0\) and \(\text{var}(W_i)=1\) for all \(i\). Consider the following bootstrap quantities:  
\begin{align*}
(n^{-1}\widehat{\mathrm{L}}^{\T}\widehat{\mathrm{L}})^{\flat}=\left(\frac{\hat{\rho}_n}{\hat{\rho}_n^\flat}\right)\left\{n^{-1}\widehat{\mathrm{L}}^{\T}\widehat{\mathrm{L}}+n^{-1}\sum_{i=1}^{n} W_i \cdot (\widehat{L}_i\widehat{L}_i^{\T}-n^{-1}\widehat{\mathrm{L}}^{\T}\widehat{\mathrm{L}})\right\},\\ 
(n^{-1}\widehat{\mathrm{L}}^{\T}Y)^{\flat}=\left(\frac{\hat{\rho}_n}{\hat{\rho}_n^\flat}\right)^{1/2}\left\{n^{-1}\widehat{\mathrm{L}}^{\T}Y+n^{-1}\sum_{i=1}^{n}W_i\cdot (\widehat{L}_i Y_i-n^{-1}\widehat{\mathrm{L}}^{\T}Y)\right\}.
\end{align*}
The bootstrapped version of the regression coefficient is then defined as:
\begin{align*}
\widehat{\beta}^{\flat}=\big\{(n^{-1}\widehat{\mathrm{L}}^{\T}\widehat{\mathrm{L}})^{\flat}\big\}^{-1} (n^{-1}\widehat{\mathrm{L}}^{\T}Y)^{\flat}.
\end{align*}

The bootstrapped estimator \( \widehat{\beta}^{\flat} \), constructed from the observed data, inherits identifiability issues from the original estimator.  However, it will turn out that the same linear transformation $\mathrm{M}_n$ can be used to simultaneously align both $\widehat{\beta}^\flat$ and $\beta^*$.  This property will be crucial for developing bootstrap-based tests for the presence of network effects. We have the following result:

\begin{theorem}\label{rdpg:theorem6}(Validity of the Rotation-aligned Bootstrap Estimator) Under the same conditions as Theorem \ref{theorem 5}, there exists a sequence of linear transformations $(\mathrm{M}_n)_{n \geq 1}$ depending only on the data such that: 
 \begin{align*}
   \sup_{u \in \mathbb{R}^{p+d}} \left|  \operatorname{pr}^\flat\left(n^{1/2}\mathrm{M}_n(\widehat{\beta}^{\flat}-\widehat{\beta}) \leq u\right) - \operatorname{pr}\left( \Sigma_\beta^{1/2} Z \leq u \right)\right| \rightarrow 0 \quad \text{in probability},  
 \end{align*}
 where $Z \sim N(0, I_{p+d})$.
\end{theorem}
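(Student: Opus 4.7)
The plan is to exploit the linear invariance of the OLS functional to turn the bootstrap problem into a standard i.i.d.\ multiplier bootstrap applied to a rotationally aligned design. Because $\mathrm{M}_n = \mathrm{diag}(\mathrm{I}_p, \mathrm{Q}_n)$ is invertible, the OLS estimator built from the transformed design with $i$th row $\mathrm{M}_n^{-\T}\widehat{L}_i = (X_i, \mathrm{Q}_n \widehat{Z}_i)$ equals $\mathrm{M}_n\widehat{\beta}$, and the same identity holds for the bootstrap estimator $\mathrm{M}_n\widehat{\beta}^{\flat}$ under the same multiplier weights $W_1, \ldots, W_n$. Hence $\mathrm{M}_n(\widehat{\beta}^{\flat} - \widehat{\beta})$ is exactly the multiplier bootstrap fluctuation of the OLS estimator computed on the aligned design, with no latent rotation ambiguity left inside the bootstrap step. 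The intermediate ASE bound used in the proof of Theorem~\ref{theorem 5}, of the form $\max_i \|\mathrm{Q}_n \widehat{Z}_i - Z_i\| = o_p(1)$ under $\lambda_n = \omega(n^{1/2}\log^{2c} n)$, then makes the aligned summands asymptotically i.i.d.\ copies of $(X_i, Z_i, Y_i)$.

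The argument then proceeds in three steps. First, conditional on the data the multiplier CLT is automatic: since the $W_i$ are i.i.d.\ standard normal,
\begin{align*}
n^{1/2}\mathrm{M}_n\bigl(\widehat{\Lambda}^{\flat} - \widehat{\Lambda}\bigr)\mathrm{M}_n^{\T} \quad\text{and}\quad n^{1/2}\mathrm{M}_n\bigl(\widehat{\gamma}^{\flat} - \widehat{\gamma}\bigr)
\end{align*}
are conditionally Gaussian, and the task reduces to showing that the sample covariance of $\{(\mathrm{M}_n^{-\T}\widehat{L}_i \widehat{L}_i^{\T} \mathrm{M}_n^{-1},\, \mathrm{M}_n^{-\T}\widehat{L}_i Y_i)\}_{i=1}^n$ converges in probability to the population covariance governing $\Sigma_\beta$ in Theorem~\ref{theorem 5}. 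This requires laws of large numbers for products of $(\mathrm{Q}_n\widehat{Z}_i, X_i, Y_i)$ up to fourth order, which I would obtain by combining the moment conditions $E(Y^4)<\infty$ and $E(\|X\|^4)<\infty$ with the uniform entrywise control of $\mathrm{Q}_n\widehat{Z}_i - Z_i$ imported from the proof of Theorem~\ref{theorem 5}.

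Second, I would handle the random sparsity prefactors $(\hat{\rho}_n/\hat{\rho}_n^{\flat})$ and $(\hat{\rho}_n/\hat{\rho}_n^{\flat})^{1/2}$. A conditional concentration argument shows $\hat{\rho}_n^{\flat}/\hat{\rho}_n = 1 + n^{-1/2}\xi_n^{\flat}$ with $\xi_n^{\flat}$ conditionally Gaussian, and a Taylor expansion of $\widehat{\beta}^{\flat}$ factors the sparsity contribution as a multiplicative $(\hat{\rho}_n^{\flat}/\hat{\rho}_n)^{1/2}\widehat{\beta}$ perturbation plus $o_{p^\flat}(n^{-1/2})$. I would then argue that this additional conditional Gaussian term exactly mirrors the fluctuation entering $\widehat{\beta}$ through the $\hat{\rho}_n^{-1/2}$ rescaling in $\widehat{Z}_i = \hat{\rho}_n^{-1/2}\widehat{\mathrm{U}}^{(d)}|\widehat{\mathrm{S}}^{(d)}|^{1/2}$, a matching already implicitly verified in Theorem~\ref{theorem 5}. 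Third, I would apply the delta method for the smooth map $(M, b) \mapsto M^{-1}b$ at the aligned population point $(\Lambda, \gamma)$, combining the two preceding steps to obtain conditional convergence of $n^{1/2}\mathrm{M}_n(\widehat{\beta}^{\flat} - \widehat{\beta})$ to $\Sigma_\beta^{1/2} Z$ in probability over the data. The uniform Kolmogorov bound over $u \in \mathbb{R}^{p+d}$ then follows from Polya's theorem, since the $N(0, \Sigma_\beta)$ limit has an absolutely continuous distribution function.

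The main obstacle is the bootstrap variance consistency under rotation alignment. Because the spectral embeddings $\widehat{Z}_1, \ldots, \widehat{Z}_n$ depend jointly on the entire graph through the top eigendecomposition of $|\mathrm{A}|$, the aligned summands $\mathrm{Q}_n \widehat{Z}_i \widehat{Z}_i^{\T}\mathrm{Q}_n^{\T}$ and $\mathrm{Q}_n \widehat{Z}_i Y_i$ are not exactly independent across $i$, so translating standard i.i.d.\ multiplier bootstrap guarantees into this setting requires simultaneous $\ell_{2,\infty}$ control of $\max_i \|\mathrm{Q}_n\widehat{Z}_i - Z_i\|$ together with control of the aggregated cross-moment biases. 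This is precisely where the sharp sparsity assumption $\lambda_n = \omega(n^{1/2}\log^{2c} n)$, inherited from the ASE concentration results invoked in Theorem~\ref{theorem 5}, is essential.
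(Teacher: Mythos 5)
Your proposal follows essentially the same route as the paper's proof: reduce the problem, via conditional Gaussianity of the multipliers, to consistency of the bootstrap covariance matrix of the aligned quantities $(\mathrm{M}_n\widehat{L}_i\widehat{L}_i^{\T}\mathrm{M}_n^{\T},\ \mathrm{M}_n\widehat{L}_iY_i,\ \hat{\rho}_n\text{-ratio})$, prove that consistency using the $2\to\infty$ control of $\mathrm{Q}_n\widehat{Z}_i - Z_i$ from the ASE error representation together with the fourth-moment conditions, and finish with the delta method for the bootstrap applied to $(\mathrm{M},b)\mapsto \mathrm{M}^{-1}b$ composed with the sparsity rescaling, plus Polya's theorem for uniformity. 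You also correctly identify the crux (bootstrap variance consistency despite the cross-node dependence of the embeddings), which is exactly what the paper's Lemma on $\widehat{\Sigma}^\flat \to \Sigma_\Psi$ handles.
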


\begin{remark}
There may be cases in which both local subgraph frequencies and spectral embeddings are desired as network covariates. Since our theory justifies the use of an independent bootstrap for spectral embeddings, they may be bootstrapped similarly to the conventional covariates in the bootstrap procedure introduced in Section \ref{sec-3.2:bootstrap_consistency}.  
\end{remark}

It should be noted that we cannot leverage global representations as we did for subgraphs, so our analysis here is different from the proofs of results in Theorem \ref{graphon:bootstrap}.  

We now consider hypothesis testing for assessing the existence of network effects. In what follows, let $\beta_{\mathrm{z}}^*$ denote the vector of projection parameters of the form (\ref{clean beta}) associated with the spectral embeddings.  Suppose that we are interested in testing $H_0: \beta_{\mathrm{z}}^* =0$ against $H_1: \beta_{\mathrm{z}}^* \neq 0$.     
Consider the test statistic:
\begin{align}\label{rdpg:observed testing statistic}
\widehat{T}_{test}(\beta_{\mathrm{z}}^*) = n\|\widehat{\beta}_{\mathrm{z}}- \beta_{\mathrm{z}}^*\|_2^2.    
\end{align}

To find a critical value for the above test statistic under $H_0$, we will use the following bootstrapped statistic:
\begin{align}\label{rdpg:bootstrapped testing statistic}
\widehat{T}^{\flat}_{test}(\widehat{\beta}_{\mathrm{z}}) = n\|\widehat{\beta}^{\flat}_{\mathrm{z}} - \widehat{\beta}_{\mathrm{z}}\|_2^2.    
\end{align}

Our result below establishes that, under the null hypothesis of no network effect, the bootstrap distribution of \( \widehat{T}^{\flat}_{test}(\widehat{\beta}_{\mathrm{z}}) \) provides a valid approximation of the sampling distribution of \( \widehat{T}_{test}(0) \), thereby serving as a reference distribution for inference. In what follows, let $c_{1-\alpha}^\flat$ denote the $(1-\alpha)$-quantile of $\widehat{T}^{\flat}_{test}$. We have the following corollary: 

\begin{corollary}[Validity of the Bootstrap Approximation for Testing Network Effect]\label{hypothesis testing network}
    Suppose that \( \lambda_n = \omega( n^{1/2}\log^{2c} n ) \), for some universal constant $c$.  Then under $H_0: \beta_{\mathrm{z}}^* = 0$, 
    \[
    \sup_{u \in \mathbb{R}} \left| \operatorname{pr}^\flat\left(\widehat{T}^{\flat}_{test}(\widehat{\beta}_{\mathrm{z}}) \leq u \right) - \operatorname{pr}\left(\widehat{T}_{test}(\beta_{\mathrm{z}}^*) \leq u \right) \right| \to 0 \quad \text{in probability}.
    \]
     Moreover, for any \( \alpha \in [0,1] \),
    \[
    \operatorname{pr}\left( \widehat{T}_{test}( \beta^{*}_{\mathrm{z}}) > c_{1-\alpha}^\flat \right) \to \alpha.
    \]
\end{corollary}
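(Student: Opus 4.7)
The plan is to exploit the fact that, under $H_0:\beta^*_{\mathrm{z}}=0$, the unidentified orthogonal rotation $\mathrm{Q}_n$ in the alignment matrix $\mathrm{M}_n$ drops out of the test statistic, reducing everything to a direct application of Theorem~\ref{theorem 5} and Theorem~\ref{rdpg:theorem6} together with a continuous mapping argument.

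First I would observe that for any vector $v\in\mathbb{R}^{p+d}$, the lower $d$-block of $\mathrm{M}_n v$ equals $\mathrm{Q}_n v_{\mathrm{z}}$, and $\|\mathrm{Q}_n v_{\mathrm{z}}\|_2=\|v_{\mathrm{z}}\|_2$ because $\mathrm{Q}_n\in\mathbb{O}(d)$. Consequently, under $H_0$,
\begin{align*}
\widehat{T}_{test}(\beta_{\mathrm{z}}^*) \;=\; n\|\widehat{\beta}_{\mathrm{z}}\|_2^2 \;=\; n\|\mathrm{Q}_n\widehat{\beta}_{\mathrm{z}}\|_2^2 \;=\; \bigl\|\bigl[n^{1/2}(\mathrm{M}_n\widehat{\beta}-\beta^*)\bigr]_{\mathrm{z}}\bigr\|_2^2,
\end{align*}
and similarly $\widehat{T}_{test}^{\flat}(\widehat{\beta}_{\mathrm{z}}) = \bigl\|\bigl[n^{1/2}\mathrm{M}_n(\widehat{\beta}^{\flat}-\widehat{\beta})\bigr]_{\mathrm{z}}\bigr\|_2^2$. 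Thus both statistics are continuous functions (the squared Euclidean norm of the $\mathrm{z}$-block, composed with a projection) of the rotation-aligned objects whose joint behaviors are pinned down by Theorems~\ref{theorem 5} and~\ref{rdpg:theorem6}.

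Next I would apply the continuous mapping theorem. Theorem~\ref{theorem 5} gives $n^{1/2}(\mathrm{M}_n\widehat{\beta}-\beta^*)\Rightarrow \Sigma_\beta^{1/2}Z$ with $Z\sim N(0,I_{p+d})$, so under $H_0$, $\widehat{T}_{test}(\beta_{\mathrm{z}}^*)\Rightarrow \|(\Sigma_\beta^{1/2}Z)_{\mathrm{z}}\|_2^2$, which has an absolutely continuous distribution function $F$ on $[0,\infty)$ (a generalized chi-squared). Theorem~\ref{rdpg:theorem6} gives the same conditional limit for $n^{1/2}\mathrm{M}_n(\widehat{\beta}^{\flat}-\widehat{\beta})$, so $\widehat{T}_{test}^{\flat}(\widehat{\beta}_{\mathrm{z}})$ converges conditionally in distribution to the same law in probability. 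Since both distribution functions converge pointwise to $F$, and $F$ is continuous, Polya's theorem upgrades this to uniform convergence, yielding the first display of the corollary.

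For the quantile statement, let $c_{1-\alpha}$ denote the $(1-\alpha)$-quantile of $F$. Because $F$ is continuous and strictly increasing at $c_{1-\alpha}$, the bootstrap quantile satisfies $c^{\flat}_{1-\alpha}\to c_{1-\alpha}$ in probability by a standard quantile-continuity argument (e.g.\ Lemma 21.2 of van der Vaart). Combining this with the weak convergence of $\widehat{T}_{test}(\beta_{\mathrm{z}}^*)$ to a continuous limit and Slutsky's lemma yields $\operatorname{pr}(\widehat{T}_{test}(\beta_{\mathrm{z}}^*) > c^{\flat}_{1-\alpha}) \to 1-F(c_{1-\alpha}) = \alpha$, establishing asymptotic level. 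The main subtlety, though not a true obstacle given the tools in hand, is verifying that the limit $F$ is genuinely continuous so that Polya's theorem applies and quantile convergence holds; this follows because $\Sigma_\beta$ is assumed positive definite, so the $\mathrm{z}$-marginal of $\Sigma_\beta^{1/2}Z$ is a nondegenerate multivariate Gaussian and its squared norm has no atoms.
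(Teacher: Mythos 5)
Your proposal is correct and follows essentially the same route as the paper: invoke the orthogonal invariance $\|\widehat{\beta}_{\mathrm{z}}\|_2=\|\mathrm{Q}_n\widehat{\beta}_{\mathrm{z}}\|_2$ under $H_0$ so that both $\widehat{T}_{test}$ and $\widehat{T}^{\flat}_{test}$ become continuous functionals of the rotation-aligned quantities from Theorems~\ref{theorem 5} and~\ref{rdpg:theorem6}, apply the continuous mapping theorem, and conclude the level statement via convergence of bootstrap quantiles at continuity points of the limit law. Your added remarks on Polya's theorem and the nondegeneracy of the $\mathrm{z}$-marginal needed for continuity of the limit distribution are consistent with, and slightly more explicit than, the paper's "standard arguments" step.
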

\begin{remark}
Our test statistic leverages the fact that for any $\mathrm{Q} \in \mathbb{O}(d)$, $\|\widehat{\beta}_{\mathrm{z}} \|_2^2 = \|\mathrm{Q}(\widehat{\beta}_{\mathrm{z}} - \beta_{\mathrm{z}}^*) \|_2^2$ holds when $\beta_{\mathrm{z}}^* = 0$.  It is also possible to consider the test statistic $\|\widehat{\beta}_{\mathrm{z}}\|^2_{2}-\|\beta_{\mathrm{z}}^*\|^2_{2}$, but it turns out that a scaled version of this test statistic is degenerate when $\beta_{\mathrm{z}}^* = 0$, which can be shown to lead to bootstrap inconsistency.       
\end{remark}

\section{Down-Sampling for Handling More Flexible Covariate Types and Challenging Sparsity Regimes} \label{sec:down-sampling}

In previous sections, we established $n^{1/2}$-consistency results in cases where the network covariates involved local subgraph frequencies or GRDPG embeddings.  While these statistics are widely used, many other covariate types are often considered for network-assisted regression (see for example, \citet{https://doi.org/10.1002/sam.11486}) and their dependency structures are not always well understood.  Even for spectral embeddings, our asymptotic normality results require \( \lambda_n = \omega(n^{1/2}\log^{2c} n) \).  It is natural to ask whether statistical inference is possible in sparser regimes if one is willing to accept a slower rate of convergence.       

To address these difficulties, we propose a down-sampling method that enables statistical inference based on the normal approximation under milder regularity conditions.  When $n^{1/2}$-consistency is achievable, the tradeoff is a loss of efficiency, but this approach allows inference in cases where such guarantees are not available.  We now define the down-sampled OLS estimator.  Let $m \ll n$ denote the down-sample size, and $(Y_1, \widehat{L}_1), \ldots (Y_m, \widehat{L}_m)$ denote the down-sample of interest.  The corresponding OLS is given by:   
\begin{align}\label{def:down-sampling estimator}
\widehat{\beta}^{(m)} = \argmin_{\beta \in \mathbb{R}^{p+d}} \frac{1}{m} \sum_{i=1}^m \left( Y_i - \widehat{L}_i^\T \beta \right)^2.
\end{align}

Intuitively, the network covariate $\widehat{Z}_i$ is still estimated using the entire dataset; if $m$ is chosen appropriately, then the fluctuations of $\widehat{Z}_i - Z_i$ would be negligible when the OLS estimator is centered and scaled by $m^{1/2}$.  We now state mild, simple sufficient conditions for our down-sample estimator.  

\begin{condition}[Choice of Down-Sample Size]\label{down-sampling:condition2}
Suppose there exist i.i.d. random variables $Z_1, \ldots, Z_n$ and a sequence $a_n \rightarrow 0$ such that: 
\begin{align}
\max_{1 \le i \le n} \|\widehat{Z}_i - Z_i\| = O_P(a_n). 
\end{align}
Moreover, $m_n$ is chosen so that $m_n \rightarrow \infty$, $m_n=o(n)$, while satisfying $a_n=o(m_n^{-1/2})$.
\end{condition}

We are now ready to state the $m^{1/2}$-scaled central limit theorem for the down-sampled OLS estimator:

\begin{theorem}[Asymptotic Normality of the Down-Sampled Estimator Targeting $\beta^*$]\label{thm:down-sampling}
Suppose Condition 
\ref{down-sampling:condition2} holds. Further,  assume that $\Lambda = E(LL^\T)$ is invertible, $E(Y^4) < \infty$, and $E(\|X\|^4) < \infty$. Then,
\begin{align}\label{eq-down-sample-clt}
m^{1/2} \left( \widehat{\beta}^{(m)} - \beta^* \right) \rightarrow N(0, \Sigma^{(m)}_\beta) \quad \text{in distribution},    
\end{align}
where  \( \Sigma^{(m)}_\beta \) is defined in Section \ref{subsec-appedix-down-sampling}.
\end{theorem}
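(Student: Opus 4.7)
The plan is to reduce the down-sampled estimator $\widehat{\beta}^{(m)}$, which uses the noisy network covariates $\widehat{Z}_i$, to an \emph{oracle} down-sampled estimator $\widehat{\beta}^{(m)}_{\mathrm{orc}}$ that uses the i.i.d. triples $(Y_i, X_i, Z_i)_{i=1}^m$, and then invoke a standard i.i.d.\ central limit theorem plus the delta method. Write $\widehat{\beta}^{(m)} = \widehat{\Lambda}_{(m)}^{-1}\widehat{\gamma}_{(m)}$ with $\widehat{\Lambda}_{(m)} = m^{-1}\sum_{i=1}^m \widehat{L}_i\widehat{L}_i^{\T}$ and $\widehat{\gamma}_{(m)} = m^{-1}\sum_{i=1}^m Y_i \widehat{L}_i$, and analogously $\widetilde{\Lambda}_{(m)}$ and $\widetilde{\gamma}_{(m)}$ with $L_i$ in place of $\widehat{L}_i$. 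The first step is to show that for the oracle estimator,
\begin{equation*}
m^{1/2}\begin{pmatrix} \widetilde{\Lambda}_{(m)} - \Lambda \\ \widetilde{\gamma}_{(m)} - \gamma \end{pmatrix} \to N(0, \Xi) \quad \text{in distribution,}
\end{equation*}
which is just the multivariate CLT for i.i.d. sums under the assumed fourth-moment conditions $E(Y^4)<\infty$ and $E(\|X\|^4)<\infty$ (boundedness of $w$ together with Assumption \ref{assumption1:graphon} gives the required moments for $Z_i$). The delta method applied to the smooth map $(\Lambda,\gamma) \mapsto \Lambda^{-1}\gamma$ at $(\Lambda,\gamma)$ then yields $m^{1/2}(\widehat{\beta}^{(m)}_{\mathrm{orc}} - \beta^*) \to N(0,\Sigma^{(m)}_\beta)$, with $\Sigma^{(m)}_\beta$ of the usual sandwich form $\Lambda^{-1} E\{(Y-L^{\T}\beta^*)^2 LL^{\T}\} \Lambda^{-1}$.

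The second step is to show that the discrepancy between $\widehat{\beta}^{(m)}$ and $\widehat{\beta}^{(m)}_{\mathrm{orc}}$ is $o_P(m^{-1/2})$. Decompose
\begin{equation*}
\widehat{\Lambda}_{(m)} - \widetilde{\Lambda}_{(m)} = \frac{1}{m}\sum_{i=1}^m \bigl\{Z_i (\widehat{Z}_i - Z_i)^{\T} + (\widehat{Z}_i - Z_i) Z_i^{\T} + (\widehat{Z}_i - Z_i)(\widehat{Z}_i - Z_i)^{\T} + \text{$X$–cross blocks}\bigr\},
\end{equation*}
and similarly $\widehat{\gamma}_{(m)} - \widetilde{\gamma}_{(m)} = m^{-1}\sum_{i=1}^m Y_i(\widehat{Z}_i - Z_i)$. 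By Condition \ref{down-sampling:condition2}, $\max_{i\le n}\|\widehat{Z}_i - Z_i\| = O_P(a_n)$, so each such remainder is bounded (in spectral or Euclidean norm) by a factor of the form $O_P(a_n) \cdot m^{-1}\sum_{i=1}^m (\|Z_i\| + \|X_i\| + |Y_i| + a_n)$. Since the sample averages of $\|Z_i\|$, $\|X_i\|$, $|Y_i|$ are $O_P(1)$ by the moment assumptions, and $a_n = o(m^{-1/2})$ by hypothesis, every remainder term is $o_P(m^{-1/2})$.

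Combining the two steps, $\widehat{\Lambda}_{(m)} \to \Lambda$ and $\widehat{\gamma}_{(m)} \to \gamma$ in probability, so $\widehat{\Lambda}_{(m)}$ is invertible with probability tending to one, and
\begin{equation*}
m^{1/2}(\widehat{\beta}^{(m)} - \widehat{\beta}^{(m)}_{\mathrm{orc}}) = \widehat{\Lambda}_{(m)}^{-1}\, m^{1/2}\bigl\{(\widehat{\gamma}_{(m)} - \widetilde{\gamma}_{(m)}) - (\widehat{\Lambda}_{(m)} - \widetilde{\Lambda}_{(m)})\widehat{\beta}^{(m)}_{\mathrm{orc}}\bigr\} = o_P(1),
\end{equation*}
so Slutsky's theorem gives the stated conclusion. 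Conceptually, down-sampling works precisely because the noise-level condition $a_n = o(m^{-1/2})$ is much weaker than the $a_n = o(n^{-1/2})$ one would need without down-sampling; the $m = o(n)$ restriction is what guarantees we are in this favorable regime regardless of how slowly $a_n$ decays.

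The main technical obstacle is neither the CLT nor the delta method (both of which are routine once i.i.d. structure is restored) but rather producing a genuinely uniform control of $\widehat{Z}_i - Z_i$ across $i \in [m]$ that is compatible with the fact that the $\widehat{Z}_i$ are computed from the \emph{entire} network of size $n$, and hence are not independent across $i$. This is why Condition \ref{down-sampling:condition2} is stated as a uniform-in-$i$ rate $a_n$ rather than a per-node rate: the bound $\max_i\|\widehat{Z}_i-Z_i\| = O_P(a_n)$ absorbs the dependence between the $\widehat{Z}_i$'s into the single sequence $a_n$, and the only thing one must verify for each specific covariate type (subgraph counts, spectral embeddings, or more exotic statistics) is this uniform rate. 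Given the condition, the argument above is essentially structural.
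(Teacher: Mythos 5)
Your proposal is correct and follows essentially the same route as the paper's proof: decompose into an oracle i.i.d.\ part handled by the multivariate CLT and delta method, and a network-noise part killed by the uniform bound $\max_i\|\widehat{Z}_i - Z_i\| = O_P(a_n)$ with $a_n = o(m^{-1/2})$ together with $O_P(1)$ sample moments. The only cosmetic difference is that you apply the delta method to the oracle estimator and then transfer via an exact algebraic identity for the difference of the two OLS estimators, whereas the paper establishes negligibility at the level of $(\widehat{\Lambda}_m,\widehat{\gamma}_m)$ and applies the delta method once to the combined quantity; the two orderings are equivalent.
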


Since the limiting distribution is not affected by network noise, multiplier bootstraps for independent data may be used to conduct inference for down-sampled regression estimators. The following proposition establishes alternative sufficient conditions under which the estimator admits a central limit theorem. In this case, $\beta^*$ and $\widetilde{\beta}$ are asymptotically equivalent if one can choose the down-sample size to sufficiently control an appropriate mean squared error.  
\begin{proposition}[CLT via Moment Control and Target Equivalence]\label{coro:2}
Assume that $\Lambda = E(LL^\T)$ is invertible, $E(Y^4) < \infty$, and $E(\|X\|^4) < \infty$. Suppose, in addition, that the network noise satisfies $
E\|\widehat{Z} - Z\|^2 = o(1/m)$.
Then the following holds:
\[
m^{1/2} \left( \widehat{\beta}^{(m)} - \beta^* \right) \rightarrow N(0, \Sigma^{(m)}_\beta) \quad \text{ in distribution}, \quad m^{1/2}(\widetilde{\beta} - \beta^*) \to 0.
\]
\end{proposition}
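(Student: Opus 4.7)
The plan is to split the claim into two pieces, both driven by the single aggregated moment hypothesis $E\|\widehat Z - Z\|^2 = o(1/m)$. The first piece, $m^{1/2}(\widetilde\beta - \beta^*) \to 0$, is a deterministic perturbation statement comparing two population-level projections. The second piece, the CLT for $m^{1/2}(\widehat\beta^{(m)} - \beta^*)$, I would obtain by replacing the noisy sample moments $\widehat\Lambda^{(m)}, \widehat\gamma^{(m)}$ by their noise-free analogs $\widetilde\Lambda^{(m)} := m^{-1}\sum_{i=1}^m L_iL_i^\T$ and $\widetilde\gamma^{(m)} := m^{-1}\sum_{i=1}^m Y_iL_i$ up to $o_P(m^{-1/2})$, and then applying the standard iid CLT and the delta method at $(\Lambda,\gamma)$.

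For the target-equivalence piece, I would start from
\[
\widetilde\beta - \beta^* = (\widetilde\Lambda^{-1} - \Lambda^{-1})\gamma + \widetilde\Lambda^{-1}(\widetilde\gamma - \gamma) = -\widetilde\Lambda^{-1}(\widetilde\Lambda - \Lambda)\beta^* + \widetilde\Lambda^{-1}(\widetilde\gamma - \gamma),
\]
so it suffices to show $\|\widetilde\Lambda - \Lambda\| = o(m^{-1/2})$ and $\|\widetilde\gamma - \gamma\| = o(m^{-1/2})$, since $\Lambda$ is invertible and a small perturbation leaves $\|\widetilde\Lambda^{-1}\|$ bounded. Only the $\widehat Z$-blocks of $\widetilde\Lambda - \Lambda$ are nontrivial; expanding
\[
\widehat Z\widehat Z^\T - ZZ^\T = (\widehat Z - Z)Z^\T + Z(\widehat Z - Z)^\T + (\widehat Z - Z)(\widehat Z - Z)^\T
\]
and applying Cauchy--Schwarz with $E\|X\|^4 < \infty$, $EY^4 < \infty$, and the implicit $E\|Z\|^2 < \infty$ (needed for $\Lambda$ to be well-defined and invertible) yields $\|\widetilde\Lambda - \Lambda\| = O((E\|\widehat Z - Z\|^2)^{1/2}) = o(m^{-1/2})$, and an identical argument handles $\widetilde\gamma - \gamma$.

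For the CLT piece, the iid structure on $(Y_i, X_i, \xi_i)$ from Assumption~\ref{assumption1:graphon} makes $(L_iL_i^\T)_i$ and $(Y_iL_i)_i$ genuine iid sequences, so $m^{1/2}(\widetilde\Lambda^{(m)} - \Lambda,\ \widetilde\gamma^{(m)} - \gamma)$ is jointly asymptotically normal by the standard multivariate CLT under the fourth-moment hypotheses. To replace these by the observed sample moments, I would use
\[
\widehat L_i\widehat L_i^\T - L_iL_i^\T = (\widehat Z_i - Z_i)\widehat L_i^\T + L_i(\widehat Z_i - Z_i)^\T
\]
in the relevant blocks, apply Cauchy--Schwarz to get $E\|\widehat L_i\widehat L_i^\T - L_iL_i^\T\| = O((E\|\widehat Z - Z\|^2)^{1/2}) = o(m^{-1/2})$, and sum via the triangle inequality to obtain $E\|\widehat\Lambda^{(m)} - \widetilde\Lambda^{(m)}\| = o(m^{-1/2})$. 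Markov's inequality then gives $m^{1/2}(\widehat\Lambda^{(m)} - \widetilde\Lambda^{(m)}) = o_P(1)$, and the analogous argument handles $\widehat\gamma^{(m)} - \widetilde\gamma^{(m)}$. Slutsky and the delta method applied to the smooth map $(A,b) \mapsto A^{-1}b$ at $(\Lambda,\gamma)$ complete the proof, yielding the same $\Sigma^{(m)}_\beta$ as in Theorem~\ref{thm:down-sampling}.

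The main obstacle is that, unlike Condition~\ref{down-sampling:condition2}, the hypothesis here controls only an average second moment of the network noise rather than a uniform $\max_i$ bound, so one cannot hope for strong pointwise replacement of $\widehat Z_i$ by $Z_i$. The crux is therefore to verify that the blunt triangle-inequality bound on $E\|\widehat\Lambda^{(m)} - \widetilde\Lambda^{(m)}\|$---which makes no use of independence or cancellation among the $\widehat Z_i$---is already tight enough, and indeed it is, because an average of $m$ terms each of order $o(m^{-1/2})$ is still $o(m^{-1/2})$. This is also why the two conclusions of the proposition appear naturally together: the same moment hypothesis simultaneously controls the bias between $\widetilde\beta$ and $\beta^*$ and the replacement error for the sample moments.
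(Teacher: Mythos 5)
Your proposal is correct and follows essentially the same route as the paper's proof: both parts rest on Cauchy--Schwarz applied to the hypothesis $E\|\widehat Z - Z\|^2 = o(1/m)$, first to show the sample moments $\widehat\Lambda^{(m)},\widehat\gamma^{(m)}$ can be replaced by their noise-free i.i.d.\ analogs up to $o_P(m^{-1/2})$ (via an expectation bound plus Markov, exactly as in the paper's treatment of \eqref{appC:linear_noise}--\eqref{appC:cross_noise}), and second to show the entries of $\widetilde\Lambda-\Lambda$ and $\widetilde\gamma-\gamma$ are $o(m^{-1/2})$ so that $m^{1/2}(\widetilde\beta-\beta^*)\to 0$, with the delta method for the OLS functional finishing the argument.
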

\begin{remark}
With an appropriate choice of down-sample size, it is also possible to mitigate the bias issue associated with local subgraph frequencies.  However, this comes at the cost of a reduced rate of convergence.  In additional experiments not presented here, we also see the bias-corrected estimator  $\widehat{\beta}_{\mathrm{mod}}$ reduces the bias much more effectively.
\end{remark}

We next introduce two examples that demonstrate how down-sampling can be used to attain valid inference in challenging regimes.  In both cases, the theory provides guidance for the choice of down-sample size in terms of $\lambda_n$, which can be estimated from the data.  The first example involves nonlinear functions of local subgraph counts.  

\begin{example}[Composite Local Subgraph Frequency Measures under Graphon Models] 
\label{ex:graphon_ratio}
Consider local composite measures of the following form, defined for the $j$th coordinate of $\widehat{Z}_{ij}$ at node $i$:
\begin{align}\label{def:general ratio subgraph}
\widehat{Z}_{ij} =  \widehat{U}_i(\mathcal{R}_1)/ \widehat{U}_i(\mathcal{R}_2),    
\end{align}
where both $\widehat{U}_i(\mathcal{R}_1)$ and $\widehat{U}_i(\mathcal{R}_2)$ are simple (possibly covariate-weighted) normalized subgraph frequencies for node $i$ containing $(r_1, r_2)$ number of nodes and $(s_1, s_2)$ number of edges respectively for the numerator and denominator statistic; for each, see definition \eqref{eq-local-subgraph}. Key examples include:
\begin{itemize}
    \item[(a)] Local Transitivity (\(s_1=3, s_2=2\)):
    \[
    \widehat{Z}_{ij} = \frac{\left(\binom{n-1}2\hat{\rho}_n^3\right)^{-1}3\sum_{j<k} A_{ij}A_{jk}A_{ki}}{\left(\binom{n-1}{2}\hat{\rho}_n^2\right)^{-1}\sum_{j<k} (A_{ij}A_{ik}+A_{ji}A_{jk}+A_{ki}A_{kj})}.
    \]
    \item[(b)]  Local Neighborhood Average (\(s_1=1, s_2=1\), weighted by covariate $X_j$ among neighborhood):
    \[
    \widehat{Z}_{ij} = \frac{\sum_{j \neq i} A_{ij}X_j}{\sum_{j \neq i} A_{ij}}.
    \]
\end{itemize}
While such statistics are of substantial interest, they do not fall under the framework considered in Section \ref{subsec-local-subgraph}. Define the corresponding noiseless network covariate as:
\begin{align*}
Z_{ij} = \frac{E[U_i(\mathcal{R}_1;\xi)\mid \xi_i]}{E[U_i(\mathcal{R}_2;\xi)\mid \xi_i]},    
\end{align*}
where $U_i(\mathcal{R}_1;\xi)$ and $U_i(\mathcal{R}_2;\xi)$ admit the form of $\widetilde Q_i(\mathcal{R})$, introduced by definition \eqref{latent_node_stat}, possibly weighted by the associated covariate.

The next proposition provides conditions under which the conclusion of Theorem \ref{thm:down-sampling} holds for Example \ref{ex:graphon_ratio}.  In what follows, we assume a lower- and upper-bounded graphon, but these conditions can be relaxed at the cost of longer proofs and more conservative conditions on the down-sample size. 
\begin{proposition}[Uniform Noise Control for Local Composite Covariate] 
\label{prop:composite_control}
Suppose that \( 0 < c \leq w(\xi_i,\xi_j) \leq C <\infty \) almost surely for some $C \geq c >0$. Furthermore, suppose that one of the following conditions holds:
\begin{enumerate}[label=(\roman*)]
\item[(a)] For transitivity, the down-sample size satisfies $m^2=o(\lambda_n \wedge \lambda_n^3/n).$ 
\item[(b)] For neighborhood average, the down-sample size is chosen as $m^2=o(\lambda_n)$. 
\end{enumerate}
Then, the conditions of Theorem \ref{thm:down-sampling} are satisfied and (\ref{eq-down-sample-clt}) holds.
\end{proposition}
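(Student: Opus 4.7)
The plan is to verify Condition~\ref{down-sampling:condition2} and then invoke Theorem~\ref{thm:down-sampling}. The essential step is to establish, for each of the two composite statistics, a uniform bound $\max_{1 \le i \le n} |\widehat Z_i - Z_i| = O_P(a_n)$ with a rate $a_n$ that satisfies $a_n = o(m^{-1/2})$ under the stated down-sample size condition.

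In both cases, I would begin with a ratio decomposition. Writing $\widehat Z_i = \widehat N_i / \widehat D_i$ and $Z_i = \bar N_i / \bar D_i$, where $\bar N_i = E[\widehat N_i \mid \xi_i]$ and $\bar D_i = E[\widehat D_i \mid \xi_i]$ (with additional conditioning on $X_i$ for the neighborhood average), a first-order expansion gives
\[
\widehat Z_i - Z_i \;=\; \frac{\widehat N_i - \bar N_i}{\bar D_i} \;-\; Z_i \cdot \frac{\widehat D_i - \bar D_i}{\bar D_i} \;+\; R_i,
\]
where $R_i$ is quadratic in the fluctuations. The lower bound $w \geq c > 0$ forces $\bar D_i$ to stay bounded away from zero after the sparsity normalization is absorbed, and standard concentration shows that $\widehat D_i / \bar D_i \to 1$ uniformly, which renders $\max_i |R_i|$ of lower order. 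The task thereby reduces to uniform control of $\widehat N_i - \bar N_i$ and $\widehat D_i - \bar D_i$.

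For case (b), $\widehat D_i = (n-1)^{-1} \sum_{j \neq i} A_{ij}$ and $\widehat N_i = (n-1)^{-1} \sum_{j \neq i} A_{ij} X_j$. Conditional on $\xi_i$ and $X_i$, each is an average of $n-1$ independent bounded variables whose marginal variance is of order $\rho_n / n$. A Bernstein inequality together with a union bound over $i \in [n]$ gives $\max_i |\widehat D_i - \bar D_i| \vee \max_i |\widehat N_i - \bar N_i| = O_P\!\big(\sqrt{\rho_n \log n / n}\,\big)$. Dividing by $\bar D_i \asymp \rho_n$ yields $a_n = O\!\big(\sqrt{\log n / \lambda_n}\,\big)$, and the condition $m^2 = o(\lambda_n)$ then delivers $a_n^2 m \to 0$. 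For case (a), the numerator (normalized triangle count at $i$) and denominator (normalized 2-star count at $i$) are both U-statistics of order two in the pair $(j,k)$, with randomness coming from the latent positions $\xi_j,\xi_k$ and the Bernoulli draws $A_{ij},A_{jk},A_{ik}$. A Hoeffding decomposition conditional on $\xi_i$ splits the variance of each into a first-order projection term of order $O(1/n)$ arising from the randomness of individual $\xi_j$, plus Bernoulli-noise terms whose leading contribution is of order $O(n/\lambda_n^3)$ for the triangle (three correlated edges share a common edge across two triples) and $O(1/\lambda_n^2)$ for the 2-star. Combining Bernstein-type inequalities for bounded-kernel U-statistics with a union bound gives $a_n^2 = O\!\big((1/n + n/\lambda_n^3)\log n\big)$, which under $m^2 = o(\lambda_n \wedge \lambda_n^3/n)$ again yields $a_n^2 m \to 0$.

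The main obstacle is the uniform control for the local triangle count: unlike the neighborhood average, the relevant sum is a genuine order-2 U-statistic, so the Bernoulli-noise contribution depends on how edges are shared across triples. The bookkeeping of overlapping-edge configurations mirrors that of Proposition~\ref{main:prop1}, but we only need a uniform rate rather than a distributional limit, which simplifies the analysis. Once the two uniform bounds are established and the quadratic remainder is absorbed, Condition~\ref{down-sampling:condition2} is satisfied, and Theorem~\ref{thm:down-sampling} delivers~\eqref{eq-down-sample-clt}.
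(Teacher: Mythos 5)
Your overall architecture matches the paper's: a first-order expansion of the ratio, the lower bound $w\ge c$ to keep the denominator away from zero, a split of each of $\widehat N_i-\bar N_i$ and $\widehat D_i-\bar D_i$ into a latent-position fluctuation plus a Bernoulli-noise fluctuation, and a union bound to get a uniform rate feeding into Condition~\ref{down-sampling:condition2}. The divergence, and the genuine gap, is in how you concentrate the Bernoulli-noise part of the local triangle count. You assert a ``Bernstein-type inequality for bounded-kernel U-statistics'' yielding an exponential tail with variance proxy $n/\lambda_n^3$. But after normalizing by $\rho_n^{3}$ the kernel $\rho_n^{-3}A_{ij}A_{jk}A_{ik}$ has sup-norm $\rho_n^{-3}$, so any Hoeffding/Bernstein bound driven by the kernel's range is vacuous in sparse regimes; and the summands over triples share edges, so this is not a sum of independent terms but a polynomial in independent Bernoullis, whose uniform exponential concentration (Kim--Vu/Janson-type) is exactly the hard part you cannot wave through. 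The paper avoids this entirely: it applies Hoeffding only to the latent-position U-statistic (which genuinely has a bounded kernel), and handles the Bernoulli noise by Chebyshev using the variance bound of Lemma~\ref{appB-lem:bootstrap-lemma-1}, accepting a polynomial tail. That choice forces the union bound to be taken only over the $m$ down-sampled nodes, and the factor $m$ from the union bound is what produces the stated conditions $m^2=o(\lambda_n\wedge \lambda_n^3/n)$ and $m^2=o(\lambda_n)$.

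Two further quantitative issues. First, your variance accounting for the triangle omits the $(c,d)=(2,1)$ overlap (two triangles at $i$ sharing one edge), which contributes $O(1/\lambda_n)$; together with the full-overlap term $O(n/\lambda_n^3)$ this is precisely what generates the minimum $\lambda_n\wedge\lambda_n^3/n$ in condition (a), so your stated rate $a_n^2=O\bigl((1/n+n/\lambda_n^3)\log n\bigr)$ is not the correct one. Second, even in case (b) where your Bernstein argument for $\sum_{j\ne i}A_{ij}$ is sound, taking the union bound over all $n$ nodes gives $a_n^2\asymp \log n/\lambda_n$, and $m\,a_n^2\to 0$ then requires $m\log n=o(\lambda_n)$, which does not follow from $m^2=o(\lambda_n)$ when $m$ grows more slowly than $\log n$. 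The fix is to union bound only over the $m$ sampled nodes (all that Theorem~\ref{thm:down-sampling} actually uses), giving $\log m\le m$ and hence a condition implied by $m^2=o(\lambda_n)$; with that repair your route for case (b) is valid and in fact sharper than the paper's whenever $m\gg\log n$.
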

\end{example}
The next example demonstrates that down-sampling can considerably relax the sparsity conditions required for regression with spectral embeddings at the cost of a reduced rate of convergence. 

\begin{example}[Adjacency Spectral Embedding in Sparser GRDPG]\label{ds-example:rdpg}
Consider a generalized random dot product graph (GRDPG) with $\rho_n = o(\log^{2c} n/n^{1/2})$, where the $n^{1/2}$-consistency conditions of Theorem~\ref{theorem 5} fail. Down-sampling still yields asymptotic normality if $m$ is chosen appropriately and $\rho_n = \omega(\log^c n / n)$ for some $c > 1$. From the $2\to\infty$ norm bounds in \citet{generalized-rdpg}, if
\[
m = o\!\left(\lambda_n/\log^{2c} n\right),
\]
then Theorem~\ref{thm:down-sampling} applies and \eqref{eq-down-sample-clt} holds. While Theorem~\ref{thm:down-sampling} provides a widely applicable bound, exploiting additional knowledge about the network covariate’s  structure (in this case, adapting arguments used to prove the CLT for spectral embeddings) yields a sharper bound and the wider regime: 
\begin{align}
m = o\!\left((\lambda_n^2/\log^{4c} n) \wedge n\right).    
\end{align}
See Supplement~\ref{appC-section:down-sampling_grdpg} for further discussion.
\end{example}

\section{Simulation Studies}
This section investigates the finite-sample behavior of the proposed inference framework across a variety of network-linked data settings. In Section~\ref{simulation: sec1}, we begin by evaluating the regression model with local subgraph statistics under sparse graphon models, where we find a notable instability for the uncorrected sample OLS estimator even for relatively dense graphs. In Section~\ref{simulation: sec2}, we analyze the regression model where the latent positions from the generalized random dot product graph (GRDPG) model  contribute to the regression model. Despite the non-identifiability of individual latent positions, valid inference remains feasible for detecting overall network effects and for obtaining coverage of the conventional covariates. Finally, Section~\ref{simulation: sec3} explores inference under model misspecification, where the outcome model involves nonlinear interactions between network and conventional covariates. While the linear projection parameters no longer coincide with structural parameters, we demonstrate that they still provide meaningful estimates and yield improved coverage for the corrected estimator across all sparsity regimes.

\subsection{Regression with Local Subgraph Estimator under Sparse Graphon Model}\label{simulation: sec1}
We consider the following random graph model: a graphon with kernel function $w(u,v) =(u+v)^{-1}$. The outcome variable $Y_i$ follows a linear relationship with the covariates $X_i$ through the equation $Y_i = 1 + 20 Z_i + 3 X_{i_1} + 2 X_{i_2}+\varepsilon_i$, where $X_{i}$ follows a bivariate normal distribution with mean $\mu = (1,3)$ and covariance matrix $\Sigma = (1, 0.6, 0.6, 4)$, and $
\varepsilon_i \overset{\text{i.i.d.}}{\sim} N(0, 1)$. For simulation purposes, we use a rooted two-star as the local subgraph of interest, where the corresponding $Z_i$ is given by $Z_i= E((w(\xi_i, \xi_j) w(\xi_i, \xi_k) \ | \ \xi_i)$. The sparsity levels for the sparse graphon are set as \( \rho_n = n^{\delta} \), with \( \delta \in \{-0.75, -0.4, -0.25\} \), corresponding to very sparse, moderately sparse, and weakly sparse regimes, respectively. We generate a random sample of size $n = 4000$ for each of these random graph models as described above. Figure \ref{fig:beta_estimates}. illustrates the sampling distribution from 200 Monte Carlo iterations for the estimated coefficient associated with the network covariate. The shaded sampling distribution provides a comparison between bias-corrected OLS estimates with its non-corrected counterpart. 
\begin{figure}[htbp]
    \centering
    \includegraphics[width=0.8\textwidth]{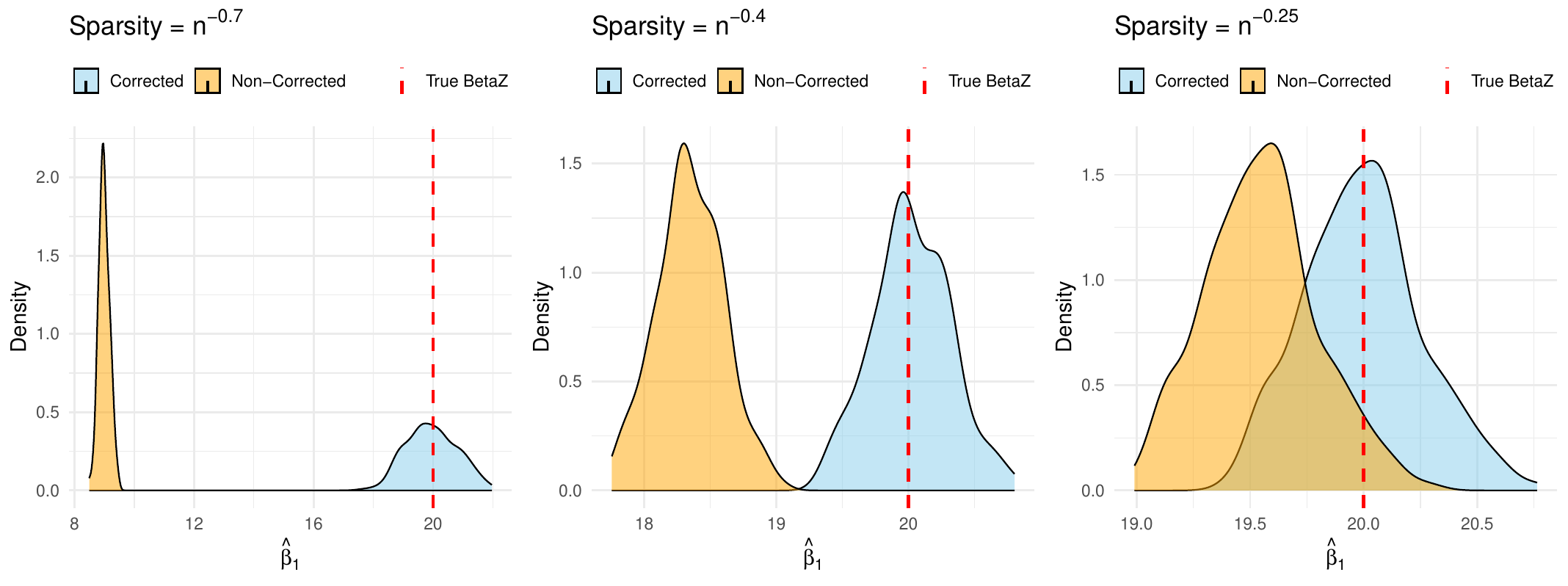}
    \captionsetup{font=scriptsize}
    \caption{Sampling distributions of the second regression coefficient for corrected and non-corrected estimators across different sparsity levels. The red dashed line indicates the true value $\beta_{\mathrm{z}} = 20$.}
    \label{fig:beta_estimates}
\end{figure}

Figure \ref{fig:beta_estimates}  demonstrates that the regression coefficient associated with the network covariate is accurately approximated by the corrected least squares estimator $\widehat{\beta}$ across all levels of sparsity. In contrast, the uncorrected estimator deviates substantially from the true value $\beta_{\mathrm{z}} = 20$, particularly in sparser regimes.  In fact, the overlap between the two sampling distributions only occurs in the tails even for moderately dense regimes for $n=4000$, highlighting the importance of bias corrections for targeting $\beta^*$.

\subsection{Bootstrap Inference under the GRDPG  Model}\label{simulation: sec2}
Next, we investigate the finite sample performance of the regression estimates under a  generalized random dot product graph (GRDPG) model.
In this setup, each node is assigned a latent position according to a stochastic block model having three communities with proportions \(\pi = (0.65, 0.25, 0.10)\). The community connectivity matrix $B$ is given as follows:
\[
B = 
\begin{bmatrix}
0.80 & 0.20 & 0.10 \\
0.20 & 0.70 & 0.15 \\
0.10 & 0.15 & 0.90
\end{bmatrix}.
\]

The latent position for this SBM is constructed as \( Z = \rho_n^{-1/2}\theta V D^{1/2} \), where \( \theta \in \mathbb{R}^{n \times d} \) is the block assignment matrix whose rows \( \theta_i \) are independently drawn from a multinomial distribution over $K$ blocks, i.e., \( \theta_i \sim \mathrm{Multinomial}(1; \pi_1, \dots, \pi_K) \). The matrices \( V \) and \( D \) are obtained via the spectral decomposition of the block connectivity matrix \( B \), with \( B = V D V^{-1} \). In addition to the network structure, each node is associated with two conventional covariates \(X_1\) and \(X_2\), independently sampled from a standard bivariate normal distribution with correlation coefficient $\rho = 0.3$. The observed outcome for each node is generated from a linear model having both conventional and network covariates as follows,
\[
Y_i = \beta_1 X_{1i} + \beta_2 X_{2i} + \beta_3 Z_{i_1} + \beta_4 Z_{i_2} + \beta_5 Z_{i_3} + \varepsilon_i,
\]
where \(Z_{i_1}\), \(Z_{i_2}\), and \(Z_{i_3}\) represent the three components from the scaled latent spectral embedding, and \(\varepsilon_i\)'s are independent standard normal errors. The regression coefficients \(\beta_1\) and \(\beta_2\) correspond to the conventional covariates, while the remaining three coefficients capture the network effect through the spectral embedding components. The intercept term is excluded to avoid issues related to identifiability 
To evaluate the inferential validity of the regression coefficients, we consider two different levels of network signal strengths. For the strong signal case, the regression coefficients are set as \((\beta_1,\beta_2, \beta_3, \beta_4, \beta_5) = (1, 2, 1, 2, 1)\), with substantial contributions from the network part in the last three coordinates, whereas in the weak signal case, the last three coefficients are set as \((\beta_3, \beta_4, \beta_5)=(0, 0.01, 0)\), corresponding to a negligible network effect.

To assess the robustness and sensitivity of our inferential procedure for detecting the presence of network effects, we perform the independent multiplier bootstrap procedure as mentioned in Section \ref{bootstrap:rdpg} with the choice of multipliers same as in Section \ref{simulation: sec1}. We first compute the empirical coverage of bootstrap confidence intervals for the individual conventional covariates \(\beta_1\) and \(\beta_2\), which are not directly affected by network variability. Second, we conduct a simultaneous bootstrap-based hypothesis testing defined by \eqref{rdpg:bootstrapped testing statistic} and \eqref{rdpg:observed testing statistic} to detect the presence of any network effect, that is we test for $H_{0}$ : \(\beta_3 = \beta_4 = \beta_5 = 0\). 

We conduct 200 Monte-Carlo iterations, and the results are summarized in Table~\ref{tab:rdpg_sim_results}. Under the strong network effect setting, the bootstrap confidence intervals perform well in maintaining the nominal coverage for \(\beta_1\) and \(\beta_2\) along with detecting the network effect with a high empirical power of 1. Under the weak network effect setting, although the empirical coverage still remains satisfactory, the test exhibits low empirical power performance. The latter confirms that our method avoids over-rejection by failing to reject when the true network signal is almost negligible. These findings support our theoretical claims that the proposed bootstrap procedure has robust performance across different levels of network signal strength.
\begin{table}[h!]
\centering
\captionsetup{font=scriptsize}
\caption{\textit{Bootstrap Coverage and Power under Strong and Weak Network Signal Settings}}
\label{tab:rdpg_sim_results}
\scriptsize
\renewcommand{\arraystretch}{1.2}
\begin{tabular}{lccc}
\toprule
Setting & Coverage (\(\beta_1\)) & Coverage (\(\beta_2\)) & Power (Network Effect) \\
\midrule
Strong Network Effect & 0.970 & 0.885 & 1.000 \\
Weak Network Effect & 0.965 & 0.925 & 0.065 \\
\bottomrule
\end{tabular}
\end{table}

\subsection{Inference for Linear Projection Parameter under Model Misspecification}\label{simulation: sec3}
In this section, we study the robustness of the regression estimators under non-linear models. In these cases, our approach approximates the population projection parameter, which is not directly observable from the data-generating process. To illustrate this feature of the assumption-lean framework, we consider the following nonlinear data-generating process.

For each observation \( i \), the outcome \( Y_i \) is generated according to the following model: 
\[
Y_i \;=\; \log\bigl(1 + 5 Z_i |X_{i_1}|\bigr) 
\;+\; (5Z_i)^{1/2}\,\sin\bigl(0.5\,X_{i_2}\bigr) 
\;+\; \varepsilon_i,
\]
and we again consider the sparse graphon model and rooted two-star for regression, the same as the choice for Section \ref{simulation: sec1}. Specifically, we use $Z_i= E((w(\xi_i, \xi_j) w(\xi_i, \xi_k)\mid \xi_i)$ for the above data generating process and $\widehat{Z}_i$ as the local rooted two-star frequency estimator. \( (X_{i_1}, X_{i_2}) \) are conventional covariates drawn from a standard bivariate Gaussian distribution with correlation coefficient $\rho = 0.3$ and  \( \varepsilon_i \) is a standard normal random variable drawn independently from everything else.  This construction introduces a non-separable interaction between the conventional and network covariates through both logarithmic and trigonometric transformations. As a result, the overall relationship between the predictors and the outcome cannot be easily linearized through standard variable transformations. Instead, it reflects the intrinsic complexity of real-world data-generating mechanisms, where underlying effects often interact in tangled and nonlinear ways beyond the scope of classical parametric models. 
\begin{figure}[htbp]
    \centering
    \includegraphics[width=0.57\textwidth]{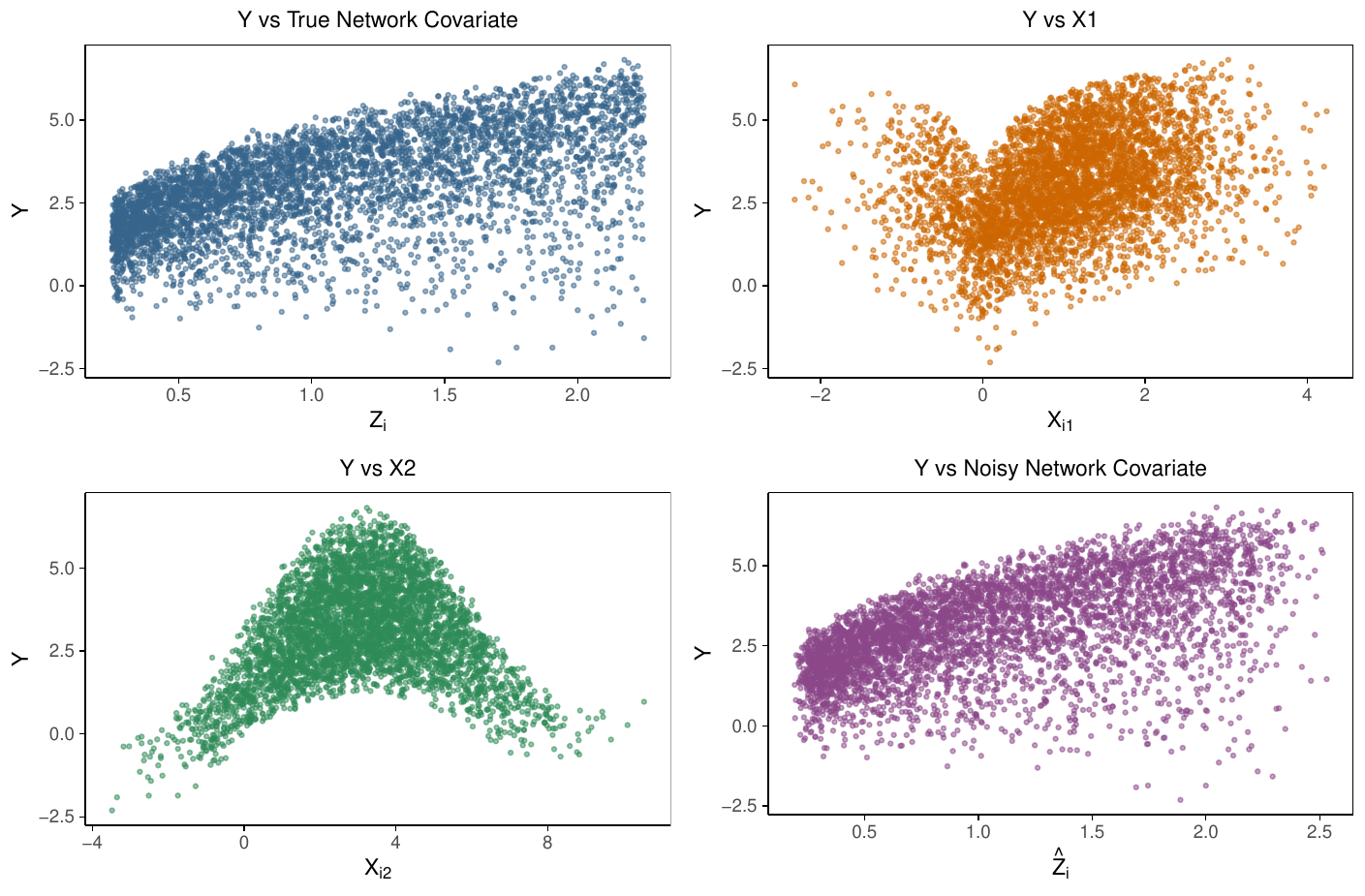}
    \captionsetup{font=scriptsize}
    \caption{\textit{Scatter plots between the response \( Y \) and each covariate. 
    For the network covariate, we plot both the noisy observed version \(\widehat{Z}\) and the latent version \(Z\). 
    The response \( Y \) is generated from the deterministic function 
    \(\mu(X,Z) = \log\bigl(1 + 5 Z_i |X_{i_1}|\bigr) 
    \;+\; (5Z_i)^{1/2}\,\sin\bigl(0.5\,X_{i_2}\bigr) \).}} \label{fig:associateion scatter}
\end{figure}
In practice, we fit a regression model targeting the best linear projection of \(Y_i\) onto \(Z_i\) or \(\widehat{Z}_i\), along with conventional covariates. Figure~\ref{fig:associateion scatter} shows one realization of the sampled network data. To roughly illustrate the trend, we plot the marginal association between the noiseless response and each of the covariates. To emphasize the meaningfulness of projecting the response onto the noisy network covariate, we also include a scatter plot of this association under the weakly sparse case, i.e., \(\rho_n=n^{-0.25}\), as the last one in Figure \ref{fig:associateion scatter}. 

To approximate population projection parameters $\beta^*$ and $\widetilde{\beta}$, we conduct a large-scale Monte Carlo simulation with \( N = 10^5 \) independent samples.  Note that $\widetilde{\beta}$ depends on both the sparsity level and $n$, while $\beta^*$ does not. The resulting coefficients are provided in the first block of the following Table \ref{tab:combined_projection_parameter_matrix_nonlinear}.
\begin{table}[htbp]
\centering
\scriptsize
\captionsetup{font=scriptsize}
\caption{Comparisons of Monte Carlo Approximations of Projection Parameters}
\label{tab:combined_projection_parameter_matrix_nonlinear}
\renewcommand{\arraystretch}{1.2}
\begin{tabular}{l|c|ccc}
\toprule
& $\beta^*$ & \multicolumn{3}{c}{$\widetilde{\beta}$} \\
\cmidrule(lr){2-2} \cmidrule(lr){3-5}
\text{Coefficient} & --- & $\delta = -0.70$ & $\delta = -0.40$ & $\delta = -0.25$ \\
\midrule
Intercept         & 0.8512 & 1.6785 & 0.9737 & 0.8368 \\
$\widehat{Z}$  & 1.4341 & 0.6483 & 1.3034 & 1.4060 \\
$X_1$             & 0.5113 & 0.5052 & 0.5115 & 0.5147 \\
$X_2$             & 0.0420 & 0.0511 & 0.0449 & 0.0548 \\
\bottomrule
\end{tabular}
\end{table}

We see that both projection parameters consistently capture the dominant trend between \(Y\) and \(Z\) despite the model misspecification, highlighting the robustness of the assumption-lean framework. 

For moderate to weakly sparse regimes, where $\rho_n = \omega(n^{-0.5})$, the coefficients for $\beta^*$ and $\widetilde{\beta}$ appear to be close. In contrast, for $\rho_n = o(n^{-0.5})$, these quantities differ, which is even more magnified when $n^{1/2}$ scaling is applied. In both cases, coefficients for conventional covariates $X_1$ and $X_2$ are quite close across the sparsity regimes, which suggests that coefficients for conventional covariates typically do not depend strongly on the choice of target for network covariates.  In sparse regimes, it appears that the value of the intercept term and the coefficients for the network covariates are most affected.

Next, we assess the coverage properties of both the bias-corrected and the non-corrected OLS estimators via bootstrap confidence intervals. We focus on three sparsity regimes as in Section \ref{simulation: sec1}. For the bias-corrected estimator, we evaluate 95\% bootstrap coverage against $\beta^*$ obtained from $N = 10^5$ Monte Carlo simulation with the latent covariate, as seen in the first block of Table \ref{tab:combined_projection_parameter_matrix_nonlinear}. For the OLS estimator, we consider two separate benchmarks. First, we evaluate its coverage against $\beta^*$ to illustrate its limitations for targeting this estimand. Second, we assess its coverage relative to the data-adaptive projection targets $\widetilde{\beta}$.

Table~\ref{tab:coverage_combined_nonlinear} presents coverage for both estimators when targeting $\beta^*$. The bias-corrected estimator achieves near-nominal coverage across all coefficients and sparsity regimes, validating its consistency and robustness. In contrast, the non-corrected estimator shows substantial under-coverage under sparser regimes, especially for the intercept and network covariate, highlighting its failure to target $\beta^*$ without proper bias correction.
\begin{table}[htbp]
\centering
\scriptsize
\captionsetup{font=scriptsize}
\caption{Bootstrap Coverage of 95\% Confidence Intervals for Corrected and Non-corrected Estimators across Three Sparsity Levels}
\label{tab:coverage_combined_nonlinear}
\renewcommand{\arraystretch}{1.15}
\begin{tabular}{c|cccc|cccc|cccc}
\toprule
\multirow{2}{*}{$\delta$} 
& \multicolumn{4}{c|}{Corrected (Targeting $\beta^*$)} 
& \multicolumn{4}{c|}{Non-corrected (Targeting $\beta^*$)} 
& \multicolumn{4}{c}{Non-corrected (Targeting $\widetilde{\beta}(\delta)$)} \\
& Intercept & $Z$ & $X_1$ & $X_2$ 
& Intercept & $Z$ & $X_1$ & $X_2$
& Intercept & $Z$ & $X_1$ & $X_2$ \\
\midrule
$-0.70$ & 0.995 & 0.995 & 0.980 & 0.945 & 0.000 & 0.000 & 0.980 & 0.935 & 0.965 & 1.000 & 0.970 & 0.940 \\
$-0.40$ & 0.940 & 0.945 & 0.955 & 0.930 & 0.615 & 0.190 & 0.965 & 0.940 & 0.940 & 0.955 & 0.965 & 0.940 \\
$-0.25$ & 0.925 & 0.945 & 0.960 & 0.940 & 0.915 & 0.805 & 0.960 & 0.945 & 0.900 & 0.940 & 0.960 & 0.940 \\
\bottomrule
\end{tabular}
\end{table}

To complement this comparison, the last block of Table~\ref{tab:coverage_combined_nonlinear} reports the empirical coverage of the non-corrected estimator when targeting the data-adaptive projection parameters $\widetilde{\beta}(\delta)$. Across all regimes, the non-corrected OLS achieves valid coverage for this alternative estimand, indicating that even without bias correction, it may still yield reliable inference for a well-defined but different projection target. This distinction underscores the flexibility of the assumption-lean framework: while correction is often necessary for recovering $\beta^*$, valid inference can still be obtained for alternative targets that remain meaningful.

\section{Real Data Case Study}
We study a real-world dataset from a school climate intervention study conducted in 28 schools in New Jersey. The original study was introduced in \cite{paluck2016changing}, and \cite{le2022linear} also consider this dataset for their method. In the experiment, a group of seed students was stratified by demographic characteristics and randomly selected to participate in educational workshops focused on school conflict and anti-conflict norm promotion. A binary indicator was used to record whether each student was selected as a seed and this serves as the treatment variable. Surveys were conducted at both the beginning and the end of the academic year. These surveys asked the students to nominate friends, rate the prevalence of various conflict or anti-conflict behaviors, and provide demographic information. Using this dataset, we investigate peer effects and the diffusion of treatment through social networks.

We focus on a series of survey questions designed to assess the school atmosphere. The set consists of 13 questions that measure students' perceptions of friendliness and positive behavior within their school environment. Students were asked how frequently the described behaviors occurred and their responses were recorded on a Likert Scale ranging between 0--5, where `0' denotes “Almost nobody” and `5' “Almost everyone”. Some items captured anti-conflict behaviors, while others described negative and unfriendly conduct. Following the methodology in \cite{le2022linear}, we reverse-coded the unfriendly items and computed an average score across all 13 questions. This aggregate serves as an indirect measure of perceived friendliness.

Following \cite{le2022linear}, we include several demographic covariates in addition to the treatment indicator: gender, race (Black, White, Hispanic, Asian, or Other), grade level, and whether the student lives with a parent or guardian. 

Since the diffusion of anti-conflict norms may be gradual or implicit, we define the social network by taking the elementwise maximum of the friend nomination matrices from the two survey waves as
$
A = \max(A_1, A_2),
$
where \(A_1\) and \(A_2\) represent the adjacency matrices from wave 1 and wave 2, respectively. This construction captures the union of social connections and aims to preserve as many meaningful links as possible for further analysis.  Note that the element-wise maximum of two jointly exchangeable arrays is also jointly exchangeable.  

Finally, after removing cases with missing data and retaining only eligible students, those with at least one response score and non-missing covariates, we include a total of 8,851 students in the analysis.

In this dataset, each school constitutes its own social network, and we assume that networks from different schools are independent. It is natural to ask what type of graphon or GRDPG would give rise to this type of hierarchical structure.  We now provide an example of such a process that admits a GRDPG representation of the form (\ref{eq-grdpg-eq}).  

For ease of exposition, consider a GRDPG model of dimension $d \times S$, where $S$ is the number of schools.  Suppose that if $(\chi_i, \zeta_i)$ belongs to the $s$th school, then $d \times S -d$ terms in the vector are $0$ almost surely and let $\mathcal{I}_s$ and $\mathcal{J}_s$ denote the indices of $\chi_i$ and $\zeta_i$, respectively, which correspond to elements that are not $0$ almost surely. Furthermore, suppose that $\mathcal{I}_s \cap \mathcal{I}_t = \emptyset$ and $\mathcal{J}_s \cap \mathcal{J}_t = \emptyset$  for $s \neq t$.  Let $(\chi_i^{(s)}, \zeta_i^{(s)}) = ((\chi_i)_{i \in \mathcal{I}_s}, (\zeta_j)_{j \in \mathcal{J}_s})$.  Then, the corresponding GRDPG takes the form:
\begin{align*}
P(A_{ij} =1 \ | \ \chi_i, \zeta_i, \chi_j,\zeta_j) &= \langle \chi_i,\chi_j \rangle - \langle \zeta_i,\zeta_j \rangle
\\ &= \begin{cases}
\langle \chi_i^{(s)}, \chi_j^{(s)} \rangle -\langle \zeta_i^{(s)}, \zeta_j^{(s)} \rangle & i,j \text{ belong to school } s \\ 
 0 & \text{otherwise}.
\end{cases}
\end{align*}

Of course, this example can readily be generalized to allow the ranks of the GRDPG models for each school to vary at the cost of more burdensome notation.  Here, the school membership is itself also random, which we believe is natural since school sizes also typically vary from year to year.  In this model, a coordinate of this GRDPG model may be viewed as a school-specific latent variable governing interaction.  For such a structure, it is natural to use spectral embeddings as network covariates in the model. This choice also facilitates a meaningful comparison between our analysis and the one in \cite{le2022linear}.

Let $\widetilde{Z}_i^{(s)}$ denote a vector consisting of elements $(\gamma_i^{(s)}, \zeta_i^{(s)})$ sorted in descending order of magnitude in the spectral decomposition of $\langle \gamma_i^{(s)}, \gamma_j^{(s)} \rangle -\langle \zeta_i^{(s)}, \zeta_j^{(s)} \rangle$.  Since the sparsity level would presumably vary across schools, it is natural to let $\widetilde{Z}_i^{(s)} = \rho_{n,(s)} Z_{i}^{(s)}$ for a school-specific sparsity parameter $\rho_{n,(s)}$. To estimate latent positions from this model, we compute Adjacency Spectral Embeddings (ASE) within a zero-padded global embedding scheme. For students from the $s$-th school, we compute ASE normalized locally:
\begin{align}\label{realdata:local ASE}
\widehat{Z}^{(s)} = \hat{\rho}_{n,(s)}^{-1/2} \widehat{U}_{(s)} |\widehat{S}_{(s)}|^{1/2},
\end{align}
where $\widehat{U}_{(s)}$ and $\widehat{S}_{(s)}$ are zero-padded spectral eigenvector matrix and eigenvalue diagonal matrix with $0$'s being assigned to all the entries outside the student's school block. Computing the ASE in this manner allows the ranks of the GRDPG models for each school to be directly taken into consideration.

\subsection{Main Results}
In this dataset, there are many covariates, so some form of dimension reduction would aid both in the interpretation and estimation of parameters.  In particular, we consider variable selection to reduce the number of factor variables related to school indicators.  To carry out dimension reduction in a principled manner, we conduct sample splitting. We randomly partition the data into two subsets: 30\% for variable selection and the remaining 70\% for inference. For variable selection, we iteratively remove school indicators that are not significant at $\alpha=0.05$ three times.  From now on, we will denote the training set and validation set as $\mathcal{D}_{train}$ and $\mathcal{D}_{valid}$ respectively.    

We fit a sample OLS regression that includes the locally-normalized ASE estimator constructed with $\widehat{Z}^{(s)}$ in \eqref{realdata:local ASE} as one of the covariates and obtain OLS estimates. We choose $d=2$ to be the dimension of the network embedding for each school. The multiplier bootstrap procedure described in Session \ref{bootstrap:rdpg} is applied with independent standard Gaussian multipliers with 500 bootstrap replications for estimating the standard deviation for all the regression coefficients. 
The OLS estimates along with their bootstrap standard error are reported in Table \ref{tab:bootstrap-nonspectral}. The $p$-values obtained using $\mathcal{D}_{valid}$ show that both ``Grade" and ``Living with Parents" are highly significant variables. This conclusion concurs with the previous findings in \cite{li2019prediction}, for both the direction and the magnitude of the effects.   
\begin{table}[ht]
\centering
\small
\captionsetup{font=small}
\caption{Bootstrap Regression Effect Estimates Excluding Spectral Components}
\label{tab:bootstrap-nonspectral}
\begin{tabular}{|l|c|c|c|}
\hline
Variable & Estimate & Bootstrap SE & p-value \\
\hline
Intercept & 3.8935 & 0.1370 & $< 2 \times 10^{-16}$ \\
Grade & -0.1709 & 0.0150 & $< 2 \times 10^{-16}$ \\
Returning to school or not & -0.0436 & 0.0246 & 0.0762 \\
Treatment & -0.0083 & 0.0324 & 0.7973 \\
Lives with both parents & 0.0545 & 0.0209 & 0.0090 \\
Race-Asian & 0.0221 & 0.0529 & 0.6767 \\
Race-Black & -0.0016 & 0.0455 & 0.9715 \\
Race-Hispanic & 0.0326 & 0.0405 & 0.4209 \\
Race-White & 0.0516 & 0.0356 & 0.1470 \\
Gender & -0.0153 & 0.0175 & 0.3841 \\
\hline
Significant School Indicators & & & \\
\hline
School 35 & 0.6717 & 0.0919 & $2.75 \times 10^{-13}$ \\
Other schools & 0.1998 & 0.0682 & 0.0034 \\
\hline
\end{tabular}
\end{table}

More importantly, a key difference between our findings and those reported in prior studies using this dataset lies in the estimated treatment effect. 
In our analysis, the treatment effect is statistically insignificant, having a high $p$-value of 0.7973. This contrasts with the results in \citet{li2019prediction}, where a significant effect was reported with an estimate of -0.0839 and a $p$-value of 0.0442. One can argue from two aspects. First, it is important to note that the response measure of perceived friendliness involves inherent complexity in interpreting the intervention's effects. Attendance at a workshop may simultaneously increase awareness and recognition of conflict. Consequently, this may result in higher reporting and thus potentially reduce the actual conflict in behavior. In fact, these opposing mechanisms can counterbalance one another, potentially leading to little or no observable change in perceived friendliness. 
Secondly, the larger bootstrap standard deviation observed in our bootstrap may indicate the presence of non-linearity in the relationship between the response variable and the covariates. It is important to emphasize that our method accommodates potential model misspecification, making the resulting standard error estimates more robust and reliable than those obtained under the assumption of a correctly specified linear model.

Furthermore, when performing regression using the locally normalized global spectral embedding, strong network effects were detected through our bootstrap-based hypothesis testing procedure, as described in Section \ref{bootstrap:rdpg}. The evidence of network effects in students' evaluation of perceived friendliness is further supported by the network effect plot presented in Figure ~\ref{fig:unit-level-network-effects}. The figure shows that the estimated network effects across different schools, represented by fluctuations in $\widehat{Z}^{\T}\widehat{\beta}$, fluctuate significantly around the baseline value $0$, indicative of significant network effects, which was also found in \citet{li2019prediction}. 

\begin{figure}[htbp]
    \centering    \includegraphics[width=0.7\textwidth]{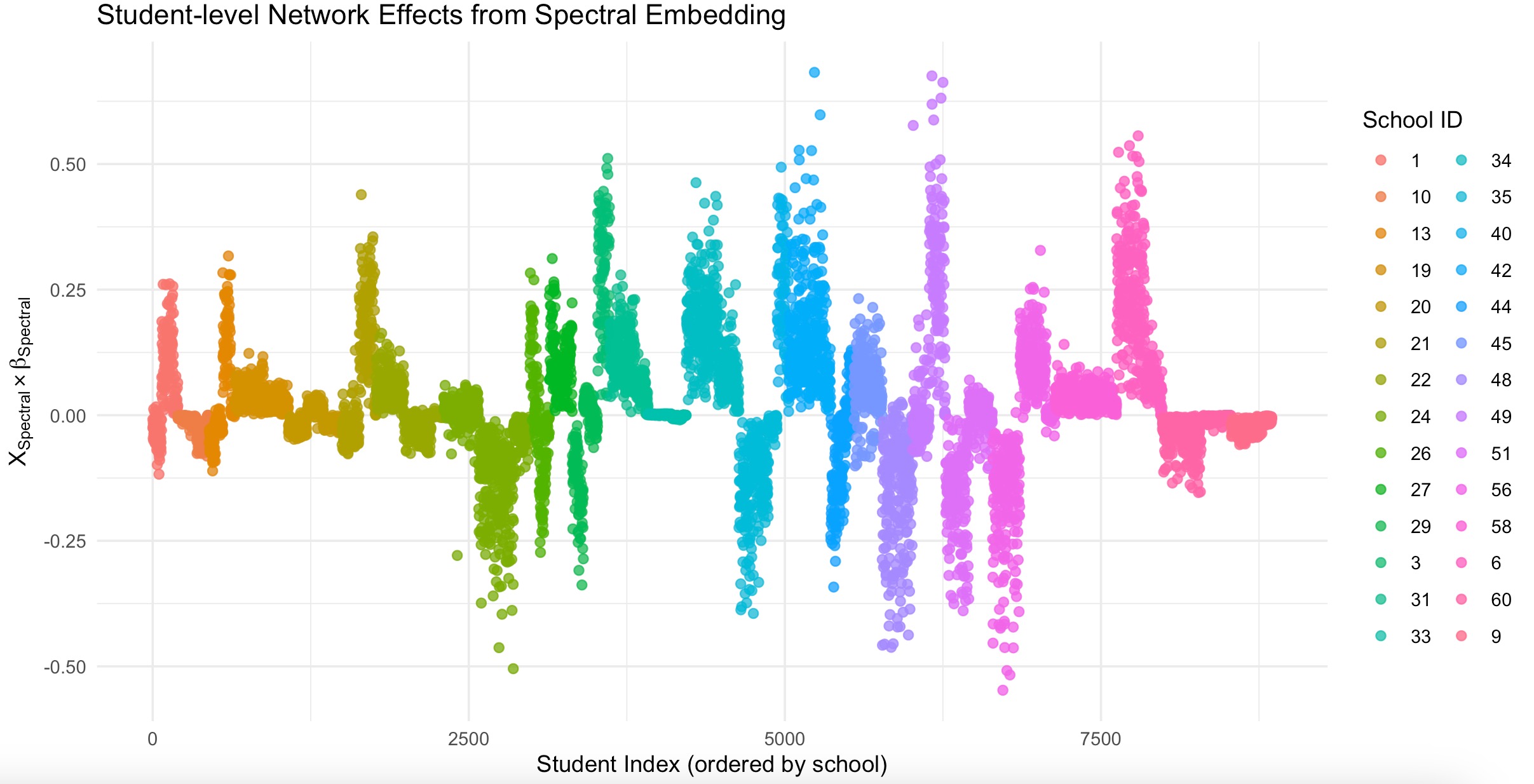}
    \caption{Unit‐level network effects on student outcomes, by school block.}
    \label{fig:unit-level-network-effects}
\end{figure}

It is important to emphasize that our inference for the so-called ``regression effect" is more robust than that of fixed-design regression, where inference is valid only under the assumption of correct model specification. The regression effect we estimate using a simple OLS estimator can be interpreted more broadly, as it does not rely on specific distributional assumptions or the correctness of the model.  As discussed previously, model misspecification is often more likely in settings involving network-linked data due to unobserved latent variables.  

\section{Discussion}
This paper introduces an assumption-lean linear regression framework for data linked through network structures. By reinterpreting regression coefficients as population-level linear projection effects, the approach enables valid inference under minimal assumptions, broadening the applicability of linear models beyond correct specification. We studied various network-derived covariates, including subgraph frequencies and latent position embeddings, and analyzed their asymptotic behavior under relevant models. Moreover, in sparse settings, we show that naive OLS can yield biased or unstable estimates, so we propose a bias-corrected OLS procedure that is consistent across a broad range of settings. When latent positions are unidentifiable, we demonstrate certain natural inference tasks remain well-defined. These results establish new inferential guarantees for regression coefficients under mild conditions.  

To broaden applicability, we introduce a down-sampling strategy that enables valid inference when full-sample methods may be unreliable, offering a practical solution for applied researchers working with sparse networks or complex network-derived covariates.  

It would be of interest to extend this framework to generalized linear models, which require iterative estimation. It is also worthwhile to consider inference under alternative network dependence assumptions. While the Aldous–Hoover representation suggests our model is natural and quite general, it relies on infinite exchangeability. Finitely jointly exchangeable processes need not admit such a representation, so studying such cases further would be valuable.

\section*{Acknowledgements}
Robert Lunde acknowledges support from NSF grant DMS-2515923. The authors would also like to thank Tianxi Li for sharing the school conflict data used in our data analysis, and Elizaveta Levina and Peter Bickel for raising questions related to representation theorems for network-assisted regression problems.

\bibliographystyle{apalike}
\bibliography{references}

\clearpage
\setcounter{section}{0}
\renewcommand{\thesection}{S\arabic{section}}
\renewcommand{\thesubsection}{S\arabic{section}.\arabic{subsection}}
\begingroup
\setstretch{1.15} 
\begin{center}
  {\LARGE\bfseries Supplementary Material\par}
  \vspace{1.2\baselineskip}
  {\normalsize\bfseries Abstract\par}
\end{center}
\endgroup

The supplementary material provides detailed proofs of the main theorems, 
together with computational algorithms and illustrative examples under the general results.  

Supplement~\ref{App-A} contains the proof of Theorem~\ref{main:theorem1} regarding the Aldous–Hoover representation, 
beginning with a reduction lemma and several auxiliary results.  
Supplement~\ref{App-B} provides proofs of the central limit theorems for OLS estimators with local subgraph 
frequencies as covariates (Section~\ref{sec-B.3}). Supporting results for these proofs, including 
structural representations and noise characterizations (Section~\ref{subgraph:lemmas}), are also given, 
along with computational details (Section~\ref{additional_details}). Proofs of bootstrap validity and 
pre-computation results appear in Sections~\ref{appB-sec:bootstrap}--\ref{bootstrap-precom}.  
Supplement~\ref{App-C} provides proofs involving spectral embedding under the GRDPG model. This section provides proofs of a CLT result, bootstrap consistency, and validity of a bootstrap-based test for network effects.  
Supplement~\ref{App-D} provides proofs for down-sampled OLS estimators, establishing root-$m$ consistency 
(Section~\ref{subsec-appedix-down-sampling}) and treating challenging cases under graphon and GRDPG 
models (Section~\ref{ds-example1}-\ref{appC-section:down-sampling_grdpg}).

\section{Proofs for Section \ref{sec:prob-setup}}\label{App-A}

The lemma below will play a key role in the proof of our variant of the Aldous-Hoover Representation.  Recall that a standard Borel space is a measurable space $(S,\mathcal{S})$ for which there exists a metric $d$ on $S$ that makes $\mathcal{S}$ the Borel $\sigma$-field (generated by open sets) and $S$ a complete, separable metric space (see, for example, \citet{preston-borel}).  To prove Theorem \ref{main:theorem1}, results for Borel $\sigma$-fields on $\mathbb{R}^p$ would suffice, but the generality of standard Borel spaces makes certain arguments more elegant. 

\begin{lemma}[Reduction of Functional Dependence]\label{appA-lem:functional_reduction}
Suppose that $U, Z_1, Z_2$ are mutually independent random variables taking values in standard Borel spaces. Further, suppose that $Z_1 = Z_2 \text{ in distribution}$. If $f(U, Z_1) = f(U,Z_2) \text{ almost surely}$ for some measurable map $f$ taking values in a standard Borel space, there exists a measurable map $h$ such that $f(U,Z_1) = h(U) \text{ almost surely}$.
\end{lemma}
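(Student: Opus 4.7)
The plan is to show that, conditional on $U$, the random variable $f(U, Z_1)$ is almost surely constant, so that it agrees with a measurable function of $U$ alone. The driving intuition is that $Z_1, Z_2$ are conditionally independent copies given $U$ (by mutual independence), so $f(U, Z_1)$ and $f(U, Z_2)$ are conditionally i.i.d.\ given $U$; their almost-sure equality forces the conditional law to be a Dirac mass.

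First I would reduce to the case where the target space is $\mathbb{R}$. Because standard Borel spaces are Borel-isomorphic to Borel subsets of $[0,1]$, there is a bi-measurable injection $\phi$ from the target into $[0,1]$. Define $\widetilde{f} = \phi \circ f$, which is bounded and satisfies the same hypotheses; recovering $f$ at the end amounts to applying $\phi^{-1}$ on the image, which is measurable. Thus for the rest of the argument I may assume $f$ is bounded and real-valued.

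Next, I would exploit the conditional variance. Using that $Z_1, Z_2$ are independent of $U$ and of each other, regular conditional distributions give, for $P_U$-almost every $u$,
\begin{align*}
E[f(U, Z_1) f(U, Z_2) \mid U] &= E[f(U, Z_1) \mid U] \cdot E[f(U, Z_2) \mid U] = \bigl(E[f(U, Z_1)\mid U]\bigr)^2,
\end{align*}
where the last step uses $Z_1 \stackrel{d}{=} Z_2$ together with their independence of $U$. On the other hand, the hypothesis $f(U, Z_1) = f(U, Z_2)$ a.s.\ gives $E[f(U, Z_1) f(U, Z_2) \mid U] = E[f(U, Z_1)^2 \mid U]$. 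Subtracting yields $\mathrm{Var}(f(U, Z_1) \mid U) = 0$ a.s., hence $f(U, Z_1) = E[f(U, Z_1) \mid U]$ a.s. Set $\widetilde{h}(u) := E[f(U, Z_1) \mid U = u]$, which is Borel measurable by construction, and define $h := \phi^{-1} \circ \widetilde{h}$ on the measurable set where $\widetilde{h}$ takes values in $\phi(\text{target})$ (and arbitrarily elsewhere); this produces the required measurable map so that $f(U, Z_1) = h(U)$ almost surely.

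The main obstacle I anticipate is bookkeeping the measurability: ensuring that the regular conditional distribution exists (which is automatic for standard Borel spaces), that $\phi$ and $\phi^{-1}$ are measurable on the relevant sets, and that the exceptional null sets from several applications of Fubini and the conditional independence identity can be absorbed into a single null event. Everything else is routine once the conditional variance identity is in place.
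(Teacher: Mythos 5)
Your proof is correct, and it takes a genuinely different route from the paper's. The paper works directly in the standard Borel target space: it invokes a regular conditional probability given $U$, uses independence to show that the conditional law of $f(u,Z_1)$ assigns probability $0$ or $1$ to every Borel set, and then needs two auxiliary lemmas (that atoms of a $\sigma$-finite measure on a standard Borel space are singletons up to null sets, and the resulting zero--one law) to conclude the conditional law is a Dirac mass, followed by a somewhat delicate argument that the resulting constant $c(U)$ is measurable. You instead embed the target into $[0,1]$ via the Borel isomorphism theorem and run a second-moment argument: conditional on $U$, $\widetilde f(U,Z_1)$ and $\widetilde f(U,Z_2)$ are i.i.d., so their almost-sure equality forces the conditional variance to vanish and $\widetilde f(U,Z_1)=E[\widetilde f(U,Z_1)\mid U]$ a.s. What your approach buys is the elimination of the atom/zero--one-law machinery and an essentially free measurability of $\widetilde h(u)=E[\widetilde f(u,Z_1)]$ via Fubini (indeed, you could even dispense with regular conditional probabilities altogether by computing $E[(\widetilde f(U,Z_1)-\widetilde h(U))^2]=E[\widetilde f(U,Z_1)\widetilde f(U,Z_2)]-E[\widetilde h(U)^2]=0$ unconditionally). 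The cost is reliance on the Borel isomorphism theorem, and you should make explicit that $\phi(S)$ is Borel and $\phi^{-1}$ is measurable on it (Lusin--Souslin), so that $h=\phi^{-1}\circ\widetilde h$ (suitably extended off the set $\{\widetilde h\in\phi(S)\}$) is well defined and measurable; with that bookkeeping spelled out, the argument is complete.
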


\begin{proof}
In what follows let $(S, \mathcal{S})$ denote the measurable space associated with the image of $f$. Since $(S, \mathcal{S})$ is a standard Borel space, it admits a regular conditional probability (RCP) (see, for example, Theorem 8.36 of \citet{Klenke2007-KLEPTA}), and we have $ \operatorname{pr}( B \ | \ U=u) =  \operatorname{pr}( (f(u(\omega), Z_1), f(u(\omega), Z_2)) \in B)$ almost surely for any  $B \in \mathcal{S}$ by independence.  Moreover, since
\begin{align*}
1= \operatorname{pr}(f(U,Z_1) = f(U, Z_2)) = E[\operatorname{pr}(f(u(\omega),Z_1) = f(u(\omega),Z_2))],
\end{align*}
it follows that $f(u,Z_1) = f(u,Z_2)$ almost surely for $u \in A$, where $A$ is some set such that $\operatorname{pr}(U \in A) = 1$.  Since $f(u,Z_1)$ and $f(u,Z_2)$ are independent and are equal almost surely for $u \in A$, it follows that, for any $B \in \mathcal{S}$ and any $u \in A$, $\operatorname{pr}(f(u,Z_1)  \in B) = \operatorname{pr}(\{f(u,Z_1)  \in B\} \cap \{ f(u,Z_2) \in B\}) = \operatorname{pr}(f(u,Z_1)  \in B)^2$.  This in turn implies $\operatorname{pr}(f(u,Z_1) \in B) \in \{0,1\}$ for any $B \in \mathcal{S}$; consequently, $f(u,Z_1) = c(u)$ for some constant $c(u)$ with probability $1$ for any $u \in A$ by Lemma \ref{lemma-0-1} below.  

Since $\{U \in A\}$ is a measure $1$ set, the RCP corresponding to $f(u,Z_1) = c(u)$ almost surely for all $u \in A$ and $f(u,Z_1) = c(u) = 0$ for $u \not\in A$  is also a version of the RCP. Therefore, for any $B \in \mathcal{S}$, $\operatorname{pr}(f(U, Z_1) \in B) = E[\operatorname{pr}( c(u(\omega)) \in B)]$. By definition of RCP, $f_u: \omega \mapsto \operatorname{pr}( c(u(\omega)) \in B)$ is measurable for any $B \in \mathcal{S}$. We have that $f_u^{-1}[\{1\}] = \{ \omega \in \Omega \ | \ c(u(\omega)) \in B \}$; thus, measurability of $f_u$ for any $B \in \mathcal{S}$ implies that $c(U)$ is also measurable. Now, since $\operatorname{pr}(f(U, Z_1) \in B) = \operatorname{pr}(c(U) \in B)$ for any $B \in \mathcal{S}$, the claim follows.    
\end{proof}
We now prove an intermediate result that is needed in the proof of a lemma used in the above proof. In what follows, recall that an atom of a $\sigma$-finite measure $\mu$ is a set $A$ that satisfies $0 < \mu(A) < \infty$ and for any measurable subset $E \subseteq A$, $\mu(E) = 0$ or $\mu(E) = \mu(A)$.  
\begin{lemma}
\label{lemma-atom-singleton}
Suppose that $(S,\mathcal{S})$ is a standard Borel space equipped with a $\sigma$-finite measure $\mu$.  Any atom of $\mu$ must take the form $A = \{x\} \cup N$, where $x \in S$ and $\mu(N) = 0$.  
\end{lemma}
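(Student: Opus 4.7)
The plan is to exploit the defining structural feature of a standard Borel space: by Kuratowski's theorem it is Borel-isomorphic to a Borel subset of a Polish space, so there exists a countable family $\{B_n\}_{n \geq 1} \subseteq \mathcal{S}$ that separates points of $S$ (for instance, the preimages under the isomorphism of a countable basis of open sets). In particular, singletons are measurable. I will use this countable separating family to carve the atom $A$ down to a single point modulo a null set.

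First, for each $n$, observe that $A \cap B_n$ is a measurable subset of $A$, so by the atom property $\mu(A \cap B_n) \in \{0, \mu(A)\}$. Define $B_n' = B_n$ if $\mu(A \cap B_n) = \mu(A)$ and $B_n' = S \setminus B_n$ otherwise; then in either case $\mu(A \cap B_n') = \mu(A)$ and $\mu(A \cap (S \setminus B_n')) = 0$. Set $C = A \cap \bigcap_{n \geq 1} B_n'$. By countable subadditivity,
\[
\mu(A \setminus C) \leq \sum_{n \geq 1} \mu\bigl(A \cap (S \setminus B_n')\bigr) = 0,
\]
so $C$ is measurable with $\mu(C) = \mu(A) > 0$.

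Second, I would argue that $C$ contains at most one point. If $x \neq y$ both lie in $\bigcap_n B_n'$, then by the separating property there is some $n$ with exactly one of $x, y$ in $B_n$; since $B_n'$ equals either $B_n$ or its complement, at most one of $x, y$ can lie in $B_n'$, a contradiction. Combined with $\mu(C) > 0$, this forces $C = \{x\}$ for some $x \in S$. Taking $N = A \setminus \{x\}$ then yields $A = \{x\} \cup N$ with $\mu(N) = \mu(A \setminus C) = 0$, the desired decomposition.

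The only real obstacle is the initial justification that a standard Borel space admits a countable point-separating family of measurable sets; this is not an obstruction so much as a fact that needs to be cited rather than proved from scratch, since it is immediate from the Kuratowski/isomorphism characterization. Once that is in hand, the remainder of the argument is purely set-theoretic and uses nothing beyond $\sigma$-additivity of $\mu$ and the defining property of an atom.
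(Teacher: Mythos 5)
Your proof is correct. It is the same underlying idea as the paper's --- use the countable structure of a standard Borel space to pin the atom down to a single point --- but the execution is genuinely different. The paper works directly with the metric: it takes a countable dense subset, disjointifies the balls of radius $1/k$ around its points, observes that for each $k$ exactly one cell of this countable partition carries the full mass $\mu(A)$, and intersects over $k$ to isolate a point (the shrinking radii guarantee the intersection is at most a singleton). You instead invoke a countable point-separating family $\{B_n\}$ of measurable sets, use atomicity to decide for each $n$ whether $B_n$ or its complement carries full mass, and intersect the chosen sides; the separating property does the work of the shrinking radii. Your route is a bit cleaner --- it avoids the disjointification bookkeeping and the diameter argument --- and it is slightly more general, since it only needs the space to be countably separated rather than metrizable with a countable dense set (though for standard Borel spaces these are of course both available). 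One point to make explicit when writing it up: the subtraction $\mu(A\cap(S\setminus B_n)) = \mu(A) - \mu(A\cap B_n)$ uses $\mu(A) < \infty$, which is part of the definition of an atom, and the measurability of singletons (needed so that $N = A\setminus\{x\}$ is a legitimate measurable set) should be noted as following from the Hausdorff/Polish structure. Neither is a gap, just a citation to include.
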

\begin{proof}
Since $(S, \mathcal{S})$ is standard Borel, there exists a metric $d$ such that $\mathcal{S}$ corresponds to the Borel $\sigma$-algebra on a separable metric space.  Let $A$ be an atom and let $D$ be a countable dense subset of $S$ with respect to this metric $d$.  For any $\epsilon >0$ and $x \in D$, define the open ball:
\begin{align*}
\mathcal{B}_\epsilon(x) = \{ s \in S  \ | \  d(x,s) < \epsilon \}.
\end{align*}
For an arbitrary ordering of $D$, let $x_i$ denote the $i$th element of $D$.  Let $\widetilde{\mathcal{B}}_\epsilon(x_1) = \mathcal{B}_{\epsilon}(x_1)$ and for any $k \geq 2$, $\widetilde{\mathcal{B}}_\epsilon(x_k) = \mathcal{B}_{\epsilon}(x_k) \setminus (\bigcup_{i=1}^{k-1} \widetilde{\mathcal{B}}_{\epsilon}(x_i))$.   By countable additivity, for any $\epsilon > 0$, 
\begin{align*}
\mu(A) =  \sum_{i=1}^\infty\mu(A  \ \cap \ \widetilde{\mathcal{B}}_{\epsilon}(x_i)).
\end{align*}
Therefore, exactly one $x \in D$ must satisfy $\mu(A  \ \cap \ \widetilde{\mathcal{B}}_{\epsilon}(x))  = \mu(A) > 0$.  Now for each $1/k$, $k \in \mathbb{N}$, choose $x^{(k)}$ such that $\mu( A  \cap \widetilde{\mathcal B}_{1/k}(x^{(k)})) > 0$.  Now, observe that:
\begin{align*}
\mu\left(A \setminus \bigcap_{k \in \mathbb{N}} \widetilde{\mathcal{B}}_{1/k}(x^{(k)}) \right) &= \mu\left( A \cap \left\{ \bigcup_{k \in \mathbb{N}} \widetilde{\mathcal{B}}_{1/k}^c(x^{(k)})\right\}
\right)
\\ & \leq \sum_{k \in \mathbb{N}} \mu(A \cap \widetilde{\mathcal{B}}_{1/k}^c(x^{(k)})) = 0.
\end{align*}
Thus, it must be the case that:
\begin{align*}
\mu\left(A \cap \left\{\cap_{k \in \mathbb{N}} \widetilde{\mathcal{B}}_{1/k}(x^{(k)}) \right\}  \right) = \mu(A).
\end{align*}
However, $\cap_{k \in \mathbb{N}} \widetilde{\mathcal{B}}_{1/k}(x^{(k)}) \subseteq \cap_{k \in \mathbb{N}} \mathcal{B}_{1/k}(x^{(k)})$ contains at most one element. Since the above intersection is non-empty, it contains exactly one element; denote this element $x$. It must be the case that $ \mu(A) = \mu(A \cap \{x\})$, and from above, we also have $\mu(A \setminus \{x\}) = 0$; the result follows.  
\end{proof}

The following result is used in the proof of Lemma \ref{appA-lem:functional_reduction}.  
\begin{lemma}
\label{lemma-0-1}
Suppose that $X: (\Omega, \mathcal{F}) \mapsto (S, \mathcal{S})$ is a measurable random variable and $(S, \mathcal{S})$ is a standard Borel space.  If $\operatorname{pr}(X \in  B) \in \{0,1\}$ for all $ B  \in \mathcal{S}$, then there exists some $c \in S$ such that $\operatorname{pr}(X = c) = 1$.     
\end{lemma}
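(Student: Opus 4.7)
The plan is to realize the desired point $c$ by applying the preceding Lemma \ref{lemma-atom-singleton} to the pushforward (law) of $X$. Write $\mu := X_{*}P$ for the distribution of $X$ on $(S,\mathcal{S})$; this is a probability measure, hence $\sigma$-finite, on the standard Borel space $(S,\mathcal{S})$. The hypothesis is exactly that $\mu(B)\in\{0,1\}$ for every $B\in\mathcal{S}$.

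The first step is to observe that $S$ itself is an atom of $\mu$. Indeed $0<\mu(S)=1<\infty$, and for any measurable $E\subseteq S$ we have $\mu(E)\in\{0,1\}=\{0,\mu(S)\}$, which is the atom condition. This is the only place the $\{0,1\}$-valued assumption is used.

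The second step is to invoke Lemma \ref{lemma-atom-singleton} directly on this atom: there exist $c\in S$ and a null set $N\in\mathcal{S}$ with $\mu(N)=0$ such that $S=\{c\}\cup N$. Using countable subadditivity, $1=\mu(S)\le \mu(\{c\})+\mu(N)=\mu(\{c\})$, so $\mu(\{c\})=1$, which by definition of the pushforward means $\operatorname{pr}(X=c)=1$, as required. (A minor technical point to verify en route is that singletons lie in $\mathcal{S}$, which holds because any metric generating the standard Borel structure makes points closed.)

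There is no real obstacle here, since all of the topological and measure-theoretic work—approximating an atom by shrinking balls around a dense sequence, and extracting a single point from a countable intersection of such balls—has already been carried out in Lemma \ref{lemma-atom-singleton}. The only conceptual content of the present lemma is the trivial observation that a $\{0,1\}$-valued probability measure has its ambient space as an atom.
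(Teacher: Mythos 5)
Your proof is correct and rests on the same key ingredient as the paper's, namely Lemma~\ref{lemma-atom-singleton} applied to the law of $X$. The only difference is a mild streamlining: you observe directly that $S$ itself is an atom of $\mu = X_{*}P$ and conclude $\mu(\{c\})=1$ in one step, whereas the paper first rules out the non-atomic case by contradiction, applies the singleton lemma to the resulting atom to get $\operatorname{pr}(X=c)>0$, and then re-invokes the $\{0,1\}$ hypothesis to upgrade this to probability one; both routes are valid.
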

\begin{proof}
We consider two cases, corresponding to the probability measure for $X$ being non-atomic or containing at least one atom.  We start with the former.  Suppose $\operatorname{pr}(X \in \cdot)$ is non-atomic (diffuse).  Let $B \in \mathcal{S}$ be an event such that $\operatorname{pr}(X \in B) = 1$; at least one such event exists ($S$, for example).  Since $\operatorname{pr}(X \in \cdot)$ is non-atomic, there must exist some event $A$ such that $A \subseteq B$ and $0 < \operatorname{pr}(A) < \operatorname{pr}(B) \leq 1$.  However, this contradicts the assumption in the lemma; therefore, $\operatorname{pr}(X \in \cdot)$ must contain at least one atom.  

Now, by Lemma \ref{lemma-atom-singleton}, an atom of such a distribution must be a singleton set, up to null sets.  For such $c$, we have $\operatorname{pr}(X =c) >0$, and by assumption, we must also have $\operatorname{pr}(X =c)=1$. The result follows.   
\end{proof}

We next provide the formal proof of Theorem \ref{main:theorem1}.
\begin{proof}
Let $S_V = \mathbb{R} \times \mathbb{R}^p \times \mathbb{R} \times \mathbb{R}^p \times \mathbb{R}$ be the space where $V_{ij}$ takes its values, and $\mathcal{S}_V$ be the corresponding Borel $\sigma$-field. Since $(V_{ij})_{i \neq j}$ is a jointly exchangeable array, by the Aldous-Hoover theorem (see for example, Theorem 28.1 of \citet{Kallenberg2002}, 
there exist mutually independent  collections of i.i.d. $\mathrm{Uniform}[0,1]$ random variables $\alpha$, $(\xi_i)_{i \in \mathbb{N}}$, and $(\eta_{ij})_{i<j}$ and a measurable function $f: [0,1]^4 \mapsto S_V$ such that the joint distribution of $(V_{ij})_{i \neq j}$ can be represented as:
\begin{align}
\label{eq-aldous-hoover}
(V_{ij})_{i \neq j} = \left(f(\alpha, \xi_i, \xi_j, \eta_{ij}) \right)_{i \neq j} \quad \text{in distribution}.   
\end{align}
The function $f$ can be decomposed into a tuple of measurable component functions corresponding to each component in the array as follows:
\begin{align*}
f(\alpha, u, v, w) = \left(f_{YX}( \alpha, u, v, w), f_{Y'X'}( \alpha, u, v, w),g( \alpha, u, v, w)\right). 
\end{align*}

Consider two  indices $k< l_1 < l_2$. Then,
\begin{align*}
(Y_k,X_k,Y_k,X_k) &= (f_{YX}(\alpha, \xi_k, \xi_{l_1}, \eta_{kl_1}) , f_{YX}(\alpha, \xi_k, \xi_{l_2}, \eta_{kl_2})) \quad \text{in distribution,}
\end{align*}
which in turn implies:
\begin{align*}
f_{YX}(\alpha, \xi_k, \xi_{l_1}, \eta_{kl_1}) = f_{YX}(\alpha, \xi_k, \xi_{l_2}, \eta_{kl_2}) \quad\text{almost surely.} 
\end{align*}
By Lemma \ref{appA-lem:functional_reduction}, it follows that there exists some measurable map $h$ such that:
\begin{align}
\label{eq-reduced-form-ah}
f_{YX}(\alpha, \xi_k, \xi_{l_1}, \eta_{k,l_1}) = h(\alpha, \xi_k)  \quad\text{almost surely.} 
\end{align}

Moreover, by examining the joint distribution of $(Y_{l_1}, Y_{l_1})$ induced by $(V_{k l_1}, V_{l_1 k})$, it is clear that:
\begin{align}
\label{eq-yx-xy-equivalence}
f_{Y'X'}( \alpha, \xi_k, \xi_{l_1}, \eta_{k l_1}) = f_{YX}( \alpha, \xi_{l_1}, \xi_{k}, \eta_{k l_1}) \quad\text{almost surely}.
\end{align}

Now, since \eqref{eq-reduced-form-ah} and \eqref{eq-yx-xy-equivalence} hold almost surely, they also hold almost surely conditional on $\xi_k, \xi_{l_1}, \alpha$ with probability $1$.  Therefore, on a probability $1$ set, we have:
\begin{align*}
& \mathcal{L}\left(f(\alpha, \xi_k, \xi_{l_1}, \eta_{kl_1})  \ \bigr\rvert \ \alpha = u_1, \xi_k = u_2, \xi_{l_1} = u_3 \right) \\  = \ & \mathcal{L}\left((h(\alpha, \xi_k), f_{YX}(\alpha ,\xi_{l_1},\xi_k, \eta_{kl_1}),g(\alpha,\xi_{k},\xi_{l_1}, \eta_{kl_1})) \ \bigr\rvert \ \alpha =u_1, \xi_k=u_2, \xi_{l_1}=u_3 \right) \\
= \ & \mathcal{L}((h(\alpha, \xi_{l_1}), f_{YX}(\alpha,\xi_{k},\xi_{l_1}, \eta_{kl_1}),g(\alpha,\xi_{l_1},\xi_{k}, \eta_{kl_1})) \ \bigr\rvert \ \alpha = u_1, \xi_{l_1} = u_2, \xi_{k} = u_3) \\
= \ &  \mathcal{L}\left((h(\alpha, \xi_{l_1}), h(\alpha, \xi_{k}),g(\alpha,\xi_{l_1},\xi_{k}, \eta_{kl_1})) \ \bigr\rvert \ \alpha=u_1, \xi_{l_1}=u_2, \xi_{k} = u_3 \right) \\ 
= \ & \mathcal{L}\left((h(\alpha,\xi_k), \, h(\alpha,\xi_{l_1}), g(\alpha,\xi_k,\xi_{l_1}, \eta_{kl_1})) \ \bigr\rvert \ \alpha=u_1, \xi_{k}=u_2, \xi_{l_1}=u_3 \right),
\end{align*}
where above $\mathcal{L}(\cdot)$ denotes distribution. 

Similarly, by permuting the latent variables, we have with probability $1$, marginally for any $i \neq j$, 
\begin{align}
\label{eq-conditional-equivalence}
\mathcal{L}\left(f(\alpha, \xi_i, \xi_j, \eta_{ij}) \ \bigr\rvert \ \alpha,\xi_i, \xi_j\right) = \mathcal{L}\left((h(\alpha, \xi_i), h(\alpha, \xi_j), g(\alpha, \xi_i, \xi_j, \eta_{ij}))  \ \bigr\rvert \ \alpha, \xi_i, \xi_j\right).
\end{align}

We will now show that the joint distribution has the claimed form.  To this end, it suffices to show that any finite-dimensional distribution is equal in distribution to any finite-dimensional distribution of the posited distribution.  Consider the set of distinct indices $(i_1,j_1), \ldots,(i_r,j_r)$, where $i_k \neq j_k$ for any $k \in [r]$ and $r \in \mathbb{N}$.  Observe that, for arbitrary measurable sets $A_1, \ldots, A_r$, by conditional independence,
\begin{align*}
\operatorname{pr}\left( \bigcap_{k=1}^r \bigl\{f(\alpha, \xi_{i_k}, \xi_{j_k}, \eta_{i_kj_k}) \in A_k \bigr\}  \right) = E\left[\prod_{k=1}^r \operatorname{pr}\left( f(\alpha, \xi_{i_k}, \xi_{j_k}, \eta_{i_kj_k}) \in A_k \ \bigr\rvert \ \alpha,\xi_{i_k},\xi_{j_k} \right)\right]. 
\end{align*}

Since our claimed representation also has the same conditional independence structure, to show equality in distribution, it suffices to show that each of the conditional distributions agree.  This is guaranteed by \eqref{eq-conditional-equivalence}.  Therefore, unconditionally we have
\begin{align*}
\mathcal{L}\left((V_{i_k j_k})_{1 \leq k \leq r} \right)=\mathcal{L}\left(\left(h(\alpha,\xi_{i_k}), h(\alpha, \xi_{j_k}), g(\alpha, \xi_{i_k}, \xi_{j_k}, \eta_{i_kj_k}) \right)_{1 \leq k \le r}\right).
\end{align*}
The result follows. 
\end{proof}

\section{Proofs for Section \ref{subsec-local-subgraph}}\label{App-B}
In this section, we provide proofs related to the properties of linear regression when local subgraph counts are used as covariates.  In our proofs, the following count frequency will often be of interest:
\begin{align}\label{eq-local-subgraph-proxy}
\widecheck{Q}_i(\mathcal{R}_j)=
\frac{1}{{n-1 \choose r-1} \rho_n^s} \sum_{\substack{1 \leq i_1 <  \cdots < i_{r-1} \leq n, \\ i_1, \ldots, i_{r-1} \neq i}} \frac{1}{|\mathrm{Iso}(\mathcal{R}_j)|}
\sum_{\substack{\mathcal{S}: \mathcal{S} \cong \mathcal{R}_j,\\
\mathcal{V}(\mathcal{S}) = \{i,i_1, \ldots, i_{r-1}\}}} \prod_{\{j,k\} \in \mathcal{E}(S) } A_{jk}.
\end{align}
Compared with the definition of $\widehat{Q}_i(\mathcal{R}_j)$, the only difference is that normalization is by the true sparsity level rather than the estimated one.  It should also be noted that since $\|w\|_\infty \leq C$, $\rho_n w \leq 1$ for $n$ large enough. Therefore, for notational simplicity, in the following derivations, we assume $n$ is large enough and truncation can be ignored. In what follows, let $\tau_i=(X_i,Y_i,\xi_i)$.

In what follows, conditioning on $\tau$ refers to conditioning on the vector $(\tau_1,\ldots,\tau_n)$, 
which is equivalent to conditioning on $(X_1,\ldots,X_n,\,Y_1,\ldots,Y_n,\,\xi_1,\ldots,\xi_n)$. 
When conditioning is written with respect to specific components, such as $\xi$, $X$, $Y$, or any combination thereof, 
it denotes conditioning on corresponding vectors; for example, 
$\xi = (\xi_1,\ldots,\xi_n)$. 
We also employ the conventional Bachmann-Landau notation and its probabilistic counterparts.

\subsection{Lemmas for Local Subgraph Frequencies}\label{subgraph:lemmas}
We begin by deriving a global representation for all local network subgraph statistics that are linear in $\widehat{Z}_i$.  We show that this admits a noisy U-statistic form, consisting of a U-statistic plus a negligible noise term. 

We first analyze the intermediate term \( \widecheck{Z}_{ij} = \widecheck{Q}_i(\mathcal{R}_j) \), assuming the sparsity level $\rho_n$ is known; the effect of estimating $\rho_n$ will be accounted for in a delta method argument in later proofs. The following Lemma \ref{appB:lemma1} gives a global representation for these statistics. 

\begin{lemma}[Noisy U-Statistic Representation for Linearly Weighted Network Statistics]
\label{appB:lemma1}
The following representation holds:
\begin{align*} 
n^{-1}\sum_{i=1}^nY_i\widecheck{Z}_{ij}&=\frac{1}{\binom{n}{r_j}}
\sum_{1 \leq k_1 < \cdots < k_{r_j} \leq n} h_{Y,\mathcal{R}_j}\left(\tau_{k_1},\ldots,\tau_{k_{r_j}}\right) +  \widehat{\delta}_{n,Y,\mathcal{R}_j},\\ 
n^{-1}\sum_{i=1}^n X_i\widecheck{Z}_{ij}&=\frac{1}{\binom{n}{r_j}}
\sum_{1 \leq k_1 < \cdots < k_{r_j} \leq n} h_{X,\mathcal{R}_j}\left(\tau_{k_1},\ldots,\tau_{k_{r_j}}\right) + \widehat{\delta}_{n,X,\mathcal{R}_j}.
\end{align*}
A special case immediately follows:
\begin{align*}
\frac{1}{n}\sum_{i=1}^n \widecheck{Z}_{ij}
  &= \frac{1}{\binom{n}{r_j}}
     \sum_{1 \leq k_1 < \cdots < k_{r_j} \leq n}  h_{\mathcal{R}_j}\!\left(\tau_{k_1},\ldots,\tau_{k_{r_j}}\right)
     + \widehat{\delta}_{n,\mathcal{R}_j}.
\end{align*}
Here $h_{Y,\mathcal{R}_j}(\cdot)$ and $h_{X,\mathcal{R}_j}(\cdot)$ are symmetric kernels of order $r_j$, associating $(Y_{k_1},\dots,Y_{k_{r_j}})$ or $(X_{k_1},\dots,X_{k_{r_j}})$ with the corresponding subgraph components. The remainder term represents the network noise of order $o_P(n^{-1/2})$ under appropriate sparsity conditions verified in Lemma \ref{appB:lemma2}.
\end{lemma}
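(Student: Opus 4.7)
The plan is to obtain the representation by swapping the order of summation so that the outer sum runs over unordered $r_j$-tuples of vertices rather than over nodes $i$. Starting from
\begin{align*}
\frac{1}{n}\sum_{i=1}^n Y_i\,\widecheck{Z}_{ij}
= \frac{1}{n\binom{n-1}{r_j-1}\rho_n^{s_j}}
\sum_{i=1}^n Y_i
\sum_{\substack{i_1<\cdots<i_{r_j-1}\\ \text{all}\neq i}}
\frac{1}{|\mathrm{Iso}(\mathcal{R}_j)|}
\sum_{\substack{\mathcal{S}\cong\mathcal{R}_j\\ \mathcal{V}(\mathcal{S})=\{i,i_1,\dots,i_{r_j-1}\}}}
\prod_{\{a,b\}\in\mathcal{E}(\mathcal{S})} A_{ab},
\end{align*}
I would use the identity $n\binom{n-1}{r_j-1}=r_j\binom{n}{r_j}$ and re-index by the unordered vertex set $\{k_1<\cdots<k_{r_j}\}$, with $i$ now ranging over the $r_j$ elements of that set. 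This directly yields a $U$-statistic-type sum with the (random) kernel $\frac{1}{r_j}\sum_{i\in\{k_1,\dots,k_{r_j}\}} Y_i\cdot|\mathrm{Iso}(\mathcal{R}_j)|^{-1}\sum_{\mathcal{S}\cong\mathcal{R}_j}\prod A_{ab}/\rho_n^{s_j}$, which is manifestly symmetric in its arguments because we symmetrize over the root and over isomorphic embeddings.

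Next I would split this kernel into a deterministic piece (in terms of $\tau$) plus a centered noise piece by writing $\prod_{\{a,b\}\in\mathcal{E}(\mathcal{S})} A_{ab}/\rho_n^{s_j} = \prod_{\{a,b\}} w(\xi_a,\xi_b) + \bigl(\prod_{\{a,b\}} A_{ab}/\rho_n^{s_j} - \prod_{\{a,b\}} w(\xi_a,\xi_b)\bigr)$. Defining
\begin{align*}
h_{Y,\mathcal{R}_j}(\tau_{k_1},\dots,\tau_{k_{r_j}})
= \frac{1}{r_j}\sum_{i\in\{k_1,\dots,k_{r_j}\}} Y_i\cdot
\frac{1}{|\mathrm{Iso}(\mathcal{R}_j)|}
\sum_{\substack{\mathcal{S}\cong\mathcal{R}_j\\ \mathcal{V}(\mathcal{S})=\{k_1,\dots,k_{r_j}\}}}
\prod_{\{a,b\}\in\mathcal{E}(\mathcal{S})} w(\xi_a,\xi_b),
\end{align*}
collects the clean part into a symmetric, $\tau$-measurable kernel of order $r_j$, and the noise part defines $\widehat{\delta}_{n,Y,\mathcal{R}_j}$. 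The arguments for $X_i$ and for the unweighted case are identical, with $Y_i$ replaced by $X_i$ or by $1$; the latter recovers the stated special case. Symmetry is inherited from the symmetric outer average over the root index and the isomorphism orbit, so $h_{Y,\mathcal{R}_j}$ is a legitimate $U$-statistic kernel.

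The only remaining issue is to justify the $o_P(n^{-1/2})$ claim for the remainder $\widehat{\delta}_{n,Y,\mathcal{R}_j}$. Conditional on $\tau$, the $\eta_{ab}$ are independent, so $\prod A_{ab}/\rho_n^{s_j}-\prod w(\xi_a,\xi_b)$ is a centered product of Bernoulli-type fluctuations (after absorbing the truncation for $n$ large enough, which the excerpt already notes can be ignored). This reduces the bound to a moment computation on sums of products of centered Bernoullis along overlapping subgraphs, and is exactly the content of the forthcoming Lemma \ref{appB:lemma2}; hence I would simply cite that lemma for the stochastic order, leaving its proof (the algebraically intricate part, where the sparsity assumptions enter through edge-overlap counting) for its own statement.

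\paragraph{Main obstacle.} The combinatorial identity-level manipulation is routine once the re-indexing $n\binom{n-1}{r_j-1}=r_j\binom{n}{r_j}$ is invoked, and symmetry is essentially automatic. The genuine work is the remainder bound deferred to Lemma \ref{appB:lemma2}: controlling $\widehat{\delta}_{n,Y,\mathcal{R}_j}$ requires a careful moment expansion of sums of overlapping, $\tau$-conditionally centered edge products, and the sparsity thresholds in Theorems \ref{theorem 1}--\ref{theorem 2} are what make such terms negligible on the $n^{-1/2}$ scale. So the representation itself is clean; the analytical burden sits in the accompanying noise lemma.
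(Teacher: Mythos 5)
Your proposal is correct and follows essentially the same route as the paper: the same re-indexing over unordered $r_j$-tuples via the identity $n\binom{n-1}{r_j-1}=r_j\binom{n}{r_j}$, the same symmetrized kernel $h_{Y,\mathcal{R}_j}$ obtained by averaging over the root and the isomorphism orbit, and the same splitting of $A_{ab}/\rho_n^{s_j}$ into $w(\xi_a,\xi_b)$ plus a centered fluctuation whose order is deferred to Lemma~\ref{appB:lemma2}. No gaps.
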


\begin{proof}
Observe that:
\begin{align}
& n^{-1}\sum_{i=1}^nY_i\widecheck{Z}_{ij} = n^{-1}\sum_{i=1}^nY_i\widecheck{Q}_i(\mathcal{R}_j) \nonumber\\
&= n^{-1} \sum_{i=1}^n Y_i \left[ \frac{1}{\binom{n-1}{r_j-1} \rho_n^{s_j} |\mathrm{Iso}(\mathcal{R}_j)|} \sum_{\substack{1 \leq k_2 < \dots < k_{r_j} \leq n \\ k_2, \ldots, k_{r_j} \neq i}} \sum_{\substack{\mathcal{S}: \mathcal{S} \cong \mathcal{R}_j,\\ \mathcal{V}(\mathcal{S}) = \{i,k_2, \ldots, k_{r_j}\}}} \prod_{\{a,b\} \in \mathcal{E}(\mathcal{S})} A_{ab} \right] \nonumber\\
&= \frac{1}{n \binom{n-1}{r_j-1} \rho_n^{s_j} |\mathrm{Iso}(\mathcal{R}_j)|} \sum_{i=1}^n Y_i \sum_{\substack{1 \leq k_2 < \dots < k_{r_j} \leq n \\ k_2, \ldots, k_{r_j} \neq i}} \sum_{\substack{\mathcal{S}: \mathcal{S} \cong \mathcal{R}_j,\\ \mathcal{V}(\mathcal{S}) = \{i,k_2, \ldots, k_{r_j}\}}} \prod_{\{a,b\} \in \mathcal{E}(\mathcal{S})} A_{ab}. \label{appB-eq:noisy_linear_initial_sum}
\end{align}
The double summation $\sum_{i=1}^n \sum_{k_2 < \dots < k_{r_j},\; k_\ell \neq i}$ iterates over all distinct $r_j$-tuples of nodes $\{i, k_2, \dots, k_{r_j}\}$. For any given $r_j$-tuple $\{p_1, \dots, p_{r_j}\}$ (with $p_1 < \dots < p_{r_j}$), there are exactly $r_j$ ways that one of these nodes can serve as the root node $i$, with the remaining $r_j-1$ nodes forming the set $\{k_2, \dots, k_{r_j}\}$. We define the symmetric kernel  $h_{Y,\mathcal{R}_j}(\tau_{p_1}, \dots, \tau_{p_{r_j}})$ as:
\begin{align*}
h_{Y,\mathcal{R}_j}(\tau_{p_1}, \dots, \tau_{p_{r_j}}) = \left( \frac{1}{r_j} \sum_{\ell=1}^{r_j} Y_{p_\ell} \right) \left( \frac{1}{|\mathrm{Iso}(\mathcal{R}_j)|} \sum_{\substack{\mathcal{S}: \mathcal{S} \cong \mathcal{R}_j,\\ \mathcal{V}(\mathcal{S}) = \{p_1, \ldots, p_{r_j}\}}} \prod_{\{a,b\} \in \mathcal{E}(\mathcal{S})} w(\xi_a,\xi_b) \right).
\end{align*}

With this kernel, we can rewrite the initial sum \eqref{appB-eq:noisy_linear_initial_sum} as:
\begin{align*}
n^{-1}\sum_{i=1}^nY_i\widecheck{Z}_{ij} 
&= \frac{r_j}{n \binom{n-1}{r_j-1}} \sum_{1 \leq p_1 < \dots < p_{r_j} \leq n} h_{Y,\mathcal{R}_j}(\tau_{p_1}, \dots, \tau_{p_{r_j}})+\widehat{\delta}_{n,Y,\mathcal{R}_j} \\
&= \frac{1}{\binom{n}{r_j}} \sum_{1 \leq p_1 < \dots < p_{r_j} \leq n} h_{Y,\mathcal{R}_j}(\tau_{p_1}, \dots, \tau_{p_{r_j}})+\widehat{\delta}_{n,Y,\mathcal{R}_j},
\end{align*}
where the residual
\begin{align*}
\widehat{\delta}_{n,Y,\mathcal{R}_j}
&= \binom{n}{r_j}^{-1}\!
   \sum_{1 \le p_1<\cdots<p_{r_j}\le n}
   \frac{ r_j^{-1}\sum_{\ell=1}^{r_j}Y_{p_\ell} }{|\operatorname{Iso}(\mathcal R_j)|}\!
   \sum_{\substack{\mathcal S\cong\mathcal R_j\\[-0.25ex] \mathcal V(\mathcal S)=\{p_1,\dots,p_{r_j}\}}}
   \Bigg(
     \rho_n^{-s_j}\!
     \prod_{\substack{(a,b)\in \mathcal E(\mathcal{S})}}\! A_{ab}
     \!-\!
     \prod_{\substack{(a,b)\in \mathcal E(\mathcal{S})}}\! w(\xi_a,\xi_b)
   \Bigg),
\end{align*}
is considered in Lemma \ref{appB:lemma2}. This precisely matches the noisy U-statistic representation for $n^{-1}\sum_{i=1}^nY_i\widecheck{Z}_{ij}$ stated in the lemma. An identical derivation yields the representation for $n^{-1}\sum_{i=1}^nX_i\widecheck{Z}_{ij}$, under analogous expressions of $h_{X,\mathcal{R}_j}(\tau_{p_1}, \dots, \tau_{p_{r_j}})$ and $\widehat{\delta}_{n,X,\mathcal{R}_j}$.

For rooted subgraphs, where $\widecheck{Z}_{ij} = \widecheck{Q}_i^{\mathrm{rooted}}(\mathcal{R}_j)$ with $\widehat{Q}_i^{\mathrm{rooted}}(\mathcal{R}_j)$ defined in \eqref{eq-local-subgraph-rooted}, the kernel takes a simplified form:
\begin{align}\label{appB-eq:h_Y_rooted}
h_{Y,\mathcal{R}_j}^{\mathrm{rooted}}(\tau_{p_1}, \dots, \tau_{p_{r_j}}) = \frac{1}{r_j} \sum_{\ell=1}^{r_j} \left( Y_{p_\ell} \prod_{q \in \{p_1, \dots, p_{r_j}\} \setminus \{p_\ell\}} w(\xi_{p_\ell},\xi_q) \right) .
\end{align}
The U-statistic representation then follows the exact same combinatorial manipulation as in the general case. The claim follows.
\end{proof}

We next provide the proof of Proposition \ref{main:prop1}, which establishes a representation of the quadratic subgraph statistics as linear combinations of global noisy $U$-statistics with appropriate scaling factors.

\begin{proof}[Proof of Proposition \ref{main:prop1}.]
Multiplying expressions for subgraphs $\mathcal{R}_j$ and $\mathcal{R}_k$  and summing over \( i = 1, \ldots, n \), we obtain
\begin{align}\label{exp:raw squared expansion}
n^{-1} \sum_{i=1}^n \widehat{Z}_{i,j} \widehat{Z}_{i,k}  
& = \frac{1}{n \binom{n-1}{r_j - 1} \binom{n-1}{r_k - 1} \hat{\rho}_n^{s_j + s_k} 
|\mathrm{Iso}(\mathcal{R}_j)| |\mathrm{Iso}(\mathcal{R}_k)|} \nonumber \\
& \quad \times \sum_{i=1}^n \sum_{\substack{1 \leq i_1 < \cdots < i_{r_j - 1} \leq n \\ i_\ell \neq i}} 
\sum_{\substack{1 \leq j_1 < \cdots < j_{r_k - 1} \leq n \\ j_\ell \neq i}} \\
& \qquad \times \sum_{\substack{\mathcal{S}_j \cong \mathcal{R}_j \\ \mathcal{V}(\mathcal{S}_j) = \{i, i_1, \ldots, i_{r_j - 1}\}}} 
\sum_{\substack{\mathcal{S}_k \cong \mathcal{R}_k \\ \mathcal{V}(\mathcal{S}_k) = \{i, j_1, \ldots, j_{r_k - 1}\}}} \nonumber   
\prod_{\{i_a, i_b\} \in \mathcal{E}(\mathcal{S}_j)} A_{i_a i_b} 
\prod_{\{j_a, j_b\} \in \mathcal{E}(\mathcal{S}_k)} A_{j_a j_b}.
\end{align}

We now rearrange the expression \eqref{exp:raw squared expansion} by merging into overlapping motifs locally; this leads to the following equivalent characterization:
\begin{align}
\label{exp:local squared merging}
n^{-1}\sum_{i=1}^n \widehat{Z}_{i,j}\,\widehat{Z}_{i,k}
&=
\frac{1}
{\,n\,\binom{n-1}{r_j-1}\,\binom{n-1}{r_k-1}\,
\hat\rho_n^{\,s_j+s_k}\,
|\mathrm{Iso}(\mathcal{R}_j)|\,|\mathrm{Iso}(\mathcal{R}_k)|\,}
\nonumber\\
&\quad\times
\sum_{i=1}^n
\quad \sum_{c=1}^{\min(r_j,r_k)}
\sum_{\substack{L\subset [n]\setminus\{i\}\\|L|=c-1}}
\sum_{\substack{I\subset [n]\setminus(\{i\}\cup L)\\|I|=r_j-c}}
\sum_{\substack{J\subset [n]\setminus(\{i\}\cup L\cup I)\\|J|=r_k-c}}
\sum_{d=0}^{\min(s_j, s_k)} \sum_{\substack{\mathcal{M}^*\in\mathfrak{M}^{\mathcal{R}_j,\mathcal{R}_k}_{c,d}}}
\nonumber\\
&\quad\quad\times
\sum_{\substack{S_j\cong\mathcal{R}_j\\ \mathcal{V}(S_j)=\{i\}\cup L\cup I}}
\sum_{\substack{S_k\cong\mathcal{R}_k\\ \mathcal{V}(S_k)=\{i\}\cup L\cup J}}
\mathbbm{1}\{S_j\cup S_k\cong\mathcal{M}^*\}
\prod_{\{a,b\}\in \mathcal{E}(S_j\cup S_k)}
A_{a b}\,. 
\end{align}

Given a particular $c$ and a selection of nodes on the two subgraphs, suppose each overlap motif $\mathcal{M}^*$ has $K_{\mathfrak{M}^{\mathcal{R}_j,\mathcal{R}_k}_{c}}(\mathcal{M}^*)$ isomorphic cases when expanding 
\begin{align*}
\sum_{\substack{\mathcal{S}_j \cong \mathcal{R}_j \\ \mathcal{V}(\mathcal{S}_j) = \{i, i_1, \ldots, i_{r_j - 1}\}}} 
\sum_{\substack{\mathcal{S}_k \cong \mathcal{R}_k \\ \mathcal{V}(\mathcal{S}_k) = \{i, j_1, \ldots, j_{r_k - 1}\}}} \left(
\prod_{\{i_a, i_b\} \in \mathcal{E}(\mathcal{S}_j)} A_{i_a i_b} \times  
\prod_{\{j_a, j_b\} \in \mathcal{E}(\mathcal{S}_k)} A_{j_a j_b}\right),     
\end{align*}
the sum over these multiplicities then satisfies $\sum_{\substack{\mathcal{M}^*\in\mathfrak{M}^{\mathcal{R}_j,\mathcal{R}_k}_{c}}} K_{\mathfrak{M}^{\mathcal{R}_j,\mathcal{R}_k}_{c}}(\mathcal{M}^*)=|\mathrm{Iso}(\mathcal{R}_j)|\times|\mathrm{Iso}(\mathcal{R}_k)|$, possibly spanning different $d$, where
\[
\mathfrak{M}^{\mathcal{R}_j,\mathcal{R}_k}_{c} := \bigcup_{d=0}^{\min(s_j, s_k)} \mathfrak{M}^{\mathcal{R}_j,\mathcal{R}_k}_{c,d}.
\]

Re-expressing the sum \eqref{exp:local squared merging} yields
\begin{align}\label{exp:global_squared_merging}
&n^{-1}\sum_{i=1}^n \widehat{Z}_{i,j}\,\widehat{Z}_{i,k} \nonumber\\
&=
\frac{1}
{%
|\mathrm{Iso}(\mathcal{R}_j)|\,|\mathrm{Iso}(\mathcal{R}_k)|%
}
\sum_{c=1}^{\min(r_j,r_k)}
\sum_{d=0}^{\min(s_j, s_k)}
\frac{\hat\rho_n^{-d}\binom{n}{r_j+r_k-c}}
{%
n\,\binom{n-1}{r_j-1}\,\binom{n-1}{r_k-1}\,%
}
\nonumber\\
& \quad \times
\frac{1}{\binom{n}{r_j+r_k-c}\hat\rho_n^{\,s_j+s_k-d}} \sum_{\substack{L\subset[n]\\|L|=r_j+r_k-c}}
\left\{
\sum_{\substack{i_u\in L}}
\sum_{\substack{I\subset L\setminus\{i_u\}\\|I|=c-1}}
\sum_{\substack{J\subset L\setminus(\{i_u\}\cup I)\\|J|=r_j-c}}
\sum_{\mathcal M^*\in\mathfrak{M}^{\mathcal R_j,\mathcal R_k}_{c,d}}
\right.
\nonumber\\
& \quad \left.
\times \sum_{\substack{S_j\cong\mathcal R_j\\
\mathcal{V}(S_j)= \{i_u\}\cup I \cup J}}
\sum_{\substack{S_k\cong\mathcal R_k\\
\mathcal{V}(S_k)= (L \setminus \mathcal{V}(S_j)) \cup \{i_u\} \cup I}} \mathbbm{1}\{S_j\cup S_k\cong\mathcal M^*\} \prod_{\{a,b\}\in \mathcal{E}(S_j\cup S_k)}A_{ab}
\right\}.
\end{align}

We now separate this expression \eqref{exp:global_squared_merging} into the leading term that corresponds to $c=1, d=0$ and a remainder term as follows:
\begin{align*}
n^{-1}\sum_{i=1}^n \widehat{Z}_{i,j}\,\widehat{Z}_{i,k}
&= \frac{1}{|\mathrm{Iso}(\mathcal{R}_j)|\,|\mathrm{Iso}(\mathcal{R}_k)|}
   \sum_{\mathcal M^*\in
     \mathfrak{M}^{\mathcal R_j,\mathcal R_k}_{1,0}}
   \frac{1}{\binom{n}{r_j+r_k-1}\,\hat\rho_n^{\,s_j+s_k}} \\
&\quad \times 
   \sum_{\substack{L\subset[n]\\|L|=r_j+r_k-1}}
   \frac{(r_j-1)!\,(r_k-1)!}{(r_j+r_k-1)!}\,\bigl(1+o(1)\bigr) \\
&\quad \times 
   \sum_{\substack{i_u \in L}}
   \sum_{\substack{I\subset L\setminus\{i_u\}\\|I|=r_j-1}}
   \sum_{\substack{S_j\cong\mathcal R_j\\
                   \mathcal V(S_j)=\{i_u\}\cup I}}
   \sum_{\substack{S_k\cong\mathcal R_k\\
                   \mathcal V(S_k)=L\setminus I}}
   \mathbbm{1}\{S_j\cup S_k\cong \mathcal M^*\} \\
&\quad \times 
   \prod_{\{a,b\}\in \mathcal E(S_j\cup S_k)} A_{ab}
   \;+\; R_{n,jk}.
\end{align*}
where the $o(1)$ term is error associated with the following remainder:
\begin{align*}
r_n = \left|\frac{\binom{n}{r_j+r_k-1}}
{%
n\,\binom{n-1}{r_j-1}\,\binom{n-1}{r_k-1}\,%
}/\frac{(r_j-1)!(r_k-1)!}{(r_j+r_k-1)!} - 1 \right|. 
\end{align*}
By Stirling's approximation, this error is exceedingly small; for notational convenience, henceforth, we will suppress this error term. 

For the remainder term $R_{n,jk}$, we have
\begin{align*}
R_{n,jk} 
&= \frac{1}{|\mathrm{Iso}(\mathcal{R}_j)|\,|\mathrm{Iso}(\mathcal{R}_k)|}
\sum_{c=2}^{\min(r_j, r_k)} 
\sum_{d=0}^{\min(s_j, s_k)} O\!\left(n^{1-c}\hat{\rho}_n^{-d}\right)
\sum_{\mathcal M^*\in\mathfrak{M}^{\mathcal R_j,\mathcal R_k}_{c,d}}
\\
& \quad \times
\frac{1}{\binom{n}{r_j+r_k-c}\,\hat\rho_n^{\,s_j+s_k-d}}
\sum_{\substack{L\subset[n]\\|L|=r_j+r_k-c}}
\left\{\sum_{i_u\in L}
\sum_{\substack{I\subset L\setminus\{i_u\}\\|I|=c-1}}
\sum_{\substack{J\subset L\setminus(\{i_u\}\cup I)\\|J|=r_j-c}}
\right.
\\[2pt]
& \qquad \times
\left.
\sum_{\substack{S_j\cong\mathcal R_j\\
\mathcal{V}(S_j)= \{i_u\}\cup I \cup J}}
\sum_{\substack{S_k\cong\mathcal R_k\\
\mathcal{V}(S_k)= (L \setminus \mathcal{V}(S_j)) \cup \{i_u\} \cup I}}
\,
\mathbbm{1}\{S_j\cup S_k\cong\mathcal M^*\}
\prod_{\{a,b\}\in \mathcal{E}(S_j\cup S_k)} A_{ab}
\right\}.
\end{align*}
Now, for the leading term, we observe the combinatorial identity
\begin{align*}
\frac{(r_j - 1)! \, (r_k - 1)!}{(r_j + r_k - 1)!} 
= \frac{1}{(r_j + r_k - 1) \binom{r_j + r_k - 2}{r_j - 1}}.
\end{align*}

Furthermore, since each distinct isomorphism class in $\sum_{\substack{S_j\cong\mathcal R_j\\\mathcal{V}(S_j)= \{i_u\} \cup I}}
\sum_{\substack{S_k\cong\mathcal R_k\\\mathcal{V}(S_k)=L \setminus I}}
\mathbbm{1}\{S_j\cup S_k\cong\mathcal M^*\}$ corresponds to $K_{\mathfrak{M}^{\mathcal{R}_j,\mathcal{R}_k}_{1}}(\mathcal{M}^*)$ terms, in that case,
\begin{align*}
|\mathrm{Iso}(\mathcal{M}^*)| = (r_j + r_k - 1)\times \binom{r_j + r_k - 2}{r_j - 1}\times K_{\mathfrak{M}^{\mathcal{R}_j,\mathcal{R}_k}_{1}}(\mathcal{M}^*).
\end{align*}
Therefore, for the leading term $S_{n,jk}$, we obtain 
\begin{align*}
S_{n,jk} 
=&\sum_{\mathcal M^*\in\mathfrak{M}^{\mathcal R_j,\mathcal R_k}_{1,0}}
   \frac{K_{\mathfrak{M}^{\mathcal{R}_j,\mathcal{R}_k}_{1}}(\mathcal{M}^*)}
        {|\mathrm{Iso}(\mathcal{R}_j)|\,|\mathrm{Iso}(\mathcal{R}_k)|} \\[4pt]
& \qquad \times \Bigg(
   \frac{1}{\binom{n}{r_j+r_k-1}\,\hat\rho_n^{\,s_j+s_k}}
   \sum_{k_1<\cdots<k_{r_j+r_k-1}}
   \frac{1}{|\mathrm{Iso}(\mathcal{M}^*)|}
   \sum_{\substack{\mathcal{M}\cong \mathcal{M}^*\\
                   \mathcal{V}(\mathcal{M})=\{k_1,\ldots,k_{r_j+r_k-1}\}}}
   \prod_{\{a,b\}\in\mathcal{E}(\mathcal{M})} A_{ab}
   \Bigg) \\[6pt]
=&\sum_{\mathcal{M}^*\in\mathfrak{M}^{\mathcal{R}_j,\mathcal{R}_k}_{1,0}}
   C^{\mathcal{M}^*}_{\mathcal{R}_j,\mathcal{R}_k,1}\,\widehat{Q}(\mathcal{M}^*),
\end{align*}
where
\begin{align*}
C^{\mathcal{M}^*}_{\mathcal{R}_j,\mathcal{R}_k,1} = \frac{K_{\mathfrak{M}^{\mathcal{R}_j,\mathcal{R}_k}_{1}}(\mathcal{M}^*)}{|\mathrm{Iso}(\mathcal{R}_j)|\,|\mathrm{Iso}(\mathcal{R}_k)|}.
\end{align*}

Now for the remainder term, the overall number of isomorphic cases corresponding to the summation $\sum_{i_u\in L}
\sum_{\substack{I\subset L\setminus\{i_u\}\\|I|=c-1}}
\sum_{\substack{J\subset L\setminus(\{i_u\}\cup I)\\|J|=r_j-c}}$ is now
\[
C^c_{jk}=(r_j+r_k-c)\times\binom{r_j+r_k-c-1}{c-1} \times \binom{r_j+r_k-2c}{r_j-c}.
\]
Therefore, 
\[
|\mathrm{Iso}(\mathcal{M}^*)| = C^c_{jk}\times K_{\mathfrak{M}^{\mathcal{R}_j,\mathcal{R}_k}_{c}}(\mathcal{M}^*).
\]
By further rearranging the expression into a global subgraph frequency, we obtain
\begin{align*}
R_{n,jk}
&=\sum_{c=2}^{\min(r_j, r_k)} 
  \sum_{d=0}^{\min(s_j, s_k)} 
  O\!\left(n^{1-c}\hat{\rho}_n^{-d}\right) \,
  C^c_{jk} \sum_{\mathcal M^*\in\mathfrak{M}^{\mathcal R_j,\mathcal R_k}_{c,d}} \frac{K_{\mathfrak{M}^{\mathcal{R}_j,\mathcal{R}_k}_{c}}(\mathcal{M}^*)}
       {|\mathrm{Iso}(\mathcal{R}_j)|\,|\mathrm{Iso}(\mathcal{R}_k)|} \\
  &\quad \times
  \frac{1}{\binom{n}{r_j+r_k-c}\,\hat\rho_n^{\,s_j+s_k-d}} \sum_{k_1<\cdots< k_{r_j+r_k - c}}
  \frac{1}{|\text{Iso}(\mathcal{M}^*)|}
  \sum_{\substack{\mathcal{M} \cong \mathcal{M}^*\\
        \mathcal{V}(\mathcal{M})=\{k_1, \ldots, k_{r_j+r_k - c} \}}}
  \prod_{\{a, b\} \in \mathcal{E}(\mathcal{M})} A_{ab} \\[1ex]
&=\sum_{\substack{2 \leq c \leq \min(r_j, r_k) \\ 0 \leq d \leq \min(s_j, s_k)}} 
  O\!\left(\hat{\rho}_n^{-d} n^{1 - c}\right)\,
  \sum_{\mathcal{M}^* \in \mathfrak{M}^{\mathcal{R}_j, \mathcal{R}_k}_{c,d}} 
  C^{\mathcal{M}^*}_{\mathcal{R}_j,\mathcal{R}_k,c}\,
  \widehat{Q}(\mathcal{M}^*),
\end{align*}
where
\begin{align}\label{appB-eq:structral-constant}
C^{\mathcal{M}^*}_{\mathcal{R}_j,\mathcal{R}_k,c} = \frac{K_{\mathfrak{M}^{\mathcal{R}_j,\mathcal{R}_k}_{c}}(\mathcal{M}^*)}{|\mathrm{Iso}(\mathcal{R}_j)|\,|\mathrm{Iso}(\mathcal{R}_k)|}.
\end{align} 
Analogous derivations also hold for rooted subgraphs. The claim follows.
\end{proof}

\begin{remark}
The analogous representation also holds for $n^{-1}\sum_{i=1}^n \widecheck{Z}_{i,j}\,\widecheck{Z}_{i,k}$, with $\rho_n$ in place of $\hat{\rho}_n$ in each of the global subgraph estimates $\widehat{Q}(\mathcal{M}^*)$.
\end{remark}

For the following linearly weighted case, let \( n^{-1} \sum_{i=1}^n Y_i \widetilde{Z}_{ij} \) and \( n^{-1} \sum_{i=1}^n X_i \widetilde{Z}_{ij} \), where \( \widetilde{Z}_{ij} = \widetilde{Q}_i(\mathcal{R}_j) := E[\widecheck{Q}_i(\mathcal{R}_j) \mid \xi] \), which takes the following form:
\begin{align}
\label{eq-z-tilde-definition}
\widetilde{Z}_{ij} =  \frac{1}{{n-1 \choose r_j-1}} \sum_{\substack{1 \leq i_1 <  \cdots < i_{r_j-1} \leq n, \\ i_1, \ldots, i_{r_j-1} \neq i}} \frac{1}{|\mathrm{Iso}(\mathcal{R}_j)|}
\sum_{\substack{\mathcal{S}: \mathcal{S} \cong \mathcal{R}_j,\\
\mathcal{V}(\mathcal{S}) = \{i,i_1, \ldots, i_{r_j-1}\}}} \prod_{\{a,b\} \in \mathcal{E}(\mathcal{S}) } w(\xi_a,\xi_b).    
\end{align}

We next characterize and control the network noise. Specifically, we establish mild conditions under which this noise vanishes for all such linear quantities. Before doing so, we introduce the first set of sparsity conditions, referred to as the standard sparsity condition, which ensures a negligible rate of $o_P(n^{-1/2})$.

\begin{condition}[Standard Sparsity Conditions]\label{appB:sparsity-cond-weak}
Consider all subgraphs $\mathcal{R}_j$, $j \in [d]$, with $|\mathcal{V}(\mathcal{R}_j)|=r_j$, recall $\lambda_n = n\rho_n$. 
The sparsity requirement is that \emph{all} of the following hold: 
\begin{itemize}
    \item[(a)] For every acyclic subgraph, $\lambda_n = \omega(1)$;
    \item[(b)] For every $\mathcal{R}_j$ that is a simple cycle of length $r_j$, 
          $\lambda_n^{\,r_j} = \omega(n)$;
    \item[(c)] For every $\mathcal{R}_j$ that is neither acyclic nor a simple cycle, 
          $\lambda_n = \omega\!\left(n^{\,1 - \tfrac{2}{\,r_j + R-1}}\right)$, with $R=\max_{k\in [d]} r_k$.
\end{itemize}
\end{condition}

We have the following lemma:
\begin{lemma}[Order of Network Noise for Linear Terms]
\label{appB:lemma2}
Suppose the sparsity condition in Condition~\ref{appB:sparsity-cond-weak} holds and \(E\|X\|^4 < \infty\), \(E(Y^4) < \infty\). Then, for any \(j \in [d]\),
\begin{align*}
\frac{1}{n} \sum_{i=1}^n Y_i \widecheck{Z}_{ij} - \frac{1}{n} \sum_{i=1}^n Y_i \widetilde{Z}_{ij} &= o_P(n^{-1/2}), \\
\frac{1}{n} \sum_{i=1}^n X_i \widecheck{Z}_{ij} - \frac{1}{n} \sum_{i=1}^n X_i \widetilde{Z}_{ij} &= o_P(n^{-1/2}),\\
\frac{1}{n} \sum_{i=1}^n  \widecheck{Z}_{ij} - \frac{1}{n} \sum_{i=1}^n \widetilde{Z}_{ij} &= o_P(n^{-1/2}).
\end{align*}
\end{lemma}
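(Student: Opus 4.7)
The plan is to show that, for each weight $\omega_i\in\{Y_i,X_i,1\}$, the corresponding remainder $\widehat{\delta}$ from Lemma \ref{appB:lemma1} satisfies $E[\widehat{\delta}^2]=o(n^{-1})$; the claim then follows by Markov's inequality. By Lemma \ref{appB:lemma1}, each remainder equals exactly $n^{-1}\sum_i \omega_i\bigl(\widecheck{Z}_{ij}-\widetilde{Z}_{ij}\bigr)$, and since $\widetilde{Z}_{ij}=E[\widecheck{Z}_{ij}\mid\xi]$ and $A$ is conditionally independent of $(X,Y)$ given $\xi$, $\widehat{\delta}$ is mean zero conditional on $\tau=(X,Y,\xi)$. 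Thus $E[\widehat{\delta}^2]=E[\mathrm{Var}(\widehat{\delta}\mid\tau)]$, and it suffices to bound the latter.

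The main tool is the Bernoulli decomposition $A_{ab}=\rho_n w(\xi_a,\xi_b)+B_{ab}$, where the $B_{ab}$ are, conditionally on $\xi$, independent and centered with $\mathrm{Var}(B_{ab}\mid\xi)\le \rho_n\|w\|_\infty$. Expanding
\begin{equation*}
\rho_n^{-s_j}\prod_{e\in\mathcal{E}(\mathcal{S})}A_{e}\;-\;\prod_{e\in\mathcal{E}(\mathcal{S})}w(\xi_{e})
\;=\;\sum_{\emptyset\neq T\subseteq\mathcal{E}(\mathcal{S})}\rho_n^{-|T|}\Bigl(\prod_{e\in T} B_{e}\Bigr)\Bigl(\prod_{e\in\mathcal{E}(\mathcal{S})\setminus T}w(\xi_{e})\Bigr)
\end{equation*}
and substituting into $\mathrm{Var}(\widehat{\delta}\mid\tau)$ yields a sum over pairs of subgraph copies $(\mathcal{S}_1,\mathcal{S}_2)$ together with centered-edge subsets $(T_1,T_2)$; edgewise independence and centering of the $B$'s force $T_1=T_2=T\subseteq\mathcal{E}(\mathcal{S}_1)\cap\mathcal{E}(\mathcal{S}_2)$. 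I would classify the surviving configurations by the overlap pattern $(v,d)$, where $v$ is the number of shared vertices of $\mathcal{S}_1$ and $\mathcal{S}_2$ and $d=|T|$. Counting embeddings and collecting the normalizations $\binom{n-1}{r_j-1}\rho_n^{s_j}$ from each $\widecheck{Z}_{ij}$ together with the external $n^{-1}$, the contribution of an $(v,d)$-class to $E[\widehat{\delta}^2]$ is of order $n^{-v}\,\rho_n^{d-2s_j}$ up to bounded moment factors of the form $E\bigl[\omega_i\omega_k\prod_{e\notin T}w(\xi_e)\bigr]$, which are uniformly controlled using $\|w\|_\infty<\infty$, $E[Y^2]<\infty$, and $E\|X\|^2<\infty$ (available from the fourth-moment hypotheses).

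Requiring each $(v,d)$-contribution to be $o(n^{-1})$ yields the three regimes of Condition \ref{appB:sparsity-cond-weak}: acyclic $\mathcal{R}_j$ only impose $\lambda_n=\omega(1)$; simple cycles of length $r_j$ bind at the full self-overlap $v=r_j,\,d=s_j$ and force $\lambda_n^{r_j}=\omega(n)$; and denser motifs bind at near-full overlaps with $v=r_j-1$, which—once the same bookkeeping is carried out uniformly over the ambient collection of $d$ covariate subgraphs of maximal size $R$—yields $\lambda_n=\omega(n^{1-2/(r_j+R-1)})$. The $X$-weighted statement follows coordinatewise, and the unweighted statement by setting $\omega_i\equiv 1$. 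The principal obstacle is the combinatorial bookkeeping in this last regime: enumerating admissible overlap subgraphs of two copies of a general motif, identifying the binding $(v,d)$-configuration, and confirming that the resulting constraint is exactly the exponent $1-2/(r_j+R-1)$ appearing in Condition \ref{appB:sparsity-cond-weak}(c).
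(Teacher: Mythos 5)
Your overall strategy coincides with the paper's: both reduce to $E[\operatorname{var}(\widehat{\delta}\mid\tau)]$ using the conditional-mean-zero property, classify pairs of subgraph copies by the number of shared vertices and shared edges, bound each class, and read off the sparsity conditions. Your Bernoulli decomposition $A_{ab}=\rho_n w(\xi_a,\xi_b)+B_{ab}$ is an equivalent way of computing the conditional covariance that the paper bounds directly via $\operatorname{cov}(H(\mathcal{S}),H(\mathcal{S}'))\le \rho_n^{2s_j-d}\prod w$; it buys nothing extra here but is fine.

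There is, however, a concrete error in your key order computation. The contribution of a $(v,d)$-class is $O(n^{-v}\rho_n^{-d})$, not $O(n^{-v}\rho_n^{d-2s_j})$. In your own expansion the normalization $\rho_n^{-s_j}$ is already absorbed, and a surviving pair with common centered-edge set $T$, $|T|=d$, contributes $\rho_n^{-2d}\cdot\prod_{e\in T}E[B_e^2\mid\xi]\cdot O(1)=\rho_n^{-2d}\cdot O(\rho_n^{d})=O(\rho_n^{-d})$; the edges outside $T$ contribute only bounded factors $w_e$. Your exponent $d-2s_j$ appears to double-count the normalization while dropping both the $\rho_n^{-2d}$ from the expansion and the $\rho_n^{2(s_j-d)}$ carried by the non-centered edges. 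Taken at face value your formula does not yield Condition~\ref{appB:sparsity-cond-weak}: for an acyclic two-star ($s_j=2$) the binding class $(v,d)=(2,1)$ would read $n^{-2}\rho_n^{-3}$ and demand $\lambda_n^{3}=\omega(n)$ rather than $\lambda_n=\omega(1)$, so the conclusion you assert is inconsistent with the order you state. With the corrected order $n^{-v}\rho_n^{-d}$ everything matches the paper: acyclic binds at $(v,d)=(2,1)$, simple cycles at full self-overlap $(r_j,s_j)$, and general cyclic motifs at $(v,d)=(r_j,\binom{r_j}{2})$, giving $\lambda_n=\omega(n^{1-2/r_j})$. A final small point: within this lemma the exponent for general cyclic motifs involves only $r_j$; the appearance of $R=\max_k r_k$ in Condition~\ref{appB:sparsity-cond-weak}(c) comes from the merged motifs of size up to $r_j+r_k-1$ in the quadratic terms (Lemma~\ref{appB:lemma3}), not from a uniformization of the linear-term bookkeeping as you suggest, and here you only need that Condition~\ref{appB:sparsity-cond-weak}(c) implies $\lambda_n=\omega(n^{1-2/r_j})$.
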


\begin{proof}
The third line is a special case of the first and follows from \citet{bickel2011method} under Condition~\ref{appB:sparsity-cond-weak}; the second line follows analogously. Thus, it remains to establish only the first claim. The representation in Lemma~\ref{appB:lemma1} yields
\begin{align*}
\widehat{\delta}_{n,Y,\mathcal{R}_j}
&= \frac{1}{n}\sum_{i=1}^n Y_i\bigl(\widecheck Z_{ij}-\widetilde Z_{ij}\bigr) \nonumber\\[6pt]
&= \frac{1}{\binom{n}{r_j}}
   \sum_{1 \le k_1<\cdots<k_{r_j}\le n}
 \frac{(r_j^{-1}\sum_{\ell=1}^{r_j}Y_{k_\ell}
   )}{|\operatorname{Iso}(\mathcal R_j)|}
     \sum_{\substack{\mathcal S\cong\mathcal R_j\\\mathcal V(\mathcal S)=\{k_1,\dots,k_{r_j}\}}}
     \left(
       \rho_n^{-s_j}\prod_{(a,b)\in \mathcal{E}(\mathcal{S})}A_{ab}
       \!-\!
       \prod_{(a,b)\in \mathcal{E}(\mathcal{S})}  w(\xi_a,\xi_b)
     \right). 
\end{align*}
By the law of total variance, and since $E[\widehat{\delta}_{n,Y,\mathcal{R}_j}\mid \tau] = 0$, we obtain
\begin{align*}
\operatorname{var}(\widehat{\delta}_{n,Y,\mathcal{R}_j})
= E\!\left[\operatorname{var}\bigl(\widehat{\delta}_{n,Y,\mathcal{R}_j}\mid \tau\bigr)\right].
\end{align*}

We next calculate the conditional variance,
\(\operatorname{var}\!\left(n^{-1}\sum_{i=1}^n Y_i\widecheck{Z}_{ij}\,\middle|\, \tau\right)\),
following a similar reasoning as \citet{bickel2011method} for the global subgraph frequency estimator \(\widecheck{Q}(\mathcal{R}_j)\). 
For a subgraph \(\mathcal{S}\), define
\begin{align*}
H(\mathcal{S}) := \prod_{(a,b)\in \mathcal{E}(\mathcal{S})} A_{ab}.
\end{align*}

Let \(K = \{k_1, \ldots, k_{r_j}\}\) be an unordered \(r_j\)-tuple of distinct indices. For simplicity, define
\begin{align*}
\widecheck{h}_K
   &:= \left(\frac{1}{r_j}\sum_{\ell=1}^{r_j} Y_{k_\ell}\right)
      \frac{1}{|\operatorname{Iso}(\mathcal R_j)|}
      \sum_{\substack{\mathcal S \cong \mathcal R_j \\ \mathcal V(\mathcal S)=K}}
      \rho_n^{-s_j}\prod_{(a,b)\in \mathcal{E}(\mathcal S)} A_{ab}.
\end{align*}
The conditional variance can then be written as
\begin{align}\label{appB-eq:conditional_varaince_total}
\operatorname{var}\!\left(
   \binom{n}{r_j}^{-1} \sum_K \widecheck{h}_K \,\middle|\, \tau
\right)
= \binom{n}{r_j}^{-2}
  \sum_K \sum_{K'} \operatorname{cov}\!\left(\widecheck{h}_K, \widecheck{h}_{K'} \,\middle|\, \tau \right).
\end{align}

Let \(Y_K = r_j^{-1} \sum_{\ell=1}^{r_j} Y_{k_\ell}\). Then 
\(\widecheck{h}_K
= Y_K\, |\operatorname{Iso}(\mathcal R_j)|^{-1}\,\rho_n^{-s_j}
  \sum_{\mathcal S \in \mathfrak S_K} H(\mathcal S)\), 
where \(\mathfrak S_K\) is the set of all subgraphs on vertex set \(K\) that are isomorphic to \(\mathcal R_j\). The conditional covariance of \(\widecheck h_K\) and \(\widecheck h_{K'}\) is therefore
\begin{align}\label{appB-eq:noisy-kernel-covaraince}
\operatorname{cov}(\widecheck{h}_K, \widecheck{h}_{K'} \mid \tau) = (Y_KY_{K'})|\text{Iso}(\mathcal{R}_j)|^{-2}\rho_n^{-2s_j}  \sum_{\mathcal{S} \in \mathfrak{S}_K} \sum_{\mathcal{S}' \in \mathfrak{S}_{K'}} \operatorname{cov}(H(\mathcal{S}), H(\mathcal{S}') \mid \tau).
\end{align}

Now, we are ready to bound the individual covariance. For any two isomorphic terms $\mathcal{S}$ and $\mathcal{S}'$ in \eqref{appB-eq:noisy-kernel-covaraince}) that share $c$ nodes and $d$ edges, the covariance satisfies the upper bound
\begin{align*}
\operatorname{cov}(H(\mathcal{S}),H(\mathcal{S}')) 
&\leq \rho_n^{2s_j-d} \prod \{  w(\xi_{a}, \xi_{b}) : (a, b) \in \mathcal{E}(\mathcal{M}^{(\mathcal{R}_j)}(c,d)) \}
,
\end{align*}
 where $\mathcal{M}^{(\mathcal{R}_j)}(c,d) := \{ \mathcal{S} \cup \mathcal{S}' : |\mathcal{V}(\mathcal{S}) \cap \mathcal{V}(\mathcal{S}')| = c, |\mathcal{E}(\mathcal{S}) \cap \mathcal{E}(\mathcal{S}')| = d \}$. This covariance is non-zero only if $d \ge 1$.

Let $K_0'$ be a constant that absorbs the term $(|\mathrm{Iso}(\mathcal{R}_j)|)^{-2}$. The covariance term \eqref{appB-eq:noisy-kernel-covaraince} can be bounded by
\begin{align*}
\operatorname{cov}(\widecheck{h}_K, \widecheck{h}_{K'} \mid \tau ) 
&\leq \sum_{d} K_0' \cdot Y_K Y_{K'} \rho_n^{-2s_j} \cdot \left( \rho_n^{2s_j-d} \prod \{ w(\xi_{a}, \xi_{b}): (a, b) \in \mathcal{E}(\mathcal{M}^{(\mathcal{R}_j)}(c,d)) \} \right).
\end{align*}

Now, we consider the overall summation for the conditional variance. The summation $\sum_K \sum_{K'}$ can be rearranged indexed by $c$, the number of shared nodes. The number of pairs $(K,K')$ that share $c$ nodes and for which there exist $\mathcal S \in \mathfrak{S}_K$, $\mathcal S' \in \mathfrak{S}_{K'}$ such that $|\mathcal{E}(\mathcal S) \cap \mathcal{E}(\mathcal S')| = d$, is $N(c,d) = O(n^{2r_j-c})$. Substituting $N(c,d) = O(n^{2r_j-c})$ and recalling $\binom{n}{r_j}^{-2} = O(n^{-2r_j})$, and under the assumption that $w(x,y)\leq C$, the conditional variance \eqref{appB-eq:conditional_varaince_total} is thus bounded by
\begin{align}\label{appB-eq:linear_conditional_variance}
\operatorname{var}\!\bigl(n^{-1}\!\sum_{i=1}^n Y_i\widecheck{Z}_i \,\big|\, \tau\bigr)
&\le n^{-2r_j} \sum_{c,d} N(c,d)
   \sum_{\substack{K,K'\\|K \cap K'|=c}}
   \sum_{\substack{\mathcal S \in \mathfrak{S}_K,\, \mathcal S' \in \mathfrak{S}_{K'}\\
                   |\mathcal{E}(\mathcal S)\cap \mathcal{E}(\mathcal S')|=d}}
   K_0' \, Y_K Y_{K'} \, \rho_n^{-d}
   \prod_{(a,b)\in \mathcal{E}(\mathcal{M}^{(\mathcal{R}_j)}(c,d))} w(\xi_{a}, \xi_{b})
   \nonumber\\
&\le K_2 \sum_{c,d} n^{-c}\rho_n^{-d}
   \sum_{\substack{K,K'\\|K \cap K'|=c}}
   \sum_{\substack{\mathcal S \in \mathfrak{S}_K,\, \mathcal S' \in \mathfrak{S}_{K'}\\
                   |\mathcal{E}(\mathcal S)\cap \mathcal{E}(\mathcal S')|=d}}
   Y_K Y_{K'},
\end{align}
for some constant $K_2$.

To obtain the unconditional variance $\text{var}(\widehat{\delta}_{n,Y,\mathcal{R}_j}) = E[\text{var}(n^{-1}\sum_{i=1}^n Y_i\widecheck{Z}_i\mid \tau)]$, we take the expectation of both sides of \eqref{appB-eq:linear_conditional_variance} and obtain
\begin{align*}
\text{var}(\widehat{\delta}_{n,Y,\mathcal{R}_j}) \leq K_3 \sum_{c,d} n^{-c} \rho_n^{-d} E\left( \sum_{\substack{K,K' \\ |K \cap K'|=c 
}} Y_K Y_{K'} 
\right),   
\end{align*}
for some constant $K_3$.

Since $Y_K = r_j^{-1} \sum_{\ell=1}^{r_j} Y_{k_\ell}$, and similarly for $Y_{K'}$, the term involving $Y_K Y_{K'}$ satisfies $E(Y_K Y_{K'})< \infty$
under the assumption that the $Y_i$ are i.i.d.\ with $E(Y_i^2) < \infty$, which is implied by our assumed condition $E(Y_i^4)<\infty$. Hence,
\begin{align}\label{appB-eq:linear_total_variance_bound}
\operatorname{var}(\widehat{\delta}_{n,Y,\mathcal{R}_j}) \leq K_4 \sum_{c,d} n^{-c}\rho_n^{-d},    
\end{align}
for some constant $K_4$.

In the acyclic case (\(d\le c-1\)), the slowest-decaying term is \((c,d)=(2,1)\), which yields the variance of order
\[
n^{-2}\,\rho_n^{-1} \;=\;(n\rho_n)^{-1}\,n^{-1}=o(n^{-1}),
\]
when $\lambda_n=\omega(1)$. In the case of a simple cycle, the dominant term  is $(c,d)=(r_j,s_j)$ where $r_j=s_j$, yielding variance of order
\[
(n\rho_n)^{-r_j}=n^{-1}(n^{1-r_j}\rho_n^{-r_j})\;=o(n^{-1}),
\]
when $\lambda_n=\omega(n^{1/r_j})$. More generally, for any \(r_j\)-node pattern that contains a cycle, the worst overlap is \(c=r_j\), \(d=r_j(r_j-1)/2\), yielding variance of order
\[
n^{-r_j}\,\rho_n^{r_j(1-r_j)/2} = n^{-1} \left( n^{1-r_j}\,\rho_n^{r_j(1-r_j)/2} \right) = o(n^{-1}),
\]
when \(\lambda_n = \omega\left(n^{1-2/r_j}\right)\). Following the above sparsity regimes for different types of subgraphs, which hold under Condition \ref{appB:sparsity-cond-weak}, $\widehat{\delta}_{n,Y,\mathcal{R}_j}=o_P\bigl(n^{-1/2}\bigr),$ as required.

In the case $\widehat{Z}_{ij} = \widehat{Q}^{\text{rooted}}_i(\mathcal{R}_j)$, where the kernel reduces to \eqref{appB-eq:h_Y_rooted}. It is easy to check that the variance of the corresponding network‐noise term $\widehat{\delta}_{n,Y,\mathcal{R}_j}$ obeys the same bound derived in \eqref{appB-eq:linear_total_variance_bound}.  Hence, under the standing moment assumptions on \(Y_i\) and \(w(\xi_a,\xi_b)\), and the sparsity condition $\lambda_n \to \infty$, one again obtains $\widehat{\delta}_{n,Y,\mathcal{R}_j} = o_P\bigl(n^{-1/2}\bigr)$. 

The analysis for $n^{-1}\sum_{i=1}^nX_i\widecheck{Z}_{ij}- n^{-1}\sum_{i=1}^nX_i\widetilde{Z}_{ij}$ and $n^{-1}\sum_{i=1}^n\widecheck{Z}_{ij}- n^{-1}\sum_{i=1}^n\widetilde{Z}_{ij}$ follows an analogous derivation. The claim follows.
\end{proof}

Our next lemma provides an error rate for quadratic terms. 

\begin{lemma}[Order of Network Noise for Quadratic Terms]
\label{appB:lemma3}
Grant Condition \ref{appB:sparsity-cond-weak}. Then,
\begin{align*}
n^{-1}\sum_{i=1}^n\widecheck{Z}_{ij}\widecheck{Z}_{ik}- n^{-1}\sum_{i=1}^n E(\widecheck{Z}_{ij}\widecheck{Z}_{ik}\mid\xi)=o_P(n^{-1/2}). \\
\end{align*}
\end{lemma}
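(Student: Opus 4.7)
\begin{proofsketch}
The plan is to leverage the global representation from Proposition~\ref{main:prop1} (in its $\widecheck{Z}$/$\rho_n$ form as noted in the remark after that proposition) to reduce the centered quadratic object to a linear combination of centered global subgraph counts, and then analyze those centered counts by the same conditional-variance bookkeeping that underlies the proof of Lemma~\ref{appB:lemma2}.

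First, applying Proposition~\ref{main:prop1} gives
\begin{align*}
n^{-1}\sum_{i=1}^n \widecheck{Z}_{ij}\widecheck{Z}_{ik}
= \sum_{\mathcal{M}\in\mathfrak{M}^{\mathcal{R}_j,\mathcal{R}_k}_{1,0}} C^{\mathcal{M}}_{\mathcal{R}_j,\mathcal{R}_k,1}\,\widecheck{Q}(\mathcal{M})
\;+\; R_{n,jk},
\end{align*}
where the remainder $R_{n,jk}$ collects terms with pre-factors of order $O(n^{1-c}\rho_n^{-d})$ for $c\ge 2$ and $0\le d\le\min(s_j,s_k)$. Taking the conditional expectation of both sides with respect to $\xi$ integrates out only the edge randomness $\{\eta_{ab}\}$ and therefore produces the \emph{same} deterministic linear combination, with each $\widecheck{Q}(\mathcal{M})$ replaced by $E[\widecheck{Q}(\mathcal{M})\mid \xi]$. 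Hence the centered quantity in the lemma is
\begin{align*}
n^{-1}\sum_{i=1}^n \widecheck{Z}_{ij}\widecheck{Z}_{ik} - n^{-1}\sum_{i=1}^n E(\widecheck{Z}_{ij}\widecheck{Z}_{ik}\mid\xi)
&= \sum_{\mathcal{M}\in\mathfrak{M}^{\mathcal{R}_j,\mathcal{R}_k}_{1,0}} C^{\mathcal{M}}_{\mathcal{R}_j,\mathcal{R}_k,1}\bigl(\widecheck{Q}(\mathcal{M}) - E[\widecheck{Q}(\mathcal{M})\mid\xi]\bigr)\\
&\quad + \bigl(R_{n,jk} - E[R_{n,jk}\mid\xi]\bigr).
\end{align*}

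Second, for each leading-term motif $\mathcal{M}\in \mathfrak{M}^{\mathcal{R}_j,\mathcal{R}_k}_{1,0}$, the conditional variance $\operatorname{var}(\widecheck{Q}(\mathcal{M})\mid\xi)$ is handled by exactly the covariance expansion indexed by shared-node/shared-edge counts $(c',d')$ that appears in equation~\eqref{appB-eq:linear_conditional_variance}; each term of that expansion contributes a summand of order $n^{-c'}\rho_n^{-d'}$. Because the merged motif has at most $r_j+r_k-1\le 2R-1$ nodes, the sparsity regimes (a)--(c) of Condition~\ref{appB:sparsity-cond-weak} (which are precisely the sparsity assumptions imposed in Theorems~\ref{theorem 2}--\ref{theorem 3} for merged motifs up to size $2R-1$) drive every such summand to $o(n^{-1})$. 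Thus each leading term contributes $o_P(n^{-1/2})$, and since there are only finitely many $\mathcal{M}$'s with bounded constants $C^{\mathcal{M}}_{\mathcal{R}_j,\mathcal{R}_k,1}$, the whole leading sum is $o_P(n^{-1/2})$.

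Third, for the remainder, the pre-factor $O(n^{1-c}\rho_n^{-d})$ with $c\ge 2$ gives at least an extra $n^{-1}$ in size compared to the leading term. Since each $\widecheck{Q}(\mathcal{M})$ in $R_{n,jk}$ is $O_P(1)$ by standard graphon-count concentration and its conditional fluctuation around the mean is controlled by the same covariance expansion as above, the centered remainder $R_{n,jk}-E[R_{n,jk}\mid\xi]$ is $o_P(n^{-1/2})$ under Condition~\ref{appB:sparsity-cond-weak}. Combining the two pieces yields the claim.

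The main obstacle is the bookkeeping step: showing that the standard sparsity conditions in Condition~\ref{appB:sparsity-cond-weak} really do cover every merged motif $\mathcal{M}$ (with up to $2R-1$ nodes) appearing in Proposition~\ref{main:prop1}, and that the conditional-variance expansion for $\widecheck{Q}(\mathcal{M})$ given $\xi$ is dominated by the worst overlap pattern in a way compatible with those rates. Once this is verified case-by-case for acyclic, simple-cycle, and general-cycle motifs, the rest of the argument reduces to the same moment bounds used in Lemma~\ref{appB:lemma2}.
\end{proofsketch}
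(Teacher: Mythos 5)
Your proposal is correct and follows essentially the same route as the paper's proof: both apply the $\rho_n$-normalized version of Proposition~\ref{main:prop1} to reduce the centered quadratic statistic to centered global counts $\widecheck{Q}(\mathcal{M})-E[\widecheck{Q}(\mathcal{M})\mid\xi]$, then control each via the shared-node/shared-edge conditional-variance expansion from the argument of Lemma~\ref{appB:lemma2}, checking the acyclic, simple-cycle, and general-cycle cases against Condition~\ref{appB:sparsity-cond-weak}. The case-by-case verification you flag as the remaining obstacle is exactly what the paper carries out, and it confirms your bookkeeping.
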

\begin{proof}
By Proposition \ref{main:prop1},  we have:
\begingroup
\setlength{\abovedisplayskip}{3pt}
\setlength{\belowdisplayskip}{3pt}
\begin{align*}
n^{-1}\sum_{i=1}^n\widecheck{Z}_{ij}\widecheck{Z}_{ik}
= \underbrace{\sum_{\mathcal{M} \in \mathfrak{M}^{\mathcal{R}_j, \mathcal{R}_k}_{1,0}} C^{\mathcal{M}}_{\mathcal{R}_j,\mathcal{R}_k,1}\cdot \widecheck{Q}(\mathcal{M})}_{\text{Leading term, } \widecheck{S}_{n,jk}} + \underbrace{\sum_{\substack{2 \leq c \leq \min(r_j, r_k) \\ 0 \leq d \leq \min(s_j, s_k)}} O(\rho_n^{-d} n^{1 - c}) \sum_{\mathcal{M} \in \mathfrak{M}^{\mathcal{R}_j, \mathcal{R}_k}_{c,d}} C^{\mathcal{M}}_{\mathcal{R}_j,\mathcal{R}_k,c} \cdot \widecheck{Q}(\mathcal{M})}_{\text{Remainder term, } \widecheck{R}_{n,jk}}.    
\end{align*}
\endgroup

We first consider the leading term. For any two subgraphs \(\mathcal{R}_j\) and \(\mathcal{R}_k\) that share exactly one vertex (and no edges), recall
\[
r_j = \bigl|\mathcal V(\mathcal R_j)\bigr|,\quad s_j = \bigl|\mathcal{E}(\mathcal R_j)\bigr|,\qquad
r_k = \bigl|\mathcal V(\mathcal R_k)\bigr|,\quad s_k = \bigl|\mathcal{E}(\mathcal R_k)\bigr|.
\]
The merged leading-term motif \(\mathcal{M}=\mathcal{R}_j \cup \mathcal{R}_k\) then satisfies 
\[
\bigl|\mathcal V(\mathcal M)\bigr| = r_j + r_k - 1,
\qquad 
\bigl|\mathcal E(\mathcal M)\bigr| = s_j + s_k.
\]

Denote $\widetilde{Q}(\mathcal{M}):=E( \widecheck{Q}(\mathcal{M})\mid \xi)$. We now analyze the order of $\widecheck Q(\mathcal{M}) - \widetilde Q(\mathcal{M})$ for $\mathcal{M} \in \mathfrak{M}^{\mathcal{R}_j, \mathcal{R}_k}_{1,0}$, following the variance-covariance calculation in \cite{bickel2011method}, we compare the few cutoffs of potential phase transitions in the leading order in the following. 

If both \(\mathcal R_j\) and \(\mathcal R_k\) are acyclic, then \(\mathcal{M}\) is still acyclic, and $\operatorname{var}(\widecheck Q(\mathcal{M}) - \widetilde Q(\mathcal{M}))$ is of order
\[
O\bigl(n^{-1}(n\,\rho_n)^{-1}\bigr),
\]
which decays as $o(n^{-1})$ when $\lambda_n \to \infty$. 

If \(\mathcal R_j\) is acyclic (\(r_j \geq s_j+1\)) and \(\mathcal R_k\) is a simple cycle \(r_k=s_k\), $\operatorname{var}(\widecheck Q(\mathcal{M}) - \widetilde Q(\mathcal{M}))$ is of order
\[
O\bigl(n^{-1}(n\,\rho_n)^{-1} \vee (n\,\rho_n)^{-r_k}\bigr),
\]
which decays as $o(n^{-1})$ when $\lambda_n=\omega(n^{1/r_k})$.

If both \(\mathcal R_j\) and \(\mathcal R_k\) are simple cycles, then \(s_j+s_k=r_j+r_k\) and $\operatorname{var}(\widecheck Q(\mathcal{M}) - \widetilde Q(\mathcal{M}))$ is of order
\[
 O \bigl( n^{-1}(n\,\rho_n)^{-1} \vee (n\rho_n)^{-\min(r_j,r_s)} \vee n\cdot(n\rho_n)^{-(r_j+r_s)}\bigr),
\]
which decays as $o(n^{-1})$ when $\lambda_n=\omega(n^{1/\min(r_j,r_k)}\vee n^{2/(r_j+r_k)})$. 
Since $\min(r_j,r_k)\le (r_j+r_k)/2$, this condition reduces to 
$\lambda_n=\omega(n^{1/\min(r_j,r_k)})$.

If $\mathcal{R}_j$ or $\mathcal{R}_k$ does not belong to the class of acyclic graphs or simple cycles, we analyze the worst-case variance by treating $\mathcal{M}$ as a general cyclic subgraph. 
In this case, the worst-case decay of 
$\operatorname{var}\bigl(\widecheck Q(\mathcal{M}) - \widetilde Q(\mathcal{M})\bigr)$ 
is $o(n^{-1})$ whenever 
$\lambda_n = \omega\!\left(n^{\,1 - 2/(r_j + r_s - 1)}\right)$.

Therefore, under Condition \ref{appB:sparsity-cond-weak}, we have:
\begin{align*}
  \sum_{\mathcal{M} \in \mathfrak{M}^{\mathcal{R}_j, \mathcal{R}_k}_{1,0}} C^{\mathcal{M}}_{\mathcal{R}_j,\mathcal{R}_k,1}\cdot \left(\widecheck Q(\mathcal{M}) - \widetilde Q(\mathcal{M})\right)
  = o_P\bigl(n^{-1/2}\bigr).    
\end{align*}

For the remainder term, we have
\begin{align*}
\sum_{\substack{2 \leq c \leq \min(r_j, r_k) \\ 0 \leq d \leq \min(s_j, s_k)}} O(\rho_n^{-d} n^{1 - c}) \sum_{\mathcal{M} \in \mathfrak{M}^{\mathcal{R}_j, \mathcal{R}_k}_{c,d}} C^{\mathcal{M}}_{\mathcal{R}_j,\mathcal{R}_k,c} \cdot \left(\widecheck Q(\mathcal{M}) - \widetilde Q(\mathcal{M})\right)=o_P\bigl(n^{-1/2}),
\end{align*}
which follows a similar argument as for the leading term. The claim follows.
\end{proof}

Next, we prepare the proof of the central limit theorem established by Theorem \ref{main:theorem1}. Due to the presence of a bias term that will be described shortly, the following stronger sparsity conditions will be needed:

\begin{condition}[Sparsity Conditions for Non-Vanishing Bias]\label{appB:sparsity-cond-strong}
Consider all subgraphs $\mathcal{R}_j$, $j \in [d]$, with $|\mathcal{V}(\mathcal{R}_j)|=r_j$. 
The sparsity requirement is that \emph{all} of the following hold: 
\begin{itemize}
    \item[(a)] For every acyclic subgraph, $\lambda_n = \omega(n^{1/2})$;
    \item[(b)] For every $\mathcal{R}_j$ that is a simple cycle of length $r_j$, 
          $\lambda_n^{\,r_j} = \omega(n^{3/2})$; 
    \item[(c)] For every $\mathcal{R}_j$ that is neither acyclic nor a simple cycle, 
          $\lambda_n = \omega\left(n^{1-\left\{(2R-3)/R(R-1)\right\}}\right)$, with $R=\max_{k\in [d]} r_k$.
\end{itemize}
\end{condition}

\noindent In what follows, let $\Theta_{jk} = E[\widecheck{Z}_{ij}\widecheck{Z}_{ik}]$ and $\Xi_{jk} = E[Z_{ij} Z_{ik}]$  for $j,k \in [d]$, where $\widetilde{Z}_{ij}$ is defined in \eqref{eq-z-tilde-definition}. The next lemma establishes a mismatch between $\Theta_{jk}$ and $\Xi_{jk}$, which leads to a bias that must be taken into account in later proofs.     

\begin{lemma}[Characterization of Bias]
\label{lemma-bias-count}
For any $j,k \in [d]$,
\begin{align}
\label{eq:mismatch-expression}
\delta_{jk,n}:=\Theta_{jk} - \Xi_{jk} = \sum_{\substack{2\le c\le\min(r_j,r_k)\\0\le d\le\min(s_j,s_k)}}
   O\bigl(n^{1-c}\rho_n^{-d}\bigr)
   \sum_{\mathcal M\in\mathfrak M^{\mathcal R_j,\mathcal R_k}_{c,d}}
     C^{\mathcal M}_{\mathcal{R}_j,\mathcal{R}_k,c}\,Q(\mathcal{M}).
\end{align}
Consequently, if Condition \ref{appB:sparsity-cond-strong} is satisfied, $\Theta_{jk} - \Xi_{jk} = o(n^{-1/2})$.  
\end{lemma}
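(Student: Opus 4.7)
The plan is to obtain the identity (\ref{eq:mismatch-expression}) by applying Proposition \ref{main:prop1} in its $\widecheck{\cdot}$ form (with $\rho_n$ replacing $\hat{\rho}_n$) and taking expectations on both sides. Since $(\tau_i)_{i=1}^n$ are i.i.d., $E[\widecheck{Z}_{ij}\widecheck{Z}_{ik}]$ does not depend on $i$, so $E\bigl[n^{-1}\sum_{i=1}^n \widecheck{Z}_{ij}\widecheck{Z}_{ik}\bigr]=\Theta_{jk}$. On the right-hand side, $E[\widecheck{Q}(\mathcal{M})]=Q(\mathcal{M})$ for every motif $\mathcal{M}$, so Proposition \ref{main:prop1} gives an exact decomposition of $\Theta_{jk}$ into the leading piece $\sum_{\mathcal{M}\in\mathfrak{M}^{\mathcal{R}_j,\mathcal{R}_k}_{1,0}} C^{\mathcal{M}}_{\mathcal{R}_j,\mathcal{R}_k,1}\,Q(\mathcal{M})$ plus a remainder of the form appearing on the right-hand side of (\ref{eq:mismatch-expression}).

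The crucial algebraic step is matching the leading piece with $\Xi_{jk}$. Since $Z_{ij}$ and $Z_{ik}$ are both functions of $\xi_i$ alone, conditional independence of the remaining latent variables yields
\[
Z_{ij}Z_{ik} = E\!\left[\widetilde{Q}_i^{\,(j)}(\mathcal{R}_j)\,\widetilde{Q}_i^{\,(k)}(\mathcal{R}_k)\ \Big|\ \xi_i\right],
\]
where $\widetilde{Q}_i^{\,(j)}$ and $\widetilde{Q}_i^{\,(k)}$ are copies of the latent statistic (\ref{latent_node_stat}) built over disjoint ancillary index sets (sharing only node $i$). Taking an outer expectation and expanding the product into unordered pairs $(\mathcal{S}_j,\mathcal{S}_k)$ with $\mathcal{V}(\mathcal{S}_j)\cap\mathcal{V}(\mathcal{S}_k)=\{i\}$ and no shared edges, one reproduces exactly the combinatorial weights $C^{\mathcal{M}}_{\mathcal{R}_j,\mathcal{R}_k,1}$ defined in (\ref{appB-eq:structral-constant}). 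Thus the leading piece equals $\Xi_{jk}$, and subtracting it from both sides establishes (\ref{eq:mismatch-expression}).

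For the $o(n^{-1/2})$ conclusion, note that $w\le C$ gives $Q(\mathcal{M})=O(1)$ uniformly over the finite collection of motifs, so each remainder summand is $O(n^{1-c}\rho_n^{-d})$. The requirement $n^{1-c}\rho_n^{-d}=o(n^{-1/2})$ is equivalent to $\lambda_n^{d}=\omega(n^{d+3/2-c})$, and the tightest constraint arises from minimizing $(c-3/2)/d$ over the realizable pairs in each structural class: (a) for acyclic subgraphs the common subgraph is itself a forest so $d\le c-1$, and $(c,d)=(2,1)$ binds, giving $\lambda_n=\omega(n^{1/2})$; (b) for simple cycles of length $r_j$, full overlap $(c,d)=(r_j,r_j)$ binds, giving $\lambda_n^{r_j}=\omega(n^{3/2})$; (c) in the general case the binding pair is complete-vertex overlap $c=R$ with maximal edge overlap $d=R(R-1)/2$, yielding $\lambda_n=\omega(n^{1-(2R-3)/(R(R-1))})$. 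These are precisely the three clauses of Condition \ref{appB:sparsity-cond-strong}.

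The main obstacle is the identification in the second paragraph: one must verify that the combinatorial coefficients $C^{\mathcal{M}}_{\mathcal{R}_j,\mathcal{R}_k,1}$ produced by the global-merging argument of Proposition \ref{main:prop1} coincide exactly with those produced by expanding $E[\widetilde{Q}_i(\mathcal{R}_j)\,\widetilde{Q}_i(\mathcal{R}_k)\mid \xi_i]$ over disjoint index sets. Any discrepancy here would leave an $O(1)$ bias rather than an $o(n^{-1/2})$ one and invalidate the sparsity conclusion.
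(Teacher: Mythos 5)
Your proposal is correct and follows essentially the same route as the paper: take expectations of the Proposition \ref{main:prop1} decomposition (in its $\rho_n$-normalized form), identify the leading $(c,d)=(1,0)$ piece with $\Xi_{jk}$ by writing $Z_{ij}Z_{ik}$ as a product of conditional expectations given $\xi_i$ and expanding over vertex-disjoint-except-$i$ pairs to recover the coefficients $C^{\mathcal M}_{\mathcal{R}_j,\mathcal{R}_k,1}$, then bound the remainder. Your case analysis of the binding $(c,d)$ pairs and the resulting three sparsity thresholds matches the paper's exactly.
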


\begin{proof}
By the representation in Proposition \ref{main:prop1}, it follows that
\begin{align*}
\Theta_{jk}
&=\;E\Bigl(n^{-1}\sum_{i=1}^n\widecheck Z_{ij}\,\widecheck Z_{ik}\Bigr) \nonumber\\
&=
\sum_{\mathcal M\in\mathfrak M^{\mathcal R_j,\mathcal R_k}_{1,0}}
   C^{\mathcal M}_{\mathcal{R}_j,\mathcal{R}_k,1}\,Q(\mathcal{M})
\;+\;
\sum_{\substack{2\le c\le\min(r_j,r_k)\\0\le d\le\min(s_j,s_k)}}
   O\bigl(n^{1-c}\rho_n^{-d}\bigr)
   \sum_{\mathcal M\in\mathfrak M^{\mathcal R_j,\mathcal R_k}_{c,d}}
     C^{\mathcal M}_{\mathcal{R}_j,\mathcal{R}_k,c}\,Q(\mathcal{M}).
\end{align*}

We also have:
\begin{align*}
\Xi_{jk}&:=E\left(Z_{ij}Z_{ik} \right) \nonumber\\
&=E\left[
   \frac{1}{|\text{Iso}(\mathcal{R}_j)|}\sum_{\mathcal S \cong \mathcal{R}_j}E\Bigl(\prod_{\substack{(a,b)\in \mathcal{E}(\mathcal S)\\i\in\mathcal V(\mathcal S)}}w(\xi_a,\xi_b)\Bigm|\xi_i\Bigr)
   \;\times\;
   \frac{1}{|\text{Iso}(\mathcal{R}_k)|}\sum_{\mathcal S \cong \mathcal{R}_k}E\Bigl(\prod_{\substack{(a,b)\in \mathcal{E}(\mathcal S)\\i\in\mathcal V(\mathcal S)}}w(\xi_a,\xi_b)\Bigm|\xi_i\Bigr)
\right] \nonumber \\
&=
\sum_{\mathcal M\in\mathfrak M^{\mathcal R_j,\mathcal R_k}_{1,0}} C^{\mathcal M}_{\mathcal{R}_j,\mathcal{R}_k,1} E\left[\prod_{(a,b)\in \mathcal{E}(\mathcal M)} w(\xi_a,\xi_b)\right] \nonumber\\
&=\sum_{\mathcal M\in\mathfrak M^{\mathcal R_j,\mathcal R_k}_{1,0}}
   C^{\mathcal M}_{\mathcal{R}_j,\mathcal{R}_k,1}\,Q(\mathcal{M}).
\end{align*}

Now for the second claim, let $r_{\min}=\min(r_j,r_k)$ and $r_{\max}=\max(r_j,r_k)$. If at least one of the two subgraphs is acyclic, the dominant term on the RHS of \eqref{eq:mismatch-expression} is $O(\lambda_n^{-1})$, so that $\delta_{jk,n}=o(n^{-1/2})$ whenever $\lambda_n=\omega(n^{1/2})$. When both subgraphs are simple cycles, the leading term is $O\bigl(n\,\lambda_n^{-r_{\min}}\bigr)$ in the squared case (and $O(\lambda_n^{-1})$ for a cross-product), which yields $\delta_{jk,n}=o(n^{-1/2})$ as soon as $\lambda_n=\omega\bigl(n^{3/(2r_{\min})}\bigr)$. If at least one subgraph is a general (non-simple) cyclic graph, the worst-case bound requires 
\[
\lambda_n=\omega\!\left(n^{1-(2r_{\max}-3)/\left\{r_{\max}(r_{\max}-1)\right\}}\right).
\]
Thus, under Condition~\ref{appB:sparsity-cond-strong}, the claim follows.
\end{proof}

\subsection{Additional Details Regarding Computation of Combinatorial Constants in Proposition \ref{main:prop1}}\label{additional_details}
The representation in Proposition~\ref{main:prop1} involves a sum over 
$\mathcal{M}\in\mathfrak{M}^{\mathcal{R}_j,\mathcal{R}_k}_{c,d}$, 
where $(c,d)$ denotes the node–edge overlaps. 
Not all $(c,d)$ yield non-empty families; whether a class of subgraphs corresponding to a given pair is non-empty depends on the structure of $\mathcal{R}_j,\mathcal{R}_k$.  

For cross terms ($j\neq k$) involving acyclic subgraphs or simple cycles, the overlap must be acyclic, so $c \ge d+1$. 
If both $\mathcal{R}_j$ and $\mathcal{R}_k$ are acyclic, then $d \in \{0,\ldots,\min(s_j,s_k)\}$. 
If both are simple cycles, then $d \in \{0,\ldots,\min(s_j,s_k)-1\}$, excluding full embedding of the smaller cycle. 
If $\mathcal{R}_j$ is acyclic and $\mathcal{R}_k$ is a simple cycle, then 
$d \in \{0,\ldots,s_k-1\}$ when $s_j \ge s_k$, and $d \in \{0,\ldots,s_j\}$ when $s_j < s_k$. If $\mathcal{R}_j$ is a general cyclic subgraph, the admissible set of $d$ depends on its structure, with the generic bound $d \le c(c-1)/2$. 

For squared terms ($j=k$), if $\mathcal{R}_j$ is a simple cycle then $d \in \{0,\ldots,s_j-2\} \cup \{s_j\}$, where $d=s_j$ corresponds to full overlap, $d=s_j-1$ is excluded, and the node condition is $c \ge d+1$ for $d \le s_j-2$ and $c=d$ for $d=s_j$. 
If $\mathcal{R}_j$ is acyclic, then $d \in \{0,\ldots,s_j\}$ with $c \ge d+1$. 
If $\mathcal{R}_j$ is a general cyclic subgraph, then again the admissible $d$ depend on its structure, with the universal bound $d \le c(c-1)/2$.

To implement the bias correction, the term $C^{\mathcal{M}}_{\mathcal{R}_j,\mathcal{R}_k,1}$ (see \eqref{appB-eq:structral-constant}) must be computed.  This term depends on the subgraph of interest.  For simpler subgraphs, these constants can readily be worked out, and efficient subgraph counting algorithms may be used for each isomorphism class.  However, for more complicated subgraphs, working these cases out by hand becomes more challenging; even in these cases, however, a brute force algorithm can be used as a last resort to compute the appropriate terms.  In what follows, we first consider a relatively simple example involving two-stars, and then we present a more general algorithm that can be used when working out these constants analytically becomes challenging.  

\begin{example}
Consider the case where \( \widehat{Z}_{ij} \) denotes the local two-star statistic—where the root node can be any node within the two-star; that is:
\begin{align*}
\widehat{Z}_{ij}
= \frac{1}{\binom{n-1}{2}\,\hat{\rho}_n^{2}}
\sum_{\substack{k<\ell\\ k,\ell\neq i}}
\frac{1}{3}\Big( A_{ik}A_{i\ell} + A_{ik}A_{k\ell} + A_{i\ell}A_{k\ell} \Big).
\end{align*}

Consider the leading term of the squared statistic \( n^{-1} \sum_{i=1}^n \widehat{Z}_{ij}^2 \), which corresponds to the multiplication of two local two-stars attached to the same root node \( i \), assuming no shared edges. For a realization of two disjoint neighborhoods \( (i,j,k) \) and \( (i,l,m) \), the product of the two isomorphic collections at node \( i \) yields 9 motif instances, which can be grouped into 3 distinct motif types as illustrated in Figure~\ref{fig:leading_term_motifs}. 

The relative frequency of each motif type in this expansion also corresponds to its global weight contribution in the leading term. The collection of overlapping motifs for the case \( c = 1 \) includes: 4-node wheel, tree-type motifs, and length-4 walk motifs, with relative weights 
\[
C^{\mathcal{M}}_{\mathcal{R}_j,\mathcal{R}_j,1} = 
\begin{cases}
1/9 & \text{for size-4 wheel} \quad \mathcal{M}_1, \\
4/9 & \text{for ``Y"-type tree} \quad \mathcal{M}_2, \\
4/9 & \text{for length-4 walks} \quad \mathcal{M}_3.
\end{cases}
\]
for each \( \mathcal{M} \in \mathfrak{M}^{\mathcal{R}_j,\mathcal{R}_j}_{1} \), where \( \mathcal{R}_j \) denotes the two-star, which can also be seen from Figure~\ref{fig:leading_term_motifs}.

Therefore, by knowing these distinct motifs and their relative weights, the leading term can be exactly characterized as a weighted average of global subgraph frequency estimators:
\begin{align*}
    \frac{1}{n} \sum_{i=1}^n \widehat{Z}_{ij}^2 \approx \frac{1}{9} \widehat{Q}(\mathcal{M}_1) + \frac{4}{9} \widehat{Q}(\mathcal{M}_2) + \frac{4}{9} \widehat{Q}(\mathcal{M}_3),
\end{align*}
where each \( \widehat{Q}(\mathcal{M}_\ell) \) corresponds to the global frequency estimate of motif \( \mathcal{M}_\ell \) for \( \ell = 1,2,3 \). These estimators admit efficient computation, and the number of isomorphic embeddings can also be exactly determined:
\[
|\mathrm{Iso}(\mathcal{M}_1)| = 5, \quad |\mathrm{Iso}(\mathcal{M}_2)| = 60, \quad |\mathrm{Iso}(\mathcal{M}_3)| = 60.
\]
This structural decomposition makes the computation of the global leading term both fast and interpretable.
\end{example}

\begin{figure}[htbp]
    \centering
    \includegraphics[width=0.90\textwidth]{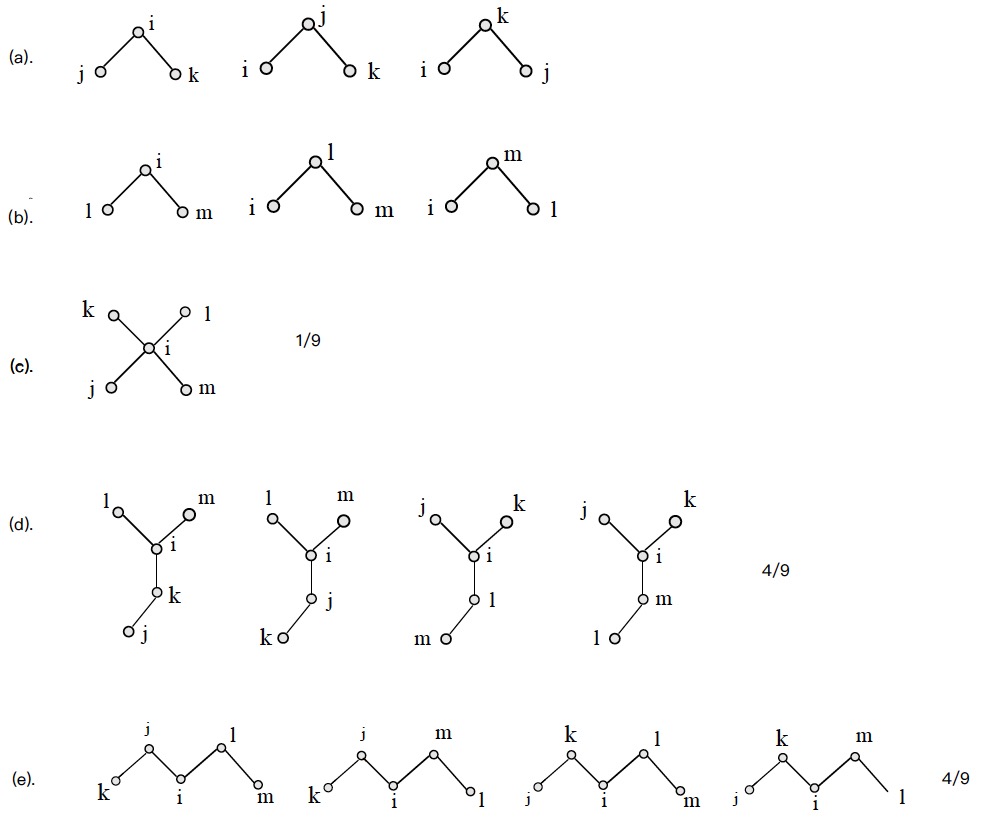} %
    \caption{
    Illustration of leading term motifs arising from the multiplication of two isomorphic two-star patterns rooted at the same node $i$, with no shared edges.  
    (a) Isomorphic two-star structure over nodes $(i,j,k)$.  
    (b) Isomorphic two-star structure over nodes $(i,l,m)$.  
    Multiplying these two configurations yields 9 distinct overlapping cases, which fall into three motif types:  
    (c) size-4 wheel motif, contributing a relative weight of $1/9$;  
    (d) Tree-type motifs, contributing a total relative weight of $4/9$;  
    (e) Length-4 walk motifs, contributing a total relative weight of $4/9$.
    }
    \label{fig:leading_term_motifs}
\end{figure}

The above example represents a simple case where all overlapping motifs can be enumerated explicitly. In more complex configurations, the overlapping motifs and their weights can instead be computed implicitly using the brute-force search procedure illustrated in Algorithm~\ref{appB-ALG}.

\begin{algorithm}[H]
\caption{Identification of Leading Term  for \( n^{-1} \sum_{i=1}^n \widehat{Z}_{ij}\widehat{Z}_{ik} \)}
\label{appB-ALG}
\KwIn{Graph $G = (V, E)$; subgraph patterns $\mathcal{R}_j$, $\mathcal{R}_k$ with node–edge parameters $(r_j,s_j)$ and $(r_k,s_k)$, respectively.}
\KwOut{Leading term approximation of \( n^{-1} \sum_{i=1}^n \widehat{Z}_{ij}\widehat{Z}_{ik} \).}

Estimate  \(\hat{\rho}_n\) by \(\frac{2}{n(n-1)} \sum_{1 \le i < j \le n} A_{ij}\)\;
\( C \gets 0 \)\;

\ForEach(\tcp*[f]{Loop over all nodes}){\( i \in V \)}{
  \( c_i\gets 0 \)\;

  \ForEach(\tcp*[f]{All choices of \( r_j{-}1 \) nodes excluding i}){\( V_j \subset V \setminus \{i\},\ |V_j| = r_j - 1 \)}{

    \ForEach{\( V_k \subset V \setminus (\{i\} \cup V_j),\ |V_k| = r_k - 1 \)\tcp*{All \( r_k{-}1 \) nodes excluding \( \{i\} \cup V_j \)}}{
      \( S_j \gets \{i\} \cup V_j,\quad S_k \gets \{i\} \cup V_k \)\;
      \If(\tcp*[f]{Both patterns match}){\( G[S_j] \cong \mathcal{R}_j \) and \( G[S_k] \cong \mathcal{R}_k \)}{
        \If(\tcp*[f]{Only shared node is \( i \)}){\( S_j \cap S_k = \{i\} \)}{
          \( c_i \gets c_i + 1 \)\;
        }
      }
    }
  }
  \( C \gets C + c_i \)\;
}
\Return \(\dfrac{C}{\binom{n}{r_j+r_k-1} \hat{\rho}_n^{\,s_j+s_k}}\)\;
\end{algorithm}

\subsection{Proofs of Main Results for Network Regression with Local Subgraph Frequencies}\label{sec-B.3}

\begin{proof}[Proof of Theorems \ref{theorem 1} and \ref{theorem 2}.]
To analyze the asymptotic distribution of the OLS estimator \( \widehat{\beta} \), 
we begin with the intermediate quantity \( \widecheck{L}_i := (X_i, \widecheck{Z}_i) \), 
where \( \widecheck{Z}_i \) uses network covariates normalized by the true sparsity level \( \rho_n \) 
(see definition~\eqref{eq-local-subgraph-proxy}). 
Define
\begin{align*}
\widecheck{\Psi}_n &:= 
\begin{pmatrix}
\mathrm{vec}(\widecheck{\Lambda}) \\
\widecheck{\gamma}
\end{pmatrix}, 
\quad \text{where} \quad 
\widecheck{\Lambda} = \frac{1}{n} \sum_{i=1}^n \widecheck{L}_i \widecheck{L}_i^\T, 
\quad
\widecheck{\gamma} = \frac{1}{n} \sum_{i=1}^n Y_i \widecheck{L}_i, \\[0.5em]
\Psi &:= 
\begin{pmatrix}
\mathrm{vec}(\Lambda) \\
\gamma
\end{pmatrix}.
\end{align*}

Furthermore, define
\begin{align*}
\widecheck{J}_n &:=
\begin{pmatrix}
    \widecheck{\Psi}_n\\[4pt]
    \widehat{\rho}_n/\rho_n
\end{pmatrix},
\qquad
\eta^* :=
\begin{pmatrix}
    \Psi \\[4pt]
    1
\end{pmatrix},
\qquad
\widetilde{\eta}_n :=
\begin{pmatrix}
    E(\widecheck{\Psi}_n) \\[4pt]
    1
\end{pmatrix}
= E(\widecheck{J}_n).
\end{align*}

For Theorem~\ref{theorem 1}, our goal is to establish
\begin{align*}
n^{1/2}(\widecheck{J}_n - \eta^*) 
&= n^{1/2}\,
\begin{pmatrix}
  \widecheck{\Psi}_n - \Psi \\[4pt]
  \widehat{\rho}_n / \rho_n - 1
\end{pmatrix}
\;\;\to\;\;
N\!\left(0,\Sigma_\Psi\right) \quad \text{in distribution},
\end{align*}
for some covariance matrix \( \Sigma_\Psi \). 
The desired result then follows from a delta method argument.

Consider the following decomposition of the centered and scaled joint estimator:
\begin{align}\label{appB-eq:full-OLS-joint-decomp}
n^{1/2}(\widecheck{J}_n - \eta^*)
&=
\underbrace{n^{1/2}\left( \widecheck{J}_n - E[\widecheck{J}_n \mid \tau] \right)}_{I}
+
\underbrace{n^{1/2}\left( E[\widecheck{J}_n \mid \tau] - E(\widecheck{J}_n)\right)}_{II} + \underbrace{n^{1/2} \left(E(\widecheck{J}_n)-\eta^* \right)}_{III}.
\end{align}

By Lemmas \ref{appB:lemma2} and \ref{appB:lemma3}, together with the stated conditions, it follows that
\begin{align*}
n^{-1}\sum_{i=1}^nY_i\widecheck{Z}_{ij}- n^{-1}\sum_{i=1}^nY_i\widetilde{Z}_{ij} =o_P(n^{-1/2}), \\
n^{-1}\sum_{i=1}^nX_i\widecheck{Z}_{ij}- n^{-1}\sum_{i=1}^nX_i\widetilde{Z}_{ij} =o_P(n^{-1/2}),\\
n^{-1}\sum_{i=1}^n\widecheck{Z}_{ij}\widecheck{Z}_{ik}- n^{-1}\sum_{i=1}^n E(\widecheck{Z}_{ij}\widecheck{Z}_{ik} \mid \xi)=o_P(n^{-1/2}).
\end{align*}
Moreover, by Theorem 1 of \cite{bickel2011method}, it also follows that, so long as $\lambda_n \rightarrow \infty$, 
\begin{align*}
\widehat\rho_n / \rho_n - E(\widehat\rho_n / \rho_n \mid \xi)= \widehat\rho_n / \rho_n-\frac{1}{\binom{n}{2}}\sum_{i<j}w(\xi_i,\xi_j)=o_P(n^{-1/2}).
\end{align*}
Therefore, $I \rightarrow 0 \text{ in probability.}$  By Lemma \ref{lemma-bias-count}, $III \rightarrow 0$ under the assumed conditions.  

It remains to establish a CLT involving $II$.  Coordinates corresponding to conventional covariates are U-statistics of order $1$, cross-terms of the form $n^{-1}\sum_{i=1}^nY_i\widetilde{Z}_{ij}$ or $n^{-1}\sum_{i=1}^nX_i\widetilde{Z}_{ij} $ for $j \in [d]$, are U-statistics of order $r_j$, and terms related to $n^{-1}\sum_{i=1}^n E(\widecheck{Z}_{ij}\widecheck{Z}_{ik} \mid \xi)$ for $j,k \in [d]$ can be well-approximated by U-statistics of order $r_j + r_k -1$. In particular, the argument used in the proof of Lemma \ref{appB:lemma3} implies that:
\begin{align*}
& n^{-1} \sum_{i=1}^n \left\{E[\widecheck{Z}_{ij}\widecheck{Z}_{ik} \mid \xi] - E[\widecheck{Z}_{ij}\widecheck{Z}_{ik}]\right\} \nonumber\\
&= \sum_{\mathcal{M} \in \mathfrak{M}^{\mathcal{R}_j,\mathcal{R}_k}_{1,0}}
   \frac{C^{\mathcal{M}}_{\mathcal{R}_j,\mathcal{R}_k,1}}{{n \choose r_j+r_k -1}}
   \sum_{1 \leq i_1,\ldots,i_{r_j+r_k-1} \leq n}
    h_{\mathcal{M}}(\xi_1,\ldots,\xi_{r_j+r_k-1})
+ o_P(n^{-1/2}),
\end{align*}
where
\begin{align*}
h_{\mathcal{M}}(\xi_1,\ldots,\xi_{r_j+r_k-1}) = \frac{1}{|\mathrm{Iso}(\mathcal{M})|}
\sum_{\substack{\mathcal{S}: \mathcal{S} \cong \mathcal{M},\\
\mathcal{V}(\mathcal{S}) = \{i_1, \ldots, i_{r_j+r_k-1}\}}} \prod_{\{i,j\} \in \mathcal{E}(\mathcal{S}) } w(\xi_i,\xi_j).    
\end{align*}

Now, for concreteness, suppose that the $j$th coordinate of $II$ is well-approximated by a U-statistic of order $k_j$ with kernel $h_j( \tau_1, \ldots, \tau_{k_j})$. 
 
A H\'ajek projection argument yields: 
\begin{align*}
II \rightarrow N(0,\Sigma_\Psi) \text{ in distribution},
\end{align*}
where $\Sigma_{\Psi,ij}= \operatorname{cov}(k_i E[h_i( \tau_1, \ldots, \tau_{k_i}) \ | \ \tau_1], k_j E[h_j( \tau_1, \ldots, \tau_{k_j}) \ | \ \tau_1])$.

Next, recall that the OLS estimator satisfies $\widehat{\beta} = f \circ g \left( \operatorname{vec}(\widecheck{\Lambda}) , \widecheck{\gamma}, \hat{\rho}_n/\rho_n \right)$. Since $f$ is appropriately differentiable at $(\Lambda, \gamma)$ by Lemma~\ref{appB:OLS_functional_property} and $g$ is differentiable at $(\Lambda, \gamma, 1)$, it follows by the Chain Rule that $f \circ g$ is differentiable at $(\Lambda, \gamma, 1)$. Therefore, by the Delta Method, 
\begin{align*}
n^{1/2}(\widehat{\beta} - \beta^*) = n^{1/2}(f(\widehat{\Psi}_n) -f(\Psi)) \to  N(0, \Sigma_\beta) \quad \textrm{in distribution},     
\end{align*}
where
\begin{align}\label{CLT_covaraince_beta}
\Sigma_\beta = D_{f \circ g}(\Lambda, \gamma, 1) \  \Sigma_{\Psi} \  D_{f \circ g}(\Lambda, \gamma, 1)^\T .
\end{align}
This completes the proof of Theorem \ref{theorem 1}. Theorem \ref{theorem 2} follows similar reasoning, but the term $III$ in \eqref{appB-eq:full-OLS-joint-decomp} does not appear in an analogous decomposition.  Then, Condition \ref{appB:sparsity-cond-weak} suffices, and the result follows. 
\end{proof}

\begin{proof}[Proof of Theorem \ref{theorem 3}.]
This proof largely follows the structure of Theorems \ref{theorem 1} and \ref{theorem 2}, but for completeness, we highlight several crucial steps. Define the quantities:
\begin{align*}
\widecheck{S}_{n,jk} 
&= \sum_{\mathcal{M} \in \mathfrak{M}^{\mathcal{R}_j, \mathcal{R}_k}_{1,0}} 
   C^{\mathcal{M}}_{\mathcal{R}_j,\mathcal{R}_k,1}\,\widecheck{Q}(\mathcal{M}), 
\qquad 
\widecheck{\Lambda}^{\mathrm{mod}}
= 
\begin{pmatrix}
  n^{-1}\!\sum_{i=1}^n X_i X_i^\top 
  & n^{-1}\!\sum_{i=1}^n X_i \widecheck{Z}_i^\top \\[1.2em]
  n^{-1}\!\sum_{i=1}^n \widecheck{Z}_i X_i^\top 
  & \widecheck{S}_n
\end{pmatrix}.
\end{align*}
Moreover, let:
\begin{align*}
\widecheck{\Psi}_n^{\text{mod}} &:= 
\begin{pmatrix}
\mathrm{vec}(\widecheck{\Lambda}^{\text{mod}}) \\
\widecheck{\gamma}
\end{pmatrix}, \qquad \widecheck{J}_n^{\text{mod}} :=
\begin{pmatrix}
    \widecheck{\Psi}_n^{\text{mod}}\\
    \widehat{\rho}_n/\rho_n
\end{pmatrix}.
\end{align*}

By an analogous representation to \eqref{appB-eq:full-OLS-joint-decomp}, 
\begin{align*}
n^{1/2}(\widecheck{J}_n^{\mathrm{mod}} - \eta^*)
&=
\underbrace{n^{1/2}\left( \widecheck{J}_n^{\mathrm{mod}} - E[\widecheck{J}_n^{\mathrm{mod}} \mid \tau] \right)}_{I}
+
\underbrace{n^{1/2}\left( E[\widecheck{J}_n^{\mathrm{mod}} \mid \tau] - E(\widecheck{J}_n^{\mathrm{mod}})\right)}_{II}\\
& \qquad + \underbrace{n^{1/2} \left(E(\widecheck{J}_n^{\mathrm{mod}})-\eta^* \right)}_{III}.
\end{align*}
The term $I \to 0$ in probability by the same reasoning as before.  
For $III$, note that the lower-order components are absent in $E(\widecheck{J}_n^{\mathrm{mod}})$, so this term vanishes identically.  
For $II$, the elements of the vector corresponding to $\widecheck{S}_n$ can be approximated by the same U-statistic considered in the proofs of Theorems~\ref{theorem 1} and~\ref{theorem 2}.  
The result then follows by the delta method. 
\end{proof}

\subsection{Proofs for Section \ref{sec-3.2:bootstrap_consistency}}\label{appB-sec:bootstrap}

\begin{proof}[Proof of Theorem \ref{graphon:bootstrap}.]
To facilitate the proof, we introduce
\begin{align*}
\widehat{J}_n &= (\widehat{\Psi}_n,\;1),
\qquad 
\widehat{J}^{\mathrm{mod}}_n = (\widehat{\Psi}^{\mathrm{mod}}_n,\;1),
\end{align*}
both of which are vectors in $\mathbb{R}^{q+1}$.

The corresponding bootstrapped quantities, including the estimated sparsity parameter, are expressed as:
\begin{align}\label{subgraph:bootstrap_OLS}
\widehat{J}_{n}^\flat
= \widehat{J}_n + \frac{1}{n}\sum_{i=1}^{n}(W_i-1)\,D_k \widehat{g}_1(i), 
\qquad
\widehat{J}_{n}^{\mathrm{mod},\flat}
= \widehat{J}^{\mathrm{mod}}_n + \frac{1}{n}\sum_{i=1}^{n}(W_i-1)\,D_k \widehat{g}_1(i),
\end{align}
that satisfy $\widehat{J}_{n}^\flat=(\widehat{\Psi}_{n}^\sharp,\;\hat{\rho}_n^\flat/\hat{\rho}_n)$ and $\widehat{J}_{n}^{\mathrm{mod},\flat}=(\widehat{\Psi}_{n}^{\mathrm{mod},\sharp},\;\hat{\rho}_n^\flat/\hat{\rho}_n)$, where $\widehat{\Psi}_{n}^\sharp$ and $\widehat{\Psi}_{n}^{\mathrm{mod},\sharp}$ are defined in \eqref{subgraph:bootstrap_OLS-stage1}.

By the construction of the bootstrap procedure in \eqref{subgraph:bootstrap_OLS},  $\operatorname{cov}^{\flat}(\widehat{J}^{\flat}_n)$ and 
$\operatorname{cov}^{\flat}(\widehat{J}_{n}^{\mathrm{mod},\flat})$ 
correspond to the same bootstrap covariance, which is the conditional covariance matrix of $n^{-1}\sum_{i=1}^{n}(W_i-1)\,D_k\widehat{g}_1(i)$ given the data, and it admits the following form:
\begin{align}\label{subgraph: bootstrap variance OLS}
\operatorname{cov}^{\flat}(\widehat{J}^{\flat}_n)=\operatorname{cov}^{\flat}(\widehat{J}_{n}^{\mathrm{mod},\flat})=\frac{1}{ n^2}\sum_{i=1}^nD_k\widehat{g}_1(i)\widehat{g}_1(i)^\T D_k^\T.    
\end{align}
 Let $
\widehat{\Sigma}^\flat := n \, \operatorname{cov}^\flat(\widehat{J}_n^\flat \mid \mathcal{D}_n)$. By Lemma \ref{lem:countbootcons} below, \(\widehat{\Sigma}^\flat \to \Sigma_\Psi\) in probability. 
Since the bootstrapped quantities are sequences of mean zero Gaussian random variables, this further implies that:
\begin{align*}
\sup_{u \in \mathbb{R}^{q+1}} \Big| 
\operatorname{pr}^{\flat} \!\left( n^{1/2}(\widehat{J}_n^{\flat} - \widehat{J}_n) \leq u \right)
-
\operatorname{pr} \!\left( n^{1/2}(\widecheck{J}_n - \widetilde{\eta}_n) \leq u \right)
\Big| &\;\to 0 \quad \text{in probability}, \\[0.75em]
\sup_{u \in \mathbb{R}^{q+1}} \Big| 
\operatorname{pr}^{\flat} \!\left( n^{1/2}(\widehat{J}_n^{\flat,\mathrm{mod}} - \widehat{J}^{\mathrm{mod}}_n) \leq u \right)
-
\operatorname{pr} \!\left( n^{1/2}(\widecheck{J}^{\mathrm{mod}}_n - \eta^*) \leq u \right)
\Big| &\;\to 0 \quad \text{in probability}.
\end{align*}

Further, for \(\widehat{\beta}^{\flat}\) and \(\widehat{\beta}^{\flat,\mathrm{mod}}\) defined in \eqref{subgraph:bootstrap_beta}, the delta method for the bootstrap applies (see Theorem 23.5 of \cite{vanderVaart1998})\footnote{The theorem stated in this reference involves convergence in distribution almost surely, but it also holds for sequences converging in distribution in probability since a sequence of random variables converges in probability if and only if it contains a further subsequence converging almost surely.}, provided that \(\|\widetilde{\eta}_n - \eta^*\| = o_P(1)\) and that \(f \circ g\) is continuously differentiable in a neighborhood of \(\eta^*\), as ensured by Lemma \ref{appB:OLS_functional_property}. The claim follows.
\end{proof}

\begin{lemma}\label{lem:countbootcons}
Under the conditions of Theorem \ref{graphon:bootstrap},
\begin{align}\label{appB-eq:bootstrap-covaraince-consistency}
\widehat{\Sigma}^\flat \to \Sigma_\Psi \quad \textrm{in probability}.
\end{align}
\end{lemma}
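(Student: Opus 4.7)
My plan is to prove \eqref{appB-eq:bootstrap-covaraince-consistency} by establishing that $\widehat{g}_1(i)$ consistently estimates the Hájek projection evaluated at $\tau_i$ and then applying the weak law of large numbers to the resulting empirical second-moment matrix. The structure of $\widehat{\Psi}_n$ was already unpacked in the proofs of Theorems \ref{theorem 1}--\ref{theorem 3}: via Lemma \ref{appB:lemma1} and Proposition \ref{main:prop1}, each coordinate can be written (up to $o_P(n^{-1/2})$) as a $U$-statistic of known order $k_j$ with kernel $h_j$; coordinates involving only $(Y_i,X_i)$ have $k_j=1$, cross terms $n^{-1}\sum Y_i\widehat Z_{ij}$ and $n^{-1}\sum X_i\widehat Z_{ij}$ have $k_j=r_j$, and quadratic terms $n^{-1}\sum\widehat Z_{ij}\widehat Z_{ik}$ reduce, via Proposition \ref{main:prop1}, to linear combinations of global counts $\widehat Q(\mathcal M)$ with $k_j = r_j+r_k-1$. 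The targeted limit covariance $\Sigma_\Psi$ is the covariance of the vector of first-order projections $g_{1,j}(\tau_1) = k_j\{E[h_j(\tau_1,\ldots,\tau_{k_j})\mid\tau_1]-Eh_j\}$, so the diagonal factor $D_k$ is precisely what converts a data-driven estimator of the centered conditional-expectation term into the projection itself.

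The decomposition I would use is
\begin{align*}
\widehat{\Sigma}^\flat - \Sigma_\Psi
&= \underbrace{\frac{1}{n}\sum_{i=1}^n D_k g_1(\tau_i)\,g_1(\tau_i)^\T D_k^\T - \Sigma_\Psi}_{A} \\
&\quad + \underbrace{\frac{1}{n}\sum_{i=1}^n D_k\!\left\{\widehat g_1(i)\widehat g_1(i)^\T - g_1(\tau_i)g_1(\tau_i)^\T\right\}\!D_k^\T}_{B}.
\end{align*}
Term $A$ tends to zero in probability by the weak law of large numbers for i.i.d.\ vectors with finite second moments, which follows from $EY^4<\infty$, $E\|X\|^4<\infty$, and the boundedness of $w$. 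Term $B$ is bounded via Cauchy--Schwarz by $n^{-1}\sum_i\|\widehat g_1(i)-g_1(\tau_i)\|^2$ together with the just-established boundedness of $n^{-1}\sum_i\|g_1(\tau_i)\|^2$, so it suffices to show the $L^2$-consistency $n^{-1}\sum_{i=1}^n\|\widehat g_1(i)-g_1(\tau_i)\|^2 = o_P(1)$.

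To control the $L^2$-error coordinate-by-coordinate, I would build $\widehat g_1(i)$ as a leave-$i$-out empirical estimator of the corresponding kernel's conditional expectation: the trivial empirical version for the $k_j=1$ coordinates; a local-count version (of the form in \eqref{eq-local-subgraph}) centered by $\widehat\Psi_n$ for the cross terms; and, for the quadratic terms, the leave-one-out estimator of the global count $\widehat Q(\mathcal M)$ weighted by the constants $C^{\mathcal M}_{\mathcal R_j,\mathcal R_k,1}$ of Proposition \ref{main:prop1}, following the same multiplier-bootstrap strategy as in \cite{lin2020trading} and \cite{zhang2022edgeworth}. For each coordinate the expected squared error splits into a bias term controlled by the mismatch analysed in Lemma \ref{lemma-bias-count} and a variance term controlled by the same U-statistic covariance bounds used in Lemmas \ref{appB:lemma2} and \ref{appB:lemma3}; both are $o(1)$ under Condition \ref{appB:sparsity-cond-weak}. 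The factor $\widehat\rho_n/\rho_n = 1+o_P(1)$ can be absorbed by Slutsky, so the choice to normalize by $\widehat\rho_n$ rather than $\rho_n$ does not affect the probability limit.

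The main obstacle is the $L^2$ bound for $\widehat g_1(i)-g_1(\tau_i)$ on the quadratic coordinates. Although Proposition \ref{main:prop1} lets us replace $n^{-1}\sum_i\widehat Z_{ij}\widehat Z_{ik}$ by a linear combination of global counts $\widehat Q(\mathcal M)$, transferring this representation to the per-node projection estimators is delicate, since the higher-order mismatch terms contribute a node-level bias whose $L^2$ size must be shown to wash out after averaging. I expect to handle this through the per-node analog of the variance calculation behind the $o_P(n^{-1/2})$ rate of Lemma \ref{appB:lemma3}, using the fact that under Condition \ref{appB:sparsity-cond-weak} the remainder terms in the representation decay strictly faster than the leading Hájek contribution. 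Once the resulting per-node $L^2$ bound is in hand, combining $A\to 0$ and $B\to 0$ in probability yields the lemma.
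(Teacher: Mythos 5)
Your proposal is correct and follows essentially the same route as the paper: reduce the claim to consistency of the empirical covariance of the estimated H\'ajek projections, prove a law of large numbers for the latent projections $g_1(\tau_i)$, and control the estimation error of $\widehat g_1(i)$ in $L^2$ using node-wise variance bounds under Condition \ref{appB:sparsity-cond-weak} together with the degeneracy of the higher-order Hoeffding terms. The only difference is organizational: you collapse the error analysis into a single bound on $n^{-1}\sum_i\|\widehat g_1(i)-g_1(\tau_i)\|^2$ followed by Cauchy--Schwarz, whereas the paper passes through the intermediate latent local $U$-statistics $u_i=E(\widehat u_i\mid\tau)$ and expands the cross terms explicitly (Lemmas \ref{appB-lem:bootstrap-lemma-3}, \ref{appB-lem:bootstrap-lemma-4}, \ref{appB-lem:bootstrap-lemma-1}); the substance is the same.
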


\begin{proof}
Before beginning the proof, we define a few quantities that appear in \eqref{subgraph: bootstrap variance OLS}. 
We first define the sample approximation of H\'ajek projection at node $i$, denoted by $\widehat{g}_1(i)$, as
\begin{align}\label{appB-eq:sample-projection-unit}
\widehat{g}_1(i) :=
\begin{pmatrix}
\operatorname{vec}(\widehat{\Lambda}^H(i)) \\
\widehat{\gamma}^H(i) \\
\frac{1}{(n - 1)\hat{\rho}_n} \sum_{j \ne i} A_{ij} -1
\end{pmatrix}.
\end{align}
where \(\widehat{g}_1(i) = \bigl(\widehat{g}_{1,\Psi}(i),\, (n-1)^{-1}\hat{\rho}_n^{-1}\sum_{j\neq i}A_{ij} - 1 \bigr)\), with $\widehat{g}_{1,\Psi}(i)$ introduced in \eqref{subgraph:bootstrap_OLS-stage1}.

Further, we define the node-wise projection terms in \eqref{appB-eq:sample-projection-unit} as
\begin{align}\label{appB-eq:block matrices}
\widehat{\Lambda}^H(i) :=
\begin{pmatrix}
\widehat{\Lambda}_{XX}(i) & \widehat{\Lambda}_{XZ}^H(i) \\[8pt]
(\widehat{\Lambda}_{XZ}^H(i))^\T & \widehat{\Lambda}_{ZZ}^H(i)
\end{pmatrix},
\qquad
\widehat{\gamma}^H(i) :=
\begin{pmatrix}
\widehat{\gamma}_{XY}(i) \\
\widehat{\gamma}_{YZ}^H(i)
\end{pmatrix},
\end{align}
where the individual block matrices in \eqref{appB-eq:block matrices} are specified as follows.  

For the independent components, we denote 
\begin{align*}
\widehat{\Lambda}_{XX}(i) 
&= X_i X_i^\T - n^{-1}\sum_{j=1}^n X_j X_j^\T,
\end{align*}
and $\widehat{\gamma}_{XY}(i)$ can be defined in a similar manner. Note that, the first column of $\mathrm{X}$ is a column of ones; therefore, $X_i-n^{-1}\sum_{j=1}^n X_j$ and $Y_i-n^{-1}\sum_{j=1}^n Y_j$ would appear in the above scenarios as special cases. 

Next, for the subgraph-based components, we define the terms 
$\widehat{\Lambda}_{XZ}^H(i)$, $\widehat{\gamma}_{YZ}^H(i)$ and $\widehat{\Lambda}_{ZZ}^H(i)$ in a following manner.
The $j$th entry of $\widehat{\gamma}_{YZ}^H(i)$ is given by
\begin{align*}
\widehat{\gamma}_{YZ,j}^H(i) &=
\frac{1}{\binom{n-1}{r_j-1}\hat{\rho}_n^{s_j}}
\sum_{\substack{\pi_i \subset [n]\setminus\{i\}\\ |\pi_i|=r_j-1}}
h_{Y,j}(A_{\{i\}\cup \pi_i};\,Y_{\{i\}\cup \pi_i})
- n^{-1}\sum_{l=1}^n Y_l \widehat{Z}_{lj}.
\end{align*}
An analogous expression also holds for $\widehat{\Lambda}_{XZ}^H(i)$, where a special case includes 
\begin{align*}
\frac{1}{\binom{n-1}{r_j-1}\hat{\rho}_n^{s_j}}
\sum_{\substack{\pi_i \subset [n]\setminus\{i\}\\ |\pi_i|=r_j-1}}
 h_{j}(A_{\{i\}\cup \pi_i})
- n^{-1}\sum_{l=1}^n \widehat{Z}_{lj}.  \end{align*}

For $\widehat{\Lambda}_{ZZ}^H(i)$, the $(j,k)$th entry is
\begin{align*}
\widehat{\Lambda}_{ZZ,jk}^H(i) &=
\sum_{\mathcal{M}\in\mathfrak{M}_{c,d}^{\mathcal{R}_j,\mathcal{R}_k}}
C_{\mathcal{R}_j,\mathcal{R}_k,c}^{\mathcal{M}}
\frac{1}{\binom{n-1}{r_j+r_k-2}\hat{\rho}_n^{s_j+s_k}}
\sum_{\substack{\pi_i \subset [n]\setminus\{i\}\\ |\pi_i|=r_j+r_k-2}}h_{\mathcal{M}}(A_{\{i\}\cup \pi_i})
- \widehat{S}_{n,jk},
\end{align*}
where $\widehat{S}_{n,jk}$ denotes the leading term of 
$n^{-1}\widehat{Z}_{ij}\widehat{Z}_{ik}$ 
(as defined in Proposition~\ref{main:prop1}).  
Throughout, we use the shorthand $h_{Y,j} := h_{Y,\mathcal{R}_j}$, $h_{X,j} := h_{X,\mathcal{R}_j}$, and $h_{j} := h_{\mathcal{R}_j}$ for kernel functions, following the global $U$-statistic notation in Lemma~\ref{appB:lemma1}.  
Similarly, $h_{\mathcal{M}}$ denotes the kernel associated with the leading-term $U$-statistic $\widetilde{Q}(\mathcal{M})$ for $\mathcal{M} \in \mathfrak{M}^{\mathcal{R}_j,\mathcal{R}_k}_{1,0}$, as given in Proposition~\ref{main:prop1}.

Now, we are in a position to begin the proof. In what follows, we rearrange $\widehat{\Sigma}^\flat$ with subscripts indicating the component type under consideration. 

Entries of \eqref{subgraph: bootstrap variance OLS} contains three types of terms, depending on whether the entries of \( \widehat{g}_1(i) \) involve independent statistics, subgraph statistics, or both. When both components correspond to independent terms such as \( \widehat{\Lambda}_{XX}(i) \) or \( \widehat{\gamma}_{XY}(i) \), we encounter empirical covariances. In this case, the weak law of large numbers guarantees convergence in probability to the population covariance, under the conditions that $E\|X\|^4<\infty$ and $E(Y^4)<\infty$. 

We next consider cross-terms between independent and subgraph-based components. 
Let $\widehat{u}_i$ denote the subgraph-based projection at node $i$, 
constructed over a subgraph with $r_u$ nodes and $s_u$ edges:
\begin{align}\label{appB:u_i_hat}
\widehat{u}_{i}
   := \frac{1}{\binom{n-1}{r_u - 1}\,\hat{\rho}_n^{s_u}} 
      \sum_{\substack{\pi_i \subset [n]\setminus\{i\} \\ |\pi_i| = r_u - 1}} 
      h_u\!\left(A_{\{i\}\cup\pi_i};\, X_{\{i\}\cup\pi_i}; Y_{\{i\}\cup\pi_i}\right),
 \end{align}
with global average $\widehat{U}_n := n^{-1}\sum_{i=1}^n \widehat{u}_i$. 
The corresponding bootstrap covariance of the this type is then denoted as
\begin{align*}
\widehat{\Sigma}^\flat_{u,X}
   := \frac{r_u}{n} \sum_{i=1}^n 
      \big(\widehat{u}_{i} - \widehat{U}_n\big)\,(X_i - \widebar{X}).
\end{align*}

Finally, for the terms involving two subgraph-based components, let \( \widehat{v}_i \) be another local subgraph projection based on a subgraph with \( r_v \) nodes and \( s_v \) edges:
\begin{align*}
\widehat{v}_{i}:= \frac{1}{\binom{n-1}{r_v - 1} \hat{\rho}_n^{s_v}} 
\sum_{\substack{\pi_i \subset [n] \setminus \{i\} \\ |\pi_i| = r_v - 1}} 
h_v(A_{\{i\} \cup \pi_i};\, X_{\{i\} \cup \pi_i}; Y_{\{i\} \cup \pi_i}),
\end{align*}
with corresponding mean \( \widehat{V}_n := n^{-1} \sum_{i=1}^n \widehat{v}_i \). The bootstrap covariance between \( \widehat{u}_i \) and \( \widehat{v}_i \) takes the form
\begin{align}\label{appB-eq:boot-cov-uv}
\widehat{\Sigma}^\flat_{u,v}:= \frac{r_u r_v}{n} \sum_{i=1}^n (\widehat{u}_i - \widehat{U}_n)(\widehat{v}_i - \widehat{V}_n).
\end{align}
It is worth noting that the term \( \widehat{\rho}_n / \rho_n \) can be regarded as the simplest subgraph estimator, corresponding to
\[
\widehat{u}_i \;=\; \frac{1}{(n - 1)\hat{\rho}_n} \sum_{j \ne i} A_{ij}.
\]

Hence, proving \eqref{appB-eq:bootstrap-covaraince-consistency} reduces to showing that
\begin{align}\label{appB-eq:bootstrap-covariance-2-cases}
\widehat{\Sigma}^\flat_{u,X} \;\to\; \Sigma_{u,X} := \Sigma_{\Psi,u,X}
\quad \text{in probability}, 
\qquad
\widehat{\Sigma}^\flat_{u,v} \;\to\; \Sigma_{u,v} := \Sigma_{\Psi,u,v}
\quad \text{in probability}.
\end{align}
where 
\begin{align}\label{limiting_covariances}
\Sigma_{u,X}=r_u\operatorname{cov}\! \big( h_{1,u}(\tau_i), X_i\big),
\qquad
\textrm{and}
\quad
\Sigma_{u,v}=r_ur_v\operatorname{cov}\!\big( h_{1,u}(\tau_i),  h_{1,v}(\tau_i)\big).
\end{align}
with $ h_{1,u}(\tau_i):=E(h_u(\cdot)|\tau_i)$ and $ h_{1,v}(\tau_i):=E(h_v(\cdot)|\tau_i)$. Moreover, $\operatorname{cov}(h_u(\tau_i),h_v(\tau_i)) 
= E[(h_{1,u}(\tau_i)-\mu_U)(h_{1,v}(\tau_i)-\mu_V)]$, 
where $\mu_U=E[h_{1,u}(\tau_i)]$ and $\mu_V=E[h_{1,v}(\tau_i)]$.
The first convergence in \eqref{appB-eq:bootstrap-covariance-2-cases} is a 
special case of the second, obtained by viewing $X_i$ as a
one-node subgraph statistic. The above claim follows from 
Lemma~\ref{appB-lem:bootstrap-lemma-3} and thus completes the proof.
\end{proof}

\begin{lemma}[Cross-Covariance of Two Subgraph Terms]\label{appB-lem:bootstrap-lemma-3}
Under the same sparsity condition as Condition \ref{appB:sparsity-cond-weak}. And further, assume $E(Y^4)<\infty$, $E\|X\|^4<\infty$. 
Then,
\begin{align*}
\widehat{\Sigma}^\flat_{u,v} \to \Sigma_{u,v}
\quad\textrm{in probability}. 
\end{align*}
with $\widehat{\Sigma}^\flat_{u,v}$ and $\Sigma_{u,v}$ defined in \eqref{appB-eq:boot-cov-uv} and \eqref{limiting_covariances}, respectively.
\end{lemma}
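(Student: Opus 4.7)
The plan is to chain three approximations that turn $\widehat\Sigma_{u,v}^\flat$ into the empirical covariance of an i.i.d.\ sequence, and then apply the law of large numbers. First, replace $\hat\rho_n$ by $\rho_n$ in the normalization of $\widehat u_i$ and $\widehat v_i$; since $\hat\rho_n/\rho_n = 1 + o_P(1)$ under $\lambda_n \to \infty$, the resulting perturbation in the centered sum is $o_P(1)$, so we may work with the true-sparsity versions $\widecheck u_i$, $\widecheck v_i$. Second, pass to the graphon-level conditional statistics $\widetilde u_i := E[\widecheck u_i \mid \xi, X, Y]$ and $\widetilde v_i$, absorbing the Bernoulli network randomness. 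Third, replace $\widetilde u_i$ by $h_{1,u}(\tau_i) = E[\widetilde u_i \mid \tau_i]$, which is the first-order H\'ajek component of the conditional $(r_u-1)$-order $U$-statistic that defines $\widetilde u_i$. After these three reductions the target becomes $r_u r_v \cdot n^{-1} \sum_i (h_{1,u}(\tau_i) - \bar h_{1,u})(h_{1,v}(\tau_i) - \bar h_{1,v})$, which converges in probability to $\Sigma_{u,v} = r_u r_v \operatorname{cov}(h_{1,u}(\tau_1), h_{1,v}(\tau_1))$ by the ordinary LLN for i.i.d.\ triples $\tau_i$; finite second moments of $h_{1,u}$ and $h_{1,v}$ follow from $E(Y^4) < \infty$, $E\|X\|^4 < \infty$, and $\|w\|_\infty \le C$.

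For the network-noise reduction in the second stage, I would apply Proposition \ref{main:prop1} to expand $n^{-1}\sum_i \widecheck u_i \widecheck v_i$, together with $\widehat U_n$ and $\widehat V_n$, as linear combinations of global normalized subgraph counts $\widecheck Q(\mathcal M)$ over possibly covariate-weighted motifs $\mathcal M$. The leading $c = 1, d = 0$ term has graphon-level expectation $E[\widetilde u_1 \widetilde v_1]$ exactly, and each $\widecheck Q(\mathcal M) \to Q(\mathcal M)$ in probability by a Bickel-Chen-type conditional-variance computation under Condition \ref{appB:sparsity-cond-weak}; the overlapping-subgraph bounds already used in the proof of Lemma \ref{appB:lemma3} provide the required rates, with the $o_P(n^{-1/2})$ conclusion there relaxed here to $o_P(1)$ since only covariance consistency is needed. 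Consequently $n^{-1}\sum_i (\widecheck u_i - \widehat U_n)(\widecheck v_i - \widehat V_n) \to \operatorname{cov}(\widetilde u_1, \widetilde v_1)$ in probability.

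The third reduction is where the main technical difficulty lies. Conditionally on $\tau_i$, $\widetilde u_i$ is an $(r_u - 1)$-order $U$-statistic in the remaining latents whose H\'ajek decomposition gives $\widetilde u_i = h_{1,u}(\tau_i) + R_u^{(i)}$ with $E[R_u^{(i)} \mid \tau_i] = 0$ and $E[(R_u^{(i)})^2] = O(n^{-1})$. Substituting this and the analogous expansion for $\widetilde v_i$ into $n^{-1}\sum_i \widetilde u_i \widetilde v_i$ produces squared-residual terms at each node and, more delicately, double sums of residuals across different nodes. The obstacle is that $R_u^{(i)}$ and $R_v^{(j)}$ for $i \ne j$ share latent variables and are therefore dependent; controlling $n^{-2}\sum_{i,j} E[R_u^{(i)} R_v^{(j)}]$ requires an overlap-based combinatorial variance bound analogous to the one underlying Lemma \ref{appB:lemma3}, for which the stated fourth-moment assumptions on $(X,Y)$ and the boundedness of $w$ are used to dominate the covariate-weighted kernels. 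Once this cross-term is shown to be $o_P(1)$, combining with the first two reductions yields $\operatorname{cov}(\widetilde u_1, \widetilde v_1) = \operatorname{cov}(h_{1,u}(\tau_1), h_{1,v}(\tau_1)) + o(1)$ and hence the claim.
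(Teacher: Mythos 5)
Your three-stage reduction (remove the $\hat\rho_n$ plug-in error, strip the Bernoulli network noise to reach $u_i = E[\widehat u_i \mid \tau]$, then H\'ajek-project $u_i$ onto $h_{1,u}(\tau_i)$ and invoke the LLN) is exactly the architecture of the paper's proof, which splits $\widehat\Sigma^\flat_{u,v}-\Sigma_{u,v}$ into $\widehat\Delta_{n,u,v}$ (network/plug-in noise) and $\Delta_{n,u,v}$ (latent $U$-statistic covariance, handled in Lemma~\ref{appB-lem:bootstrap-lemma-4}). The one genuine difference is in the network-noise stage: you propose expanding $n^{-1}\sum_i \widecheck u_i\widecheck v_i$ via the global motif representation of Proposition~\ref{main:prop1} and showing each $\widecheck Q(\mathcal M)\to Q(\mathcal M)$, whereas the paper works node-wise, proving $E[(\widecheck u_i - u_i)^2]=o(1)$ (Lemma~\ref{appB-lem:bootstrap-lemma-1}) and then killing all cross terms by Cauchy--Schwarz and Markov. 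Both routes are valid under Condition~\ref{appB:sparsity-cond-weak}; the node-wise bound is more economical here because only $o_P(1)$ consistency is needed, while your global-representation route reuses machinery already built for the CLT. One remark on your third stage: the quantity to control is the single-index sum $n^{-1}\sum_i R_u^{(i)}R_v^{(i)}$, not a double sum over $i\neq j$, so the cross-node dependence you flag as the main obstacle never has to be confronted. A first-moment bound $E|R_u^{(i)}R_v^{(i)}|\le (E[(R_u^{(i)})^2])^{1/2}(E[(R_v^{(i)})^2])^{1/2}=O(n^{-1})$ --- using exactly the per-node variance rate you state, which follows from the canonical kernels $g_{2,u}(\tau_i,\tau_j)$ being conditionally centered and uncorrelated across $j$ --- already gives $O_P(n^{-1})$ by Markov, which is how the paper disposes of this term.
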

\begin{proof}
We define an intermediate quantity for the following use:
\begin{align*}
u_{i}:= \frac{1}{\binom{n-1}{r_u - 1}} 
\sum_{\substack{\pi_i \subset [n] \setminus \{i\} \\ |\pi_i| = r_u - 1}} 
h_u(w_{\{i\} \cup \pi_i};\, X_{\{i\} \cup \pi_i}, Y_{\{i\} \cup \pi_i}),
\end{align*}
that satisfies $u_i=E(\widehat{u}_i\mid \tau)$ and $U_n=n^{-1}\sum_{i=1}^nu_i$ is the corresponding underlying U-statistic. Similar definitions apply for $v_i$ and $V_n$.

We begin the proof by stating the following decomposition, similar in spirit to the one in \cite{zhang2022edgeworth}. We decompose 
$ \widehat{\Sigma}^\flat_{u,v}-\Sigma_{u,v}$ as 
\begin{align*}
\widehat{\Sigma}^\flat_{u,v}-\Sigma_{u,v}=\underbrace{\widehat{\Sigma}^\flat_{u,v}-\widehat{\Sigma}_{u,v}}_{\widehat{\Delta}_{n,u,v}}+\underbrace{\widehat{\Sigma}_{u,v}-\Sigma_{u,v}}_{\Delta_{n,u,v}},
\end{align*}
where $\widehat{\Sigma}_{u,v} = (r_ur_v/n)\sum_{i=1}^n (u_i - U_n)(v_i - V_n)$.  
We will leave the analysis for $\Delta_{n,u,v}$ in Lemma \ref{appB-lem:bootstrap-lemma-4} and here we mainly handle $\widehat{\Delta}_{n,u,v} \to 0$ in probability.

For convenience, it is equivalent to show $\widehat{\Delta}_{n,u,v}/(r_ur_v) \to 0$ in probability, normalized by the rank.  We observe that 
\begin{align*}
\frac{\widehat{\Sigma}^\flat_{u,v}}{r_u r_v} = 
\underbrace{
\frac{1}{n} \sum_{i=1}^n (u_i - U_n)(v_i - V_n)
}_{\substack{ \widehat{\Sigma}_{u,v}/(r_u r_v)}}
+ 
\delta_{n,u,v},  \quad \delta_{n,u,v}:=\frac{\widehat{\Delta}_{n,u,v}}{r_ur_v},
\end{align*}
where the residual $\delta_{n,u,v}$ further expands as 
\begin{align}\label{appB:boot_var_noise_rep}
\delta_{n,u,v}&=\underbrace{\frac{1}{n} \sum_{i=1}^n (\widehat{u}_i - u_i)(v_i - V_n)}_{\textrm{A}}
+ \underbrace{\frac{1}{n} \sum_{i=1}^n (u_i - U_n)(\widehat{v}_i - v_i)}_{\textrm{B}} \\
&\qquad + \underbrace{\frac{1}{n} \sum_{i=1}^n (\widehat{u}_i - u_i)(\widehat{v}_i - v_i)}_{\textrm{C}} 
+ \underbrace{(\widehat{U}_n - U_n)(\widehat{V}_n - V_n)}_{O_P(n^{-1})} \nonumber.
\end{align}

By Lemma \ref{appB:lemma2} and Lemma \ref{appB:lemma3}, it follows that  $(\widehat{U}_n - U_n)=O_P(n^{-1/2})$ and $(\widehat{V}_n - V_n)=O_P(n^{-1/2})$ under Condition \ref{appB:sparsity-cond-weak},  after appropriately accounting  for the plug-in error $(\rho_n/\hat{\rho}_n-1)$. Therefore, the last term of \eqref{appB:boot_var_noise_rep} vanishes as $O_P(n^{-1})$.

We handle the three terms individually. Since the first and second terms are of the same type, it suffices to analyze term \(A\). Accounting for the plug-in error from the estimated sparsity parameter, \(A\) decomposes as  
\begin{align}\label{appB:first_term_cross}
\frac{1}{n}\sum_{i=1}^n (\widehat{u}_i - u_i)(v_i - V_n)
= \frac{1}{n}\sum_{i=1}^n (\widehat{u}_i - \widecheck{u}_i)(v_i - V_n)
+ \underbrace{\frac{1}{n}\sum_{i=1}^n (\widecheck{u}_i - u_i)(v_i - V_n)}_{\text{III}}.
\end{align}
Here, \(\widecheck{u}_i\) is defined analogously to \eqref{appB:u_i_hat} but with the true normalization factor \(\rho_n\):  
\begin{align*}
\widecheck{u}_i
:= \frac{1}{\binom{n-1}{r_u-1}\rho_n^{s_u}}
   \sum_{\substack{\pi_i \subset [n]\setminus\{i\}\\ |\pi_i|=r_u-1}}
   h_u(A_{\{i\}\cup\pi_i};\, X_{\{i\}\cup\pi_i}; Y_{\{i\}\cup\pi_i}).
\end{align*}
The first term on the right-hand side of \eqref{appB:first_term_cross} further decomposes as 
\begin{align*}
\frac{1}{n}\sum_{i=1}^n (\widehat{u}_i - \widecheck{u}_i)(v_i - V_n)
&= \underbrace{\Biggl(\Biggl(\frac{\rho_n}{\hat\rho_n}\Biggr)^{s_u} - 1\Biggr)
     \left[\frac{1}{n}\sum_{i=1}^n (\widecheck{u}_i - u_i)(v_i - V_n)\right]}_{\text{I}} \\
&\quad + \underbrace{\Biggl(\Biggl(\frac{\rho_n}{\hat\rho_n}\Biggr)^{s_u} - 1\Biggr)
     \left[\frac{1}{n}\sum_{i=1}^n u_i(v_i - V_n)\right]}_{\text{II}}.
\end{align*}

We begin with term III in \eqref{appB:first_term_cross}, which can be expanded as   
\[
n^{-1}\sum_{i=1}^n v_i(\widecheck{u}_i - u_i) 
\;-\; V_n \cdot n^{-1}\sum_{i=1}^n (\widecheck{u}_i - u_i).
\]

Since $V_n = O_P(1)$ under the assumption $E\!\left[\,h_v^2(\cdot)\,\right] < \infty$, 
which is implied by $E(Y^4)<\infty$, $E\|X\|^4<\infty$, 
and since the network noise 
$n^{-1}\sum_{i=1}^n (\widecheck{u}_i - u_i) = o_P(n^{-1/2})$ 
by Lemma~\ref{appB:lemma2} and Lemma~\ref{appB:lemma3} under the assumed conditions, 
we obtain
\[
V_n \cdot n^{-1}\sum_{i=1}^n (\widecheck{u}_i - u_i) = o_P(n^{-1/2}).
\]
Furthermore, applying the Cauchy--Schwarz inequality gives
\begin{align}\label{appB-eq:bootstrap_linear_noise_product}
E\left|\,n^{-1}\sum_{i=1}^n v_i(\widecheck{u}_i - u_i)\right|
  &\le n^{-1}\sum_{i=1}^n \sqrt{E[v_i^2] \, E[(\widecheck{u}_i - u_i)^2]}.
\end{align}
Here $E[v_i^2]<\infty$ under the condition $E\!\left[\,h_v^2(\cdot)\,\right] < \infty$, implied by $E(Y^4)<\infty$, $E\|X\|^4<\infty$, and $w(x,y)\leq C$, while $E[(\widecheck{u}_i - u_i)^2]=o(1)$ by Lemma~\ref{appB-lem:bootstrap-lemma-1}. 
Thus, by Markov's inequality, the bound in 
\eqref{appB-eq:bootstrap_linear_noise_product} is $o(1)$. 
Combining both components, III is $o_P(1)$. 
Finally, note that I is of lower order than III, as it involves the plug-in 
error $O_P(n^{-1/2})$ multiplied by III, and is therefore $o_P(n^{-1/2})$.

For II, it suffices to show that
\[
n^{-1}\sum_{i=1}^n u_i(v_i - V_n) 
= n^{-1}\sum_{i=1}^n u_i v_i - U_n V_n
\]
is $O_P(1)$. Under the assumptions $E(u_i^2)<\infty$ and $E(v_i^2)<\infty$, by Cauchy--Schwarz and Markov,
\[
n^{-1}\sum_{i=1}^n u_i v_i = O_P(1),
\]
and by the convergence of $U$-statistics, $U_nV_n = O_P(1)$. Therefore, the entire expression multiplied by $O_P(n^{-1/2})$ is $o_P(1)$.

Finally, for analyzing C, we separate the plug-in effect and expand it into four terms as:
\begin{align}\label{appB:term-C-expand}
\frac{1}{n}\sum_{i=1}^n (\widehat{u}_i - u_i)(\widehat{v}_i - v_i)
&=
\Biggl(\Bigl(\tfrac{\rho_n}{\widehat{\rho}_n}\Bigr)^{s_u} - 1\Biggr)
\Biggl(\Bigl(\tfrac{\rho_n}{\widehat{\rho}_n}\Bigr)^{s_v} - 1\Biggr)
\left(\frac{1}{n}\sum_{i=1}^n \widecheck{u}_i \widecheck{v}_i \right) \\
&\quad +
\Biggl(\Bigl(\tfrac{\rho_n}{\widehat{\rho}_n}\Bigr)^{s_u} - 1\Biggr)
\left(\frac{1}{n}\sum_{i=1}^n \widecheck{u}_i(\widecheck{v}_i - v_i) \right) \notag \\
&\quad +
\Biggl(\Bigl(\tfrac{\rho_n}{\widehat{\rho}_n}\Bigr)^{s_v} - 1\Biggr)
\left(\frac{1}{n}\sum_{i=1}^n (\widecheck{u}_i - u_i)\widecheck{v}_i \right) \notag \\
&\quad +
\frac{1}{n}\sum_{i=1}^n (\widecheck{u}_i - u_i)(\widecheck{v}_i - v_i). \notag
\end{align}

Furthermore, by Cauchy--Schwarz, Markov’s inequality, and Lemma~\ref{appB-lem:bootstrap-lemma-1}, 
the last term of \eqref{appB:term-C-expand} is \( o_P(1) \), since 
\begin{align}\label{appB-eq:bootstrap_product_noise}
E \left| n^{-1} \sum_{i=1}^n (\widecheck{u}_i - u_i)(\widecheck{v}_i - v_i) \right|
\leq \sqrt{ E(\widecheck{u}_i - u_i)^2 } \cdot \sqrt{ E(\widecheck{v}_i - v_i)^2 }
= o(1).    
\end{align}

For the first term of \eqref{appB:term-C-expand}, Proposition~\ref{main:prop1} 
and Theorem~\ref{theorem 1} give
\[
n^{-1}\sum_{i=1}^n \widecheck{u}_i \widecheck{v}_i = O_P(1).
\]
Since it is multiplied by two vanishing factors, the term is \( o_P(1) \).

Next, note that the second and third terms in \eqref{appB:term-C-expand} are symmetric. We can decompose
\[
n^{-1}\sum_{i=1}^n \widecheck{u}_i (\widecheck{v}_i - v_i)
= n^{-1}\sum_{i=1}^n (\widecheck{u}_i - u_i)(\widecheck{v}_i - v_i)
+ n^{-1}\sum_{i=1}^n u_i (\widecheck{v}_i - v_i).
\]
By \eqref{appB-eq:bootstrap_product_noise} and 
\eqref{appB-eq:bootstrap_linear_noise_product}, both terms are \(o_P(1)\).  
With the additional vanishing plug-in factor, this shows that
\[
\left( \left(\frac{\rho_n}{\widehat{\rho}_n}\right)^{s_u} - 1 \right)
\cdot \left( \frac{1}{n} \sum_{i=1}^n \widecheck{u}_i (\widecheck{v}_i - v_i) \right) = o_P(n^{-1/2}),
\]
which concludes the argument.
\end{proof}

The following lemma establishes properties of the local $U$-statistics in the latent space. Our strategy is closely related to that used in the proof of Lemma~3.1 in \cite{zhang2022edgeworth}, but since it pertains to covariances rather than variances, we include a proof here for completeness.

\begin{lemma}\label{appB-lem:bootstrap-lemma-4} 
Under the conditions $E(Y^4)<\infty$, $E\|X\|^4<\infty$, 
the following convergence result holds:
\begin{align*}
\widehat{\Sigma}_{u,v} \to \Sigma_{u,v}  \quad \text{in probability}.
\end{align*}
\end{lemma}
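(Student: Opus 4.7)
The plan is to exploit a first-order Hájek-type projection of each local $U$-statistic onto $\tau_i$. Since $u_i = \binom{n-1}{r_u-1}^{-1}\sum_{\pi_i} h_u(\tau_{\{i\}\cup\pi_i})$ is a $U$-statistic of order $r_u-1$ in the remaining $\tau_j$'s with $\tau_i$ acting as a parameter, iterated expectation gives $E[u_i\mid\tau_i] = h_{1,u}(\tau_i)$. I write $u_i = h_{1,u}(\tau_i) + R_{u,i}$ and $v_i = h_{1,v}(\tau_i) + R_{v,i}$, where $E[R_{u,i}\mid\tau_i] = E[R_{v,i}\mid\tau_i] = 0$ and, by the standard Hoeffding variance decomposition for $U$-statistics, $E[R_{u,i}^2], E[R_{v,i}^2] = O(n^{-1})$ under $E(Y^4)<\infty$, $E\|X\|^4<\infty$, and $\|w\|_\infty \le C$.

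Next, I expand $\widehat{\Sigma}_{u,v}/(r_u r_v) = n^{-1}\sum_{i=1}^n u_i v_i - U_n V_n$ using the two projections. The leading piece $n^{-1}\sum_{i=1}^n h_{1,u}(\tau_i) h_{1,v}(\tau_i)$ is an i.i.d.\ sum with finite first moment, so by the weak law of large numbers it converges in probability to $E[h_{1,u}(\tau_i)h_{1,v}(\tau_i)]$. For $U_n V_n$, standard $U$-statistic consistency yields $U_n \to \mu_U$ and $V_n \to \mu_V$ in probability, so $U_n V_n \to \mu_U \mu_V$. Combining these limits produces the target $E[h_{1,u}h_{1,v}] - \mu_U \mu_V = \operatorname{cov}(h_{1,u}(\tau_i), h_{1,v}(\tau_i))$, which is $\Sigma_{u,v}/(r_u r_v)$.

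The remaining pieces $n^{-1}\sum_i h_{1,u}(\tau_i)R_{v,i}$, $n^{-1}\sum_i R_{u,i}h_{1,v}(\tau_i)$, and $n^{-1}\sum_i R_{u,i}R_{v,i}$ must be shown to be $o_P(1)$. Each has mean zero by the tower property. For the pure residual term, Cauchy--Schwarz and Markov give $E|n^{-1}\sum_i R_{u,i}R_{v,i}| \le \{E[R_{u,i}^2]E[R_{v,i}^2]\}^{1/2} = O(n^{-1})$. For the two cross terms, I would compute the variance via the conditional-mean-zero property: diagonal contributions are $O(n^{-1})$ from the residual variance bound, and off-diagonal covariances $E[h_{1,u}(\tau_i)R_{v,i}\cdot h_{1,u}(\tau_j)R_{v,j}]$ for $i\neq j$ admit an overlap decomposition in the indices $\pi_i,\pi_j$ that yields a matching $O(n^{-1})$ rate. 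An alternative route, parallel to Proposition~\ref{main:prop1}, is to rewrite $n^{-1}\sum_i u_i v_i$ as a finite linear combination of latent-space global $U$-statistics indexed by the node overlap $c\in\{1,\ldots,\min(r_u,r_v)\}$ and invoke the $U$-statistic WLLN termwise, with the $c=1$ term giving $E[h_{1,u}h_{1,v}]$ and the $c\ge 2$ terms carrying prefactors $O(n^{1-c})$.

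The main obstacle is the combinatorial bookkeeping of residual covariances across different root nodes $i$: one must verify that the pairwise covariances contributed by shared secondary indices aggregate to an overall variance of $O(n^{-1})$ rather than $O(1)$, so that the cross terms are indeed asymptotically negligible. This mirrors the conditional-covariance calculation already executed for $\widehat{\delta}_{n,Y,\mathcal{R}_j}$ in Lemma~\ref{appB:lemma2}, adapted to the latent kernel $h_u h_v$ in place of the $A_{ab}$-based kernel; since the latent quantities are uniformly bounded, no additional sparsity input beyond the already assumed moment conditions is required.
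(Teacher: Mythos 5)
Your proposal is correct and follows essentially the same route as the paper: decompose $u_i$ and $v_i$ into their first-order projections $h_{1,u}(\tau_i)$, $h_{1,v}(\tau_i)$ plus residuals with conditional mean zero and $E[R_{u,i}^2]=O(n^{-1})$, apply the WLLN to the leading i.i.d.\ term, and kill the remaining terms via Cauchy--Schwarz. The only difference is that the ``main obstacle'' you flag (off-diagonal covariances of the cross terms across root nodes) is avoidable: the paper simply bounds $\bigl|n^{-1}\sum_i R_{u,i}\,(h_{1,v}(\tau_i)-\mu_V)\bigr|$ by $\bigl(n^{-1}\sum_i R_{u,i}^2\bigr)^{1/2}\bigl(n^{-1}\sum_i (h_{1,v}(\tau_i)-\mu_V)^2\bigr)^{1/2}=O_P(n^{-1/2})\cdot O_P(1)$, so no overlap bookkeeping across $i\neq j$ is needed.
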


\begin{proof}
We will show that:
\begin{align*}
\frac{\widehat{\Sigma}_{u,v}}{r_u r_v} 
= \frac{1}{n} \sum_{i=1}^n (u_i - U_n)(v_i - V_n)
\to
\operatorname{cov}(h_{1,u}(\tau_i), h_{1,v}(\tau_i)) \quad \textrm{in probability},
\end{align*}
where the term on the LHS decomposes as
\begin{align}\label{appB:latent-uv-expand}
\frac{1}{n} \sum_{i=1}^n (u_i - U_n)(v_i - V_n) 
&= \frac{1}{n} \sum_{i=1}^n \big( (u_i - \mu_U) + (\mu_U - U_n) \big) \big( (v_i - \mu_V) + (\mu_V - V_n) \big) \nonumber\\
&=\frac{1}{n} \sum_{i=1}^n (u_i - \mu_U)(v_i - \mu_V)+\underbrace{(\mu_U - U_n)(\mu_V - V_n)}_{O_P(n^{-1})}.
\end{align}
The two intermediate terms in \eqref{appB:latent-uv-expand} are zero, since 
$n^{-1}\sum_{i=1}^n u_i = U_n$ and $n^{-1}\sum_{i=1}^n v_i = V_n$. 
The last term is $O_P(n^{-1})$, because both $U_n$ and $V_n$ are subgraph $U$-statistics satisfying 
$U_n - \mu_U = O_P(n^{-1/2})$ and $V_n - \mu_V = O_P(n^{-1/2})$; 
see \cite{serfling2009approximation} for details, under the conditions $E[h_u^2(\cdot)] < \infty$ and $E[h_v^2(\cdot)] < \infty$, 
which are implied by $E(Y^4) < \infty$, $E\|X\|^4 < \infty$ and $w(x,y) \leq C$. 

Thus, it remains to analyze the first term of \eqref{appB:latent-uv-expand}, which can be further decomposed as
\begin{align*}
&\frac{1}{n} \sum_{i=1}^n (u_i - \mu_U)(v_i - \mu_V)\\
&= \frac{1}{n} \sum_{i=1}^n \left[(u_i - h_{1,u}(\tau_i)) + (h_{1,u}(\tau_i) - \mu_U)\right]
\left[(v_i - h_{1,v}(\tau_i)) + (h_{1,v}(\tau_i) - \mu_V)\right] \\
&= \underbrace{\frac{1}{n} \sum_{i=1}^n (u_i - h_{1,u}(\tau_i))(v_i - h_{1,v}(\tau_i))}_{\text{A}} 
+ \underbrace{\frac{1}{n} \sum_{i=1}^n (u_i - h_{1,u}(\tau_i))(h_{1,v}(\tau_i) - \mu_V)}_{\text{B}} \\
&\quad + \underbrace{\frac{1}{n} \sum_{i=1}^n (h_{1,u}(\tau_i) - \mu_U)(v_i - h_{1,v}(\tau_i))}_{\text{C}} + \frac{1}{n} \sum_{i=1}^n (h_{1,u}(\tau_i) - \mu_U)(h_{1,v}(\tau_i) - \mu_V).
\end{align*}
Note that, the weak law of large numbers implies that the last term converges in probability to 
$\operatorname{cov}(h_{1,u}(\tau_i), h_{1,v}(\tau_i))=E\left[(h_{1,u}(\tau_i)-\mu_U)(h_{1,v}(\tau_i)-\mu_V)\right]$. 
Furthermore, B and C can be viewed as special cases of A since  $h_{1,v}(\tau_i)$ is a local U--statistic of order one. Therefore, it only remains to bound A.

Following the Hoeffding decomposition, the cross–product in the local $U$–statistic, 
\((u_i - h_{1,u}(\tau_i))(v_i - h_{1,v}(\tau_i))\), can be expanded as
\begin{align*}
&\left\{\sum_{k=1}^{r_u-1}\frac{\binom{n-k-1}{r_u-k-1}}{\binom{n-1}{r_u-1}}\sum_{\substack{1 \leq j_1 \leq \ldots \leq j_k \leq n \\ j_1,\ldots,j_k \neq i}}g_{k+1,u}(\tau_i,\tau_{j_1},\ldots,\tau_{j_k})\right\}\notag\\
&\qquad \qquad \times
\left\{\sum_{k=1}^{r_v-1}\frac{\binom{n-k-1}{r_v-k-1}}{\binom{n-1}{r_v-1}}\sum_{\substack{1 \leq j_1 \leq \ldots \leq j_k \leq n \\ j_1,\ldots,j_k \neq i}}g_{k+1,v}(\tau_i,\tau_{j_1},\ldots,\tau_{j_k})\right\} \nonumber\\
&=\left\{\frac{r_u-1}{n-1}\sum_{\substack{1 \leq j \leq n \\ j \neq i}} g_{2,u}(\tau_i,\tau_j)+O_P(n^{-1})\right\} \times
\left\{\frac{r_v-1}{n-1}\sum_{\substack{1 \leq j \leq n \\ j \neq i}} g_{2,v}(\tau_i,\tau_j)+O_P(n^{-1})\right\}.
\end{align*}

Next, we verify that the leading component of A satisfies
\begin{align*}
\frac{1}{n} \sum_{i=1}^n 
\left\{\frac{r_u-1}{n-1}\sum_{\substack{1 \leq j \leq n \\ j \neq i}} g_{2,u}(\tau_i,\tau_j)\right\} 
\left\{\frac{r_v-1}{n-1}\sum_{\substack{1 \leq j \leq n \\ j \neq i}} g_{2,v}(\tau_i,\tau_j)\right\} 
= o_P(1).
\end{align*}
It follows by observing that
\begin{align*}
\operatorname{var}\!\left(\frac{r_u-1}{n-1}\sum_{j \neq i} g_{2,u}(\tau_i,\tau_j)\right) 
&= \left(\frac{r_u-1}{n-1}\right)^2 
   \operatorname{var}\!\left(\sum_{j \neq i} g_{2,u}(\tau_i,\tau_j)\right) \nonumber \\
&= O(n^{-1}),
\end{align*}
Since, by the Hoeffding canonical property, $E[g_{2,u}(\tau_i,\tau_j)\mid\tau_i]=0$ and 
the terms $g_{2,u}(\tau_i,\tau_j)$ are uncorrelated across $j$, 
this variance bound holds under the condition $E[g_{2,u}^2(\tau_1,\tau_2)]<\infty$, 
which is implied by the standing moment assumptions $E(Y^4)<\infty$, $E\|X\|^4<\infty$. 

Hence, by Cauchy-Schwarz, we obtain that  $A=O_P(n^{-1})$, and the claim follows.
\end{proof}

The next lemma establishes the order of $\operatorname{var}(\widecheck{u}_i - u_i)$ for the general local subgraph estimates, and establishes that it is asymptotically negligible. The proof follows a similar strategy as in the global case we had before in Lemma \ref{appB:lemma2}. 

\begin{lemma}[Node-wise Variance Bound]\label{appB-lem:bootstrap-lemma-1}
Under the sparsity condition as Condition \ref{appB:sparsity-cond-weak},
\begin{align*}
E[(\widecheck{u}_i - u_i)^2]=\operatorname{var}(\widecheck{u}_i - u_i) = o(1).    
\end{align*}
\end{lemma}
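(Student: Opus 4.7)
The plan is to reduce the problem to a conditional variance computation that mirrors the one carried out in Lemma~\ref{appB:lemma2}, but with the root node $i$ fixed rather than summed. The first step is to observe that $u_i = E[\widecheck{u}_i \mid \tau]$, so by the law of total variance
\begin{equation*}
\operatorname{var}(\widecheck{u}_i - u_i) \;=\; E[\operatorname{var}(\widecheck{u}_i \mid \tau)].
\end{equation*}
The randomness of the latent positions is thereby pushed into the outer expectation, and the inner conditional variance depends only on the edge noise and on deterministic graphon weights $w(\xi_a,\xi_b)$.

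Next, I would expand $\operatorname{var}(\widecheck{u}_i \mid \tau)$ as a double sum over pairs of isomorphic subgraph copies $(\mathcal{S},\mathcal{S}')$ that both contain node $i$, indexed by the overlap pattern $(c,d) = (|\mathcal{V}(\mathcal{S}) \cap \mathcal{V}(\mathcal{S}')|,\,|\mathcal{E}(\mathcal{S}) \cap \mathcal{E}(\mathcal{S}')|)$. Exactly as in Lemma~\ref{appB:lemma2}, the conditional covariance of the associated edge products is bounded by $\rho_n^{2s_u - d}\,\prod w(\xi_a,\xi_b)$ when $d \ge 1$ and vanishes otherwise by conditional independence of the edges given $\tau$. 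The number of such pairs with overlap $c$ (necessarily $c \ge 1$ since $i$ is shared) is $O(n^{2r_u - 1 - c})$, because one must choose $r_u - 1$ partners for $\mathcal{S}$ and $r_u - c$ fresh vertices for $\mathcal{S}'$ after fixing the $c - 1$ additional shared vertices. Combining with the normalization $\binom{n-1}{r_u-1}^{-2}\rho_n^{-2s_u}$, taking expectations in $\tau$, and invoking $E[h_u^2(\cdot)] < \infty$ and $w \le C$ yields a deterministic bound of the form
\begin{equation*}
\operatorname{var}(\widecheck{u}_i - u_i) \;\lesssim\; \sum_{\substack{c \,\ge\, 2 \\ 1 \,\le\, d \,\le\, \binom{c}{2}}} n^{\,1-c}\,\rho_n^{-d},
\end{equation*}
with the $c = 1$ contribution equal to zero since it forces $d = 0$.

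The final step is to verify that every admissible $(c,d)$ term in this sum is $o(1)$ under Condition~\ref{appB:sparsity-cond-weak}. For acyclic subgraphs the slowest-decaying term is $(c,d) = (2,1)$, which equals $\lambda_n^{-1}$ and vanishes since $\lambda_n = \omega(1)$; for a simple cycle the worst overlap is $(c,d) = (r_u, r_u)$, which equals $n\lambda_n^{-r_u}$ and vanishes under $\lambda_n^{r_u} = \omega(n)$; for general cyclic subgraphs the worst case is $(c,d) = (r_u, \binom{r_u}{2})$ and the sparsity regime in Condition~\ref{appB:sparsity-cond-weak}(c) directly covers it. The main obstacle is the combinatorial bookkeeping of pairs with prescribed overlap across the three subgraph classes, but since the target here is only $o(1)$ rather than the $o(n^{-1})$ rate needed in Lemma~\ref{appB:lemma2}, the sparsity assumptions already imposed are more than sufficient and no sharpening is required.
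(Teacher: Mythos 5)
Your proposal is correct and follows essentially the same route as the paper's proof: the law of total variance reduction using $E[\widecheck{u}_i - u_i \mid \tau] = 0$, the expansion of the conditional variance over pairs of subgraph copies containing $i$ grouped by node/edge overlap $(c,d)$, the covariance bound $\rho_n^{2s_u-d}\prod w(\xi_a,\xi_b)$, the count $O(n^{2(r_u-1)-(c-1)})$ of such pairs, and the resulting bound $\sum_{c,d} O(n^{1-c}\rho_n^{-d}) = o(1)$ under Condition~\ref{appB:sparsity-cond-weak}. Your explicit case-by-case verification of the three subgraph classes is slightly more detailed than the paper's, but the argument is the same.
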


\begin{proof}
Using the law of total variance, conditioning on the $\tau=(\tau_1,\ldots,\tau_n)$, $\tau_i=(X_i,Y_i,\xi_i)$, we get
$$\operatorname{var}(\widecheck{u}_i - u_i) = E[\operatorname{var}(\widecheck{u}_i - u_i | \tau)] + \operatorname{var}(E[\widecheck{u}_i - u_i | \tau]).$$
Since $E[\widecheck{u}_i - u_i | \tau] = 0$, the proof reduces to analyzing the conditional variance term. Note that we can expand the conditional variance as
\[
\operatorname{var}(\widecheck{u}_i \mid \tau)
= \frac{1}{\binom{n-1}{r_u-1}^2\,\rho_n^{2 s_u}}
\sum_{\substack{\pi,\pi'\subset [n]\setminus\{i\}\\ |\pi|=|\pi'|= r_u-1}}
\operatorname{cov}\left(
  h_u\big(A_{\{i\}\cup \pi}\big),
  h_u\big(A_{\{i\}\cup \pi'}\big)
  \,\middle|\, \tau
\right).
\]

Following a similar argument as in Lemma~\ref{appB:lemma2}, 
we analyze the double summation by grouping pairs of subgraphs 
by the number of shared nodes \(c\) (including node \(i\)) and shared edges \(d\). 
For a term that shares \(d\) edges, the individual covariance is upper-bounded by
\begin{align}\label{appB:boot-single_cov_bound}
\operatorname{cov}\!\left(
  h_u(A_{\{i\} \cup \pi_i}),\,
  h_u(A_{\{i\} \cup \pi'_i}) \,\middle|\, \tau
\right) 
\leq \rho_n^{2s_u-d}\, E h_{\mathcal{M}_{c,d}}(\tau).  
\end{align}

There are $O\!\left(n^{2(r_u-1)-(c-1)}\right)$ such terms, each with covariance bounded by \eqref{appB:boot-single_cov_bound}. 
Therefore, combining these bounds and assuming $E h_{\mathcal{M}_{c,d}}(\cdot)<\infty$, which holds under our standing conditions $E(Y^4)<\infty$, $E\|X\|^4<\infty$ and $w(x,y)\leq C$, we obtain
\[
\operatorname{var}(\widecheck{u}_i - u_i) 
= E\big[\operatorname{var}(\widecheck{u}_i \mid \tau)\big] 
= \sum_{c=1}^{r_u} \sum_{d} 
O\!\left( \frac{n^{2(r_u-1)-(c-1)}}{n^{2(r_u-1)} \rho_n^{2s_u}} \cdot \rho_n^{2s_u-d} \right) 
= \sum_{c=1}^{r_u} \sum_{d} O\!\left(n^{1-c}\rho_n^{-d}\right).
\]
The sparsity condition in Condition~\ref{appB:sparsity-cond-weak} guarantees that 
$O\!\left(n^{1-c}\rho_n^{-d}\right)=o(1)$; 
hence, 
\[
\operatorname{var}(\widecheck{u}_i - u_i)=o(1).
\]
The claim follows.
\end{proof}

The following lemma establishes appropriate smoothness of a functional related to the OLS estimator.  

\begin{lemma}[Property of the OLS Functional]\label{appB:OLS_functional_property}
Let $(\mathrm{M}_0, b_0) \in \mathbb{R}^{m \times m} \times \mathbb{R}^m$ be a point where $\mathrm{M}_0$ is invertible and symmetric.
Then there exists an open neighborhood of $(\mathrm{M}_0,b_0)$ on which $f(\mathrm{M}, b) = \mathrm{M}^{-1} b$ is continuously differentiable. 
\end{lemma}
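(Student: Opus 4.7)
\textbf{Proof proposal for Lemma \ref{appB:OLS_functional_property}.} The plan is to show that $f(\mathrm{M},b)=\mathrm{M}^{-1}b$ is not merely continuously differentiable but in fact smooth on an open neighborhood of $(\mathrm{M}_0,b_0)$, by exhibiting it as a composition of smooth maps.

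First, I would argue that the set $\mathrm{GL}_m(\mathbb{R})=\{\mathrm{M}\in\mathbb{R}^{m\times m}:\det(\mathrm{M})\neq 0\}$ is an open subset of $\mathbb{R}^{m\times m}$. This follows because $\det:\mathbb{R}^{m\times m}\to\mathbb{R}$ is a polynomial (hence continuous) map, and $\{\det \neq 0\}$ is the preimage of an open set. Since $\mathrm{M}_0$ is invertible, there exists an open neighborhood $U\subset\mathbb{R}^{m\times m}$ containing $\mathrm{M}_0$ on which $\det(\mathrm{M})\neq 0$, and hence $\mathrm{M}^{-1}$ exists and is unique throughout $U$. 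The symmetry of $\mathrm{M}_0$ is not required for this step but is consistent with the covariance-matrix setting in which the lemma is applied.

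Next, I would invoke Cramer's rule, which expresses each entry of $\mathrm{M}^{-1}$ as $(-1)^{i+j}\det(\mathrm{M}_{ji})/\det(\mathrm{M})$, where $\mathrm{M}_{ji}$ is the $(j,i)$-minor. Since $\det(\mathrm{M})$ and $\det(\mathrm{M}_{ji})$ are polynomials in the entries of $\mathrm{M}$, and $\det(\mathrm{M})\neq 0$ on $U$, each entry of $\mathrm{M}^{-1}$ is a rational function with nonvanishing denominator on $U$, thus $C^\infty$ on $U$. Consequently the map $\mathrm{M}\mapsto\mathrm{M}^{-1}$ is $C^\infty$ on $U$. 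The full map $f(\mathrm{M},b)=\mathrm{M}^{-1}b$ is then the product of the smooth matrix-valued map $\mathrm{M}\mapsto\mathrm{M}^{-1}$ and the linear (hence smooth) map $b\mapsto b$, so it is $C^\infty$ on the open neighborhood $U\times\mathbb{R}^m$ of $(\mathrm{M}_0,b_0)$.

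For completeness, one may also note the explicit derivatives, which can be read off by differentiating the identity $\mathrm{M}\,\mathrm{M}^{-1}=\mathrm{I}$: for any direction $(\Delta\mathrm{M},\Delta b)$,
\begin{align*}
Df(\mathrm{M},b)[\Delta\mathrm{M},\Delta b]
= -\mathrm{M}^{-1}(\Delta\mathrm{M})\mathrm{M}^{-1}b + \mathrm{M}^{-1}\Delta b.
\end{align*}
Both summands depend continuously on $(\mathrm{M},b)\in U\times\mathbb{R}^m$, which re-confirms the $C^1$ claim. No obstacle is anticipated; the argument is entirely standard and hinges only on the openness of $\mathrm{GL}_m(\mathbb{R})$ and the polynomial structure of Cramer's formula.
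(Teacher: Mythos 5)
Your proposal is correct and takes essentially the same approach as the paper: both establish that inversion is well defined on an open neighborhood of $\mathrm{M}_0$ and then use the adjugate/Cramer representation of $\mathrm{M}^{-1}$ to obtain continuity (in your case, smoothness) of $\mathrm{M}\mapsto\mathrm{M}^{-1}$, together with the same total differential $-\mathrm{M}^{-1}(d\mathrm{M})\mathrm{M}^{-1}b+\mathrm{M}^{-1}db$. The only cosmetic difference is that the paper derives openness of the invertible set from Weyl's inequality for singular values, whereas you use continuity of the determinant; both are standard, and neither actually requires the symmetry of $\mathrm{M}_0$.
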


\begin{proof}
Since $\mathrm{M}_0$ is invertible and symmetric, its minimum singular value satisfies $\sigma_m(\mathrm{M}_0)= \v >0$.  By Weyl's inequality for singular values, it then follows that, for any $\|\mathrm{M} - \mathrm{M}_0\|_{\mathrm{op}} < \v$, $\sigma_{min}(\mathrm{M}) > 0$; therefore, any $\mathrm{M}$ in this neighborhood is invertible.  By equivalence of norms on finite-dimensional spaces, for any choice of norm, there exists an open neighborhood $\mathcal{N}$ of $\mathrm{M}_0$ such that $\mathrm{M}$ is invertible for all $\mathrm{M} \in \mathcal{N}$.      

The total differential of the functional $f(\mathrm{M}, b) = \mathrm{M}^{-1} b$ is given by:
\[
d f = d(\mathrm{M}^{-1} b) = d\mathrm{M}^{-1} \cdot b + \mathrm{M}^{-1} \cdot db,
\]
see for example, \cite{magnus1988matrix}. By a matrix calculus identity, we have $d\mathrm{M}^{-1} = - \mathrm{M}^{-1} (d\mathrm{M}) \mathrm{M}^{-1}$. Substituting this above, we have:
\[
df = - \mathrm{M}^{-1} (d\mathrm{M}) \mathrm{M}^{-1} b + \mathrm{M}^{-1} db.
\]
From the identity $\mathrm{M}^{-1} = \frac{1}{\mathrm{det}(\mathrm{M})} \mathrm{adj}(\mathrm{M})$, it follows that the map $\mathrm{M} \mapsto \mathrm{M}^{-1}$ is continuous on $\mathcal{N}$.  Thus, the claim follows.  
\end{proof}

\subsection{Pre-computation Example for Linear Multiplier Bootstrap Procedure}\label{bootstrap-precom}
We now present a concrete example of how to compute the quantity $\widehat{g}_1$.

In Algorithm \ref{alg:ALG-pre-computation} below, we illustrate how to compute \(n^{-1} \sum_{i=1}^n Y_i \widehat{Z}_{ij}\). In particular, we consider a rooted star as an example, where fast computation can be achieved through appropriate matrix multiplication. According to the global representation of \(n^{-1} \sum_{i=1}^n Y_i \widehat{Z}_{ij}\) in \eqref{appB-eq:h_Y_rooted}, when \(\widehat{Z}_{ij}\) is a rooted star, the corresponding \(\widehat{g}_1(i)\), following the definition in \ref{appB-eq:sample-projection-unit}, is:
\begin{align*}
\widehat{g}_1(i)
&= \frac{1}{\binom{n-1}{r_j-1}\,\hat{\rho}_n^{s_j}}
\sum_{\substack{I \subset [n]\setminus\{i\}\\ |I|=r_j-1}}\frac{1}{r_j}
\left(
  Y_i \prod_{j\in I} A_{ij}
  \;+\;
  \sum_{l\in I} Y_l\, A_{li}\!\!\prod_{j\in I\setminus\{l\}} \! A_{lj}
\right)
- \frac{1}{n}\sum_{i=1}^{n} Y_i \widehat{Z}_{ij}.
\end{align*}

\begin{algorithm}[H]
\caption{Pre-computation of $\widehat{g}_1$ for $n^{-1} \sum_{i=1}^{n} Y_i \widehat{Z}_{ij}$ in the Rooted-star Case}
\label{alg:ALG-pre-computation}
\KwIn{Adjacency matrix $\mathrm{A} \in \{0,1\}^{n\times n}$, response vector $Y \in \mathbb{R}^n$, star-type subgraph $\mathcal{R}_j$ with node-edge parameter as $(r_j,s_j)$}
\KwOut{Vector $\big(\widehat{g}_1(i)\big)_{i=1}^n$}
Estimate \(\hat{\rho}_n\) by \(\frac{2}{n(n-1)} \sum_{1 \le i < j \le n} A_{ij}\), \quad
$d_i \gets \sum_{j=1}^n A_{ij}$ for $i=1,\dots,n$\;
$A^Y \gets A \cdot \mathrm{diag}(Y)$ \tcp*{Pre-compute response-weighted adjacency}

$\widehat{Z}_{ij} \gets \frac{\binom{d_i }{r_j-1}}{\hat{\rho}_n^{s_j}\binom{n - 1}{r_j-1}}$ \tcp*{Pre-compute rooted star frequency}

\ForEach{$i \in \{1,\dots,n\}$}{
    $C^{\mathrm{rooted}}_{y}(i) \gets Y_i\binom{d_i}{r_j-1}/r_j$ \tcp*{Rooted star count at $i$}
    
    $C^{\mathrm{peripheral}}_{y}(i) \gets \sum_{j : A_{ij}=1} Y_j\binom{d_j - 1}{r_j-2}/r_j$ \tcp*{Star count from neighbors}

    $C_y(i) \gets C^{\mathrm{rooted}}_{y}(i) + C^{\mathrm{peripheral}}_{y}(i)$\;
    
    $\widehat{g}_1(i) \gets \dfrac{C_y(i)}{\hat{\rho}_n^{s_j}\binom{n-1}{r_j-1}} - n^{-1} \sum_{i=1}^{n} Y_i \widehat{Z}_{ij}$\;
}

\Return $\big(\widehat{g}_1(i)\big)_{i=1}^n$
\end{algorithm}
\begin{remark}
Pre-computation is required because the local quantities 
$Y_i\widehat{Z}_{ij}$ and $X_i\widehat{Z}_{ij}$ do not coincide with 
$\widehat{g}_1(i)$ unless the weights are properly distributed. 
The same issue arises for $\widehat{Z}_{ij}$ in the rooted-star case. 
In the quadratic setting, where only the leading term is retained, 
such pre-computations are also naturally necessary.
\end{remark}

\section{Proofs for Section \ref{subsec:rdpg-reg}}\label{App-C}
We start by introducing some notations. Let $\mathrm{P}$ be the $n \times n$ matrix for which $P_{ij} = \operatorname{pr}(A_{ij}=1 \ | \ \xi_i, \xi_j)$. By construction, the $n \times d$ matrix $Z$, where the $i$th row is $Z_i$, satisfies
$(\mathrm{P}^{\T}\mathrm{P})^{1/2}=\rho_nZZ^{\T}$. Thus, if $\mathrm{P}$ admits the spectral decomposition $\mathrm{P}=\mathrm{U}_{\mathrm{P}}\mathrm{S}_{\mathrm{P}}\mathrm{U}_{\mathrm{P}}^{\T}$, then 
\begin{align*}
\rho_nZZ^{\T}=\mathrm{U}_{\mathrm{P}}|\mathrm{S}_{\mathrm{P}}|\mathrm{U}_{\mathrm{P}}^{\T}.   
\end{align*}

Our choice of target $Z_i$ is closely related to the term $\widetilde{Z}_i$ discussed briefly in
\cite{rubin2022statistical}. As the authors point out, with this choice of target, the underlying latent positions are estimable up to orthogonal rotation rather than indefinite orthogonal rotation.  Since their main theorems involve a different target and consequently involve indefinite orthogonal rotations, we restate some of their results for our choice of target parameter below. In what follows, let $\widecheck{Z}$ be a $n \times d$ matrix where the $i$th row is the ASE scaled by $(\hat{\rho}_n/\rho_n)^{1/2}$.

\begin{lemma}[Representation of the Embedding Error under Orthogonal Rotation]\label{app-rdpg:ase-error representation}

There exists a sequence of orthogonal matrices $\mathrm{Q}_n$ such that
\begin{align}\label{app-grdpg:scaled-error}
\widecheck{Z}\,\mathrm{Q}_n - Z
= \rho_n^{-1} (\mathrm{A} - \mathrm{P})\,Z\,(Z^\T Z)^{-1} + \rho_n^{-1/2}\mathrm{R}\,\mathrm{Q}_n,
\end{align}
where:
\begin{align}
\label{eq-2-infinity-bounds}
\left\| \rho_n^{-1} (\mathrm{A} - \mathrm{P})\,Z\,(Z^\T Z)^{-1} \right\|_{\ 2 \rightarrow \infty}& = O_{p} \left( \frac{\log^c n}{(n\rho_n)^{1/2}} \right),\quad
\left\| \rho_n^{-1/2}\mathrm{R}\mathrm{Q}_n  \right\|_{ \ 2 \rightarrow \infty} = O_{p} \left( \frac{\log^{2c} n}{n\rho_n} \right).
\end{align}
\end{lemma}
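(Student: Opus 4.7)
The plan is to adapt the $2\to\infty$-norm Procrustes analysis developed in \citet{athreya2018statistical} and \citet{rubin2022statistical} to our specific target $Z$ and to the rescaling $\widecheck{Z}=(\hat{\rho}_n/\rho_n)^{1/2}\widehat{Z}$. The key structural observation is that, since $(\mathrm{P}^\T \mathrm{P})^{1/2} = \rho_n Z Z^\T = \mathrm{U}_{\mathrm{P}}|\mathrm{S}_{\mathrm{P}}|\mathrm{U}_{\mathrm{P}}^\T$, the target admits the representation $Z=\rho_n^{-1/2}\mathrm{U}_{\mathrm{P}}|\mathrm{S}_{\mathrm{P}}|^{1/2}\mathrm{W}_0$ for some orthogonal $\mathrm{W}_0$, so both $Z$ and $\widecheck{Z}$ arise from the same absolute-value eigendecomposition formalism. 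This is precisely what makes $\mathrm{Q}_n$ a genuine orthogonal rotation rather than an indefinite orthogonal transformation.

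I would then define $\mathrm{Q}_n$ as the solution to the one-sided orthogonal Procrustes problem between $\widehat{\mathrm{U}}^{(d)}|\widehat{\mathrm{S}}^{(d)}|^{1/2}$ and $\mathrm{U}_{\mathrm{P}}|\mathrm{S}_{\mathrm{P}}|^{1/2}\mathrm{W}_0$, obtained via the SVD of their cross-product. The first-order expansion follows by combining (i) the eigenvalue equations $|\mathrm{A}|\,\widehat{\mathrm{U}}^{(d)}=\widehat{\mathrm{U}}^{(d)}|\widehat{\mathrm{S}}^{(d)}|$ and $|\mathrm{P}|\,\mathrm{U}_{\mathrm{P}}=\mathrm{U}_{\mathrm{P}}|\mathrm{S}_{\mathrm{P}}|$; (ii) the matrix-valued Taylor expansion of the square-root (equivalently, absolute-value) map around $\mathrm{P}$, which reduces $|\mathrm{A}|-|\mathrm{P}|$ to $\mathrm{A}-\mathrm{P}$ up to higher-order remainders under a spectral gap; and (iii) the identity $Z(Z^\T Z)^{-1}=\rho_n^{-1/2}\mathrm{U}_{\mathrm{P}}|\mathrm{S}_{\mathrm{P}}|^{-1/2}\mathrm{W}_0$. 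Regrouping, the leading term becomes $\rho_n^{-1}(\mathrm{A}-\mathrm{P})Z(Z^\T Z)^{-1}$ and the rest is absorbed into $\rho_n^{-1/2}\mathrm{R}\,\mathrm{Q}_n$. The two $2\to\infty$-norm bounds in \eqref{eq-2-infinity-bounds} are then obtained by: an entrywise Bernstein inequality plus a union bound for $(\mathrm{A}-\mathrm{P})Z$ using $\|Z_i\|=O(1)$ and $\|(Z^\T Z)^{-1}\|_{\mathrm{op}}=O(1/n)$ for the first rate; and leave-one-out/iterative arguments as in \citet{generalized-rdpg} and \citet{rubin2022statistical} for the remainder, giving the $\log^{2c}n/(n\rho_n)$ rate under the sparsity condition $\lambda_n=\omega(n^{1/2}\log^{2c}n)$ of Theorem~\ref{theorem 5}.

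The main obstacle is tracking orthogonality carefully in the Procrustes alignment: the cited works typically produce an indefinite orthogonal alignment because they work with the signed decomposition of $\mathrm{A}$. Here, because the ASE is defined from $|\mathrm{A}|$ and the target is defined from $|\mathrm{P}|$, the alignment should collapse to an orthogonal one, but the matrix square-root map $\mathrm{M}\mapsto\mathrm{M}^{1/2}$ is only smooth away from zero eigenvalues, so a secondary difficulty will be showing that its Fréchet derivative contributes only to the higher-order remainder. The nonnegativity of $|\mathrm{S}_{\mathrm{P}}|$, together with the assumed spectral gap implied by the sparsity condition, ensures that $|\mathrm{A}|$ remains in a neighborhood of $|\mathrm{P}|$ where the square-root operator is smooth, and standard perturbation calculus for functions of matrices (see, e.g., Theorems in \citet{generalized-rdpg}) yields the claimed $O_P(\log^{2c}n/(n\rho_n))$ remainder rate.
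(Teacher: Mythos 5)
Your strategy matches what the paper does: the paper gives no standalone proof of this lemma, instead restating the $2\to\infty$ Procrustes expansion of \citet{rubin2022statistical} and \citet{generalized-rdpg} for the target built from $\rho_n ZZ^\T=\mathrm{U}_{\mathrm{P}}|\mathrm{S}_{\mathrm{P}}|\mathrm{U}_{\mathrm{P}}^\T$ (which is why the alignment is genuinely orthogonal), and your sketch is exactly that adaptation. One small algebra slip to fix when writing it out: with $Z=\rho_n^{-1/2}\mathrm{U}_{\mathrm{P}}|\mathrm{S}_{\mathrm{P}}|^{1/2}\mathrm{W}_0$ you get $Z^\T Z=\rho_n^{-1}\mathrm{W}_0^\T|\mathrm{S}_{\mathrm{P}}|\mathrm{W}_0$ and hence $Z(Z^\T Z)^{-1}=\rho_n^{1/2}\mathrm{U}_{\mathrm{P}}|\mathrm{S}_{\mathrm{P}}|^{-1/2}\mathrm{W}_0$, not $\rho_n^{-1/2}(\cdots)$; this is the exponent that makes the leading term $\rho_n^{-1}(\mathrm{A}-\mathrm{P})Z(Z^\T Z)^{-1}=\rho_n^{-1/2}(\mathrm{A}-\mathrm{P})\mathrm{U}_{\mathrm{P}}|\mathrm{S}_{\mathrm{P}}|^{-1/2}\mathrm{W}_0$ line up with the standard ASE expansion and the claimed $(n\rho_n)^{-1/2}\log^c n$ rate.
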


We are now ready to prove the CLT result in Theorem \ref{theorem 5}. 

\begin{proof}[Proof of Theorem \ref{theorem 5}.]
Define the following quantities:
\begin{align*}
\widecheck{J}^{M}_n &=
\left[
\operatorname{vec}\left( \mathrm{M}^\T_n \left( n^{-1} \widecheck{\mathrm{L}}^{\T} \widecheck{\mathrm{L}} \right) \mathrm{M}_n \right)^\T,
\quad
\left( \mathrm{M}_n^{\T} \left( n^{-1} \widecheck{\mathrm{L}}^{\T} Y\right)\right)^{\T},
\quad
\hat{\rho}_n/\rho_n
\right]^{\T}, \\
J_n &=\left[
\operatorname{vec}\left(n^{-1}\sum_{i=1}^n L_i L_i^\T\right)^\T,
\quad
\left(n^{-1} \sum_{i=1}^n L_i Y_i\right)^\T,
\quad
1+ \frac{2}{n} \sum_{i=1}^n \left( E[w(\xi_i, \xi_j) \mid \xi_i] - 1\right)
\right]^{\T},
\end{align*}
where
\[
L_i = 
\begin{bmatrix}
X_i \\
Z_i
\end{bmatrix} \in \mathbb{R}^{p+d}, 
\quad 
\mathrm{M}_n\widecheck{L}_i = 
\begin{bmatrix}
X_i \\
\mathrm{Q}_n \widecheck{Z}_i
\end{bmatrix} \in \mathbb{R}^{p+d}.
\]
And again, let $\eta^*=E(J_n)$. Consider the following decomposition:
\begin{align}\label{appB-grdpg:clt-decomp}
n^{1/2}( \widecheck{J}^{M}_n - \eta^*)
= n^{1/2}(\widecheck{J}^{M}_n - J_n) + n^{1/2}(J_n - \eta^*). 
\end{align}

To prove $\|n^{1/2}(\widecheck{J}^{M}_n - J_n)\|=o_P(1)$, it suffices to show:
\begin{align}
\label{eq-conv-prob-grdpg}
& n^{-1/2}\sum_{i=1}^n (\mathrm{Q}_n \widecheck{Z}_i - Z_i) X_i^\T \to 0,\quad   
n^{-1/2} \sum_{i=1}^n (\mathrm{Q}_n \widecheck{Z}_i - Z_i) Y_i \to 0,  \notag \\
&\quad n^{-1/2} \sum_{i=1}^n (\mathrm{Q}_n \widecheck{Z}_i \widecheck{Z}_i^\T \mathrm{Q}_n^\T - Z_i Z_i^\T) \to 0, \quad \textrm{in probability}.
\end{align}
We start 
$n^{-1/2}\sum_{i=1}^n (\mathrm{Q}_n \widecheck{Z}_i - Z_i) X_i^\T \to 0$ in probability; the second statement can be proven using analogous reasoning. Substituting \eqref{app-grdpg:scaled-error} from Lemma~\ref{app-rdpg:ase-error representation} into the average, we obtain:  
\begin{align*}
n^{-1/2} \sum_{i=1}^n (\mathrm{Q}_n \widecheck{Z}_i - Z_i) X_i^\T
&= \rho_n^{-1}n^{-1/2} (Z^\T Z)^{-1} \sum_{i=1}^n \left( (\mathrm{A} - \mathrm{P}) Z \right)_i X_i^\T 
+ \rho_n^{-1/2} n^{-1/2} \sum_{i=1}^n \mathrm{Q}_n R_i X_i^\T \notag \\
&=: T_1 + T_2.
\end{align*}

For \(T_2\), we apply the triangle inequality for the Frobenius norm and use the fact that the Frobenius norm of an outer product equals the product of the Euclidean norms of the two vectors. Specifically,  
\begin{align*}
\left\| T_2 \right\|_{\mathrm{F}}
= \left\| \rho_n^{-1/2} n^{1/2} \cdot \frac{1}{n} \sum_{i=1}^n \mathrm{Q}_n R_i X_i^\T \right\|_{\mathrm{F}} 
&\leq n^{1/2} \cdot \frac{1}{n} \sum_{i=1}^n \left\| \rho_n^{-1/2} \mathrm{Q}_nR_i X_i^\T \right\|_{\mathrm{F}} \\
&= n^{1/2} \cdot \frac{1}{n} \sum_{i=1}^n \| \rho_n^{-1/2} \mathrm{Q}_nR_i \|_2 \cdot \| X_i \|_2 \\
&\leq \max_{1 \leq i \leq n} \| \rho_n^{-1/2} \mathrm{Q}_nR_i \|_2 \cdot n^{1/2} \cdot \frac{1}{n} \sum_{i=1}^n \| X_i \|_2 \\
&= O_P\!\left( \frac{\log^{2c} n}{\rho_n n^{1/2}} \right)=  o_P(1), 
\end{align*}
where the last line follows from the  Lemma~\ref{app-rdpg:ase-error representation} and the assumption that \(E\|X_i\|_2^2 < \infty\), which implies \(n^{-1} \sum_{i=1}^n \| X_i \|_2 = O_P(1)\). Under the sparsity condition \(\rho_n = \omega( \log^{2c} n/n^{1/2})\), this term vanishes in probability.

For $T_1$, we have:
\begin{align*}
T_1 
&= \rho_n^{-1}n^{-1/2} (Z^\T Z)^{-1} \sum_{i=1}^n \left( (\mathrm{A} - \mathrm{P}) Z \right)_i X_i^\T \nonumber\\
&=\underbrace{(n\rho_n)^{-1/2}}_{I} \underbrace{(n^{-1}Z^\T Z)^{-1}}_{II} \underbrace{n^{-1}\sum_{i=1}^n \sum_{j=1}^n \rho_n^{-1/2}(A_{ij} - P_{ij}) Z_j X_i^\T}_{III}.
\end{align*}

First, under assumed sparsity condition, $I = o(1)$. Second, since the latent positions \( Z_i \) are i.i.d., the weak law of large numbers implies $n^{-1}Z^\T Z=n^{-1}\sum_{i=1}^nZ_i Z_i^\T \to E(Z_1Z_1^{\T})$ in probability, and hence the inverse, $(n^{-1}Z^\T Z)^{-1} $, converges in probability to \( \left[ E(Z_1Z_1^{\T})\right]^{-1} \). It remains to control $III$.  

Conditioned on \( \{X_i\}_{1 \leq i \leq n} \) and \( \{\xi_j\}_{1 \leq j \leq n} \), \( (A_{ij} - P_{ij}) Z_j X_i^\T \in \mathbb{R}^{d \times p} \) are independent, mean-zero random matrices. Therefore, it follows that:
\begin{align*}
\Pr\bigl(\|III\|_F^2>\v\bigr) &\leq \frac{1}{n^2 \v^2 \rho_n}\sum_{i=1}^n  \sum_{j=1}^n E[\operatorname{var}(A_{ij} \ | \ \xi) \|Z_j \|^2 \|X_i\|^2] \\ 
&\leq C \v^{-2}E\|Z\|^4 E\|X\|^4, 
\end{align*}
where last inequality follows from the Cauchy-Schwarz inequality and the assumption that $w(x, y) \leq C $; stronger moment conditions will also yield the same conclusion for appropriate unbounded graphons.

We will now establish the third claim in \eqref{eq-conv-prob-grdpg}. Notice that:
\begin{align*}
Q_n\widecheck Z_i\widecheck Z_i^\T Q_n^\T -Z_iZ_i^\T
=(Q_n\widecheck Z_i -Z_i)(Q_n\widecheck Z_i -Z_i)^\T
+(Q_n\widecheck Z_i -Z_i)Z_i^\T
+Z_i(Q_n\widecheck Z_i -Z_i)^\T.    
\end{align*}
Therefore, applying the Frobenius norm to the average, we have:
\begin{align}
\label{eq-quadratic-rdgp-bound}
\Bigl\|\,n^{-1/2}\sum_{i=1}^n\bigl(Q_n\widecheck Z_i\widecheck Z_i^\T Q_n^\T -Z_iZ_i^\T\bigr)\Bigr\|_F
&\le 
\,n^{-1/2}\sum_{i=1}^n\|Q_n\widecheck Z_i -Z_i\|^2
+2\;n^{-1/2}\sum_{i=1}^n\|Q_n\widecheck Z_i -Z_i\|\,\|Z_i\|.
\end{align}

By \eqref{eq-2-infinity-bounds} in Lemma \ref{app-rdpg:ase-error representation},
\[
n^{-1/2}\sum_{i=1}^n\|Q_n\widecheck Z_i -Z_i\|^2
\le
n^{1/2}\,\|Q_n\widecheck Z -Z\|_{2\to\infty}^2
=O_P\!\left(\frac{\log^{2c}n}{\rho_nn^{1/2}}\right).
\]

The argument for the last term in \eqref{eq-quadratic-rdgp-bound}  
follows directly from the same analysis used to bound the linear terms.  Hence, the third claim is shown.  Now, it is clear that $n^{1/2}(J_n - \eta^*)$ in \eqref{appB-grdpg:clt-decomp} converges to a multivariate normal distribution.

Furthermore, observe that:
\begin{align*}
 (\mathrm{M}_n^\T \widehat{\mathrm{L}}^\T \widehat{\mathrm{L}} \mathrm{M}_n)^{-1}(\mathrm{M}_n^\T \widehat{\mathrm{L}}^\T Y)=\mathrm{M}_n\left[(\widehat{\mathrm{L}}^\T \widehat{\mathrm{L}})^{-1}\widehat{\mathrm{L}}^\T Y\right]=\mathrm{M}_n\widehat{\beta}. 
\end{align*}
Thus, by the delta method,
$n^{1/2}  (\mathrm{M}_n\widehat{\beta} - \beta^*)$ is asymptotically normal. 
The claim follows.
\end{proof}

\subsection{Proofs for Section \ref{bootstrap:rdpg}}
Before beginning the proof of Theorem~\ref{rdpg:theorem6}, we introduce a few quantities used in the proof. Define
\begin{align*}
\widehat{J}^{\flat}_n &:=
\left[
\operatorname{vec}\!\left( \mathrm{M}_n^{\T} \left( n^{-1} \widehat{\mathrm{L}}^{\T} \widehat{\mathrm{L}} \right)^{\flat} \mathrm{M}_n \right)^{\T},\;
\left( \mathrm{M}_n^{\T} \left( n^{-1} \widehat{\mathrm{L}}^{\T} Y \right)^{\flat} \right)^{\T},\;
\hat{\rho}^\flat_n/\hat{\rho}_n
\right]^{\T}.
\end{align*}

Following the bootstrap procedure in Section~\ref{bootstrap:rdpg}, its bootstrap covariance is given by
\begin{align*}
\operatorname{cov}^{\flat}(\widehat{J}_n^{\flat}) 
&= \frac{1}{ n^2 } \sum_{i=1}^{n}
 \begin{bmatrix}
 \widehat{A}_i \\
 \widehat{B}_i \\
 \widehat{C}_i
 \end{bmatrix}
 \begin{bmatrix}
 \widehat{A}_i^{\T} & \widehat{B}_i^{\T} & \widehat{C}_i
 \end{bmatrix},
\end{align*}
where
\begin{align*}
\widehat{A}_i &= \operatorname{vec}\!\left( \mathrm{M}_n \left( \widehat{L}_i \widehat{L}_i^\T - \tfrac{1}{n} \sum_{j=1}^n \widehat{L}_j \widehat{L}_j^\T \right) \mathrm{M}_n^\T \right), \quad
\widehat{B}_i =  \mathrm{M}_n \left( \widehat{L}_i Y_i - \tfrac{1}{n} \sum_{j=1}^n \widehat{L}_j Y_j \right), \\
\widehat{C}_i &= 2\left( \frac{1}{(n - 1)\hat{\rho}_n} \sum_{j \ne i} A_{ij} - 1 \right).
\end{align*}
Finally, let
\[
\widehat{\Sigma}^\flat := n\operatorname{cov}^{\flat}(\widehat{J}_n^{\flat}).
\] 
A key step in establishing the desired bootstrap consistency is establishing the consistency of $\widehat{\Sigma}^\flat$, which is proven below.  

\begin{lemma}\label{lem-grdpg-boot-variance-consistency}
Under the conditions of Theorem \ref{rdpg:theorem6},

\begin{align*}
\widehat{\Sigma}^\flat \to \Sigma_\Psi \quad \text{in probability}.
\end{align*}  
\end{lemma}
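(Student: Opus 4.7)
The plan is to express $\widehat{\Sigma}^\flat$ as the empirical covariance of $n$ aligned summands and to reduce it to the empirical covariance of i.i.d.\ quantities whose population covariance is precisely $\Sigma_\Psi$. Set $\widetilde{L}_i := \mathrm{M}_n \widehat{L}_i = (X_i^\T, (\mathrm{Q}_n\widehat{Z}_i)^\T)^\T$ and $L_i = (X_i^\T, Z_i^\T)^\T$, and stack the centered pieces into
\[
\zeta_i^\flat := \bigl(\operatorname{vec}(\widetilde{L}_i\widetilde{L}_i^\T - \overline{\widetilde{L}\widetilde{L}^\T})^\T,\ (Y_i\widetilde{L}_i - \overline{Y\widetilde{L}})^\T,\ 2\bigl\{(n-1)^{-1}\hat{\rho}_n^{-1}{\textstyle\sum_{j\neq i}}A_{ij} - 1\bigr\}\bigr)^\T,
\]
so that $\widehat{\Sigma}^\flat = n^{-1}\sum_{i=1}^n \zeta_i^\flat(\zeta_i^\flat)^\T$. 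Let $\zeta_i$ denote the same stack with $\widetilde{L}_i$ replaced by $L_i$ and with the degree term replaced by $2(E[w(\xi_i,\xi_j)\mid\xi_i]-1)$. First I would invoke the weak law of large numbers for i.i.d.\ random matrices to conclude $n^{-1}\sum_{i=1}^n \zeta_i\zeta_i^\T \to \Sigma_\Psi$ in probability; the limit coincides with $\Sigma_\Psi$ because, by construction in the proof of Theorem~\ref{theorem 5}, $\Sigma_\Psi$ is the population covariance of the centered i.i.d.\ vector whose entries are $\operatorname{vec}(L_iL_i^\T)$, $Y_iL_i$, and $2E[w(\xi_i,\xi_j)\mid\xi_i]$. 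The required second moments are finite under $E(Y^4)<\infty$, $E\|X\|^4<\infty$, and $w(x,y)\leq C$, the last bounding $\|Z_i\|$ uniformly.

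Next I would bound the perturbation $\|n^{-1}\sum_i \zeta_i^\flat(\zeta_i^\flat)^\T - n^{-1}\sum_i \zeta_i\zeta_i^\T\|_F$ using the uniform two-to-infinity bound from Lemma~\ref{app-rdpg:ase-error representation}. Combined with $\hat{\rho}_n/\rho_n \to 1$ in probability, this yields
\[
\Delta_n := \max_{1\leq i \leq n}\|\mathrm{Q}_n\widehat{Z}_i - Z_i\|_2 = O_P\bigl(\log^{2c} n / (n\rho_n)^{1/2}\bigr) = o_P(1)
\]
under the assumed sparsity $\lambda_n = \omega(n^{1/2}\log^{2c} n)$. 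Expanding $\zeta_i^\flat(\zeta_i^\flat)^\T - \zeta_i\zeta_i^\T$ into a finite sum of cross products in which at least one factor involves $\mathrm{Q}_n\widehat{Z}_i - Z_i$, one can peel off that factor through $\Delta_n$ while retaining the remaining factors under $n^{-1}\sum_i$, which stay $O_P(1)$ by Cauchy--Schwarz and the fourth-moment assumptions. The centered means $\overline{\widetilde{L}\widetilde{L}^\T}$ and $\overline{Y\widetilde{L}}$ converge in probability to $E(LL^\T)$ and $E(YL)$ by the same calculations used in the proof of Theorem~\ref{theorem 5}, so centering does not alter the limit. The degree slot is handled analogously, noting that $(n-1)^{-1}\hat{\rho}_n^{-1}\sum_{j\neq i}A_{ij}$ is a local $U$-statistic of order one with vanishing node-wise variance by an argument parallel to Lemma~\ref{appB-lem:bootstrap-lemma-1}.

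The main obstacle will be the purely quadratic slot, namely the block $n^{-1}\sum_i\operatorname{vec}(\widetilde{L}_i\widetilde{L}_i^\T)\operatorname{vec}(\widetilde{L}_i\widetilde{L}_i^\T)^\T$, which produces \emph{fourth}-order products of $\widetilde{L}_i$ and, after subtracting its $\zeta_i$-analog, fourth-order products of $\mathrm{Q}_n\widehat{Z}_i - Z_i$. A naive Cauchy--Schwarz would ask for $n \Delta_n^4 = o_P(1)$, which is not guaranteed by our sparsity assumption. The remedy is to peel off at most two factors of $\mathrm{Q}_n\widehat{Z}_i - Z_i$ through $\Delta_n^2$ while keeping the other two inside the average, producing bounds of order $\Delta_n \cdot \bigl(n^{-1}\sum_i \|Z_i\|^2\|\widetilde{L}_i\|^2\bigr)^{1/2}$ that are $o_P(1)$ under only second moments of $Z_i$ and $\widetilde{L}_i$. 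A similar peeling controls the cross-covariance between the quadratic block and the degree block; asymptotic decoupling at the required rate follows because the degree summand is a local $U$-statistic whose leading projection onto $\xi_i$ is uncorrelated with the dominant linear term $\rho_n^{-1}(\mathrm{A}-\mathrm{P})Z(Z^\T Z)^{-1}$ in the embedding expansion of Lemma~\ref{app-rdpg:ase-error representation}.
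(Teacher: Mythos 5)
Your proposal is correct and follows essentially the same route as the paper: a weak law of large numbers for the ideal i.i.d.\ empirical covariance built from $(L_i, Y_i, E[w(\xi_i,\xi_j)\mid\xi_i])$, followed by a perturbation analysis of each block driven by the uniform $2\to\infty$ bound of Lemma~\ref{app-rdpg:ase-error representation}, peeling off $\max_i\|\mathrm{Q}_n\widehat{Z}_i-Z_i\|$ factors while keeping the remaining factors inside the empirical average so that only second moments are needed. One minor quibble: no term of the form $n\Delta_n^4$ ever arises here (the target is a covariance consistency statement with $n^{-1}$ scaling, not a $n^{-1/2}$-scaled CLT), so the worst quartic contribution is just $\Delta_n^4=o_P(1)$, and the cross term with the degree block needs no decorrelation argument—plain Cauchy--Schwarz suffices, exactly as in the paper.
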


\begin{proof}
Define
\begin{align*}
A_i = \operatorname{vec}\left( L_i L_i^\T - \frac{1}{n} \sum_{j=1}^n L_j L_j^\T \right), \quad
B_i =  L_i Y_i - \frac{1}{n} \sum_{j=1}^n L_j Y_j,
\quad
C_i = 2 \left( E[w(\xi_i, \xi_j) \mid \xi_i] - 1\right).
\end{align*}
And let 
\begin{align*}
\widehat{\Sigma}:=\frac{1}{n} \sum_{i=1}^n
\begin{bmatrix}
A_i \\
B_i \\
C_i
\end{bmatrix}
\begin{bmatrix}
A_i^\T & B_i^\T & C_i
\end{bmatrix}.       
\end{align*} 

Since $(A_i, B_i, C_i)$ are i.i.d., we have $\widehat{\Sigma} \to \Sigma_\Psi$ in probability by the consistency of empirical covariances. It only remains to establish the following result:
\begin{align*}
\widehat{\Sigma}^\flat - \widehat{\Sigma} \to 0 \quad \text{in probability}.
\end{align*}

The term $n^{-1} \sum_{i=1}^{n} (\widehat{C}^2_i-C_i^2) \to 0$ in probability by Lemma \ref{appB-lem:bootstrap-lemma-3} and Lemma \ref{appB-lem:bootstrap-lemma-4}.  Moreover, by Lemma \ref{appB-grdpg:lem-cross-noise}, 
\begin{align*}
\frac{1}{n} \sum_{i=1}^{n} \left( \widehat{A}_i \widehat{C}_i - A_i C_i\right) \to 0
\quad
\frac{1}{n} \sum_{i=1}^{n} \left( \widehat{B}_i \widehat{C}_i - B_i C_i\right) \to 0 \quad \textrm{in probability}.
\end{align*}
It remains to show:
\begin{align}\label{appB-grdpg:boot_var_noise_3terms}
&\frac{1}{n} \sum_{i=1}^{n} \left( \widehat{A}_i \widehat{A}_i^\T - A_i A_i^\T \right) \to 0, \quad
\frac{1}{n} \sum_{i=1}^{n} \left(\widehat{A}_i\widehat{B}_i^\T - A_iB_i^\T\right) \to 0, \notag\\
& \text{and}\,\,
\frac{1}{n} \sum_{i=1}^{n} \left(\widehat{B}_i\widehat{B}_i^\T - B_iB_i^\T\right) \to 0, \quad \textrm{in probability}.
\end{align}

We begin by expanding the first term of \eqref{appB-grdpg:boot_var_noise_3terms}:
\begin{align}\label{appB-grdpg:A_squared_decomp}
\frac{1}{n} \sum_{i=1}^{n} \left( \widehat{A}_i \widehat{A}_i^\T - A_i A_i^\T \right)
&= \frac{1}{n} \sum_{i=1}^{n} (\widehat{A}_i - A_i)(\widehat{A}_i - A_i)^\T \nonumber\\
&\quad + \frac{1}{n} \sum_{i=1}^{n} A_i (\widehat{A}_i - A_i)^\T 
+ \frac{1}{n} \sum_{i=1}^{n} (\widehat{A}_i - A_i) A_i^\T.
\\ &= I_A + II_A + III_A, \text{ say.} \nonumber
\end{align}

Define
\begin{align*}
\Delta_i &:= \operatorname{vec}\left( \mathrm{M}_n \widehat{L}_i \widehat{L}_i^\T \mathrm{M}_n^\T - L_i L_i^\T \right), \\
\widebar{\Delta}_n &:= \operatorname{vec}\left( \frac{1}{n} \sum_{j=1}^n \mathrm{M}_n \widehat{L}_j \widehat{L}_j^\T \mathrm{M}_n^\T - \frac{1}{n} \sum_{j=1}^n L_j L_j^\T \right),
\end{align*}
so that
\begin{align*}
\widehat{A}_i - A_i = \Delta_i - \widebar{\Delta}_n \quad \text{and} \quad \widebar{\Delta}_n = \frac{1}{n} \sum_{j=1}^n \Delta_j.
\end{align*}
Now, observe that,
\begin{align}
I_A =\frac{1}{n} \sum_{i=1}^n (\Delta_i - \widebar{\Delta}_n)(\Delta_i - \widebar{\Delta}_n)^\T \nonumber 
= \frac{1}{n} \sum_{i=1}^n \Delta_i \Delta_i^\T - \widebar{\Delta}_n \widebar{\Delta}_n^\T. 
\end{align}

Thus, to show $I_A \rightarrow 0$ in probability, it suffices to show both $n^{-1} \sum_{i=1}^n \Delta_i \Delta_i^\T \to 0$ and $\widebar{\Delta}_n \widebar{\Delta}_n^\T \to 0$ hold in probability. Notice that:

\begin{align*}
\left| \frac{1}{n} \sum_{i=1}^n \Delta_{i,j} \Delta_{i,k} \right| \leq \frac{1}{n} \sum_{i=1}^n \left| \Delta_{i,j} \Delta_{i,k} \right| \leq \frac{1}{n} \sum_{i=1}^n \| \Delta_i \|_2^2.
\end{align*}
Therefore, it suffices to show that:
\begin{align}\label{appB-grdpg:2-block-norm}
\frac{1}{n} \sum_{i=1}^n \left\| X_i (\mathrm{Q}_n \widehat{Z}_i - Z_i)^\T \right\|_F^2 \to 0
\quad \text{and} \quad
\frac{1}{n} \sum_{i=1}^n \left\| \mathrm{Q}_n \widehat{Z}_i \widehat{Z}_i^\T \mathrm{Q}_n^\T - Z_i Z_i^\T \right\|_F^2 \to 0, \quad  \textrm{in probability}. 
\end{align}

We further bound the first term in \eqref{appB-grdpg:2-block-norm} as follows:
\begin{align*}
\frac{1}{n} \sum_{i=1}^n \left\| X_i (\mathrm{Q}_n \widehat{Z}_i - Z_i)^\T \right\|_F^2 &=\frac{1}{n} \sum_{i=1}^n 
   \| X_i \|_2^2 \cdot \| \mathrm{Q}_n \widehat{Z}_i - Z_i \|_2^2 \notag\\
&\quad \le 
3\!\left(\frac{\rho_n}{\widehat{\rho}_n}-1\right)^{\!2}
\left[
\frac{1}{n} \sum_{i=1}^n 
   \| X_i \|_2^2
   \left\| \mathrm{Q}_n \widecheck{Z}_i - Z_i \right\|_2^2
\right]\\
&\qquad +
3\!\left(\frac{\rho_n}{\widehat{\rho}_n}-1\right)^{\!2}
\left[
\frac{1}{n} \sum_{i=1}^n 
   \| X_i \|_2^2
   \| Z_i \|_2^2
\right] \notag
+
\frac{3}{n} \sum_{i=1}^n 
   \| X_i \|_2^2
   \left\| \mathrm{Q}_n \widecheck{Z}_i - Z_i \right\|_2^2.
\end{align*}
The first term in the above expression is of lower order than the third term since $(\rho_n / \hat{\rho}_n - 1)=O_P(n^{-1/2})$. Moreover, for the second term, we have:
\begin{align*}
\left(\frac{\rho_n}{\widehat{\rho}_n}-1\right)\cdot 
\frac{1}{n}\sum_{i=1}^n \|X_i\|_2^{2}\,\|Z_i\|_2^{2}
&= O_P(n^{-1/2}) \cdot O_P(1)=  o_P(1). 
\end{align*}

For the third term, we upper-bound it by 
\begin{align*}
\frac{3}{n} \sum_{i=1}^n \left( \| X_i \|_2^2 \cdot \| \mathrm{Q}_n \widecheck{Z}_i - Z_i \|_2^2 \right)
&\leq \left( \frac{3}{n} \sum_{i=1}^n \| X_i \|_2^2 \right) \cdot \max_{1 \leq i \leq n} \| \mathrm{Q}_n \widecheck{Z}_i - Z_i \|_2^2\\
&= O_P(1) \cdot O_P\left( \frac{\log^{2c} n}{n \rho_n} \right)= o_P(1), 
\end{align*}
by applying the 2-to-$\infty$ norm bound in Lemma \ref{app-rdpg:ase-error representation}.

Next, we decompose the other term, $n^{-1} \sum_{i=1}^n \left\| \mathrm{Q}_n \widehat{Z}_i \widehat{Z}_i^\T \mathrm{Q}_n^\T - Z_i Z_i^\T \right\|_F^2$, of \eqref{appB-grdpg:2-block-norm} as follows:
\begin{align*}
\left\| \mathrm{Q}_n \widehat{Z}_i \widehat{Z}_i^\T \mathrm{Q}_n^\T - Z_i Z_i^\T \right\|_F^2 
&= \left\| (\mathrm{Q}_n \widehat{Z}_i - Z_i)(\mathrm{Q}_n \widehat{Z}_i - Z_i)^\T 
+ (\mathrm{Q}_n \widehat{Z}_i - Z_i) Z_i^\T 
+ Z_i (\mathrm{Q}_n \widehat{Z}_i - Z_i)^\T \right\|_F^2 \\
&\leq 3 \left\| (\mathrm{Q}_n \widehat{Z}_i - Z_i)(\mathrm{Q}_n \widehat{Z}_i - Z_i)^\T \right\|_F^2 
+ 3 \left\| (\mathrm{Q}_n \widehat{Z}_i - Z_i) Z_i^\T \right\|_F^2 \\
& \qquad  + 3 \left\| Z_i (\mathrm{Q}_n \widehat{Z}_i - Z_i)^\T \right\|_F^2 \\
&= 3 \| \mathrm{Q}_n \widehat{Z}_i - Z_i \|_2^4 
+ 6 \| \mathrm{Q}_n \widehat{Z}_i - Z_i \|_2^2 \; \| Z_i \|_2^2.
\end{align*}

Average the above expression over \( i \), 
we obtain:
\begin{align*}
\frac{1}{n} \sum_{i=1}^n \left\| \mathrm{Q}_n \widehat{Z}_i \widehat{Z}_i^\T \mathrm{Q}_n^\T - Z_i Z_i^\T \right\|_F^2 
&\leq 3 \max_{1 \leq i \leq n}\| \mathrm{Q}_n \widecheck{Z}_i - Z_i \|^4 
+ 6 \max_{1 \leq i \leq n}\| \mathrm{Q}_n \widecheck{Z}_i - Z_i \|^2 \;  \left( \frac{1}{n} \sum_{i=1}^n \| Z_i \|_2^2 \right).
\end{align*}

Under the assumption that $w(x,y) \leq C$, we conclude:
\[
\frac{1}{n} \sum_{i=1}^n \left\| \mathrm{Q}_n \widehat{Z}_i \widehat{Z}_i^\T \mathrm{Q}_n^\T - Z_i Z_i^\T \right\|_F^2 
= O_P\left( \frac{\log^{4c} n}{(n \rho_n)^2} + \frac{\log^{2c} n}{n \rho_n} \right)
= O_P\left( \frac{\log^{2c} n}{n \rho_n} \right)= o_P(1).
\]

Next, we claim that \( \widebar{\Delta}_n \widebar{\Delta}_n^\T \to 0 \) in probability. Recall that \( \widebar{\Delta}_n = n^{-1} \sum_{j=1}^n \Delta_j \). To handle the outer product, we directly appeal to the vector norm:
\begin{align*}
\left\| \widebar{\Delta}_n \widebar{\Delta}_n^\T \right\|_{F} = \| \widebar{\Delta}_n \|_2^2\leq \frac{1}{n} \sum_{j=1}^n \| \Delta_j \|_2^2,
\end{align*}
we conclude 
$n^{-1} \sum_{i=1}^{n} (\widehat{A}_i - A_i)(\widehat{A}_i - A_i)^\T \to 0$ in probability.

We next analyze the remaining linearized noise terms of \eqref{appB-grdpg:A_squared_decomp}:
\begin{align*}
\frac{1}{n} \sum_{i=1}^{n} A_i (\widehat{A}_i - A_i)^\T 
\quad \text{and} \quad 
\frac{1}{n} \sum_{i=1}^{n} (\widehat{A}_i - A_i) A_i^\T.    
\end{align*}

By the decomposition \( \widehat{A}_i - A_i = \Delta_i - \widebar{\Delta}_n \), we rewrite:
\begin{align*}
\frac{1}{n} \sum_{i=1}^{n} A_i (\widehat{A}_i - A_i)^\T 
&= \frac{1}{n} \sum_{i=1}^{n} A_i \Delta_i^\T 
- \left( \frac{1}{n} \sum_{i=1}^{n} A_i \right) \widebar{\Delta}_n^\T \\
&= \frac{1}{n} \sum_{i=1}^{n} A_i \Delta_i^\T,
\end{align*}
since \( n^{-1} \sum_{i=1}^{n} A_i = 0 \) by construction.

To bound the term \( \frac{1}{n} \sum_{i=1}^{n} A_i \Delta_i^\T \), we apply the triangle inequality and the Cauchy–Schwarz inequality:
\begin{align*}
\left\| \frac{1}{n} \sum_{i=1}^{n} A_i \Delta_i^\T \right\|_F 
&\leq \frac{1}{n} \sum_{i=1}^{n} \|A_i \Delta_i^\T\|_{F} \nonumber \\
&\leq \left( \frac{1}{n} \sum_{i=1}^{n} \| A_i \|_2^2 \right)^{1/2} 
\left( \frac{1}{n} \sum_{i=1}^{n} \| \Delta_i \|_2^2 \right)^{1/2},
\end{align*}
for which, we have already established that $n^{-1} \sum_{i=1}^{n} \| \Delta_i \|_2^2 \to 0$ in probability. Hence, under $E \| A_i \|_2^2 < \infty$, which is guaranteed by $E\|X_1\|_2^4<\infty$ and the assumption that $w(x,y) \leq C$, it then follows that:
\[
\left\| \frac{1}{n} \sum_{i=1}^{n} A_i (\widehat{A}_i - A_i)^\T \right\|_F= O_P\left( \frac{\log^{c} n}{(n\rho_n)^{1/2}} \right)= o_P(1),
\]
similar for the transpose term, which is $n^{-1} \sum_{i=1}^{n} (\widehat{A}_i - A_i) A_i^\T \to 0$ in probability.

This verifies those two linearized terms in \eqref{appB-grdpg:A_squared_decomp} and concludes $n^{-1} \sum_{i=1}^{n} \left( \widehat{A}_i \widehat{A}_i^\T - A_i A_i^\T \right) \to 0$ in probability.

The proof of the second claim in \eqref{appB-grdpg:boot_var_noise_3terms} largely follows the same reasoning; we describe the first few steps in this argument below. We decompose this term as: 
\begin{align}
\begin{split}
\frac{1}{n} \sum_{i=1}^{n} \left(\widehat{A}_i\widehat{B}_i^\T - A_iB_i^\T\right)&= \frac{1}{n} \sum_{i=1}^{n} \left(\widehat{A}_i - A_i\right)\left(\widehat{B}_i - B_i\right)^\T\\ 
& \quad + \frac{1}{n} \sum_{i=1}^{n} A_i\left(\widehat{B}_i - B_i\right)^\T 
+ \frac{1}{n} \sum_{i=1}^{n} \left(\widehat{A}_i - A_i\right) B_i^\T. \label{cross-prodct:remainder term}
\end{split}
\end{align}

Define the quantities:
\begin{align*}
\Upsilon_i := \operatorname{vec}\left( \mathrm{M}_n \widehat{L}_i Y_i - L_i Y_i \right), \quad
\widebar{\Upsilon}_n := \frac{1}{n} \sum_{j=1}^{n} \Upsilon_j,    
\end{align*}
so that:
\begin{align*}
\widehat{B}_i - B_i = \Upsilon_i  - \widebar{\Upsilon}_n.    
\end{align*}

Thus, the first term in \eqref{cross-prodct:remainder term} becomes:
\begin{align*}
\frac{1}{n} \sum_{i=1}^{n} (\widehat{A}_i - A_i)(\widehat{B}_i - B_i)^\T
= \frac{1}{n} \sum_{i=1}^{n} \Delta_i \Upsilon_i ^\T - \widebar{\Delta}_n \widebar{\Upsilon}_n^\T.    
\end{align*}

We start by controlling the cross-product term $\frac{1}{n} \sum_{i=1}^{n} \Delta_i \Upsilon_i ^\T$:
\begin{align*}
\left| \frac{1}{n} \sum_{i=1}^n \Delta_{i,j} \Upsilon_{i,k} \right| \leq \frac{1}{n} \sum_{i=1}^n \left| \Delta_{i,j} \Upsilon_{i,k} \right| \leq \frac{1}{n} \sum_{i=1}^{n} \| \Delta_i \|_2 \cdot \| \Upsilon_i  \|_2 \leq \left( \frac{1}{n} \sum_{i=1}^{n} \| \Delta_i \|_2^2 \right)^{1/2}
\left( \frac{1}{n} \sum_{i=1}^{n} \| \Upsilon_i  \|_2^2 \right)^{1/2}. 
\end{align*}

We have already shown that $
\frac{1}{n} \sum_{i=1}^{n} \| \Delta_i \|_2^2 \to 0$ in  probability. Averaging over $i$, we upper bound it by:
\begin{align*}
\frac{1}{n} \sum_{i=1}^{n}\| \Upsilon_i  \|_2^2= \frac{1}{n} \sum_{i=1}^{n} \| \mathrm{Q}_n \widehat{Z}_i - Z_i \|_2^2 \cdot Y_i^2
&\leq \left( \frac{1}{n} \sum_{i=1}^{n} Y_i^2 \right) \cdot \max_{1 \leq i \leq n} \| \mathrm{Q}_n \widecheck{Z}_i - Z_i \|_2^2 \\
&= O_P(1) \cdot O_P\left( \frac{\log^{2c} n}{n \rho_n} \right) = O_P\left( \frac{\log^{2c} n}{n \rho_n} \right). \nonumber
\label{eq:Y_cross_term_bound}
\end{align*}
Thus, $n^{-1}\sum_{i=1}^{n} \Delta_i \Upsilon_i ^\T \rightarrow 0 $ in probability.  The rest of the argument is analogous to the first term.  The third claim in \eqref{appB-grdpg:boot_var_noise_3terms} follows identical reasoning. 
\end{proof}

\begin{lemma}[Cross--term Noise]\label{appB-grdpg:lem-cross-noise}
With the quantities defined as before, and under the conditions of Theorem \ref{rdpg:theorem6}, the following holds:
\begin{align*}
\frac{1}{n} \sum_{i=1}^{n} \!\left( \widehat{A}_i \widehat{C}_i - A_i C_i \right) \to  0,
\qquad
\frac{1}{n} \sum_{i=1}^{n} \!\left( \widehat{B}_i \widehat{C}_i - B_i C_i \right) \to  0 \quad\textrm{in probability}.    
\end{align*}
\end{lemma}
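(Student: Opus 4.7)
The plan is to mirror the three-term decomposition used in Lemma~\ref{lem-grdpg-boot-variance-consistency} and reduce both claims to Cauchy--Schwarz bounds against quantities that have already been controlled. For the first claim, I would write
\begin{align*}
\widehat{A}_i \widehat{C}_i - A_i C_i
  &= (\widehat{A}_i - A_i)(\widehat{C}_i - C_i)
     + A_i(\widehat{C}_i - C_i)
     + (\widehat{A}_i - A_i) C_i,
\end{align*}
average over $i$, and apply Cauchy--Schwarz coordinatewise to each of the three empirical inner products. The analogous decomposition handles $\widehat{B}_i \widehat{C}_i - B_i C_i$, so it suffices to verify, in probability, the bounds
$n^{-1}\sum_i \|\widehat{A}_i-A_i\|^2\to 0$, $n^{-1}\sum_i \|\widehat{B}_i-B_i\|^2\to 0$, and $n^{-1}\sum_i(\widehat{C}_i-C_i)^2\to 0$, together with the $O_P(1)$ bounds $n^{-1}\sum_i\|A_i\|^2=O_P(1)$, $n^{-1}\sum_i\|B_i\|^2=O_P(1)$, and $n^{-1}\sum_i C_i^2=O_P(1)$.

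The first two of the vanishing bounds are already established inside the proof of Lemma~\ref{lem-grdpg-boot-variance-consistency} (those are precisely the terms ``$I_A$'' and its $B$-analog there, which combine the $2\to\infty$ embedding error from Lemma~\ref{app-rdpg:ase-error representation} with the $E\|X\|^4<\infty$, $E(Y^4)<\infty$ conditions and the plug-in error $\hat\rho_n/\rho_n-1=O_P(n^{-1/2})$). The uniform moment bounds on $A_i$, $B_i$, $C_i$ follow from $w\le C$, $E\|X\|^4<\infty$ and $E(Y^4)<\infty$ together with standard LLN arguments.

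The only genuinely new ingredient is the control of $n^{-1}\sum_i(\widehat{C}_i-C_i)^2$. I would observe that $\widehat{C}_i/2$ is exactly the rooted one-edge local subgraph estimator at node $i$ (a normalized degree), while $C_i/2=E[w(\xi_i,\xi_j)\mid\xi_i]-1$ is its latent-space analog. Writing
\begin{align*}
\tfrac{1}{2}(\widehat{C}_i - C_i)
  &= \left(\tfrac{\rho_n}{\hat\rho_n}-1\right)\!\left[\tfrac{1}{(n-1)\rho_n}\!\sum_{j\ne i}A_{ij}\right]
     + \left[\tfrac{1}{(n-1)\rho_n}\!\sum_{j\ne i}A_{ij}- E[w(\xi_i,\xi_j)\mid\xi_i]\right],
\end{align*}
the first summand is $O_P(n^{-1/2})$ times an $O_P(1)$ factor uniformly in $i$ on average, and the second has $O(1/n)$ mean square by the conditional variance calculation in Lemma~\ref{appB-lem:bootstrap-lemma-1} specialized to $(r_u,s_u)=(2,1)$. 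Taking expectations, $E(\widehat{C}_i-C_i)^2=o(1)$ uniformly in $i$, so Markov's inequality gives $n^{-1}\sum_i(\widehat{C}_i-C_i)^2=o_P(1)$.

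The main (mild) obstacle is bookkeeping: one must separate the plug-in fluctuation from $\hat\rho_n/\rho_n$ in $\widehat{C}_i$ cleanly so that it does not spuriously inflate the cross-term with $\widehat{A}_i$ or $\widehat{B}_i$. Once the three vanishing-in-mean-square bounds above are in hand, combining them with $n^{-1}\sum_i\|A_i\|^2=O_P(1)$ and $n^{-1}\sum_i C_i^2=O_P(1)$ via Cauchy--Schwarz finishes both claims, exactly in parallel with \eqref{appB-grdpg:boot_var_noise_3terms}.
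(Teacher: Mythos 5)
Your proposal is correct and follows essentially the same route as the paper: the identical three-term decomposition, Cauchy--Schwarz against the already-established bounds $n^{-1}\sum_i\|\Delta_i\|_2^2=o_P(1)$ and $n^{-1}\sum_i\|A_i\|^2=O_P(1)$, and a separate treatment of $\widehat{C}_i-C_i$ splitting off the plug-in error from $\hat\rho_n$. The only minor imprecision is that your second summand $\tfrac{1}{(n-1)\rho_n}\sum_{j\ne i}A_{ij}-E[w(\xi_i,\xi_j)\mid\xi_i]$ mixes the network noise (mean square $O(1/\lambda_n)$, not $O(1/n)$, controlled by the paper's Lemma~\ref{appB-lem:bootstrap-lemma-1}) with the latent i.i.d.\ fluctuation $\widetilde a_i-a_i$ (which the paper isolates and handles via Lemma~\ref{appB-lem:bootstrap-lemma-4}); both are $o(1)$ under the stated sparsity, so the conclusion stands.
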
 
\begin{proof}
Denote
\begin{align*}
\widehat a_i&:=\frac{1}{(n-1)\,\widehat\rho_n}\sum_{j\ne i}A_{ij},&
a_i&:=E[w(\xi_i,\xi_j)\mid \xi_i].&
\textrm{Hence,}\quad \widehat C_i&=2(\widehat a_i-1),\; C_i=2(a_i-1).
\end{align*}

We start by proving the first claim. Note that:
\begin{align*}
\frac{1}{n}\sum_{i=1}^n(\widehat A_i\widehat C_i-A_iC_i)
&=\frac{1}{n}\sum_{i=1}^n\big[(\widehat A_i-A_i)C_i+A_i(\widehat C_i-C_i)+(\widehat A_i-A_i)(\widehat C_i-C_i)\big].
\end{align*}

Thus, $n^{-1}\sum_{i=1}^{n} \big( \widehat{A}_i \widehat{C}_i - A_i C_i \big) \to 0$ in probability whenever 
\begin{align*}
&n^{-1}\sum_{i=1}^n \|\Delta_i\|_2^2= o_P(1),  \quad
n^{-1}\sum_{i=1}^n \|A_i\|^2=O_P(1),\quad
n^{-1}\sum_{i=1}^n (a_i-1)^2=O_P(1),\quad\\
&\text{and}\,\,
n^{-1}\sum_{i=1}^n (\widehat a_i-a_i)^2= o_P(1).    
\end{align*}

For the last term, introduce $\widetilde a_i:=\sum_{j\ne i}w(\xi_i,\xi_j)/(n-1)$ and note
\begin{align*}
\widehat a_i-a_i&=(\widehat a_i-\widetilde a_i)+(\widetilde a_i-a_i),\\
n^{-1}\sum_{i=1}^n(\widehat a_i-a_i)^2
&\le 2\Big\{n^{-1}\sum_{i=1}^n(\widehat a_i-\widetilde a_i)^2+n^{-1}\sum_{i=1}^n(\widetilde a_i-a_i)^2\Big\},
\end{align*}
where the two terms on the upper bound are handled separately in Lemma \ref{appB-lem:bootstrap-lemma-3} and Lemma \ref{appB-lem:bootstrap-lemma-4}. 

Furthermore, $n^{-1} \sum_{i=1}^{n} \!\left( \widehat{B}_i \widehat{C}_i - B_i C_i \right) \to  0$ in probability, by observing  $n^{-1}\sum_{i=1}^n \|\Upsilon_i\|_2^2=o_P(1)$  as well as $n^{-1}\sum_{i=1}^n\|B_i\|^2=O_P(1)$  under the conditions of Theorem \ref{rdpg:theorem6}. The claim follows.
\end{proof}

We are now ready to prove Theorem~\ref{rdpg:theorem6}.

\begin{proof}[Proof of Theorem \ref{rdpg:theorem6}.]
By Lemma \ref{lem-grdpg-boot-variance-consistency}, $\widehat{\Sigma}^\flat \to \Sigma_\Psi$.  Since the multipliers are Gaussian, this in turn implies:
\begin{align*}
\sup_{u \in \mathbb{R}^{q+1}} \Big| 
\operatorname{pr}^{\flat} \!\left( n^{1/2}(\widehat{J}_n^{\flat} - \widehat{J}^{M}_n) \leq u \right)
-
\operatorname{pr} \!\left( n^{1/2}(\widecheck{J}^{M}_n - \eta^*) \leq u \right)
\Big| &\;\to 0 \quad \text{in probability}.   
\end{align*}
where $\widehat{J}^{M}_n=\big[ \operatorname{vec}\!\big( \mathrm{M}_n^{\T}( n^{-1} \widehat{\mathrm{L}}^{\T} \widehat{\mathrm{L}} ) \mathrm{M}_n \big)^{\T},\; 
\big( \mathrm{M}_n^{\T}( n^{-1} \widehat{\mathrm{L}}^{\T} Y )\big)^{\T},\; 1 \big]^{\T}$.

Furthermore, the delta method for the bootstrap (Theorem~23.5 of \cite{vanderVaart1998}), provided \(f \circ g\) is continuously differentiable in a neighborhood of \(\eta^*\), as ensured by Lemma \ref{appB:OLS_functional_property}, the claim of Theorem \ref{rdpg:theorem6} follows.
\end{proof}

\begin{proof}[Proof of Corollary~\ref{hypothesis testing network}.]
By Theorem~\ref{rdpg:theorem6},  
\[
   \sup_{u \in \mathbb{R}^{p+d}} 
   \left| \operatorname{pr}^{\flat} \left(n^{1/2}\mathrm{M}_n(\widehat{\beta}_{\mathrm{z}}^{\flat}-\widehat{\beta}_{\mathrm{z}}) \leq u\right) 
   - \operatorname{pr} \left( \Sigma_\beta^{1/2} Z \leq u \right)\right| \to 0 
   \quad \text{in probability}.
\]
Moreover, Theorem~\ref{theorem 5} yields
\[
n^{1/2}\mathrm{M}_n(\widehat{\beta}_{\mathrm{z}}-\beta^*_{\mathrm{z}}) \to \Sigma_\beta^{1/2}Z \quad \textrm{in distribution}.
\]
 where $Z \sim N(0, I_{p+d})$.

Applying the continuous mapping theorem under $H_0:\beta^{*}_{\mathrm{z}}=0$,
\[
\sup_{u\in\mathbb{R}}
\Big|
\operatorname{pr}^{\flat}\!\big(\widehat{T}^{\flat}_{test}(\widehat{\beta}_{\mathrm{z}})\le u\big)
-
\operatorname{pr} \big(\widehat{T}_{test}(0)\le u\big)
\Big| \to 0
\quad\text{in probability},
\]
where $\widehat{T}_{test}(0)=n\|\widehat{\beta}_{\mathrm{z}}\|_2^2$. Let $c_{1-\alpha}^\flat$ be the $(1-\alpha)$ bootstrap quantile of $\widehat{T}^{\flat}_{test}(\widehat{\beta}_{\mathrm{z}})$.  By standard arguments, weak convergence in probability implies convergence in probability to quantiles at a continuity point of the limiting distribution. Therefore, 
\[
\operatorname{pr} \big(\widehat{T}_{test}(\beta^{*}_{\mathrm{z}})>c_{1-\alpha}^\flat\big) \to \alpha
\quad\text{under } H_0.
\]
The claim follows.
\end{proof}

\section{Proofs For Section \ref{sec:down-sampling}}\label{App-D}

\subsection{Proof of Theorem \ref{thm:down-sampling} and Proposition \ref{coro:2}}
\label{subsec-appedix-down-sampling}

\begin{proof}[Proof of Theorem \ref{thm:down-sampling}.]
Define 
\begin{align*}
\widehat{\Psi}_{m}=
\begin{pmatrix}
\operatorname{vec}(\widehat{\Lambda}_m)\\
\widehat{\gamma}_m
\end{pmatrix},
\qquad \textrm{where}\quad
\widehat{\Lambda}_m := \frac{1}{m} \sum_{i=1}^m \widehat{L}_i \widehat{L}_i^\T, \quad \text{and} 
\quad \widehat{\gamma}_m:= \frac{1}{m} \sum_{i=1}^m \widehat{L}_i Y_i.    
\end{align*}

Analogously, define 
\begin{align*}
\Psi_{m}=
\begin{pmatrix}
\operatorname{vec}(\Lambda_m)\\
\gamma_m
\end{pmatrix},
\quad \textrm{where}\quad
\Lambda_m := \frac{1}{m} \sum_{i=1}^m L_i L_i^\T,
\quad \gamma_m:= \frac{1}{m} \sum_{i=1}^m L_i Y_i,
\quad \textrm{and}\quad 
\psi_{i}=
\begin{pmatrix}
\operatorname{vec}(L_iL_i^\T)\\
L_iY_i
\end{pmatrix}.
\end{align*}
Therefore, the population parameter $\Psi=(\Lambda,\gamma)$ satisfies $\Psi = E[\psi_1]$.

We start by studying the $m^{1/2}$-consistency of $(\widehat{\Psi}_{m}-\Psi)$ by the following decomposition:
\begin{align}\label{appC:decomp_ds}
m^{1/2}\left(\widehat{\Psi}_{m}-\Psi_{m}+\Psi_{m}-\Psi\right).
\end{align}
To establish $m^{1/2}\left(\widehat{\Psi}_{m}-\Psi_{m}\right) \to 0$ in probability, it suffices to show:
\begin{align}\label{appC:ds_noise_claim}
m^{-1/2} \sum_{i=1}^m (\widehat{L}_i Y_i - L_i Y_i) = o_P(1),
\qquad
m^{-1/2} \sum_{i=1}^m (\widehat{L}_i \widehat{L}_i^\T - L_i L_i^\T) = o_P(1).   
\end{align}

Moreover, \eqref{appC:ds_noise_claim} would follow if
\begin{align}
m^{-1/2} \sum_{i=1}^m Y_i\, (\widehat{Z}_{ij} - Z_{ij}) &= o_P(1)\qquad \forall j, \label{appC:linear_noise}\\
m^{-1/2} \sum_{i=1}^m X_i\, (\widehat{Z}_{ij} - Z_{ij}) &= o_P(1)\qquad \forall j, \label{appC:linear_noise_X}\\
m^{-1/2} \sum_{i=1}^m \left( \widehat{Z}_{ij} \widehat{Z}_{ik} - Z_{ij} Z_{ik} \right) &= o_P(1) \qquad \forall j, k. \label{appC:cross_noise}
\end{align}

We start by upper-bounding \eqref{appC:linear_noise} and \eqref{appC:linear_noise_X} would follow analogously,
\begin{align*}
\left| m^{-1/2} \sum_{i=1}^m Y_i \, (\widehat{Z}_{ij} - Z_{ij}) \right| 
&\leq m^{-1/2} \sum_{i=1}^m |Y_i| \, |\widehat{Z}_{ij} - Z_{ij}|  \\
&\leq m^{1/2} \max_{1 \leq i \leq n} |\widehat{Z}_{ij} - Z_{ij}|\cdot\frac{1}{m}\sum_{i=1}^m |Y_i|.
\end{align*}

Since, by assumption,
\begin{align}\label{appC:a_n}
\max_{1 \leq i \leq n} |\widehat{Z}_{ij} - Z_{ij}| = O_P(a_n) \quad \forall j.    
\end{align}
It follows that:
\[
\left| m^{-1/2} \sum_{i=1}^m Y_i \, (\widehat{Z}_{ij} - Z_{ij}) \right|
\leq m^{1/2} \cdot O_P(a_n) \cdot O_P(1),
\]
\noindent
by noting that \( m^{-1}\sum_{i=1}^m |Y_i| = O_P(1) \) under \( E(Y_i^2) < \infty \). Therefore, the entire expression is \( o_P(1) \) when \( a_n = o(m^{-1/2}) \), which corresponds precisely to Condition~\ref{down-sampling:condition2}.

We now bound \eqref{appC:cross_noise}, we write
\begin{align}\label{appC-eq:corss_upperbound}
m^{-1/2} \sum_{i=1}^m \left( \widehat{Z}_{ij} \widehat{Z}_{ik} - Z_{ij} Z_{ik} \right) 
= m^{-1/2} \sum_{i=1}^m \left[ (\widehat{Z}_{ij} - Z_{ij})(\widehat{Z}_{ik} - Z_{ik}) + Z_{ij}(\widehat{Z}_{ik} - Z_{ik}) + Z_{ik}(\widehat{Z}_{ij} - Z_{ij}) \right].   
\end{align}

Each term can be handled separately. The first term on the RHS of \eqref{appC-eq:corss_upperbound} is upper-bounded by
\begin{align*}
\left| m^{-1/2} \sum_{i=1}^m (\widehat{Z}_{ij} - Z_{ij})(\widehat{Z}_{ik} - Z_{ik}) \right| 
\leq m^{1/2} \max_i |\widehat{Z}_{ij} - Z_{ij}| \cdot \max_i |\widehat{Z}_{ik} - Z_{ik}| = O(m^{1/2}a_n^2),    
\end{align*}
according to \eqref{appC:a_n}. The second and third terms are upper-bounded similarly as
\begin{align*}
\left| m^{-1/2} \sum_{i=1}^m Z_{ij}(\widehat{Z}_{ik} - Z_{ik}) \right| \leq m^{1/2} \max_{1\leq i\leq n} |\widehat{Z}_{ik} - Z_{ik}|\cdot \frac{1}{m}\sum_{i=1}^m |Z_{ij}|  = O_P(m^{1/2}a_n)\cdot O_P(1).    
\end{align*}

All three upper-bounds above vanish as $o_P(1)$ under the same regime given in Condition \ref{down-sampling:condition2}. Therefore, \eqref{appC:ds_noise_claim} holds and verifies $m^{1/2}\left(\widehat{\Psi}_{m}-\Psi_{m}\right) \to 0$ in probability under Condition \ref{down-sampling:condition2} as well as $E(Y^2)<\infty$, $E\|X\|^2<\infty$ and $w(x,y) \leq C$.

To conclude the proof, we first invoke the central limit theorem  for $m^{1/2}(\Psi_{m}-\Psi)$ due to i.i.d. $\psi_{i}$ that are averaged in $\Psi_{m}$. Then under the moment conditions assumed in Theorem~\ref{thm:down-sampling}, 
\begin{align}
m^{1/2} \left( \widehat{\Psi}_{m} - \Psi \right) \to N(0, \Sigma_{\Psi,m} ) \quad \textrm{in distribution},    
\end{align}
where $\Sigma_{\Psi,m}=\operatorname{var}(\psi_i)$.

Further, under the assumption that \(\Lambda = E[L_i L_i^\T]\) is finite and positive definite, 
the delta method applied to the OLS functional \(f\) in 
Lemma~\ref{appB:OLS_functional_property} yields $m^{1/2}$--consistency for 
\(\widehat{\beta}^{(m)}\), which is
\[
m^{1/2} \left( \widehat{\beta}^{(m)} - \beta^* \right) 
\to N\!\left(0, \, \Sigma^{(m)}_\beta \right) \quad \textrm{in distribution},
\]
where
\[
\Sigma^{(m)}_\beta = D_{f}(\Lambda,\gamma)\,\Sigma_{\Psi,m}\,D_{f}(\Lambda,\gamma)^\T.
\]
This completes the proof of Theorem~\ref{thm:down-sampling}.
\end{proof}

We next prove Proposition \ref{coro:2}.

\begin{proof}[Proof of Proposition \ref{coro:2}.]
Under the decomposition in \eqref{appC:decomp_ds}, we establish that 
$m^{1/2}\left(\widehat{\Psi}_m - \Psi_m\right) \to 0$ in probability by proving 
\eqref{appC:ds_noise_claim} using a different argument, which still corresponds 
to verifying \eqref{appC:linear_noise}--\eqref{appC:cross_noise}.

For \eqref{appC:linear_noise} and \eqref{appC:linear_noise_X}, we apply Markov's inequality together with the Cauchy--Schwarz inequality to obtain:
\begin{align}\label{appC:prop2-linear}
\left| E(\,Y_i(\widehat{Z}_{ij} - Z_{ij})) \right|
\ \leq\ 
\left( E(Y_i^2) \cdot E[(\widehat{Z}_{ij} - Z_{ij})^2] \right)^{1/2}.    
\end{align}
Under the assumption \(E[(\widehat{Z}_{ij} - Z_{ij})^2] = o(m^{-1})\) from Proposition~\ref{coro:2} as well as \(E(Y^2)<\infty\) and \(E\|X\|^2<\infty\), the right-hand side of \eqref{appC:prop2-linear} is \(o(m^{-1/2})\), thereby establishing \eqref{appC:linear_noise} and \eqref{appC:linear_noise_X}.

For \eqref{appC:cross_noise}, application of Markov’s inequality and the Cauchy--Schwarz inequality to each term yields
\begin{align*}
E\big|(\widehat{Z}_{ij}-Z_{ij})(\widehat{Z}_{ik}-Z_{ik})\big|
&\le \big(E(\widehat{Z}_{ij}-Z_{ij})^{2}\big)^{1/2}
     \big(E(\widehat{Z}_{ik}-Z_{ik})^{2}\big)^{1/2}
     = o(m^{-1}),\\[4pt]
E\big|Z_{ij}(\widehat{Z}_{ik}-Z_{ik})\big|
&\le \big(E Z_{ij}^{2}\big)^{1/2}
     \big(E(\widehat{Z}_{ik}-Z_{ik})^{2}\big)^{1/2}
     = o(m^{-1/2}),
\end{align*}
and similarly for the symmetric term with $Z_{ik}(\widehat{Z}_{ij}-Z_{ij})$.  
Under the assumptions of Proposition~\ref{coro:2} and $w(x,y) \leq C$, this shows that \eqref{appC:cross_noise} is $o_P(1)$.

Hence, we conclude that $m^{1/2}(\widehat{\Psi}_m-\Psi_m)\to 0$ in probability.

Therefore, the above bounds are sufficient to establish the same $m^{1/2}$--consistency:
\[
m^{1/2} \big( \widehat{\beta}^{(m)} - \beta^* \big) 
\to N\!\left(0, \Sigma^{(m)}_\beta \right) \quad \text{in distribution}.
\]

Furthermore, under the same conditions,
\begin{align}\label{target_shift}
m^{1/2} (\widetilde{\beta} - \beta^*) = o(1),    
\end{align}
due to \( E\|\widehat{Z}_i - Z_i\|^2 = o(m^{-1}) \). To prove \eqref{target_shift}, it suffices to show that $E(\widehat{Z}^2_{ij}-Z_{ij}^2)=o(m^{-1/2})$ for any $j$, and $E(\widehat{Z}_{ik}\widehat{Z}_{ij}-Z_{ij}Z_{ik})=o(m^{-1/2})$ for any $j \neq k$. We have, by Cauchy-Schwarz,
\begin{align*}
E(\widehat{Z}_{ij}^2 - Z_{ij}^2)
= E\big[(\widehat{Z}_{ij}-Z_{ij})(\widehat{Z}_{ij}+Z_{ij})\big]
\le \big\{E(\widehat{Z}_{ij}-Z_{ij})^2\big\}^{1/2}\,
     \big\{E(\widehat{Z}_{ij}+Z_{ij})^2\big\}^{1/2}. 
\end{align*}
under the assumption $w(x,y)\leq C$ and $E\|\widehat{Z}_i - Z_i\|^2 = o\left(m^{-1}\right)$, and also
\begin{align*}
E[(\widehat{Z}_{ij}+Z_{ij})^2] = E[((\widehat{Z}_{ij}-Z_i)+2Z_{ij})^2] \le 2E[(\widehat{Z}_{ij}-Z_{ij})^2]+8E[Z_{ij}^2]=O(1).
\end{align*}
Similarly, by the Cauchy-Schwarz inequality,
\begin{align*}
E(\widehat{Z}_{ik}\widehat{Z}_{ij}-Z_{ij}Z_{ik})
&= E\left[ (\widehat{Z}_{ij}-Z_{ij})(\widehat{Z}_{ik}-Z_{ik})
+ Z_{ij}(\widehat{Z}_{ik}-Z_{ik})
+ Z_{ik}(\widehat{Z}_{ij}-Z_{ij}) \right]\\
&\le
\{E(\widehat{Z}_{ij}-Z_{ij})^2\}^{1/2}
\{E(\widehat{Z}_{ik}-Z_{ik})^2\}^{1/2}\\
&\quad
+ \{E Z_{ij}^2\}^{1/2}\{E(\widehat{Z}_{ik}-Z_{ik})^2\}^{1/2}
+ \{E Z_{ik}^2\}^{1/2}\{E(\widehat{Z}_{ij}-Z_{ij})^2\}^{1/2}.
\end{align*}
Thus, under the same assumed conditions,
\eqref{target_shift} holds.

Therefore, the same central limit theorem holds for $\widetilde{\beta}$:
\[
m^{1/2} \big( \widehat{\beta}^{(m)} - \widetilde{\beta} \big) 
\to N\!\left(0, \Sigma^{(m)}_\beta \right) \quad \text{in distribution},
\]
the claim follows.
\end{proof}

\subsection{Proof of Example \ref{ex:graphon_ratio}}\label{ds-example1}
Before presenting the examples, we justify that the plug-in error from the estimator $\widehat{\rho}_n$ is negligible. 
For both cases, suppose \( \widehat{Z}_{ij} = \left(  \rho_n/ \widehat{\rho}_n \right)^{e_j} \widecheck{Z}_{ij} \). We decompose
\begin{align*}
|\widehat{Z}_{ij} - Z_{ij}|
&= \left| \left( \frac{\rho_n}{\widehat{\rho}_n} \right)^{e_j} \widecheck{Z}_{ij} - Z_{ij} \right| \\
&\leq \left| \left( \frac{\rho_n}{\widehat{\rho}_n} \right)^{e_j} - 1\right|\,|Z_{ij}|
+ \left|\left( \frac{\rho_n}{\widehat{\rho}_n} \right)^{e_j}-1\right|\,|\widecheck{Z}_{ij} - Z_{ij}| 
+ |\widecheck{Z}_{ij} - Z_{ij}| \\
&\equiv \Delta_\rho |Z_{ij}|+\Delta_\rho |\widecheck{Z}_{ij} - Z_{ij}|+|\widecheck{Z}_{ij} - Z_{ij}|,
\end{align*}
where $\Delta_\rho = \bigl| \left(  \rho_n/ \widehat{\rho}_n \right)^{e_j} - 1\bigr| = O_P(n^{-1/2})$. 
It follows that the plug-in error does not affect the upper bounds for \eqref{appC:linear_noise}-\eqref{appC:cross_noise}, since $\Delta_\rho$ converges at a faster rate.

Explicitly, the upper bound of \eqref{appC:linear_noise} now becomes
\begin{align*}
\left| m^{-1/2} \sum_{i=1}^m Y_i (\widehat{Z}_{ij} - Z_{ij}) \right|
&\leq m^{-1/2} \sum_{i=1}^m |Y_i|\,|\widehat{Z}_{ij} - Z_{ij}| \\
&\leq m^{1/2}\max_{1\leq i\leq m}\! \left|\widecheck{Z}_{ij} - Z_{ij}\right|
       \cdot \frac{1}{m}\sum_{i=1}^m |Y_i| \\
&\quad + m^{1/2}\Delta_\rho \Biggl(
       \frac{1}{m}\sum_{i=1}^m |Y_i||Z_{ij}|
       + \max_{1\leq i\leq m}\! \left|\widecheck{Z}_{ij} - Z_{ij}\right|
         \cdot \frac{1}{m}\sum_{i=1}^m |Y_i| \Biggr).
\end{align*}

And similarly for \eqref{appC:cross_noise}, the second and third term in \eqref{appC-eq:corss_upperbound} follow similar argument as above, and for the first term,
\begin{align*}
\left| m^{-1/2} \sum_{i=1}^m (\widehat{Z}_{ij} - Z_{ij})(\widehat{Z}_{ik} - Z_{ik}) \right|
&\leq m^{1/2}\max_{1\leq i\leq m}\! \left|\widecheck{Z}_{ij} - Z_{ij}\right|\cdot \max_{1\leq i\leq m}\left|\widecheck{Z}_{ik} - Z_{ik}\right|+o_P(m^{1/2}a^2_n).
\end{align*}

For all the three claims, since 
$m^{-1} \sum_{i=1}^m |Y_i||Z_{ij}| = O_P(1)$, $m^{-1} \sum_{i=1}^m |X_i||Z_{ij}| = O_P(1)$ and 
$m^{-1} \sum_{i=1}^m |Z_{ij}||Z_{ik}| = O_P(1)$ 
under the assumptions of Theorem \ref{thm:down-sampling} that $E\|Z\|^4 < \infty$, $E(Y^4) < \infty$ and $w(x,y) \leq C$, the leading term that does not involve $\Delta_\rho$ continues to dominate, while the plug-in error contributes only lower-order terms. Hence, the upper bound remains $O_P(m^{1/2}a_n)$. 

In the following examples, the argument will be applied to $\widecheck{Z}_i - Z_i$.

\begin{proof}[Proof of Proposition \ref{prop:composite_control}.]
Without loss of generality, consider a single network covariate that admits the representation 
\[
\widecheck{Z}_i = \frac{ \widecheck{U}_i(\mathcal{R}_1)}{ \widecheck{U}_i(\mathcal{R}_2)}.
\]

We start by deriving $\widetilde{a}_n$ for
\begin{align}
\max_{1 \leq i \leq m} |\widecheck{Z}_i - Z_i| = O_P(\widetilde{a}_n).
\end{align}

Denote $U_i(\mathcal{R}_1)=E(\widecheck{U}_i(\mathcal{R}_1)\mid (X_i,\xi_i))$, and similarly for $U_i(\mathcal{R}_2)$. 
A Taylor expansion of $f(x,y) = x/y$ around $(U_i(\mathcal{R}_1), U_i(\mathcal{R}_2))$ yields
\begin{align*}
\widecheck{Z}_i - Z_i = \nabla f\big(\kappa_i(\mathcal{R}_1), \kappa_i(\mathcal{R}_2)\big) \cdot 
\begin{pmatrix}
\widecheck{U}_i(\mathcal{R}_1) - U_i(\mathcal{R}_1) \\
\widecheck{U}_i(\mathcal{R}_2) - U_i(\mathcal{R}_2)
\end{pmatrix},
\quad\textrm{where}\quad
\nabla f(x,y) = \left( \frac{1}{y}, \, -\frac{x}{y^2} \right)^{\T},
\end{align*}
and  $\big(\kappa_i(\mathcal{R}_1), \kappa_i(\mathcal{R}_2)\big)$  lie between $(U_i(\mathcal{R}_1), U_i(\mathcal{R}_2))$ and $(\widecheck{U}_i(\mathcal{R}_1), \widecheck{U}_i(\mathcal{R}_2))$.

Expanding  $\widecheck{Z}_i - Z_i $  yields,
\begin{align}\label{appC:taylor_expansion_ratio}
\widecheck{Z}_i - Z_i 
= \frac{ \widecheck{U}_i(\mathcal{R}_1) - U_i(\mathcal{R}_1) }{  \kappa_i(\mathcal{R}_2) }
- \frac{  \kappa_i(\mathcal{R}_1) }{  \big( \kappa_i(\mathcal{R}_2) \big)^2  } 
\big( \widecheck{U}_i(\mathcal{R}_2) - U_i(\mathcal{R}_2) \big).
\end{align}

We analyze each noise component of \eqref{appC:taylor_expansion_ratio} and start with $\widecheck{U}_i(\mathcal{R}_1) - U_i(\mathcal{R}_1)$; the argument for $\widecheck{U}_i(\mathcal{R}_2) - U_i(\mathcal{R}_2)$ is analogous. For each $i$,
\[
\widecheck{U}_i(\mathcal{R}_1) - U_i(\mathcal{R}_1)
= \big( \widecheck{U}_i(\mathcal{R}_1) - U_i(\mathcal{R}_1;\xi) \big)
+ \big( U_i(\mathcal{R}_1;\xi) - U_i(\mathcal{R}_1) \big),
\]
where $U_i(\mathcal{R}_1;\xi)=E(\widecheck{U}_i(\mathcal{R}_1)\mid(X,\xi))$, and similarly for $U_i(\mathcal{R}_2;\xi)$.

For any $t>0$,
\begin{align}\label{appC:ratio-noise}
\operatorname{pr}\big( |\widecheck{U}_i(\mathcal{R}_1) - U_i(\mathcal{R}_1)| > 2t \big)
\le \operatorname{pr}\big( |\widecheck{U}_i(\mathcal{R}_1) - U_i(\mathcal{R}_1;\xi)| > t \big)
+ \operatorname{pr}\big( |U_i(\mathcal{R}_1;\xi) - U_i(\mathcal{R}_1)| > t \big).    
\end{align}
Conditional on $\xi_i$, the local $U_i(\mathcal{R}_1;\xi)$ is a U-statistic of rank $r_1-1$ that also has a bounded kernel.

By applying Hoeffding's inequality for U-statistics, for some constant $B>0$, it follows that:
\begin{align*}
\operatorname{pr}\big( |U_i(\mathcal{R}_1;\xi) - E[U_i(\mathcal{R}_1;\xi)]| > t \big) &= E_{\xi_i}\left[ \operatorname{pr}\big( |U_i(\mathcal{R}_1;\xi) - E[U_i(\mathcal{R}_1;\xi)]| > t \mid \xi_i \big) \right] \\
&\le 2\exp\left( -\frac{nt^2}{B} \right).
\end{align*}

For the first term on the RHS of \eqref{appC:ratio-noise}, applying Chebyshev's inequality, and following Lemma~\ref{appB-lem:bootstrap-lemma-1} for $\operatorname{var}\left(\widecheck{U}_i(\mathcal{R}_1) - U_i(\mathcal{R}_1;\xi)\right)$, it satisfies
\begin{align*}
\operatorname{pr}\big( |\widecheck{U}_i(\mathcal{R}_1) - U_i(\mathcal{R}_1;\xi)| > t \big)
\le \frac{ \operatorname{var}\left(\widecheck{U}_i(\mathcal{R}_1) - U_i(\mathcal{R}_1;\xi)\right) }{t^2}
= O\!\left( \frac{n^{1-c_1}\rho_n^{-d_1}}{t^2} \right),  
\end{align*}
where $c_1,d_1$, as the number of nodes and edges shared from the overlap of $\mathcal{R}_1$ at node $i$. Therefore,
\begin{align*}
\operatorname{pr}\big( |\widecheck{U}_i(\mathcal{R}_1) - U_i(\mathcal{R}_1)| > 2t \big)
\le 2\exp\left(- \frac{n t^2}{B}\right) + O\!\left( \frac{n^{1-c_1}\rho_n^{-d_1}}{t^2} \right).    
\end{align*}
An analogous bound holds for $\mathcal{R}_2$. Now, for the denominator, under the assumption that \( 0 < c \leq w(\xi_i,\xi_j) \leq C <\infty \), there also exists some $\v_0'$ such that $U_i(\mathcal R_2)\ge \v_0'>0$ a.s. Take $\v_0=\v_0'/2$,
\begin{align*}
\operatorname{pr}\big( |\widecheck{Z}_i - Z_i| > t \big)
&\le \operatorname{pr}\big( \kappa_i(\mathcal{R}_2)  < \v_0 \big)
+ \operatorname{pr}\big( |\widecheck{Z}_i - Z_i| > t,  \kappa_i(\mathcal{R}_2)  \ge \v_0 \big).
\end{align*}

On $\{\kappa_i(\mathcal{R}_2) \ge \v_0\}$,  the Taylor's expansion in \eqref{appC:taylor_expansion_ratio} gives, for some $K > 0$ and $\lambda \in [0,1]$,
\begin{align*}
\bigl|\widecheck{Z}_i - Z_i\bigr|
&\le K \left\{ \bigl|\widecheck{U}_i(\mathcal{R}_1) - U_i(\mathcal{R}_1)\bigr|
+ \left(\bigl|U_i(\mathcal{R}_1)\bigr| + \lambda \bigl|\widecheck{U}_i(\mathcal{R}_1) - U_i(\mathcal{R}_1)\bigr|\right)
\,\bigl|\widecheck{U}_i(\mathcal{R}_2) - U_i(\mathcal{R}_2)\bigr| \right\}.
\end{align*}

Clearly, since the graphon is bounded away from $0$ and $\infty$, for an appropriate choice of $k$ and $\v_0$,
\[
\operatorname{pr}\big( |\widecheck{Z}_i - Z_i| > t,\,  \kappa_i(\mathcal{R}_2) \ge \v_0 \big)
\le \operatorname{pr}\big( |\widecheck{U}_i(\mathcal{R}_1) - U_i(\mathcal{R}_1)| > kt \big)
+ \operatorname{pr}\big( |\widecheck{U}_i(\mathcal{R}_2) - U_i(\mathcal{R}_2)| > kt \big).
\]
where $k=k(\v_0)$.

For $\{\kappa_i(\mathcal{R}_2) < \v_0\}$, since $\{\kappa_i(\mathcal R_2)<\v_0\}
\subset \{\,|\widecheck{U}_i(\mathcal{R}_2) - U_i(\mathcal{R}_2)|>\v_0'/2\,\}$, we have 
\begin{align*}
\operatorname{pr}\big( \kappa_i(\mathcal{R}_2) < \v_0 \big) \leq \operatorname{pr}\big(|\widecheck{U}_i(\mathcal{R}_2) - U_i(\mathcal{R}_2)|>\v_0'/2\big).   
\end{align*} 

We apply the union bound over $1\le i\le m$ and obtain
\begin{align*}
\operatorname{pr}\!\left( \max_{1\le i\le m} |\widecheck{Z}_i - Z_i| > t \right)
\le 2mK_1 e^{-a_3 n t^2} + mK_2\frac{n^{1-c_1}\rho_n^{-d_1} + n^{1-c_2}\rho_n^{-d_2}}{t^2}.
\end{align*}

Therefore, choosing
\[
t_n = \left(\frac{\log m}{n}\right)^{1/2} + (m\,n^{1-c_1}\rho_n^{-d_1})^{1/2} + (m\,n^{1-c_2}\rho_n^{-d_2})^{1/2},
\]
ensures
\[
\max_{1\le i\le m} |\widecheck{Z}_i - Z_i| = O_P\!\left( \left(\frac{\log m}{n}\right)^{1/2} + (m\,n^{1-c_1}\rho_n^{-d_1})^{1/2} + (m\,n^{1-c_2}\rho_n^{-d_2})^{1/2} \right).
\]

Next, we deal with the two examples in the main text. For $\widehat{Z}_i$ as local transitivity, this leads to the condition on down-sample size as
\begin{align*}
m^2=o(\lambda_n \wedge \lambda_n^3/n), \quad \textrm{and}\quad m\log m=o(n).
\end{align*}

And for $\widehat{Z}_i$ as the neighborhood average, this leads to the down-sample size regime as
\begin{align*}
m^2=o(\lambda_n), \quad \textrm{and}\quad m\log m=o(n).
\end{align*}
That completes the proof.
\end{proof}

\subsection{Proof of Example \ref{ds-example:rdpg}}\label{appC-section:down-sampling_grdpg}

\begin{proof}
The first claim follows immediately from the 2-to-$\infty$ norm bound \eqref{eq-2-infinity-bounds} in Lemma \ref{app-rdpg:ase-error representation}.

For the second claim, the argument made in the proof of Theorem \ref{rdpg:theorem6} implies that it suffices to show $m^{1/2} \|\mathrm{Q}_n \widecheck{Z}- Z \|_{2 \rightarrow \infty}^2 = o_P(1)$.  Thus, it suffices to choose $m=o\left((\lambda_n^2/\log^{4c} n) \wedge n\right)$, for some constant $c$.   
\end{proof}

\end{document}